\definecolor{green}{rgb}{0,0.8,0} 
\newtheorem{theorem}{Theorem}[section]
\newtheorem{corollary}[theorem]{Corollary}
\newtheorem{lemma}[theorem]{Lemma}
\newtheorem{proposition}[theorem]{Proposition}
\theoremstyle{definition}
\newtheorem{definition}[theorem]{Definition}
\theoremstyle{remark}
\newtheorem{remark}[theorem]{Remark}
\numberwithin{equation}{section}
\newcommand{\nrm}[1]{\Vert#1\Vert}
\newcommand{\abs}[1]{\vert#1\vert}
\newcommand{\brk}[1]{\langle#1\rangle}
\newcommand{\set}[1]{\{#1\}}
\newcommand{\les}{\lesssim}
\newcommand{\aleq}{\lesssim}
\newcommand{\ageq}{\gtrsim}
\newcommand{\lap}{\triangle}
\newcommand{\ud}{\mathrm{d}}
\newcommand{\rd}{\partial}
\newcommand{\nb}{\nabla}
\newcommand{\bb}{\Big}
\newcommand{\alp}{\alpha}
\newcommand{\bt}{\beta}
\newcommand{\gmm}{\gamma}
\newcommand{\Gmm}{\Gamma}
\newcommand{\dlt}{\delta}
\newcommand{\eps}{\epsilon}
\newcommand{\ep}{\epsilon}
\newcommand{\lmb}{\lambda}
\newcommand{\sgm}{\sigma}
\newcommand{\Sgm}{\Sigma}
\newcommand{\tht}{\theta}
\newcommand{\omg}{\omega}
\newcommand{\Omg}{\Omega}
\newcommand{\bfC}{{\bf C}}
\newcommand{\bfJ}{{\bf J}}
\newcommand{\bbR}{\mathbb R}
\newcommand{\bbS}{\mathbb S}
\newcommand{\bbZ}{\mathbb Z}
\newcommand{\calD}{\mathcal D}
\newcommand{\calM}{\mathcal M}
\newcommand{\calQ}{\mathcal Q}
\newcommand{\calS}{\mathcal S}
\newcommand{\calU}{\mathcal U}
\newcommand{\PD}{\calQ}
\newcommand{\uC}{\underline{C}}
\newcommand{\dur}{\nu}
\newcommand{\dvr}{\lmb}
\newcommand{\pfstep}[1]{\vspace{.5em} \noindent {\it #1.}}
\newcommand{\supp}{\mathrm{supp}\,}
\newcommand{\rst}{\upharpoonright}
\newcommand{\f}{\frac}
\newcommand{\dc}{\gmm}
\newcommand{\esssup}{\mathop{\mathrm{ess\,sup}}}
\begin{document}

\title[]{Solutions to the Einstein-scalar-field system in spherical symmetry with large bounded variation norms}
\author{Jonathan Luk}
\address{DPMMS, Cambridge University, Cambridge, UK}
\email{jluk@dpmms.cam.ac.uk}

\author{Sung-Jin Oh}
\address{Department of Mathematics, UC Berkeley, Berkeley, CA 94720, USA}
\email{sjoh@math.berkeley.edu}

\author{Shiwu Yang}
\address{DPMMS, Cambridge University, Cambridge, UK}
\email{S.Yang@dpmms.cam.ac.uk}


\begin{abstract}
It is well-known that small, regular, spherically symmetric characteristic initial data to the Einstein-scalar-field system which are decaying towards (future null) infinity give rise to solutions which are foward-in-time global (in the sense of future causal geodesic completeness). We construct a class of spherically symmetric solutions which are global but the initial norms are consistent with initial data \underline{not} decaying towards infinity. This gives the following consequences:
\begin{enumerate}
\item We prove that there exist foward-in-time global solutions with arbitrarily large (and in fact infinite) initial bounded variation (BV) norms and initial Bondi masses. 
\item While general solutions with non-decaying data do not approach Minkowski spacetime, we show using the results of \cite{LO1} that if a sufficiently strong asymptotic flatness condition is imposed on the initial data, then the solutions we construct (with large BV norms) approach Minkowski spacetime with a sharp inverse polynomial rate. 
\item Our construction can be easily extended so that data are posed at past null infinity and we obtain solutions with large BV norms which are causally geodesically complete both to the past and to the future. 
\end{enumerate}
Finally, we discuss applications of our method to construct global solutions for other nonlinear wave equations with infinite critical norms.
\end{abstract}
\maketitle

\section{Introduction}

We study the Einstein-scalar-field system for a Lorentzian manifold $(\mathcal M, g)$ and a real-valued function $\phi:\mathcal M\to \mathbb R$:
\begin{equation}\label{ESS}
\begin{cases}
Ric_{\mu\nu}-\f 12 R g_{\mu\nu}=2T_{\mu\nu},\\
T_{\mu\nu}=\rd_\mu\phi\rd_\nu\phi-\f 12 g_{\mu\nu}(g^{-1})^{\alp\bt}\rd_\alp\phi\rd_\bt\phi,\\
\Box_g \phi=0
\end{cases}
\end{equation}
in $(3+1)$ dimensions with \emph{spherically symmetric data}. It is known \cite{Christodoulou:1986, Christodoulou:1993bt} that small, regular and sufficiently decaying initial data give rise to foward-in-time global solutions in the sense that they are future causally geodesically complete. In this paper, we show that the decay condition can be removed and be replaced by the requirement that the growth of the integral of the data is suitably mild at infinity (see Theorem~\ref{thm:main.finite}). As a particular consequence, we construct global solutions with arbitrarily large (and in fact infinite) BV norms and Bondi masses\footnote{See definition in \eqref{BV.def} and \eqref{Bondi.def}.
}




To further discuss our results, we recall the reduction of \eqref{ESS} in spherical symmetry. It is well-known that in spherical symmetry we can introduce null coordinates $(u,v)$ such that the metric $g$ takes the form
$$g=-\Omg^2 \ud u \cdot \ud v+r^2 \ud \sigma_{\mathbb S^2},$$
where $\ud \sigma_{\mathbb S^2}$ is the standard metric on the unit round sphere and $r$ is the area-radius of the orbit of the symmetry group $SO(3)$ (see Section \ref{setup}). We normalize the coordinates so that $u = v$ on the axis of symmetry $\Gmm = \set{r = 0}$. Defining the Hawking mass $m$ by the relation
\begin{equation} \label{eq:hawking-m}
\Omg^{2} = - \frac{4 \rd_{u} r \rd_{v} r}{1 - \frac{2m}{r}},
\end{equation}
the Einstein-scalar field system reduces to the following system of equations for $(r,\phi, m)$ in $(1+1)$ dimensions, which we will also call the \emph{spherically symmetric Einstein-scalar-field} (SSESF) system:
\begin{equation} \label{eq:SSESF} \tag{SSESF}
\left\{
\begin{aligned}
	\rd_{u} \rd_{v} r = & \frac{2m \rd_{u} r \rd_{v}r }{(1-\frac{2m}{r}) r^{2}}, \\
	\rd_{u} \rd_{v} (r \phi) = & \frac{2m \rd_{u} r \rd_{v}r }{(1-\frac{2m}{r}) r^{2}} \phi, \\
	\rd_{u} r \rd_{u} m = & \frac{1}{2} (1-\f{2m}{r}) r^{2} (\rd_{u} \phi)^{2} , \\
	\rd_{v} r \rd_{v} m = & \frac{1}{2} (1-\f{2m}{r}) r^{2} (\rd_{v} \phi)^{2} .
\end{aligned}
\right.
\end{equation}
We will consider solutions to \eqref{eq:SSESF} via studying the characteristic initial value problem for which initial data are posed on a constant $u$ curve $C_{u_0}:=\{(u,v):u=u_0, \, v\geq u_0\}$. On $C_{u_0}$, after imposing the gauge condition $(\rd_v r)(u_0,v)=\f 12$ and the boundary conditions $r(u_0,u_0)=m(u_0,u_0)=0$, the initial value for $\Phi(v):=2(\rd_v(r\phi))(u_0,v)$ can be freely prescribed. It is easy to show that if $\Phi(v)$ is $C^1$, then there exists a unique local solution to \eqref{eq:SSESF}. We refer the readers to Section \ref{setup} for further discussions on the characteristic initial value problem.

As mentioned earlier, \eqref{eq:SSESF} is known to have global solutions for small, regular and decaying initial data. More precisely, Christodoulou \cite{Christodoulou:1986} showed that there exists a universal constant $\delta_0>0$ such that if
\begin{equation}\label{small.data.cond.1}
\sup_v\left((1+r)^3|\Phi|(v)+(1+r)^4|\rd_v\Phi|(u_0,v)\right)\leq \delta_0,
\end{equation}
then the solution is foward-in-time global. An analogous small data result in fact holds without assuming spherical symmetry as long as the higher derivatives of the scalar field and appropriate geometric quantities are also small and decaying. In the vacuum case, this was first proved by Christodoulou-Klainerman \cite{CK}. An alternative proof was later given by Lindblad-Rodnianski \cite{LR}, who also treated the case of the Einstein-scalar-field system.

Returning to the special case of spherical symmetry, in fact a much stronger result is known: Christodoulou showed in \cite{Christodoulou:1993bt} that only the bounded variation (BV) norm of the initial data $\Phi$ is required to be small\footnote{In \cite{Christodoulou:1993bt}, the initial data for $\Phi$ are in fact allowed to be in BV. In this paper, however, while we will use the BV norm as a measure of the size of the initial data, we will only consider initial data such that $\Phi$ is at least a $C^1$ function, in which case the BV norm is equivalent to the norm in \eqref{small.data.cond.2}.}, i.e., there exists a universal constant $\delta_1>0$ such that if\footnote{Notice that for $\delta_0$ sufficiently small, initial data satisfying \eqref{small.data.cond.1} obviously also obey \eqref{small.data.cond.2}.}
\begin{equation}\label{small.data.cond.2}
\int_{u_0}^\infty |\rd_v\Phi|(v')\, dv'\leq \delta_1,
\end{equation}
then the solution is global toward the future.

On the other hand, in the large data regime, Christodoulou showed in \cite{Christodoulou:1991} that not all initial data give rise to future causally geodesically complete solutions. In particular, for some class of initial data, the future Cauchy development contains a black hole region and is future causally geodesically incomplete.

The purpose of this paper is to construct a class of solutions which on one hand are global (in the sense of future causal geodesic completeness), but on the other hand their initial data are non-decaying and therefore large when measured using an integrated norm\footnote{On the other hand, we emphasize that the initial data that we allow in the main theorem are in fact small in a pointwise sense.}. One way to measure the size of the initial data is by the BV norm
\begin{equation}\label{BV.def}
\int_{u_0}^\infty |\rd_v\Phi|(v')\, dv',
\end{equation}
which is a scaling-invariant quantity and as mentioned above, the smallness of the BV norm guarantees that the solution is global. We will also quantify the largeness of the initial data by the initial Bondi mass, which is defined as the limit of the Hawking mass as $v\to \infty$ on the initial curve, i.e.,
\begin{equation}\label{Bondi.def}
M_{u_0}:=\lim_{v\to \infty} m(u_0,v).
\end{equation}
In fact, our construction allows both the initial BV norm and Bondi mass to be infinite. More precisely, the following is the main result of this paper:
\begin{theorem} \label{thm:main.finite}
Consider the characteristic initial value problem from an outgoing curve $C_{u_0}$ with $v\geq u_0$, $\rd_v r \rst_{C_{u_0}} = \frac{1}{2}$ and $r(u_0,u_0)=m(u_0,u_0)=0$.
Suppose the data on the initial curve $C_{u_0}$ is given by
\[
2 \rd_{v}(r \phi)(u_0, v)=\Phi(v),
\]
where $\Phi:[u_0,\infty)\to\mathbb R$ is a smooth function satisfying the following conditions for some $\dc>0$:
\begin{equation}
 \label{eq:IDcond}
\int_{u}^{v}|\Phi(v')|\, \ud v'\leq \ep(v-u)^{1-\dc}, \quad |\Phi(v)|+|\Phi'(v)|\leq \ep, \quad \forall v\geq u\geq u_0
\end{equation}
Then there exists $\eps>0$ depending only on $\dc$ such that the unique solution to \eqref{eq:SSESF} arising from the given data is future causally geodesically complete. Moreover, the solution satisfies the following uniform a priori estimates:
\begin{equation} \label{eq:main:geo}
	\rd_v r>\f 13,\quad -\f 16>\rd_u r >-\f 23,\quad \f{2m}{r}<\f 12, \\
\end{equation}
and
\begin{equation} \label{eq:main:apriori}
\begin{gathered}
	\abs{\phi} \leq C \eps \min \set{1, r^{-\gamma}}, \quad
	\abs{\rd_{v} (r \phi)} \leq C ( \abs{\Phi(v)} + \eps \min \set{1, r^{-\gamma}}), \quad
	\abs{\rd_{u} (r \phi)} \leq C \eps, \\
\abs{\rd_{v}^{2} (r \phi)} + \abs{\rd_{v}^{2} r} + \abs{\rd_{u}^{2} (r \phi)} + \abs{\rd_{u}^{2} r} \leq C \eps
\end{gathered}
\end{equation}
for some constant $C>0$ depending only on $\dc$.
\end{theorem}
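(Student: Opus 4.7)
The argument is a continuity/bootstrap proof on truncated characteristic rectangles $[u_0,U]\times[u_0,V]$, with the estimates extended to $U,V\to\infty$; local well-posedness from $C^1$ data on $C_{u_0}$ is standard, so the content lies entirely in the \emph{a priori} estimates \eqref{eq:main:geo}--\eqref{eq:main:apriori}. The first step is to analyze the data on $C_{u_0}$: the gauge $\rd_v r\rst_{C_{u_0}}=1/2$ together with $r(u_0,u_0)=0$ gives $r(u_0,v)=(v-u_0)/2$, and integrating $\Phi=2\rd_v(r\phi)$ with the axis boundary condition yields, via \eqref{eq:IDcond}, the bound $|\phi|(u_0,v)\les\eps\min\set{1,r^{-\dc}}$ (pointwise $|\Phi|\leq\eps$ handling the small-$r$ regime). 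Using $r\rd_v\phi=(\Phi-\phi)/2$ in $(\rd_v r)\rd_v m=\tfrac12(1-2m/r)r^2(\rd_v\phi)^2$, one bounds $\rd_v m\les\eps(|\Phi|+|\phi|)$, and the integral condition in \eqref{eq:IDcond} then yields $m(u_0,v)\les\eps^2(1+r(u_0,v))^{1-\dc}$, so $2m/r$ is small on $C_{u_0}$.

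\textbf{Bootstrap and mass bounds.} On the bootstrap region, assume weakened forms of \eqref{eq:main:geo} together with $|\phi|\leq C'\eps\min\set{1,r^{-\dc}}$ and the second-derivative bounds in \eqref{eq:main:apriori} with a loose constant $C'$. Because $\rd_u r<0$ and $\rd_v r>0$ in the bootstrap, the last two equations of \eqref{eq:SSESF} give $\rd_u m\leq 0$ and $\rd_v m\geq 0$, hence $m(u,v)\leq m(u_0,v)\les\eps^2(1+r(u_0,v))^{1-\dc}$. Near the axis $\Gmm$ one estimates $m$ independently: from $r\phi=0$ on $\Gmm$ one has $\rd_v(r\phi)|_\Gmm=\phi|_\Gmm\rd_v r|_\Gmm$, so $r\rd_v\phi=\rd_v(r\phi)-\phi\rd_v r$ vanishes on $\Gmm$, and smoothness plus the bootstrap on $\rd_v^2(r\phi)$ give $|r\rd_v\phi|\les\eps r$. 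Integrating $\rd_v m\les\eps^2 r^2$ from $\Gmm$ then yields $m\les\eps^2 r^3$ near $\Gmm$. Combining, $2m/r\les\eps^2\min\set{r^2, r^{-\dc}}\ll 1/2$.

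\textbf{Geometric and scalar-field estimates.} Integrating the first equation of \eqref{eq:SSESF} in $u$ from $C_{u_0}$ and changing variable from $u$ to $r$ (using the bootstrap lower bound on $|\rd_u r|$) gives
\[
\bb|\log\frac{\rd_v r(u,v)}{\rd_v r(u_0,v)}\bb|\les\int_{r(u,v)}^{r(u_0,v)}\frac{2m}{(1-2m/r)r^2}\,dr,
\]
whose integrand is $\les\eps^2\min\set{r,r^{-1-\dc}}$, integrable on $(0,\infty)$ with total mass $O(\eps^2/\dc)$. Hence $\rd_v r$ stays within $O(\eps^2/\dc)$ of $1/2$, improving to $\rd_v r>1/3$; the bound $-2/3<\rd_u r<-1/6$ follows symmetrically by integrating the same equation in $v$ from $\Gmm$, where $\rd_u r=-\rd_v r$ by $r(u,u)=0$. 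Applying the same Gronwall-type argument to the second equation of \eqref{eq:SSESF} yields $|\rd_v(r\phi)(u,v)-\rd_v(r\phi)(u_0,v)|\les\eps^2$, giving the claimed bound on $\rd_v(r\phi)$; the bound $|\rd_u(r\phi)|\les\eps$ is obtained by integrating from $\Gmm$, where $\rd_u(r\phi)=-\rd_v(r\phi)$. Then $(r\phi)(u,v)=\int_u^v\rd_v(r\phi)(u,v')\,dv'$, combined with \eqref{eq:IDcond} and the $O(\eps^2)$ correction, yields $|\phi|\les\eps\min\set{1,r^{-\dc}}$. The second-derivative bounds follow by commuting \eqref{eq:SSESF} with $\rd_u,\rd_v$ and using $|\Phi'|\leq\eps$ from \eqref{eq:IDcond}.

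\textbf{Closing and the principal obstacle.} For $\eps$ small depending only on $\dc$, all bootstrap bounds are strictly improved, the solution extends to $[u_0,\infty)\times[u_0,\infty)$, and \eqref{eq:main:geo}--\eqref{eq:main:apriori} hold throughout. Future causal geodesic completeness then follows from \eqref{eq:main:geo} in the standard way: along outgoing null generators $r\to\infty$, along ingoing ones $|\rd_u r|\asymp 1$ keeps $u$ extending to $\infty$, and the conformal factor $\Omg$ stays bounded, so affine parameters extend indefinitely. The principal obstacle is the convergence at $r=\infty$ of the integral controlling $\rd_v r$ and $\rd_v(r\phi)$: there is no small coefficient beyond $\eps^2$, and the positivity of $\dc$ in \eqref{eq:IDcond} is precisely what converts the integrated smallness of $\Phi$ on $C_{u_0}$ into the spatial decay $m/r\les\eps^2 r^{-\dc}$ needed to render the integrand $2m/[(1-2m/r)r^2]\les\eps^2 r^{-1-\dc}$ integrable at infinity; with $\dc=0$ the integrand would behave like $\eps^2/r$ and the argument would fail.
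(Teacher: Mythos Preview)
There is a genuine gap in your mass estimate. You use the monotonicity $\rd_u m \leq 0$ to bound $m(u,v) \leq m(u_0,v) \les \eps^2(1+r(u_0,v))^{1-\dc}$, and then claim that combining this with the near-axis bound $m \les \eps^2 r^3$ yields $2m/r \les \eps^2 \min\set{r^2, r^{-\dc}}$. But the monotonicity bound is in terms of $r(u_0,v)=(v-u_0)/2$, not $r(u,v)$, and these can differ arbitrarily: for $u \gg u_0$ with $r(u,v)\sim 1$ one has $r(u_0,v)\sim (u-u_0)/2$ unboundedly large, so at such points your estimate only says $m(u,v)\les \eps^2(u-u_0)^{1-\dc}$, which does not imply $2m/r \les \eps^2 r^{-\dc}$. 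Consequently, when you integrate $\frac{2m}{(1-\mu)r^{2}}$ along an incoming curve from $C_{u_0}$ down to the axis, the contribution from $r'\in[1,r(u_0,v)]$ is bounded only by
\[
\eps^{2}\,r(u_0,v)^{1-\dc}\int_{1}^{r(u_0,v)} r'^{-2}\,\ud r' \ \sim\ \eps^{2}\,r(u_0,v)^{1-\dc},
\]
which diverges as $v\to\infty$. This destroys your control on $\log\rd_v r$ and on $\rd_v(r\phi)-\Phi$, and the bootstrap does not close.

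The paper never invokes $u$-monotonicity of $m$. Instead it integrates $\rd_v m$ from the axis along each $C_u$, using the bootstrap bound $\abs{r\dvr^{-1}\rd_v\phi}\les \abs{\Phi(v)}+\eps r_+^{-\dc}$ (which comes from the already-controlled $\rd_v(r\phi)$ and $\phi$). This gives $m(u,v)\les \eps^2\min\set{r^3, r^{1-\dc}}$ with $r=r(u,v)$, the correct radius, and the key integral is then uniformly $\les \eps^2/\dc$. Your near-axis argument is fine and matches the paper's; it is only the far-from-axis regime where the monotonicity shortcut fails. The fix is simply to run your $\rd_v m$ integration globally from $\Gmm$, not just near the axis.

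A secondary issue: your one-line dismissal of causal geodesic completeness as ``standard'' is not adequate here. Because the solution carries no decay in $u$, completeness does not follow just from boundedness of $\Omg$; the paper devotes an entire section to it, ruling out incomplete timelike geodesics with nonzero angular momentum that oscillate toward and away from the axis via a careful analysis of the geodesic equations and the conserved quantities $\bfC^2$, $\bfJ^2$.
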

\begin{remark}
We note explicitly that the constants $\ep$ and $C$ in the above theorem are independent of $u_0$.
\end{remark}
\begin{remark}
In addition, if the second derivative of the data $\Phi$ is bounded (i.e., $\abs{\Phi''(v)} \leq C$), we will show that the solution obeys corresponding higher regularity bounds which are uniform with respect to $u_{0}$; see Proposition~\ref{prop:high-d}.
\end{remark}
The proof of this theorem will occupy most of this paper. Global existence of a unique solution in an appropriate coordinate system will be established in Section \ref{sec.proof}.
As a brief comment on the proof,
we note that even for spherically symmetric solutions to the \emph{linear wave equation} $\Box_{\mathbb R^{1+3}}\phi=0$, if the initial data on an outgoing null cone $C_{u_0}$ are only required to satisfy \eqref{eq:IDcond}, then the solutions $\phi$ and its first derivatives in general \emph{do not decay in time}. In fact, $\phi$ only satisfies decay estimates in $r$. In our setting, since we can use the method of characteristics in spherical symmetry, we will only need to integrate the error terms along null curves and the $r$ decay is therefore already sufficient to close a nonlinear problem.

Since our solution does not decay in time, even after establishing global existence of the solution in an appropriate coordinate system, it does not immediately follow that the solution is future causally geodesically complete. For this we need an additional geometric argument, which will be carried out in Section \ref{sec:cgc}.


An immediate corollary of Theorem \ref{thm:main.finite} is the following result, which follows simply from the observation that there exists $\Phi$ satisfying the assumptions of Theorem \ref{thm:main.finite} such that \eqref{BV.def} and \eqref{Bondi.def} are both infinite. The proof of this corollary will be carried out in Section \ref{sec.infinite.BV.mass}.
\begin{corollary}\label{cor.infinite.BV.mass}
There exist solutions to \eqref{ESS} with spherically symmetric data such that the data have arbitrarily large (in fact infinite) BV norm and initial Bondi mass, while the development is future causally geodesically complete.
\end{corollary}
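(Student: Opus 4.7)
The plan is to apply Theorem~\ref{thm:main.finite} directly to an explicit choice of data $\Phi$ for which both quantities \eqref{BV.def} and \eqref{Bondi.def} are infinite. The condition \eqref{eq:IDcond} forces an averaged decay of $\Phi$, so the candidate must be oscillatory with a decaying envelope; I would fix any $\dc \in (0, \f 12]$ and, with $\ep$ as furnished by Theorem~\ref{thm:main.finite} and a small constant $c_{\dc} > 0$ to be chosen, set
\[
\Phi(v) := c_{\dc}\, \ep\, (1+v-u_0)^{-\dc} \sin(v-u_0), \qquad v \geq u_0.
\]
The hypothesis \eqref{eq:IDcond} is then verified by direct computation: the pointwise bounds $\abs{\Phi} + \abs{\Phi'} \leq \ep$ hold for $c_{\dc}$ sufficiently small, while the integral bound reduces, after the change of variable $w = v'-u_0$, to the subadditivity $(a+b)^{1-\dc} \leq a^{1-\dc} + b^{1-\dc}$ valid for $0 < 1-\dc < 1$. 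Theorem~\ref{thm:main.finite} then yields a future causally geodesically complete solution satisfying \eqref{eq:main:geo} and \eqref{eq:main:apriori}.

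The infinite BV claim is then immediate: since $\Phi'(v) = c_{\dc}\ep (1+v-u_0)^{-\dc}\cos(v-u_0) + R(v)$ with $R$ absolutely integrable, and since $\abs{\cos x} \geq \cos^2 x = \f{1+\cos 2x}{2}$, the divergence of $\int_{u_0}^\infty \abs{\Phi'}\, \ud v$ reduces to that of $\int_{u_0}^\infty (1+v-u_0)^{-\dc}\, \ud v$, which holds for $\dc \leq 1$ (the $\cos 2x$ part contributes a bounded remainder).

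The delicate part is showing $M_{u_0} = \infty$. On $C_{u_0}$, using $\rd_v r \equiv \f 12$ and the identity $\rd_v \phi = (\Phi - \phi)/(2r)$ extracted from the definition of $\Phi$, the mass constraint in \eqref{eq:SSESF} reduces to
\[
\rd_v m \;=\; \bb(1 - \f{2m}{r}\bb)\,\f{(\Phi - \phi)^2}{4} \;\geq\; \f{\Phi^2}{16} - \f{\phi^2}{8},
\]
where the last inequality uses $2m/r < \f 12$ from \eqref{eq:main:geo} together with $(\Phi-\phi)^2 \geq \Phi^2/2 - \phi^2$. For $\dc \leq \f 12$ the integral $\int_{u_0}^\infty \Phi^2\, \ud v$ diverges (write $\sin^2 = \f{1-\cos 2x}{2}$ and treat the $\cos 2x$ piece by oscillation). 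The main obstacle is controlling the negative $\phi^2$ contribution, because the uniform bound $\abs{\phi} \leq C \ep\, r^{-\dc}$ in \eqref{eq:main:apriori} is too weak: $r^{-2\dc}$ is not integrable in $v$ on $C_{u_0}$ when $\dc \leq \f 12$. The resolution is to compute $\phi$ explicitly on the initial cone: from $r\phi(u_0,v) = \f 12 \int_{u_0}^v \Phi(v')\, \ud v'$ and a single integration by parts that exploits the oscillation of $\sin(v-u_0)$, one obtains $\abs{r\phi(u_0,v)} \leq C_{\dc}\ep$ uniformly in $v$, hence $\abs{\phi(u_0,v)} \leq C_{\dc}\ep/(v-u_0)$ for $v \geq u_0 + 1$, so $\int_{u_0}^\infty \phi^2(u_0, v)\, \ud v < \infty$. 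Combined with the divergence of $\int \Phi^2\, \ud v$, this gives $m(u_0, v) \to \infty$, i.e., $M_{u_0} = \infty$, and the corollary follows.
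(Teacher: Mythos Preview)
Your argument is correct and takes a genuinely different route from the paper's proof. The paper builds $\Phi$ as a sum of identical smooth bumps placed at the exponentially separated locations $u_0 + 2^k$, so that $\int_u^v |\Phi|\,\ud v'$ grows only logarithmically (hence \eqref{eq:IDcond} holds for every $\dc>0$); infinitude of the BV norm and of $\int \Phi^2$ is then just a bump count. For the Bondi mass the paper writes the full integrating-factor formula for $m$ on $C_{u_0}$ (equation \eqref{m.data.formula}) and shows that the factor $e^{\pm \int r(\rd_v\phi)^2/\lambda}$ is bounded while $\int r^2(\rd_v\phi)^2/\lambda$ diverges.

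Your choice $\Phi(v)=c_\dc\eps(1+v-u_0)^{-\dc}\sin(v-u_0)$ with $\dc\in(0,\tfrac12]$ trades the sparse-bump picture for a decaying oscillation. Verifying \eqref{eq:IDcond} then requires the concavity step $(1+b)^{1-\dc}-(1+a)^{1-\dc}\le (b-a)^{1-\dc}$ rather than a direct count, and the BV/$L^2$ divergence uses the $\sin^2$, $\cos^2$ half-angle trick plus a Dirichlet-type bound on the oscillatory remainder. The real payoff of your approach is the Bondi mass step: your single integration by parts gives $|r\phi(u_0,v)|\le C_\dc\eps$, which is a full power of $r$ better than the generic bound $|\phi|\aleq \eps r^{-\dc}$ from \eqref{eq:main:apriori}; this makes $\int\phi^2<\infty$ and lets you conclude via the elementary differential inequality $\rd_v m \ge \Phi^2/16 - \phi^2/8$, bypassing the exponential formula entirely. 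Both approaches are valid; the paper's construction is perhaps more robust (it satisfies \eqref{eq:IDcond} for all $\dc$ simultaneously and the data are non-negative), while yours is more self-contained and makes the mass computation shorter.
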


We now briefly describe some generalizations and consequences of Theorem \ref{thm:main.finite}, but we will refer the readers to Sections \ref{sec.quantitative} and \ref{sec.past} for more details. First, while Theorem \ref{thm:main.finite} itself does not show that the solution ``decays'' while approaching timelike infinity\footnote{Since the solution to the \emph{linear} wave equation with data satisfying the assumptions of Theorem \ref{thm:main.finite} does not decay, general solutions constructed in Theorem \ref{thm:main.finite} in fact may not decay in time.}, if we assume in addition a sufficiently strong asymptotic flatness condition, then we can apply the results in \cite{LO1} to show that solution satisfies a pointwise inverse polynomial decay rate. In fact, Theorem \ref{thm:main.finite} also gives the first examples of solutions with large initial BV norms which satisfy the assumptions of the \emph{conditional} decay result in \cite{LO1}. As a consequence of the pointwise decay, (a subclass of) these solutions are also stable with respect to small but not necessarily spherically symmetric perturbations \cite{LO2}. See further discussion in Section \ref{sec.quantitative}.

Second, a consequence of the proof of Theorem \ref{thm:main.finite} is the construction of a large data spacetime solution which is causally geodescally complete \emph{both to the future and the past}. This is achieved by making use of the uniformity of the estimates of Theorem \ref{thm:main.finite} in $u_0$ and taking the limit $u_0\to -\infty$. We refer the readers to Theorem \ref{thm:main} in Section \ref{sec.past} for a precise statement.

Before turning to further discussions of our results, we note that the problem of constructing large data global solutions to supercritical nonlinear wave equations has attracted much recent attention. We refer the readers to \cite{BeSo, KrSc, WaYu, Yang} and the references therein for some recent results. The ideas in the work \cite{WaYu} in particular is inspired by the monumental work of Christodoulou \cite{Chr} in general relativity on the formation of trapped surfaces, which is itself a semi-global\footnote{in the sense that the large data solution constructed in \cite{Chr} is global towards past null infinity.} large data result. Our result appears to be the first in which the large data solutions are global \emph{both to the future and to the past}. As an example of the robustness of our methods, we also consider the much simpler supercritical semilinear equation
\begin{equation} \label{eq:NLW}
	\Box_{\bbR^{1+3}} \phi = \pm \phi^{7}.
\end{equation}
We show that \eqref{eq:NLW} admits solutions with infinite $\dot{H}^{\f76}\times \dot{H}^{\f 16}$ norms for all time and are global both to the future and to the past (see Theorem \ref{thm:nlw}), thus extending\footnote{We emphasize however that the solutions we construct are in different regimes compared to \cite{KrSc} and \cite{BeSo}. See Remark~\ref{rem:nlw} for a more detailed comparison.} the results of Krieger-Schlag \cite{KrSc} and Beceanu-Soffer \cite{BeSo}. 

\subsection{Quantitative decay rates and nonlinear stability}\label{sec.quantitative}

In general, the solutions that are constructed in Theorem \ref{thm:main.finite} may not exhibit uniform decay in $v$. Nevertheless, in this section we show that if one imposes the following strong asymptotic flatness condition on the $C^1$ initial data:
\begin{equation}\label{AF}
\limsup_{v\to \infty}\left( (1+v)^{\omg'} \abs{\Phi}(v) + (1+v)^{\omg'+1} \abs{\rd_{v}\Phi}(v)\right) \leq A_0 < \infty
\end{equation}
for $\omg'>1$, then in fact the solution decays\footnote{Notice that if \eqref{AF} holds, the \emph{linear} solution obviously decays.} in $v$.

To see this, we apply the result in \cite{LO1} by the first two authors. In \cite{LO1}, the long time asymptotics of spherically symmetric, causally geodesically complete solutions to the Einstein-scalar field system was studied. It was shown that (see Theorems~3.1 and 3.2 and Remark 3.9 in \cite{LO1}) sharp pointwise inverse polynomial decaying bounds hold for the solution \emph{even for large initial data}, as long as the solution is \emph{assumed} to satisfy the bound\footnote{In \cite{LO1}, it was shown that alternatively,
$$\sup_{u}\int_{C_u\cap\{r\leq R\}} \left(|\rd_v^2(r\phi)|(u,v)+(\rd_v r)^{-1}|\rd_v^2 r|(u,v)\right) \ud v\to 0$$
is sufficient to guarantee that the inverse polynomial decaying bounds hold. We cite \eqref{AF} instead as it is more convenient to apply in the proof of Theorem \ref{thm:decay}.}
\begin{equation}\label{LO1.assumption}
\sup_{u}\int_{C_u\cap\{r\leq R\}} \left(|\rd_v^2(r\phi)|^p(u,v)+(\rd_v r)^{-p}|\rd_v^2 r|^p(u,v)\right) \ud v\leq C
\end{equation}
for some $p>1$, $R>0$ and $C>0$. Here, and below, we use the convention that $C_u$ denote a constant $u$ curve.

In \cite{LO1}, it was further proved using the work of Christodoulou \cite{Christodoulou:1993bt} that the pointwise decay results holds if the initial data obey \eqref{AF} and have small BV norm. On the other hand, the work \cite{LO1} leaves open the question whether there exist \emph{any} solutions with large BV norm that satisfy both \eqref{AF} and \eqref{LO1.assumption}. Our present work provides a construction of such spacetimes. More precisely, we have
\begin{theorem}\label{thm:decay}
Assume, in addition to the assumptions of Theorem \ref{thm:main.finite}, that $\Phi$ obeys the following bounds for some $A_0>0$ and $\omg'>1$:
$$\sup_{v\in [u_0,\infty]}\left( (1+v)^{\omg'} \abs{\Phi}(v) + (1+v)^{\omg'+1} \abs{\rd_{v}\Phi}(v)\right) \leq A_0 < \infty.$$
Then the following decay estimates hold for $\omg:=\min \set{\omg', 3}$ and for some $A_1>0$:
\begin{align}
	\abs{\phi} \leq & A_{1} \min \set{u^{-\omg}, r^{-1} u^{-(\omg-1)}}, \label{eq:decay1:1} \\
	\abs{\rd_{v}(r \phi)} \leq & A_{1} \min \set{u^{-\omg}, r^{-\omg}}, \label{eq:decay1:2} \\
	\abs{\rd_{u} (r \phi)} \leq & A_{1} u^{-\omg}, \label{eq:decay1:3}\\
	\abs{\rd_{v}^{2} (r \phi)} \leq & A_{1} \min \set{u^{-(\omg+1)}, r^{-(\omg+1)}}, \label{eq:decay2:1} \\
	\abs{\rd_{u}^{2} (r \phi)} \leq & A_{1} u^{-(\omg+1)}, \label{eq:decay2:2} \\
	\abs{\rd_{v}^2 r} \leq & A_{1} \min \set{u^{-3}, r^{-3}}, \label{eq:decay2:3} \\
	\abs{\rd_{u}^2 r} \leq & A_{1} u^{-3}. \label{eq:decay2:4}
\end{align}
\end{theorem}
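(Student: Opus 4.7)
The plan is to derive Theorem \ref{thm:decay} by verifying that the solution produced by Theorem \ref{thm:main.finite} satisfies the hypotheses of the conditional decay result of \cite{LO1} (specifically Theorems~3.1 and 3.2, together with Remark~3.9), and then simply invoking those theorems to read off the estimates \eqref{eq:decay1:1}--\eqref{eq:decay2:4}. The assumption \eqref{AF} supplies the required pointwise initial asymptotics on $\Phi$ and $\partial_v\Phi$, so the only genuine work is to check the bulk condition \eqref{LO1.assumption} on second derivatives in the compact $r$-region.

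First, I would note that by Theorem \ref{thm:main.finite} (whose hypotheses are part of the hypotheses of Theorem \ref{thm:decay}), the maximal solution is future causally geodesically complete and satisfies the uniform a priori bounds \eqref{eq:main:geo}--\eqref{eq:main:apriori}. In particular, globally one has $\partial_v r > \tfrac{1}{3}$ as well as the pointwise bounds $|\partial_v^2(r\phi)| + |\partial_v^2 r| \leq C\eps$. These are exactly the ingredients needed to make \eqref{LO1.assumption} trivial.

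Next, to verify \eqref{LO1.assumption}, fix any $R > 0$ and any $u$. Since $\partial_v r > \tfrac{1}{3}$ along $C_u$, the $v$-segment on which $r \leq R$ has length at most $3R$. On this segment the integrand in \eqref{LO1.assumption} is bounded pointwise by $(C\eps)^p + 3^p (C\eps)^p$ using \eqref{eq:main:geo} and \eqref{eq:main:apriori}. Therefore, for any fixed $p > 1$ (take $p = 2$ for concreteness),
\[
\sup_{u}\int_{C_u\cap\{r\leq R\}} \left(|\rd_v^2(r\phi)|^p + (\rd_v r)^{-p}|\rd_v^2 r|^p\right)\ud v \leq (3R)\bigl((C\eps)^p + 3^p (C\eps)^p\bigr),
\]
which is a uniform bound independent of $u$. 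This establishes \eqref{LO1.assumption}.

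Finally, I would apply Theorems~3.1--3.2 of \cite{LO1} (as noted in Remark~3.9 of that paper, which explains how \eqref{LO1.assumption} can replace the BV-smallness hypothesis) with the asymptotic flatness input taken from \eqref{AF}; the resulting decay rates are precisely \eqref{eq:decay1:1}--\eqref{eq:decay2:4} with $\omg = \min\{\omg', 3\}$ and some constant $A_1$ depending on $A_0$, $\eps$, $\dc$, and the constants implicit in \cite{LO1}. The only non-routine point in this scheme is ensuring that the formulation of initial asymptotics and of the compact-region second-derivative condition used in \cite{LO1} align with \eqref{AF} and \eqref{eq:main:apriori} as we have stated them here; this is likely to be the main, though minor, technical obstacle, and it should be resolvable by a direct quotation/translation of the relevant hypotheses in \cite{LO1}, since our a priori bounds are strictly stronger than what those theorems demand.
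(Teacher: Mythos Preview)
Your proposal is correct and follows exactly the paper's own argument: verify condition \eqref{LO1.assumption} using the uniform bounds \eqref{eq:main:geo} and \eqref{eq:main:apriori} from Theorem~\ref{thm:main.finite}, then invoke Theorems~3.1, 3.2 and Remark~3.9 of \cite{LO1}. The only difference is that you spell out the (trivial) length-of-segment estimate explicitly, which the paper leaves implicit.
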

\begin{proof}
By Theorems 3.1 and 3.2 and Remark 3.9 in \cite{LO1}, the desired decay rates hold if for some $p>1$, $R>0$ and $C>0$, we have
$$\sup_{u}\int_{C_u\cap\{r\leq R\}} \left(|\rd_v^2(r\phi)|^p(u,v)+(\rd_v r)^{-p}|\rd_v^2 r|^p(u,v)\right) \ud v \leq C.$$
This latter bound indeed holds (for any $p>1$ and any $R>0$) in view of the estimates \eqref{eq:main:geo} and \eqref{eq:main:apriori} in Theorem \ref{thm:main.finite}.
\end{proof}

\begin{remark}
As already noted in \cite{LO1}, the decay rates that are obtained in Theorem \ref{thm:decay} are sharp.
\end{remark}

Given these decay rates, it seems natural to ask whether the solutions constructed in Theorem \ref{thm:decay} are stable with respect to small perturbations \emph{even outside spherical symmetry}. This question will be addressed in a forthcoming paper \cite{LO2} in which we answer this question in the affirmative\footnote{To be precise, this holds for a subclass of the solutions constructed in Theorem \ref{thm:decay}. In particular, some (not necessarily small) higher derivative bounds \emph{for the initial data} are also required.}, thus extending the proof of the nonlinear stability of Minkowski spacetime to a more general class of dispersive spacetimes.

\subsection{Large data solutions which are both future and past complete}\label{sec.past}

Theorem \ref{thm:main.finite} constructs future causally geodesically complete solutions to the future of the hypersurface $C_{u_0}$. On the other hand, a priori estimates in Theorem \ref{thm:main.finite} are independent of $u_0$. One can therefore\footnote{To be precise, in order to justify this procedure, one need an additional assumption on the second derivative of $\Phi$ (see \eqref{thm:main.add.assumption} below) as well as potentially taking $\ep>0$ to be smaller. We explicitly note, however, that the smaller of $\ep$ is \emph{independent} of the second derivative bounds of $\Phi$.} take $u_0\to -\infty$ and obtain solutions to \eqref{eq:SSESF} for $(u,v)\in \set{(u, v) : -\infty < u < \infty, \ u \leq v < \infty}$. The solutions constructed in this manner are moreover causally geodesically complete both towards the future and the past. More precisely, we have the following theorem:
\begin{theorem}\label{thm:main}
Let $\Phi:\mathbb R\to \mathbb R$ be a smooth function such that \eqref{eq:IDcond} holds for some $\gamma>0$, i.e.,
\begin{equation*}
\int_{u}^{v}|\Phi(v')|\, \ud v'\leq \ep(v-u)^{1-\dc}, \quad |\Phi(v)|+|\Phi'(v)|\leq \eps, \quad \forall -\infty<u\leq v<\infty
\end{equation*}
and
\begin{equation}\label{thm:main.add.assumption}
\sup_v|\Phi''(v)|<\infty.
\end{equation}
Then for every $\dc>0$, there exists $\ep>0$ such that there exists a solution to \eqref{eq:SSESF} which is both future and past casually geodesically complete and obeys
$$\lim_{u\to -\infty} 2\rd_v(r\phi)(u,v)=\Phi(v),\quad \lim_{u\to -\infty} \rd_{v} r(u,v)=\f 12.$$
\end{theorem}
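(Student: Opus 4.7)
The strategy is to apply Theorem~\ref{thm:main.finite} on a sequence of initial curves $C_{-n}$ with $n \to \infty$ and extract a limit, exploiting the $u_{0}$-independence of every bound in \eqref{eq:main:geo}--\eqref{eq:main:apriori}. Concretely, for each $n \in \bbN$, applying Theorem~\ref{thm:main.finite} to the restriction $\Phi\rst_{[-n,\infty)}$ (which trivially inherits \eqref{eq:IDcond} with the same $\ep, \dc$) produces a solution $(r^{(n)}, \phi^{(n)}, m^{(n)})$ on $\calQ^{(n)} := \set{(u,v) : -n \leq u \leq v}$ obeying \eqref{eq:main:geo}, \eqref{eq:main:apriori} and the axial conditions $r^{(n)}(u,u) = m^{(n)}(u,u) = 0$, all with constants uniform in $n$. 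The additional hypothesis \eqref{thm:main.add.assumption} brings us into the scope of Proposition~\ref{prop:high-d} and upgrades the bounds to include third-order derivatives, again uniformly in $n$.

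With uniform higher-regularity bounds in hand, the limit is extracted by a routine compactness argument. Fix any compact set $K \subset \set{(u,v) : -\infty < u \leq v < \infty}$; for $n$ large enough that $K \subset \calQ^{(n)}$, the sequence is bounded in $C^{3}(K)$ uniformly in $n$, so by Arzel\`a--Ascoli and a diagonal argument on an exhaustion by compacts, some subsequence converges in $C^{2}_{\mathrm{loc}}$ to a limit $(r,\phi,m)$ on all of $\set{u \leq v}$. The limit inherits \eqref{eq:main:geo}, \eqref{eq:main:apriori}, the axial conditions, and (by $C^{2}$ convergence) the equations \eqref{eq:SSESF}.

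The asymptotic identities at past null infinity do not follow from $C^{2}_{\mathrm{loc}}$ convergence and must be verified quantitatively. Since $2 \rd_{v}(r^{(n)} \phi^{(n)})(-n, v) = \Phi(v)$, integrating the second equation of \eqref{eq:SSESF} in $u$ gives
\[
	\rd_{v}(r^{(n)} \phi^{(n)})(u, v) - \tfrac{1}{2} \Phi(v) = \int_{-n}^{u} \frac{2 m^{(n)} \rd_{u} r^{(n)} \rd_{v} r^{(n)}}{(1 - 2m^{(n)}/r^{(n)})(r^{(n)})^{2}}\, \phi^{(n)}(u', v) \, \ud u'.
\]
The bound $r^{(n)}(u', v) \geq (v-u')/3$ (from $\rd_{v} r^{(n)} > 1/3$ and $r^{(n)}(u', u') = 0$), together with \eqref{eq:main:geo}, \eqref{eq:main:apriori}, shows that the integrand is dominated by $C \ep (v - u')^{-1-\dc}$, which is integrable in $u'$ down to $-\infty$. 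Passing $n \to \infty$ and then $u \to -\infty$ yields the first limit identity. For $\rd_{v} r$ the same integration applied to the first equation of \eqref{eq:SSESF} introduces no $\phi$ factor, so one needs an auxiliary bound $m^{(n)}(u', v) \leq C \ep^{2} (v-u')^{1-\dc}$, obtained by integrating $\rd_{v} m^{(n)}$ from the regular axis $v' = u'$ using the identity $r \rd_{v} \phi = \rd_{v}(r\phi) - \phi \rd_{v} r$ together with the initial hypothesis $\int_{u'}^{v} |\Phi| \, \ud v' \leq \ep(v-u')^{1-\dc}$; this again yields integrability of $\rd_{u}\rd_{v} r^{(n)}$ in $u'$ down to $-\infty$.

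Future causal geodesic completeness of the limit follows from Section~\ref{sec:cgc}, since that argument depends only on the bounds \eqref{eq:main:geo}, \eqref{eq:main:apriori}. Past completeness follows from the time-reversed version of the same argument: \eqref{eq:main:geo} together with $2m/r < 1/2$ make $\Omg^{2} = -4 \rd_{u} r \rd_{v} r / (1 - 2m/r)$ bounded above and below by positive constants throughout $\set{u \leq v}$, so a past-directed radial null geodesic has affine length comparable to $\int \Omg^{-2} \, \ud u$ (resp.\ $\int \Omg^{-2} \, \ud v$), which diverges as $u \to -\infty$ (resp.\ $v \to -\infty$); the finitely many axis reflections in between are handled as in the future direction, and non-radial and timelike geodesics reduce to this case via spherical symmetry. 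The main obstacle in this plan is the third step — in particular, producing the auxiliary mass bound needed to establish integrability of the $\rd_{v} r$ equation all the way to $u' = -\infty$; the compactness extraction itself is routine once the uniform higher-regularity bounds are available.
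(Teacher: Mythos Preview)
Your proposal is essentially the paper's own argument: take $u_{0}\to-\infty$ along a sequence, invoke the $u_{0}$-uniform bounds from Theorem~\ref{thm:main.finite} together with the higher-derivative bounds of Proposition~\ref{prop:high-d}, extract a $C^{2}_{\mathrm{loc}}$ limit by Arzel\`a--Ascoli, and verify the past-null-infinity data quantitatively by integrating the wave equations for $\dvr$ and $\dvr^{-1}\rd_{v}(r\phi)$; the auxiliary mass bound you flag as ``the main obstacle'' is already available as Proposition~\ref{prop:Est4m} (equation \eqref{eq:btstrp:pf:m}), so this step is in fact routine.

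The one point to correct is your sketch of past causal geodesic completeness. The statement that ``non-radial and timelike geodesics reduce to this case via spherical symmetry'' is not right: spherical symmetry lets you confine a geodesic to a $2$-plane, but a geodesic with $\bfJ\neq 0$ is still genuinely non-radial there, and the analysis of such geodesics (in particular ruling out that they oscillate into the axis infinitely often in finite affine time) is the substantive content of Lemmas~\ref{lem:uvr:sign}--\ref{cor:leave:al}. The correct argument, and the one the paper uses, is exactly what your first clause says: rerun the \emph{full} argument of Section~\ref{sec:cgc} with time orientation reversed, noting that all the bounds it uses (namely \eqref{eq:main:geo} and \eqref{eq:main:apriori}) are symmetric in $u$ and $v$ and hold on the limiting solution. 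Drop the radial-null heuristic and simply cite the time-reversed Section~\ref{sec:cgc}.
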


The proof of global existence in a suitable double null coordinate system will be carried out in Section~\ref{sec.proof.inf}. Future and past causal geodesic completeness of the solution will be proved in Section~\ref{sec:cgc}.

We contrast Theorem \ref{thm:main} with the works \cite{WaYu, Yang} on large solutions to nonlinear wave equations. In \cite{WaYu, Yang}, the key idea, inspired by \cite{Chr}, is to construct solutions which are large but ``sufficiently outgoing''. For instance, on an initial Cauchy hypersurface $\{t=0\}$, this means that $\rd_v\phi$ is appropriately small, while $\rd_u\phi$ is allowed to be large. This approach, while useful to obtain a global solution to the future, does not seem applicable to construct solutions which are global both to the future and to the past as in Theorem \ref{thm:main}. On the other hand, we also note that the work \cite{WaYu} allows the initial data to be large in a compact region in space, whereas in Theorem \ref{thm:main}, the ``largeness'' of the initial data is only achieved by the lack of decay at infinity.

\subsection{Outline of the paper}
We end the introduction with an outline of the remainder of the paper. In Section \ref{sec.prelim}, we will discuss some preliminaries, including the geometric setup and some identities that we will repeatedly use. The main theorem (Theorem \ref{thm:main.finite}) will then be proved in Section \ref{sec.proof}, modulo the assertion that the resulting spacetime is future causally geodesically complete. Then using the estimates obtained in Section \ref{sec.proof}, we will prove Theorem \ref{thm:main} in Section \ref{sec.proof.inf}, again modulo causal geodesic completeness. In Section \ref{sec.infinite.BV.mass}, we will then return to the proof of Corollary \ref{cor.infinite.BV.mass}. In Section~\ref{sec:cgc}, we finally complete the proof of Theorems~\ref{thm:main.finite} and \ref{thm:main} by establishing the causal geodesic completeness statements. Lastly, in Appendix \ref{appendix}, we will apply the methods in this paper to study the equation $\Box_{\mathbb R^{1+3}}\phi=\pm\phi^7$, and show that there exists solutions global to the future and the past which have infinite critical $\dot H^{\f 76} \times \dot H^{\f 16}$ norm and infinite critical Strichartz norm.

\subsection*{Acknowledgments}
S.-J. Oh is a Miller Research Fellow, and thanks the Miller Institute for support.

\section{Preliminaries}\label{sec.prelim}
In this section, we further explain the geometric setup of the problem and introduce the notation that we will use for the rest of the paper.

\subsection{Setup}\label{setup}

As discussed in the introduction, \eqref{eq:SSESF} arises as a reduction of the (3+1)-dimensional Einstein-scalar field equation under spherical symmetry, written in a double null coordinate system. Here we describe \eqref{eq:SSESF} as a (1+1)-dimensional system, which is the point of view we adopt in our analysis throughout this paper until Section~\ref{sec:cgc}.

Consider the $(1+1)$-dimensional domain
\begin{equation*}
	\PD = \set{(u, v) \in \bbR^{1+1} : u \in (-\infty, \infty), \, v \in [u, \infty)},
\end{equation*}
with partial boundary
\begin{equation*}
	\Gmm = \set{(u, u) \in \PD : u \in (-\infty, \infty)}.
\end{equation*}
We define causality in $\PD$ with respect to the ambient metric $m = - \ud u \cdot \ud v$ of $\bbR^{1+1}$, and the time orientation in $\PD$ so that $\rd_{u}$ and $\rd_{v}$ are future pointing. We use the notation $C_{u}$ and $\uC_{v}$ for constant $u$ and $v$ curves in $\PD$, respectively. We call $C_{u}$ an \emph{outgoing} null curve and $\uC_{v}$ as an \emph{incoming} null curves, in reference to their directions (to the future) relative to $\Gmm$.
Moreover, given $- \infty < u_{0} < u_{1} < \infty$, let
\begin{align*}
	\PD_{[u_{0}, u_{1}]} =& \set{(u, v) \in \PD : u \in [u_{0}, u_{1}]}, \\
	\PD_{[u_{0}, \infty)} =& \set{(u, v) \in \PD : u \in [u_{0}, \infty)}.
\end{align*}

We introduce the notion of a $C^{k}$ solution to \eqref{eq:SSESF} as follows.
\begin{definition} \label{def:C1-sol}
Let $- \infty < u_{0} < u_{1} < \infty$. We say that a triple $(r, \phi, m)$ of real-valued functions on $\PD_{[u_{0}, u_{1}]}$ is a \emph{$C^{k}$ solution to \eqref{eq:SSESF}} if it satisfies this system of equations and the following conditions hold:
\begin{enumerate}
\item The following functions are $C^{k}$ in $\PD_{[u_{0}, u_{1}]}$:
\begin{equation*}
	\rd_{u} r, \ \rd_{v} r, \ \phi, \ \rd_{v}(r \phi), \ \rd_{u}(r \phi).
\end{equation*}
\item For $\rd_{v} r$ and $\rd_{u} r$, we have
\begin{equation*}
	\inf_{\PD_{[u_{0}, u_{1}]}} \rd_{u} r > -\infty, \quad \inf_{\PD_{[u_{0}, u_{1}]}} \rd_{v} r > 0.
\end{equation*}
\item For each point $(a, a) \in \Gmm \cap \PD_{[u_{0}, u_{1}]}$, the following boundary conditions hold:
\begin{align}
\label{eq:bc4r}
	r (a, a) =& 0, \\
\label{eq:bc4m}
	m(a, a) =& 0.
\end{align}
\end{enumerate}

Moreover, if $(r, \phi, m)$ is a $C^{k}$ solution on $\PD_{[u_{0}, u_{1}]}$ for every $u_{1}$ greater than $u_{0}$, then we say that it is a \emph{global $C^{k}$ solution on $\PD_{[u_{0}, \infty)}$}.
\end{definition}

The boundary condition \eqref{eq:bc4r} can be combined with the regularity assumption to deduce higher order boundary order conditions for $r$ and $r \phi$. More precisely, let $(r, \phi, m)$ be a $C^{k}$ solution on $\PD_{[u_{0}, u_{1}]}$. Since $u = v$ on $\Gmm = \set{ r = 0}$, we have
\begin{equation} \label{eq:bc4high-d}
	(\rd_{v} + \rd_{u})^{\ell} r (a, a) = 0, \quad
	(\rd_{v} + \rd_{u})^{\ell} (r \phi) (a, a) = 0,
\end{equation}
for every $\ell = 0, \ldots, k$ and $(a, a) \in \Gmm \cap \PD_{[u_{0}, u_{1}]}$.

Consider the characteristic initial value problem for \eqref{eq:SSESF} with data
\begin{equation} \label{eq:ini-dvrphi}
	(\rd_v r)^{-1} \rd_{v} (r \phi) \restriction_{C_{u_{0}}} = \Phi,
\end{equation}
and initial gauge condition\footnote{We call \eqref{eq:ini-dvr} an initial gauge condition since it can be enforced for an arbitrary initial data set by a suitable reparametrization of the coordinate $v$, which is a gauge symmetry of the problem. See Remark~\ref{rem:gauge}.}
\begin{equation} \label{eq:ini-dvr}
	(\rd_v r) \restriction_{C_{u_{0}}} = \frac{1}{2},
\end{equation}
on some outgoing null curve $C_{u_{0}}$. This problem is locally well-posed for $C^{k}$ data $(k \geq 1)$ in the following sense: Given any $C^{k}$ data $\Phi$ with $k \geq 1$, there exists a unique $C^{k}$ solution to \eqref{eq:SSESF} on $\PD_{[u_{0}, u_{1}]}$ for some $u_{1} > u_{0}$, which only depends on $u_{0}$ and the $C^{k}$ norm of $\Phi$. We omit the proof, which is a routine iteration argument using, for instance, the equations stated in Section~\ref{subsec:eqs} below.

\begin{remark} \label{rem:gauge}
The system \eqref{eq:SSESF} is invariant under reparametrizations of the form $(u, v) \mapsto ( U(u), V(v) )$; this is the \emph{gauge invariance} of \eqref{eq:SSESF}. Note that we have implicitly fixed a gauge in the setup above, by requiring that $u = v$ on $\Gmm$ and imposing the initial gauge condition \eqref{eq:ini-dvr}.
\end{remark}

\begin{remark} \label{rem:3+1d}
As discussed in the introduction, reduction of the Einstein-scalar field system under spherical symmetry yields the above (1+1)-dimensional setup, where $\Gmm$ corresponds to the axis of symmetry $\set{r = 0}$. Furthermore, the boundedness of the function $\nb^{\alp} r \nb_{\alp} r = 1 - \frac{2m}{r}$ on $\Gmm$ translates to the boundary condition $m = 0$ on $\Gmm$. Conversely, any suitably regular solution $(r, \phi, m)$ on $\PD_{0} \subseteq \PD$ gives rise to a spherically symmetric (3+1)-dimensional solution $(g, \phi)$ to the Einstein-scalar field system on $\calM = \PD_{0} \times \bbS^{2}$, where $g$ is as in the introduction.
\end{remark}

\begin{remark}
Finally, although it is stated slightly differently, it can be checked that the notion of $C^{1}$ solution in \cite{LO1} is equivalent to the present definition.
\end{remark}

\subsection{Structure of \eqref{eq:SSESF}} \label{subsec:eqs}

Following \cite{Christodoulou:1993bt}, we introduce the shorthands
\begin{equation*}
	\dvr = \rd_{v} r, \quad
	\dur = \rd_{u} r, \quad
	\mu = \frac{2 m }{r}.
\end{equation*}
These dimensionless quantities will play an important role in this paper, as they encode key geometric information about the spacetime.

In what follows, we will rewrite \eqref{eq:SSESF} using normalized derivatives $\dvr^{-1} \rd_{v}$ and $\dur^{-1} \rd_{u}$ instead of $\rd_{v}$ and $\rd_{u}$.
Unlike $\rd_{v}$ and $\rd_{u}$, these normalized derivatives are invariant under reparametrizations of $v$ and $u$. Moreover, it turns out that writing \eqref{eq:SSESF} in such a form leads to \emph{decoupling} of the evolutionary equations under mild assumptions on the quantities $\dvr$, $\dur$ and $\mu$, which is convenient for analysis; see Remark~\ref{rem:decouple} below for a more detailed discussion.

The wave equation for $\phi$, in terms of $\dvr^{-1} \rd_{v} \phi$ and $\dur^{-1} \rd_{u} \phi$, takes the form
\begin{align}
\label{eq:wave4dvrphi}
	\rd_{u} \bb( \dvr^{-1} \rd_{v} (r \phi) \bb) = & - \frac{2 m \dur}{(1-\mu) r^{2}} \bb( \dvr^{-1} \rd_{v} (r \phi) \bb) + \frac{2 m \dur}{(1-\mu) r^{2}} \phi, \\
\label{eq:wave4durphi}
	\rd_{v} \bb( \dur^{-1} \rd_{u} (r \phi) \bb) = & - \frac{2 m \dvr}{(1-\mu) r^{2}} \bb( \dur^{-1} \rd_{u} (r \phi) \bb) + \frac{2 m \dvr}{(1-\mu) r^{2}} \phi.
\end{align}
The wave equation for $r$, in terms of $\log \dvr$ and $\log \dur$, takes the form
\begin{align}
\label{eq:wave4dvr}
	\rd_{u} \log \dvr = & \frac{2 m \dur}{(1-\mu) r^{2}}, \\
\label{eq:wave4dur}
	\rd_{v} \log \dur = & \frac{2 m \dvr}{(1-\mu) r^{2}}.
\end{align}
The equations for the Hawking mass $m$ read
\begin{align}
\label{eq:dv-m}
	\dvr^{-1} \rd_{v} m = & \frac{1}{2} (1-\mu) r^{2} (\dvr^{-1} \rd_{v} \phi)^{2}, \\
\label{eq:du-m}
	\dur^{-1} \rd_{u} m = & \frac{1}{2} (1-\mu) r^{2} (\dur^{-1} \rd_{u} \phi)^{2}.
\end{align}
Moreover, the following \emph{Raychaudhuri equations} can be derived from \eqref{eq:SSESF}:
\begin{align}
	\label{eq:raych4v}
	\dvr^{-1} \rd_{v} \log \abs{\frac{\dur}{1-\mu}} = & r (\dvr^{-1} \rd_{v} \phi)^{2}, \\
	\label{eq:raych4u}
	\dur^{-1} \rd_{u} \log \abs{\frac{\dvr}{1-\mu}} = & r (\dur^{-1} \rd_{u} \phi)^{2}.
\end{align}
By the wave equation for $r$, we also have the commutator formulae
\begin{align}
\label{eq:comm-du-dv}
	[\rd_{u}, \dvr^{-1} \rd_{v}] = & - \frac{2 m \dur}{(1-\mu) r^{2}} \dvr^{-1} \rd_{v}, \\
\label{eq:comm-dv-du}
	[\rd_{v}, \dur^{-1} \rd_{u}] = & - \frac{2 m \dvr}{(1-\mu) r^{2}} \dur^{-1} \rd_{u}.
\end{align}

\begin{remark} \label{rem:decouple}
Once we have a suitable control of the underlying geometry, namely upper and lower bounds for $\dvr, \dur$ and $(1-\mu)$, the evolutionary equations \eqref{eq:wave4dvrphi}, \eqref{eq:wave4durphi}, \eqref{eq:wave4dvr} and \eqref{eq:wave4dur} are essentially all decoupled from each other.
This observation allows us to close bounds for $\dvr^{-1} \rd_{v}$ derivatives of $r \phi$ first, and then derive bounds for other variables (such as $\log \dvr$, $\dur^{-1} \rd_{u}(r \phi)$ and $\log \dur$) afterwards.
Moreover, from \eqref{eq:wave4dvrphi}, \eqref{eq:wave4dvr} and \eqref{eq:comm-du-dv}, it is clear that a key step in propagating the incoming waves $\dvr^{-1} \rd_{v} (r \phi)$ and $\log \dvr$ is to control the factor $\frac{2 m \dur}{(1-\mu) r^{2}}$. Similarly, controlling $\frac{2 m \dvr}{(1-\mu) r^{2}}$ is important for propagating the outgoing waves $\dur^{-1} \rd_{u} (r \phi)$ and $\log \dur$.
\end{remark}

Finally, for a $C^{k}$ solution $(r, \phi, m)$, note that the boundary conditions in \eqref{eq:bc4high-d} imply
\begin{equation} \label{eq:bc}
	(\dvr^{-1} \rd_{v} - \dur^{-1} \rd_{u})^{\ell} r(a, a) = 0, \quad
	(\dvr^{-1} \rd_{v} - \dur^{-1} \rd_{u})^{\ell} (r \phi)(a, a) = 0,
\end{equation}
for $\ell = 0, \ldots, k$ on the axis. These equations, along with the wave equations stated above, can be used to compute
$(\dur^{-1} \rd_{u})^{k} (r \phi)$ and $(\dur^{-1} \rd_{u})^{k-1} \log \dur$
in terms of $(\dvr^{-1} \rd_{v})^{k} (r \phi)$, $(\dvr^{-1} \rd_{v})^{k-1} \log \dvr$ and lower order terms.

\subsection{Averaging operators and commutation with $\dvr^{-1} \rd_{v}$}
Observe that a number of quantities in the nonlinearity of \eqref{eq:SSESF} are given in terms of averaging formulae.
For instance, by the boundary conditions \eqref{eq:bc4r}, \eqref{eq:bc4m} and the equation \eqref{eq:dv-m}, we have
\begin{align}
\label{eq:avg:phi}
	\phi (u, v) =& \frac{1}{r(u, v)} \int_{u}^{v} \dvr^{-1} \rd_{v} (r \phi) \, \dvr (u, v') \, \ud v', \\
\label{eq:avg:2m-over-r2}
	 \frac{2m}{r^{2}} (u, v)=& \frac{1}{r^{2}(u, v)} \int_{u}^{v} (1-\mu) r (\dvr^{-1} \rd_{v} \phi)^{2} \, r \dvr \, (u, v') \ud v'.
\end{align}

Motivated by these formulae, we define the $s$-order averaging operator $I_{s}$ on outgoing null curves by
\begin{equation*}
	I_{s}[f](u, v) = \frac{1}{r^{s}} \int_{u}^{v} f(v') \, r^{s-1} \dvr (u, v') \, \ud v'.
\end{equation*}
By pulling out $f$ outside the integral and using the fundamental theorem of calculus, we obtain the basic estimate
\begin{equation} \label{eq:avg-est}
	\abs{I_{s} [f](u, v)} \leq \frac{1}{s} \sup_{v' \in [u, v]} \abs{f}
\end{equation}
The averaging operator $I_{s}$ turns out to obey a nice differentiation formula with respect to $\dvr^{-1} \rd_{v}$.
\begin{lemma} \label{lem:avg-d}
For any real number $s \geq 1$, the following identity holds.
\begin{equation} \label{eq:avg-d}
	\dvr^{-1} \rd_{v} I_{s}[f](u, v) = I_{s + 1}[\dvr^{-1} \rd_{v} f](u, v).
\end{equation}
\end{lemma}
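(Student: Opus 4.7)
The plan is to compute both sides of \eqref{eq:avg-d} explicitly in terms of $f(v)$ and $I_s[f](u,v)$ and observe that they coincide. Since $u$ is held fixed throughout, every integrand and endpoint depends only on $v'$ and $v$, so the argument reduces to ordinary calculus in the variable $v$.

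For the left-hand side, I would write $I_s[f](u,v) = r^{-s}(u,v)\, F(u,v)$ with $F(u,v) := \int_u^v f(v')\, r^{s-1}(u,v')\, \dvr(u,v')\, \ud v'$. Applying the product rule in $v$ together with the fundamental theorem of calculus, so that $\rd_v F = f(v)\, r^{s-1}(u,v)\, \dvr(u,v)$, gives
\[
\rd_v I_s[f] = -s\, r^{-s-1}\, \dvr\, F + r^{-s}\, f(v)\, r^{s-1}(u,v)\, \dvr(u,v) = \frac{\dvr}{r}\bigl(f(v) - s\, I_s[f](u,v)\bigr),
\]
so after dividing by $\dvr$ one obtains $\dvr^{-1}\rd_v I_s[f] = r^{-1}\bigl(f(v) - s\, I_s[f](u,v)\bigr)$.

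For the right-hand side, the factor $\dvr^{-1}$ cancels the $\dvr$ in the weight of $I_{s+1}$, yielding
\[
I_{s+1}[\dvr^{-1}\rd_v f](u,v) = \frac{1}{r^{s+1}}\int_u^v (\rd_{v'} f)(v')\, r^s(u,v')\, \ud v'.
\]
I would then integrate by parts, using the elementary identity $\rd_{v'}(r^s) = s\, r^{s-1}\dvr$. The boundary contribution at $v' = u$ equals $f(u)\, r^s(u,u)$, which vanishes by the boundary condition $r(u,u) = 0$ from Definition~\ref{def:C1-sol}, producing
\[
I_{s+1}[\dvr^{-1}\rd_v f] = \frac{1}{r^{s+1}}\Bigl(f(v)\, r^s(u,v) - s\int_u^v f(v')\, r^{s-1}\dvr\, \ud v'\Bigr) = \frac{1}{r}\bigl(f(v) - s\, I_s[f](u,v)\bigr),
\]
matching the left-hand side.

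No substantive obstacle presents itself; the only mild point is the convergence of the integrals as $v' \to u$, where $r$ vanishes to first order. For $s \geq 1$ the factor $r^{s-1}\dvr$ stays bounded there and the boundary contribution $r^s(u,u)$ vanishes, so the assumption $s \geq 1$ is more than adequate (indeed the same computation would work for all $s > 0$).
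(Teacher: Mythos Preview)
Your proof is correct. It takes a somewhat different, more direct route than the paper's: rather than computing both sides separately, the paper performs the change of variables $\rho = r^{s}(u,v)$, which turns $I_{s}[f]$ into the elementary average $\frac{1}{s\rho}\int_{0}^{\rho} f \, \ud \rho'$, differentiates that in $\rho$ via the identity $\rd_{\rho}\bigl(\tfrac{1}{\rho}\int_{0}^{\rho} f \, \ud \rho'\bigr) = \tfrac{1}{\rho^{2}}\int_{0}^{\rho}(\rd_{\rho} f)\,\rho' \, \ud \rho'$, and then changes back. Your approach --- product rule plus fundamental theorem of calculus on the left, integration by parts on the right, then matching the common expression $r^{-1}\bigl(f(v) - s\, I_{s}[f]\bigr)$ --- is more elementary and makes the role of the boundary condition $r(u,u)=0$ completely explicit. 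The paper's substitution is slightly slicker in that the cancellation happens automatically once one recognizes the averaging structure, but your version is arguably more transparent and, as you note, shows clearly that the identity in fact holds for all $s > 0$.
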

\begin{proof}
In what follows, we will often omit writing $u$, which is fixed throughout the proof.
Making the change of variable $\rho  = r^{s}(u, v)$ so that
\begin{equation*}
	s r^{s-1} \dvr \, \ud v = \ud \rho, \quad
	\dvr^{-1} \rd_{v} = s \rho^{\frac{s-1}{s}} \rd_{\rho},
\end{equation*}
we may rewrite $I_{s}[f]$ and $\dvr^{-1} \rd_{v} I_{s}[f]$ as
\begin{align*}
	I_{s}[f](\rho) =& \frac{1}{s \rho} \int_{0}^{\rho} f(\rho') \, \ud \rho', \\
	(\dvr^{-1} \rd_{v} I_{s}[f]) (\rho) =& \rho^{\frac{s-1}{s}} \rd_{\rho} \bb( \frac{1}{\rho} \int_{0}^{\rho} f(\rho') \, \ud \rho' \bb),
\end{align*}
where we abuse notation and write $I_{s}[f](\rho) = I_{s}[f](v(\rho))$, $f(\rho') = f(v(\rho'))$ etc.
Note that
\begin{equation*}
	\rd_{\rho} \bb(\frac{1}{\rho} \int_{0}^{\rho} f(\rho') \, \ud \rho' \bb) = \frac{1}{\rho^{2}} \int_{0}^{\rho} (\rd_{\rho} f)(\rho') \, \rho' \ud \rho'.
\end{equation*}
The previous identity follows quickly by, say, making a further change of variables $\sgm' = \rho' / \rho$. Plugging this in the expression for $\dvr^{-1} \rd_{v} I_{s}[f]$ and changing the variable back to $v$, we arrive at \eqref{eq:avg-d}. \qedhere
\end{proof}

Applying Lemma~\ref{lem:avg-d} to the formulae \eqref{eq:avg:phi} and \eqref{eq:avg:2m-over-r2}, we obtain
\begin{align}
	\dvr^{-1} \rd_{v} \phi (u, v) = & \frac{1}{r^{2}(u, v)} \int_{u}^{v} (\dvr^{-1} \rd_{v})^{2} (r \phi) \, r \dvr (u, v') \, \ud v', \label{eq:dvphi:avg} \\
	\dvr^{-1} \rd_{v} \bb( \frac{2m}{r^{2}} \bb) (u, v) = & \frac{1}{r^{3}(u, v)} \int_{u}^{v} (\dvr^{-1} \rd_{v}) \bb( (1-\mu) r (\dvr^{-1} \rd_{v} \phi)^{2} \bb) \, r^{2} \dvr (u, v') \, \ud v'.	\label{eq:dv-2m-over-r2}
\end{align}
Such differentiated averaging identities are useful near the axis. On the other hand, far away from the axis, it is more effective to simply commute $\dvr^{-1} \rd_{v}$ with $r$, as in the following identities:
\begin{align}
	r \, \dvr^{-1} \rd_{v} \phi =& \dvr^{-1} \rd_{v} (r \phi) - \phi,  \label{eq:dvphi:large-r}\\
\label{eq:dv-m-over-r2:large-r}
	r^{2} \, \dvr^{-1} \rd_{v} \bb( \frac{2m}{r^{2}} \bb)
	=& \dvr^{-1} \rd_{v} (2 m) - \frac{4m}{r}.
\end{align}

An entirely analogous discussion holds with the roles of $u$ and $v$ interchanged. Indeed, with the definition
\begin{equation} \label{eq:avg:conj}
	\underline{I}_{s}[g](u, v) = \frac{1}{r^{s}} \int_{u}^{v} g(u') \, r^{s-1} \dur(u', v) \, \ud u',
\end{equation}
the following analogue of Lemma~\ref{lem:avg-d} can be proved.
\begin{lemma} \label{lem:avg-d:conj}
For any real number $s \geq 1$, the following identity holds.
\begin{equation} \label{eq:avg-d:conj}
	\dur^{-1} \rd_{u} \underline{I}_{s}[g](u,v) = \underline{I}_{s+1}[\dur^{-1} \rd_{u} g](u,v).
\end{equation}
\end{lemma}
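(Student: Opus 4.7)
The plan is to follow the proof of Lemma~\ref{lem:avg-d} verbatim, with the roles of $u$ and $v$ exchanged throughout. The only structural input used in the previous proof was the boundary condition $r(u,u)=0$ from \eqref{eq:bc4r}, which on $\Gmm$ reads symmetrically: for any fixed $v$ we also have $r(v,v)=0$. This is exactly what is needed to turn the defining integral of $\underline{I}_{s}[g]$ into an integral over $\rho\in[0, r^{s}(u,v)]$ after a suitable substitution.

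Concretely, first I would fix $v$ (treating it as a parameter that plays no role in the differentiation) and introduce $\rho = r^{s}(u',v)$, so that $s\,r^{s-1}\,\dur\,\ud u' = \ud\rho$ and $\dur^{-1}\rd_{u} = s\rho^{\frac{s-1}{s}}\rd_{\rho}$. Because $r(v,v)=0$, as $u'$ traverses the interval $[u,v]$, $\rho$ traverses the interval between $0$ and $r^{s}(u,v)$; the identity is purely formal, so any sign ambiguity in $\dur$ is absorbed on both sides. The definition of $\underline{I}_{s}[g]$ then collapses to
\begin{equation*}
  \underline{I}_{s}[g](\rho) \;=\; \frac{1}{s\rho}\int_{0}^{\rho} g(\rho')\,\ud\rho',
\end{equation*}
and the left-hand side of \eqref{eq:avg-d:conj} becomes $\rho^{\frac{s-1}{s}}\rd_{\rho}\bigl(\tfrac{1}{\rho}\int_{0}^{\rho}g(\rho')\,\ud\rho'\bigr)$. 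The identity then reduces to the elementary one-dimensional fact
\begin{equation*}
  \rd_{\rho}\!\left(\frac{1}{\rho}\int_{0}^{\rho} g(\rho')\,\ud\rho'\right) \;=\; \frac{1}{\rho^{2}}\int_{0}^{\rho}(\rd_{\rho}g)(\rho')\,\rho'\,\ud\rho',
\end{equation*}
already established in the proof of Lemma~\ref{lem:avg-d} via the substitution $\sgm'=\rho'/\rho$ (or, equivalently, a single integration by parts in $\rho'$). Translating back to the variable $u'$ yields \eqref{eq:avg-d:conj}.

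As a sanity check—and as an alternative proof avoiding any discussion of monotonicity of $r^{s}(u',v)$ in $u'$—one can proceed by direct differentiation and integration by parts: compute $\rd_{u}\underline{I}_{s}[g](u,v)$ from its definition, picking up a boundary contribution at $u'=u$ and a bulk term from the $r^{-s}$ prefactor; then expand $\underline{I}_{s+1}[\dur^{-1}\rd_{u}g]$, integrate by parts in $u'$, and use $r(v,v)=0$ to annihilate the boundary term at $u'=v$. The two expressions match on the nose. There is no real obstacle here: the statement is a purely formal calculus identity enforced by the axis boundary condition, and neither sign assumptions on $\dur$ nor fine regularity of $g$ enter.
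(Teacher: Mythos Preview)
Your proposal is correct and takes exactly the approach the paper intends: the paper does not write out a separate proof for Lemma~\ref{lem:avg-d:conj} but simply remarks that it is the analogue of Lemma~\ref{lem:avg-d} with the roles of $u$ and $v$ interchanged, which is precisely what you carry out. Your handling of the sign (noting that any overall sign from $\dur<0$ appears identically on both sides) is the right way to dispatch the only point where the conjugate case superficially differs from the original.
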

%

\section{Forward-in-time global solution}\label{sec.proof}
The main goal of this section is to establish Theorem~\ref{thm:main.finite} modulo future causal geodesic completeness, which is proved in Section~\ref{sec:cgc}. We also formulate and prove uniform estimates for higher derivatives (Proposition~\ref{prop:high-d}), which will be useful in the proof of Theorem~\ref{thm:main} in the next section.

This section is structured as follows. In Sections~\ref{subsec:main.finite:btstrp}--\ref{subsec:main.finite:btstrp-close}, we carry out the main bootstrap argument, which lies at the heart of our proof of Theorem~\ref{thm:main.finite}. The proof of Theorem~\ref{thm:main.finite} is then completed in Section~\ref{subsec:main.finite:du}. Finally, in Section~\ref{subsec:high-d}, we prove estimates for higher derivatives (Proposition~\ref{prop:high-d}), which are uniform with respect to the initial curve $u_{0}$.

\subsection{Bootstrap assumptions} \label{subsec:main.finite:btstrp}
Suppose that $(r, \phi, m)$ is a $C^{1}$ solution to \eqref{eq:SSESF} on $\PD_{[u_{0}, u_{1}]}$.
We introduce the following bootstrap assumptions on $\PD_{[u_{0}, u_{1}]}$:
\begin{enumerate}
\item {\it Assumptions on the geometry.}
\begin{align} \label{eq:btstrp:geom}
	\dvr > \frac{1}{3}, \quad-\frac{1}{6}> \nu > - \frac{2}{3}, \quad 1-\mu > \frac{1}{2}.
\end{align}

\item {\it Assumptions on the inhomogeneous part of $\dvr^{-1} \rd_{v} \phi$.}
\begin{align}
	\int_{u_0}^{u} \abs{\frac{2m \dur}{(1-\mu) r^{2}} \phi (u', v)} \, \ud u'
	& \leq 2 \eps r_+^{-\dc}, \label{eq:btstrp:wave:1}
\end{align}
where $r_{+} := \max \set{1, r}$.
\item {\it Assumption on $(\dvr^{-1} \rd_{v})^2 \phi$.}
\begin{align}
       \label{eq:btstrp:wave:2}
       \abs{(\dvr^{-1} \rd_{v})^{2} \phi} \leq 3 \eps.
\end{align}

\end{enumerate}

Henceforth until Section~\ref{subsec:main.finite:btstrp-close}, the domain for each bound is $\PD_{[u_{0}, u_{1}]}$ unless otherwise specified.
We will use the convention that unless otherwise stated, the constants $C$ depend only on $\dc$. Moreover, we will also use the notation $\aleq$ such that the implicit constants are allowed to depend only on $\dc$.
\subsection{Preliminary estimates}
Recall that  $r$ vanishes on the boundary $\Gamma$ and $\dvr > 1/3$ by the bootstrap assumption. Moreover, by the bootstrap assumptions on $\dur$ and $1-\mu$, we have $\rd_{u} \dvr \leq 0$. It follows that
\begin{equation} \label{eq:dvr}
	\frac{1}{3} < \dvr \leq \frac{1}{2}.
\end{equation}
This bound implies that at the point $(u, v)$ the radius $r(u, v)$ is comparable to the difference $v-u$ up to a constant:
  \begin{equation}
 \label{eq:rvminusu}
 \frac{1}{3}(v-u)\leq \int_{u}^{v}\dvr(u, v')\, \ud v'=r(u, v)\leq \frac{1}{2}(v-u).
 \end{equation}
In the proof, we will frequently need estimates for integrals of powers of $r$. We will collect these estimates in Lemma \ref{lem:est4noverr}.
To this end, the notation
$$r_{+} := \max \set{1, r}$$
introduced above will be convenient.

The following lemma holds also due to \eqref{eq:dvr} and the assumption that $r$ vanishes on the boundary $\Gamma$.
\begin{lemma} \label{lem:est4noverr}
Assume the bootstrap assumption on the geometry \eqref{eq:btstrp:geom}.
Then for all $k> 1$, we have
\begin{align}
\label{eq:est4noverr:Cu}
\int_{u}^v r_+^{-k}(u, v')\, \ud v'&\leq C \min\set{1, r}(u, v) ,\\
\label{eq:est4noverr:barCv}
\int_{u_0}^{u}r_+^{-k}(u', v)\, \ud u'&\leq Cr_+^{-k+1}(u, v)
\end{align}
for some constant $C$ depending only on $k$.
\end{lemma}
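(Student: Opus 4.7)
The plan is to reduce both inequalities to elementary one-dimensional integrals in $r$ by change of variables, exploiting the monotonicity of $r$ along outgoing and incoming null curves that follows from the bootstrap bounds.

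For the first bound, I would fix $u$ and parametrize the interval $[u, v]$ by $r' = r(u, v')$. By the bootstrap assumption, $\dvr > 1/3$, and combined with the boundary condition $r(u,u) = 0$, this makes $r(u, \cdot)$ a strictly increasing bijection onto $[0, r(u, v)]$ with $\ud v' = \dvr^{-1} \ud r' \leq 3\, \ud r'$. Hence
\[
 \int_{u}^{v} r_+^{-k}(u, v') \, \ud v' \leq 3 \int_{0}^{r(u,v)} r_+^{-k} \, \ud r'.
\]
I would then split into the two cases $r(u,v) \leq 1$ and $r(u,v) > 1$. In the first case, $r_+ = 1$ on the whole interval and the integral equals $r(u,v) = \min\{1, r\}(u,v)$. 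In the second case, splitting at $r' = 1$ and using $k > 1$ bounds the integral by $1 + (k-1)^{-1}$, which is $C \cdot 1 = C \min\{1, r\}(u,v)$.

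For the second bound, I would fix $v$ and parametrize $[u_0, u]$ by $r' = r(u', v)$. The bootstrap bound $-2/3 < \nu < -1/6$ makes this a strictly decreasing bijection onto some interval $[r(u,v), r(u_0,v)]$ with $\ud u' = \nu^{-1}\, \ud r'$ and $|\nu|^{-1} < 6$. Dropping the upper endpoint $r(u_0, v)$ in favor of $+\infty$, I get
\[
 \int_{u_0}^{u} r_+^{-k}(u', v) \, \ud u' \leq 6 \int_{r(u,v)}^{\infty} r_+^{-k} \, \ud r'.
\]
Again splitting on whether $r(u,v) \geq 1$ or $r(u,v) < 1$: in the first case the integral evaluates to $\frac{r(u,v)^{-k+1}}{k-1} = \frac{r_+^{-k+1}(u,v)}{k-1}$; in the second case it is bounded by $1 + (k-1)^{-1} = (1 + (k-1)^{-1}) \cdot r_+^{-k+1}(u,v)$ since $r_+(u,v) = 1$. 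In both cases the right-hand side is $C r_+^{-k+1}(u,v)$ with $C$ depending only on $k$.

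There is no real obstacle here — the whole content is the monotonicity of $r$ along each null direction (immediate from the bootstrap bounds on $\dvr$ and $\nu$), the uniform bounds on the Jacobians $\dvr^{-1}$ and $|\nu|^{-1}$, and an elementary case split at $r = 1$ in the resulting one-variable integrals.
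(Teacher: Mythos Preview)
Your proof is correct and follows essentially the same approach as the paper: both use the lower bounds on $\dvr$ and $-\dur$ to reduce to one-variable integrals in $r$, then split at $r=1$. The only cosmetic difference is that you make the change of variables to $r$ explicit, whereas the paper inserts factors of $\dvr$ or $-\dur$ into the integrand and integrates directly; the content is identical.
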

\begin{proof}
For \eqref{eq:est4noverr:Cu}, the case when $r(u, v)\leq 1$ is easy to verify. The lower bound for $\dvr$ implies that
 \begin{align*}
 \int_{u}^{v}r_+^{-k}(u, v')\,\ud v'\leq \int_{u}^{v}(\dvr^{-1}\dvr)(u, v')\,\ud v'\leq 3\int_{u}^{v}\rd_v r(u, v')\,\ud v'=3 r(u, v).
 \end{align*}
When $r(u, v)>1$, let $v^*$ be the unique $v$ value such that $r(u, v^*)=1$. Then we have
 \begin{align*}
 \int_{u}^{v}r_+^{-k}(u, v')\, \ud v'&\leq 3 r(u, v^*)+3\int_{v^*}^{v}(\dvr r_+^{-k})(u, v')\, \ud v' \leq \frac{3k}{k-1}.
 \end{align*}
 The proof for \eqref{eq:est4noverr:barCv} is very similar where we make use of the bootstrap assumption on the lower bound of $-\dur$. More precisely, for $r(u,v)>1$, we have
\begin{align*}
\int_{u_0}^{u} r_+^{-k}(u', v)\, \ud u'\leq 6\int_{u_0}^{u}(-\dur r_+^{-k})(u', v)\, \ud u' \leq \frac{6}{k-1} r_+^{-k+1}(u, v).
\end{align*}
On the other hand, if $r(u,v)\leq 1$, we defined $u^*$ to be the unique $u$ value such that $r(u^*,v)=1$. We then obtain
\begin{align*}
\int_{u_0}^{u} r_+^{-k}(u', v)\, \ud u' \leq \frac{6}{k-1} +6\int_{u^*}^u (-\rd_u r(u',v))\,\ud u'\leq \f{6k}{k-1}=\f{6k}{k-1}r_+^{-k+1}(u, v).
\end{align*}
\end{proof}
Estimate \eqref{eq:est4noverr:Cu} bounds the integral from the axis to the given point $(u, v)$ on the outgoing null hypersurface $C_u$. It will be used if we want to control some quantity by using the data on the axis, e.g., the mass $m$.
 Estimate \eqref{eq:est4noverr:barCv} controls the integral from the point $(u_0, v)$ on the initial hypersurface $C_{u_0}$ to the given point $(u, v)$. We will use it when we want to control the solution from the data given on $C_{u_0}$.

\subsection{Estimates for $\phi$}
The following lemma 
 will be crucial for many estimates to follow.
\begin{lemma} \label{lem:no-shift}
For any $u_{1} \leq u_{2}$, we have
\begin{equation} \label{eq:no-shift}
\int_{u_{1}}^{u_{2}} \frac{2 m (-\dur)}{(1-\mu) r^{2}} (u, v) \, \ud u
= \log \frac{\dvr(u_{1}, v)}{\dvr(u_{2}, v)}.
\end{equation}
Hence, under the bootstrap assumptions \eqref{eq:btstrp:geom}--\eqref{eq:btstrp:wave:2}, we have
\begin{equation} \label{eq:est4shift}
	0 \leq \int_{u_{1}}^{u_{2}} \frac{2 m (-\dur)}{(1-\mu) r^{2}} (u, v) \, \ud u \leq \log \frac{3}{2}.
\end{equation}
\end{lemma}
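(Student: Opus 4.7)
The plan is to read off the identity directly from the wave equation \eqref{eq:wave4dvr} for $r$, written in the normalized form
\[
	\rd_{u} \log \dvr = \frac{2 m \dur}{(1-\mu) r^{2}}.
\]
First I would integrate this identity in $u$ from $u_{1}$ to $u_{2}$ at fixed $v$, which gives
\[
	\int_{u_{1}}^{u_{2}} \frac{2 m \, \dur}{(1-\mu) r^{2}}(u, v) \, \ud u = \log \dvr(u_{2}, v) - \log \dvr(u_{1}, v).
\]
Multiplying both sides by $-1$ and pulling the minus sign into $\dur$ gives \eqref{eq:no-shift}. There is no integration of a distributional or singular quantity here: all factors are $C^{0}$ on $\PD_{[u_{0}, u_{1}]}$ by the definition of a $C^{1}$ solution and the bootstrap assumption $1-\mu > 1/2$, so the manipulation is completely routine.

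For the quantitative bound \eqref{eq:est4shift}, I would use the bootstrap assumptions on the geometry to control the ratio $\dvr(u_{1}, v) / \dvr(u_{2}, v)$. The key observation, already recorded in the excerpt just before Lemma~\ref{lem:est4noverr}, is that $\rd_{u} \dvr \leq 0$ under the bootstrap assumptions, so $\dvr$ is nonincreasing in $u$ and hence $\dvr(u_{1}, v) \geq \dvr(u_{2}, v)$; this immediately gives nonnegativity of the integral. Combined with the two-sided bound $1/3 < \dvr \leq 1/2$ from \eqref{eq:dvr}, one obtains
\[
	\frac{\dvr(u_{1}, v)}{\dvr(u_{2}, v)} \leq \frac{1/2}{1/3} = \frac{3}{2},
\]
and applying $\log$ yields the desired upper bound $\log(3/2)$.

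There is no real obstacle here; the lemma is essentially a repackaging of the wave equation \eqref{eq:wave4dvr} combined with the preliminary bound \eqref{eq:dvr}. The content of the statement is conceptual rather than technical: the quantity $\frac{2m(-\dur)}{(1-\mu)r^{2}}$, which appears as a coefficient in \eqref{eq:wave4dvrphi} and in the commutator \eqref{eq:comm-du-dv}, has a uniformly bounded $u$-integral along incoming null curves, controlled purely by the geometric quantities $\dvr$ at the endpoints. This is exactly the structure that will let one propagate bounds on $\dvr^{-1} \rd_{v}(r\phi)$ and $\log \dvr$ in the subsequent bootstrap argument without any loss in $u$.
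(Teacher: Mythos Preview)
Your proposal is correct and follows essentially the same approach as the paper: integrate \eqref{eq:wave4dvr} in $u$ to obtain the identity, then invoke the two-sided bound \eqref{eq:dvr} on $\dvr$ (together with the bootstrap assumptions) to deduce \eqref{eq:est4shift}. The paper's proof is in fact even more terse than yours, citing only \eqref{eq:wave4dvr}, \eqref{eq:btstrp:geom}, and \eqref{eq:dvr}.
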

\begin{proof}
Equation \eqref{eq:no-shift} is an immediate consequence of the equation \eqref{eq:wave4dvr}.
Then \eqref{eq:est4shift} follows from \eqref{eq:btstrp:geom} and \eqref{eq:dvr}. \qedhere
\end{proof}

We now derive estimates for the scalar field $\phi$ and its $\dvr^{-1} \rd_{v}$ derivatives.
\begin{proposition}
 \label{prop:Est4phi}
Under the bootstrap assumptions \eqref{eq:btstrp:geom}--\eqref{eq:btstrp:wave:2}, we have the following estimates for the scalar field:
\begin{align}
\label{eq:btstrp:pf:dvrphi}
 \abs{\dvr^{-1}\rd_v(r\phi)(u, v)}&\les \abs{\Phi(v)}+\eps r_+^{-\dc},\\
 \label{eq:btstrp:pf:phi}
 \abs{\phi(u,v)} &\les \eps r_+^{-\dc},\\
 \label{eq:btstrp:pf:dvphi}
	\abs{\dvr^{-1} \rd_{v} \phi(u, v)} &\les \min\{\frac{|\Phi(v)|+\eps r_+^{-\dc}}{r}, \eps\}, \\
\label{eq:btstrp:pf:rdvdvphi}
	\abs{r (\dvr^{-1} \rd_{v})^{2} \phi(u, v)} &\les \eps.
\end{align}
Here the implicit constants depend only on $\dc$.
\end{proposition}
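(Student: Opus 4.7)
The plan is to establish the four estimates in sequence, making use of the shorthands $W := \dvr^{-1}\rd_v(r\phi)$, $Y := \dvr^{-1}\rd_v W = (\dvr^{-1}\rd_v)^2(r\phi)$, and $B := -\frac{2 m \dur}{(1-\mu) r^{2}} \geq 0$.

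The estimate \eqref{eq:btstrp:pf:dvrphi} is proved by viewing the wave equation \eqref{eq:wave4dvrphi} as the linear ODE $\rd_u W = -B(W - \phi)$ along each incoming curve $\uC_v$ with initial value $W(u_0, v) = \Phi(v)$. I would solve using the integrating factor $e^{\int_{u_0}^u B \, \ud u'}$, which is uniformly bounded by $3/2$ thanks to Lemma~\ref{lem:no-shift}; the forcing satisfies $\int_{u_0}^u B |\phi| \, \ud u' \leq 2 \eps r_{+}^{-\dc}$ by the bootstrap \eqref{eq:btstrp:wave:1}. The estimate \eqref{eq:btstrp:pf:phi} then follows from the averaging identity $\phi = I_1[W]$ in \eqref{eq:avg:phi} and the previous bound: the $\eps r_{+}^{-\dc}$ piece is absorbed via Lemma~\ref{lem:est4noverr}, and for the $|\Phi|$ piece I would split on $r$, using the pointwise bound $|\Phi| \leq \eps$ when $r \leq 1$ and the $L^1$-bound $\int_u^v |\Phi| \, \ud v' \leq \eps (v-u)^{1-\dc}$ combined with $r \sim v - u$ from \eqref{eq:rvminusu} when $r \geq 1$, yielding $\lesssim \eps r_{+}^{-\dc}$ in either regime.

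The first bound in the minimum of \eqref{eq:btstrp:pf:dvphi} is immediate from the algebraic identity $r \dvr^{-1}\rd_v \phi = W - \phi$ of \eqref{eq:dvphi:large-r} together with \eqref{eq:btstrp:pf:dvrphi}--\eqref{eq:btstrp:pf:phi}. For the remaining bound $|\dvr^{-1}\rd_v\phi| \lesssim \eps$ and the estimate \eqref{eq:btstrp:pf:rdvdvphi}, the strategy is to control $Y$ by propagation along $\uC_v$. Commuting $\dvr^{-1}\rd_v$ through the wave equation for $W$ using \eqref{eq:comm-du-dv} yields
\[
\rd_u Y = -(\dvr^{-1}\rd_v B)(W - \phi) + B \, \dvr^{-1}\rd_v\phi,
\]
with initial data $Y(u_0, v) = 2 \Phi'(v)$, bounded by $2 \eps$ from \eqref{eq:IDcond}. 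Expanding $\dvr^{-1}\rd_v B$ using the Raychaudhuri equation \eqref{eq:raych4v}, the mass evolution \eqref{eq:dv-m}, and \eqref{eq:dv-m-over-r2:large-r}, one rewrites the source in terms of bootstrapped geometric quantities, $(\dvr^{-1}\rd_v\phi)^2$, and factors of $B/r$; integrating in $u$ with the uniform control of $\int B \, \ud u'$ from Lemma~\ref{lem:no-shift}, invoking the already established first bound on $\dvr^{-1}\rd_v\phi$, and using $|W - \phi| = r |\dvr^{-1}\rd_v\phi| \lesssim \eps$, produces $|Y| \lesssim \eps$. The second bound in \eqref{eq:btstrp:pf:dvphi} then follows from the differentiated averaging formula $\dvr^{-1}\rd_v\phi = I_2[Y]$ (obtained by applying Lemma~\ref{lem:avg-d} to $\phi = I_1[W]$) and the averaging estimate \eqref{eq:avg-est}, and \eqref{eq:btstrp:pf:rdvdvphi} follows from the Leibniz identity $r (\dvr^{-1}\rd_v)^{2}\phi = Y - 2 \dvr^{-1}\rd_v\phi$.

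The main technical obstacle is the ODE analysis for $Y$: the coefficient $\dvr^{-1}\rd_v B$ is a nonlinear geometric quantity whose expansion requires the Raychaudhuri and mass evolution equations, and its integrated contribution must be shown to be $\lesssim \eps$ uniformly in $u_0$. The pointwise hypothesis $|\Phi'| \leq \eps$ from \eqref{eq:IDcond} is essential here, as it provides the initial data for $Y$; without a bound on the derivative of $\Phi$, the propagation step and hence the second half of the proposition would fail.
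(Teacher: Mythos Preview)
Your argument for \eqref{eq:btstrp:pf:dvrphi}, \eqref{eq:btstrp:pf:phi}, and the $1/r$ bound in \eqref{eq:btstrp:pf:dvphi} matches the paper's proof exactly.

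For the remaining two estimates, however, you have overlooked that \eqref{eq:btstrp:wave:2} is one of the \emph{bootstrap assumptions} you are working under: it asserts precisely $|Y| = |(\dvr^{-1}\rd_v)^2(r\phi)| \leq 3\eps$. The paper simply invokes this assumption. With $|Y|\leq 3\eps$ in hand, the uniform bound $|\dvr^{-1}\rd_v\phi|\lesssim\eps$ follows immediately from the differentiated averaging formula \eqref{eq:dvphi:avg} and \eqref{eq:avg-est}, and \eqref{eq:btstrp:pf:rdvdvphi} follows from the commutation identity $r(\dvr^{-1}\rd_v)^2\phi = Y - 2\dvr^{-1}\rd_v\phi$. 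No ODE analysis for $Y$ is needed here; that work is done later (Proposition~\ref{prop:Est4dvdvrphi}) to \emph{close} the bootstrap, after Propositions~\ref{prop:Est4m} and \ref{prop:Est4dv-m} have supplied the necessary bounds on $m$ and $\dvr^{-1}\rd_v(2m/r^2)$.

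Your proposed alternative---deriving $|Y|\lesssim\eps$ directly from the transport equation---is not merely redundant but circular as stated. The correct commuted equation is
\[
\rd_u Y = 2BY + (\dvr^{-1}\rd_v B)(W-\phi) - B\,\dvr^{-1}\rd_v\phi
\]
(you dropped the homogeneous $2BY$ term, though that is harmless after introducing the integrating factor). The real obstruction is the source: expanding $\dvr^{-1}\rd_v B$ via \eqref{eq:dv-m-over-r2:large-r} and \eqref{eq:raych4v} produces a $(\dvr^{-1}\rd_v\phi)^2$ contribution, and with only the $1/r$ bound available you get $|(\dvr^{-1}\rd_v B)(W-\phi)|\lesssim \eps^3 r^{-2}$ near the axis, which is not integrable in $u'$. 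Likewise $|B\,\dvr^{-1}\rd_v\phi|\lesssim\eps^3 r^{-2}$ there (since $m\lesssim\eps^2 r$ is all the $1/r$ bound yields). Controlling these terms near $r=0$ requires the \emph{uniform} bound on $\dvr^{-1}\rd_v\phi$---equivalently on $\dvr^{-1}\rd_v(2m/r^2)$ via the averaging formula \eqref{eq:dv-2m-over-r2}---which in turn needs $|Y|\lesssim\eps$. This is exactly why the paper places the $Y$ bound among the bootstrap assumptions rather than deriving it at this stage.
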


\begin{proof}
By \eqref{eq:wave4dvrphi}, we have the integral formula
\begin{align*}
	(\dvr^{-1} \rd_{v}) (r \phi)(u, v)
	= & e^{-\int_{u_{0}}^{u} \frac{2 m \dur}{(1-\mu) r^{2}} (u', v) \, \ud u'} (\dvr^{-1} \rd_{v}) (r \phi)(u_{0}, v) \\
	& + \int_{u_{0}}^{u} e^{-\int_{u'}^{u} \frac{2 m \dur}{(1-\mu) r^{2}} (u'', v) \, \ud u''} \frac{2 m \dur}{(1-\mu) r^{2}} \phi (u', v) \, \ud u'.
\end{align*}
Then using Lemma~\ref{lem:no-shift} and the bootstrap assumption \eqref{eq:btstrp:wave:1}, we have
\begin{align}\label{rdvrphi.diff.est}
 \abs{\dvr^{-1}\rd_v(r\phi)(u, v) - \bb( \frac{\dvr(u_{0}, v)}{\dvr(u, v)} \bb) \dvr^{-1}\rd_v(r\phi)(u_0, v)}
 		\leq \frac{3}{2}\int_{u_0}^{u} \abs{\frac{2m \dur}{(1-\mu) r^{2}} \phi (u', v)} \, \ud u'
                        \les \eps r_+^{-\dc}.
\end{align}
Recalling that $\dvr^{-1} \rd_{v} (r \phi)(u_{0}, v) = \Phi(v)$, the desired estimate \eqref{eq:btstrp:pf:dvrphi} follows.

Once we have estimate \eqref{eq:btstrp:pf:dvrphi}, we can then use the averaging formula \eqref{eq:avg:phi} to control the scalar field $\phi$:
\begin{equation*}
\begin{split}
 \abs{\phi(u,v)} &\leq \frac{1}{r}\int_{u}^{v}\abs{\dvr^{-1}\rd_v(r\phi)(u, v')} \dvr \,\ud v'\\
 &\les \frac{1}{r}\int_{u}^{v}|\Phi(v')|\, \ud v'+\frac{\eps}{r}\int_{u}^{v}(\dvr r_+^{-\dc})(u, v')\, \ud v'\\
&\les \frac{\eps}{r} \min \set{v-u, (v-u)^{1-\dc}}+\frac{\eps(r_+^{1-\dc}-1)}{r}\\
&\les \eps r_+^{-\dc}
\end{split}
\end{equation*}
Here we have used the condition \eqref{eq:IDcond} and the relation \eqref{eq:rvminusu} to estimate the integral of $\Phi(v')$. The inequality
\[
 \frac{r_+^{1-\dc}-1}{r}\leq r_+^{-\dc}
\]
follows from the fact that $r_+=\max\{1,r\}$, $r\geq 0$, $0<\dc<1$.

Estimate \eqref{eq:btstrp:pf:dvrphi} for $\rd_v(r\phi)$ and estimate \eqref{eq:btstrp:pf:phi} together with \eqref{eq:dvphi:large-r} give us the following bound
for $\dvr^{-1} \rd_{v} \phi$:
\begin{equation*}
\begin{split}
	\abs{\dvr^{-1} \rd_{v} \phi(u, v)} &\leq  \frac{1}{r} (\abs{\dvr^{-1} \rd_{v} (r \phi)} + \abs{\phi}) \\
	&\les \frac{1}{r} \left(|\Phi(v)|+\eps r_+^{-\dc}\right).
\end{split}
\end{equation*}
Such an estimate is favorable in the region far away from the axis.


On the other hand, using the differentiated averaging formula \eqref{eq:dvphi:avg} and the bootstrap assumption \eqref{eq:btstrp:wave:2}, we are able to show that $\rd_{v}\phi$ is uniformly bounded near the axis, which completes the proof of \eqref{eq:btstrp:pf:dvphi}:
\begin{equation*}
\begin{split}
\abs{\dvr^{-1}\rd_{v}\phi(u, v)}&\leq \frac{1}{r^2}\int_{u}^{v}|(\dvr^{-1}\rd_v)^2(r\phi)(u, v')|(r\dvr)(u, v')\,\ud v'\\
&\les \eps r^{-2}\int_{u}^{v}\,\ud r^2\les \eps.
\end{split}
\end{equation*}

Finally, \eqref{eq:btstrp:pf:rdvdvphi} follows from the commutation formula
\begin{equation} \label{eq:comm-rdvdvphi}
	r (\dvr^{-1} \rd_{v})^{2} \phi = (\dvr^{-1} \rd_{v})^{2} (r \phi) - 2 \dvr^{-1} \rd_{v} \phi,
\end{equation}
as well as the bootstrap assumption \eqref{eq:btstrp:wave:2} and estimate \eqref{eq:btstrp:pf:dvphi}. \qedhere
\end{proof}

\subsection{Estimates for the Hawking mass}
Once we have estimate for the solution $\rd_v\phi$, we can derive bounds for the mass $m$.
\begin{proposition}
 \label{prop:Est4m}
 Under the bootstrap assumptions \eqref{eq:btstrp:geom}--\eqref{eq:btstrp:wave:2}, we have
 \begin{equation}
  \label{eq:btstrp:pf:m}
  m(u, v)\les \eps^2\min\{r^3, r^{1-\dc}\}.
 \end{equation}
\end{proposition}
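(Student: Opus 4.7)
The plan is to integrate the mass equation \eqref{eq:dv-m} from the axis. Since $m$ vanishes on $\Gamma$ by \eqref{eq:bc4m}, and $\dvr > 0$ by the bootstrap, we have
\begin{equation*}
m(u, v) = \int_u^v \tfrac{1}{2} (1-\mu) r^{2} (\dvr^{-1} \rd_v \phi)^{2} \, \dvr \, \ud v'.
\end{equation*}
The two bounds in \eqref{eq:btstrp:pf:dvphi}---a uniform $\epsilon$ bound that is efficient near $\Gamma$, and an $r$-decaying bound $(|\Phi(v')| + \epsilon r_+^{-\gamma})/r$ that is efficient far from $\Gamma$---will be inserted in the two regimes $r \leq 1$ and $r > 1$ to produce the two components of the $\min\{r^{3}, r^{1-\gamma}\}$ bound.

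For the $r^{3}$ bound (which is the relevant one when $r \leq 1$), I would simply insert $|\dvr^{-1} \rd_v \phi| \lesssim \epsilon$ and change variables via $\dvr \, \ud v' = \ud r$ along $C_u$ to obtain
\begin{equation*}
m(u,v) \lesssim \epsilon^{2} \int_{u}^{v} r^{2} \dvr \, \ud v' = \epsilon^{2} \int_{0}^{r(u,v)} r'^{2} \, \ud r' \lesssim \epsilon^{2} r^{3}.
\end{equation*}
Since $r^{3} \leq r^{1-\gamma}$ when $r \leq 1$, this already settles the case $r \leq 1$. For the case $r \geq 1$, I would split the $v'$ integral at the unique $v^{\ast}$ with $r(u, v^{\ast}) = 1$. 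The contribution from $[u, v^{\ast}]$ is $\lesssim \epsilon^{2}$ by the previous step, and $\epsilon^{2} \leq \epsilon^{2} r^{1-\gamma}$ since $r \geq 1$ (using $\gamma < 1$). On $[v^{\ast}, v]$ I would insert the other bound in \eqref{eq:btstrp:pf:dvphi} to get
\begin{equation*}
r^{2} (\rd_v \phi)^{2} \lesssim \dvr^{2} \bigl( |\Phi(v')|^{2} + \epsilon^{2} r^{-2\gamma} \bigr).
\end{equation*}
For the $|\Phi|^{2}$ term, I would pull out one factor using $|\Phi| \leq \epsilon$ and then apply the integrated data hypothesis \eqref{eq:IDcond} together with $r \sim v-u$ from \eqref{eq:rvminusu}, yielding a contribution $\lesssim \epsilon \cdot \epsilon (v-v^{\ast})^{1-\gamma} \lesssim \epsilon^{2} r^{1-\gamma}$. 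For the $\epsilon^{2} r^{-2\gamma}$ term, I would change variables $\dvr \, \ud v' = \ud r$ and bound $\int_{1}^{r(u,v)} r'^{-2\gamma} \, \ud r' \lesssim r^{1-\gamma}$.

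The main technical point is the last integral: a case analysis on whether $2\gamma < 1$, $2\gamma = 1$, or $2\gamma > 1$ is needed to verify $\lesssim r^{1-\gamma}$ in each case (in the first case one gets $r^{1-2\gamma} \leq r^{1-\gamma}$, in the second $\log r \lesssim r^{1-\gamma}$, in the third simply $O(1) \leq r^{1-\gamma}$), all using $r \geq 1$ and $0 < \gamma < 1$. Everything else is a mechanical application of Proposition~\ref{prop:Est4phi} together with the bootstrap bound $\dvr > 1/3$ and the integrated data condition \eqref{eq:IDcond}; no use of the bootstrap assumptions \eqref{eq:btstrp:wave:1}--\eqref{eq:btstrp:wave:2} is required beyond what is already encoded in Proposition~\ref{prop:Est4phi}.
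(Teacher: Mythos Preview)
Your proposal is correct and follows the same approach as the paper: integrate \eqref{eq:dv-m} from the axis and insert the two alternatives in \eqref{eq:btstrp:pf:dvphi} to obtain the $r^{3}$ and $r^{1-\gamma}$ bounds respectively, using \eqref{eq:IDcond} and \eqref{eq:rvminusu} for the $\abs{\Phi}^{2}$ contribution. If anything your treatment of the integral $\int_{1}^{r} r'^{-2\gamma}\,\ud r'$ via an explicit case split on $2\gamma$ versus $1$ is more careful than the paper, which writes the intermediate bound as $r_{+}^{1-2\gamma}$ (slightly imprecise for $\gamma\geq\tfrac12$) before passing to the correct $r_{+}^{1-\gamma}$.
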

We remark that the gain of the positive power in $r$ is crucial to close the bootstrap assumptions on the nonlinearity near the axis. Indeed, as a quick consequence of \eqref{eq:btstrp:pf:m}, we have
 \begin{equation}
 \label{eq:btstrp:pf:moverr}
 \frac{m}{r^k}\leq C \eps^2 r_+^{-k+1-\dc}
\end{equation}
for all $0\leq k\leq 3$, where the constant $C$ depends only on $k$ and $\dc$.
\begin{proof}
By \eqref{eq:bc4m} and \eqref{eq:dv-m}, we have
\begin{equation} \label{eq:dvm:int}
	m (u,v)= \frac{1}{2} \int_{u}^{v} (1-\mu) r^{2} (\dvr^{-1} \rd_{v} \phi)^{2}  \dvr (u, v') \, \ud v'.
\end{equation}
Recall that $\frac{1}{2} \leq 1-\mu \leq 1$ by the bootstrap assumption \eqref{eq:btstrp:geom}. From estimates \eqref{eq:btstrp:pf:dvrphi}--\eqref{eq:btstrp:pf:dvphi}, we can show that
\begin{equation*}
\begin{split}
m(u,v) &\leq \frac{1}{2} \int_{u}^{v} \bb(\frac{r \rd_{v} \phi}{\dvr} \bb)^{2} \dvr (u, v') \, \ud v'\\
	&\les \min\{\int_{u}^{v}|\Phi(v')|^2\, \ud v'+\eps^2\int_{u}^{v} (\dvr r_+^{-2\dc})(u, v')\, \ud v', \eps^2 \int_{u}^{v} \dvr r^2(u, v')\,\ud v' \}\\
	&\les \min\{\eps \int_{u}^{v}|\Phi(v')|\, \ud v'+\eps^2 r_+^{1-2\dc}, \eps^2 r^3\}\\
	&\les \eps^2 \min\{r_+^{1-\dc}, r^3\}.
\end{split}
	\end{equation*}
Here we have used the condition \eqref{eq:IDcond} to control the integral of $|\Phi(v')|$.
\end{proof}

We also derive estimates for $\dvr^{-1} \rd_{v}$ of $2m$ and $2m / r^{2}$, which will be needed for closing the bootstrap assumption for $(\dvr^{-1} \rd_{v})^{2} (r \phi)$.
\begin{proposition} \label{prop:Est4dv-m}
Under the bootstrap assumptions \eqref{eq:btstrp:geom}--\eqref{eq:btstrp:wave:2}, we have
\begin{align}
\label{eq:est4dv-m}
	\abs{\dvr^{-1} \rd_{v} (2m)} \aleq & \eps^{2} \min \set{r^{2}, 1} \\
\label{eq:est4dv-m-over-r2}
	\abs{\dvr^{-1} \rd_{v} \bb( \frac{2m}{r^{2}} \bb)} \aleq & \eps^{2} \min \set{1, r^{-2}}.
\end{align}
\end{proposition}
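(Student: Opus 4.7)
The plan is to derive both estimates algebraically from the bounds already proved, without introducing any new integration or bootstrap. The ingredients are: the evolution equation \eqref{eq:dv-m}, the commutation identity \eqref{eq:dv-m-over-r2:large-r}, the bootstrap geometry bound $1/2 \leq 1-\mu \leq 1$ from \eqref{eq:btstrp:geom}, the first-derivative estimate \eqref{eq:btstrp:pf:dvphi} for $\phi$, and the mass bound \eqref{eq:btstrp:pf:m} from Proposition~\ref{prop:Est4m}. There is no bootstrap step here; the statement is a direct consequence of the preceding propositions.

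For \eqref{eq:est4dv-m}, rewrite \eqref{eq:dv-m} as
\[
\dvr^{-1} \rd_{v} (2m) = (1-\mu) \, r^{2} (\dvr^{-1} \rd_{v} \phi)^{2},
\]
and split into two regions. In the near-axis regime $r \leq 1$, use the uniform bound $|\dvr^{-1}\rd_v \phi| \aleq \eps$ from \eqref{eq:btstrp:pf:dvphi}, giving $|\dvr^{-1}\rd_v(2m)| \aleq \eps^2 r^2$. In the far-from-axis regime $r \geq 1$, use the other branch $|\dvr^{-1}\rd_v \phi| \aleq (|\Phi(v)| + \eps r_+^{-\gamma})/r \aleq \eps/r$ (since $|\Phi| \leq \eps$ by \eqref{eq:IDcond}), giving $|\dvr^{-1}\rd_v(2m)| \aleq \eps^2$. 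The two together yield the claimed $\eps^{2} \min\set{r^{2}, 1}$.

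For \eqref{eq:est4dv-m-over-r2}, apply \eqref{eq:dv-m-over-r2:large-r} to obtain
\[
r^{2} \dvr^{-1} \rd_{v} \Big( \tfrac{2m}{r^{2}} \Big) = \dvr^{-1} \rd_{v} (2m) - \tfrac{4m}{r}.
\]
The first term has just been shown to be $\aleq \eps^{2} \min\set{r^{2}, 1}$. For the second, Proposition~\ref{prop:Est4m} gives $m \aleq \eps^{2} \min\set{r^{3}, r^{1-\gamma}}$, so $\tfrac{m}{r} \aleq \eps^{2} \min\set{r^{2}, r^{-\gamma}} \aleq \eps^{2} \min\set{r^{2}, 1}$. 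Combining, $r^{2} |\dvr^{-1}\rd_{v}(2m/r^{2})| \aleq \eps^{2} \min\set{r^{2}, 1}$, which after dividing by $r^{2}$ gives exactly $|\dvr^{-1}\rd_{v}(2m/r^{2})| \aleq \eps^{2} \min\set{1, r^{-2}}$.

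There is no real obstacle. The one mild subtlety is that the identity \eqref{eq:dv-m-over-r2:large-r} is labeled as being most useful for large $r$, since it features the factor $4m/r$ that could in principle blow up on $\Gamma$; however, the improved small-$r$ mass bound $m \aleq \eps^{2} r^{3}$ from Proposition~\ref{prop:Est4m} is strong enough that $4m/r$ is not merely regular but in fact vanishes quadratically at the axis, so the single identity \eqref{eq:dv-m-over-r2:large-r} suffices globally and there is no need to invoke the averaging formula \eqref{eq:dv-2m-over-r2}.
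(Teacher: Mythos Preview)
Your proof is correct. The argument for \eqref{eq:est4dv-m} is identical to the paper's. For \eqref{eq:est4dv-m-over-r2}, however, you take a genuinely different and more economical route than the paper.

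The paper treats the near-axis and far-from-axis regimes by two different mechanisms: for $r \geq 1$ it uses the commutation identity \eqref{eq:dv-m-over-r2:large-r} exactly as you do, but for $r \leq 1$ it invokes the differentiated averaging formula \eqref{eq:dv-2m-over-r2}, expands $\dvr^{-1}\rd_v\big((1-\mu)r(\dvr^{-1}\rd_v\phi)^2\big)$ via Leibniz, and bounds the pieces using \eqref{eq:btstrp:pf:dvphi}, \eqref{eq:btstrp:pf:rdvdvphi}, and \eqref{eq:est4dv-m}. You instead observe that the cubic vanishing $m \aleq \eps^2 r^3$ from Proposition~\ref{prop:Est4m} is already strong enough to make $4m/r \aleq \eps^2 r^2$ near the axis, so the single identity \eqref{eq:dv-m-over-r2:large-r} works globally without any appeal to \eqref{eq:dv-2m-over-r2} or \eqref{eq:btstrp:pf:rdvdvphi}. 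This is a legitimate simplification for this particular estimate. The paper's averaging approach, on the other hand, is the template that scales to the higher-derivative bounds in Lemma~\ref{lem:est4dvdvdv:prelim}, where one must control $(\dvr^{-1}\rd_v)^2(2m/r^2)$ and no analogous shortcut is available.
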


An important point is that $\dvr^{-1} \rd_{v} \bb( \frac{2m}{r^{2}} \bb)$ is uniformly bounded near the axis; this fact will be clear by the use of the differentiated averaging formula \eqref{eq:dv-2m-over-r2}.

\begin{proof}
Estimate \eqref{eq:est4dv-m} is a simple consequence of the equation \eqref{eq:dv-m}, as well as the bootstrap assumption \eqref{eq:btstrp:geom} and estimate \eqref{eq:btstrp:pf:dvphi}.

To establish \eqref{eq:est4dv-m-over-r2}, we begin by showing that
\begin{equation} \label{eq:est4dv-m-over-r2:1}
 \abs{\dvr^{-1} \rd_{v} \bb( \frac{2m}{r^{2}} \bb)} \aleq \eps^{2},
\end{equation}
which is acceptable in the region $\set{r \leq 1}$ near the axis. Using the differentiated averaging formula \eqref{eq:dv-2m-over-r2}, we have
\begin{align*}
 \abs{\dvr^{-1} \rd_{v} \bb( \frac{2m}{r^{2}} \bb)(u,v)}
 \leq & \frac{1}{r^{3}} \int_{u}^{v} \bb( \dvr^{-1} \rd_{v} \bb( (1-\mu) r (\dvr^{-1} \rd_{v} \phi)^{2} \bb) \bb) r^{2} \dvr \, \ud v' \\
 \leq & \sup_{v' \in [u, v]} \abs{\dvr^{-1} \rd_{v} \bb( (1-\mu) r (\dvr^{-1} \rd_{v} \phi)^{2} \bb)}.
\end{align*}
To estimate the last line, we expand
\begin{align*}
\dvr^{-1} \rd_{v} \bb( (1-\mu) r (\dvr^{-1} \rd_{v} \phi)^{2} \bb)
= & \bb( 1 - \dvr^{-1} \rd_{v} (2m) \bb) (\dvr^{-1} \rd_{v} \phi)^{2} \\
	& + 2 (1-\mu) \dvr^{-1}  r (\dvr^{-1} \rd_{v})^{2} \phi \, \dvr^{-1} \rd_{v} \phi
\end{align*}
Then by \eqref{eq:btstrp:pf:dvphi}, \eqref{eq:btstrp:pf:rdvdvphi}, \eqref{eq:est4dv-m} and the fact that $\mu \geq 0$, it follows that the absolute value of the preceding expression is uniformly bounded by $\aleq \eps^{2}$. Hence \eqref{eq:est4dv-m-over-r2:1} is proved.

In order to complete the proof of \eqref{eq:est4dv-m-over-r2}, it suffices to prove
\begin{equation} \label{eq:est4dv-m-over-r2:2}
	 \abs{\dvr^{-1} \rd_{v} \bb( \frac{2m}{r^{2}} \bb)} \aleq \eps^{2} r^{-2},
\end{equation}
which is favorable in the region $\set{r \geq 1}$ away from the axis. In this case, recall that by \eqref{eq:dv-m-over-r2:large-r}, we have
\begin{equation*}
	r^{2} \, \dvr^{-1} \rd_{v} \bb( \frac{2m}{r^{2}} \bb)
	= \dvr^{-1} \rd_{v} (2 m) - \frac{4m}{r}.
\end{equation*}
The desired estimate \eqref{eq:est4dv-m-over-r2:2} now follows from \eqref{eq:btstrp:pf:m} and \eqref{eq:est4dv-m}. \qedhere
\end{proof}



\subsection{Closing the bootstrap assumptions} \label{subsec:main.finite:btstrp-close}

The purpose of this subsection is to improve the bootstrap assumptions \eqref{eq:btstrp:geom}--\eqref{eq:btstrp:wave:2}, using the estimates for the scalar field in Proposition \ref{prop:Est4phi} and the bounds for the mass in Propositions~\ref{prop:Est4m} and \ref{prop:Est4dv-m}. Combined with local well-posedness of \eqref{eq:SSESF} for $C^{1}$ solutions, global existence of the solution then follows.

We begin by improving the bootstrap assumption \eqref{eq:btstrp:geom} on the geometry.
A corollary of Proposition~\ref{prop:Est4m} is that $\mu = \frac{2m}{r}$ is small for sufficiently small $\eps$; this improves the bootstrap assumption on $1-\mu$.
\begin{corollary}
\label{cor:btstrp:pf:mu}
 Under the bootstrap assumptions \eqref{eq:btstrp:geom}--\eqref{eq:btstrp:wave:2}, we have
\begin{equation}
\label{eq:btstrp:pf:mu}
1-C\eps^2\leq  1-\mu\leq 1
\end{equation}
for some constant $C$ depending only on $\dc$.
\end{corollary}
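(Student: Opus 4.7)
The plan is to read off both inequalities essentially directly from Proposition~\ref{prop:Est4m}, together with the sign of the mass. The upper bound $1-\mu \leq 1$ is equivalent to $\mu \geq 0$, so I first want to argue $m \geq 0$ throughout $\PD_{[u_0, u_1]}$. This follows from the evolution equation \eqref{eq:dv-m}, which under the geometric bootstrap assumption \eqref{eq:btstrp:geom} (namely $\dvr > 0$ and $1-\mu > 0$) gives $\rd_v m \geq 0$; combined with the axis boundary condition $m(a,a) = 0$ from \eqref{eq:bc4m}, integration along $C_u$ from the axis yields $m(u,v) \geq 0$.

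For the lower bound on $1-\mu$, the strategy is simply to estimate $\mu = 2m/r$ using the bound $m \lesssim \eps^2 \min\{r^3, r^{1-\dc}\}$ from \eqref{eq:btstrp:pf:m}. Dividing by $r$ yields
\begin{equation*}
\mu(u,v) \lesssim \eps^2 \min\{r^2, r^{-\dc}\}.
\end{equation*}
The key (trivial) observation is that $\min\{r^2, r^{-\dc}\}$ is uniformly bounded in $r \geq 0$: when $r \leq 1$, $\min\{r^2, r^{-\dc}\} = r^2 \leq 1$; when $r > 1$, $\min\{r^2, r^{-\dc}\} = r^{-\dc} < 1$. Hence $\mu \leq C \eps^2$ with $C$ depending only on $\dc$ (through the implicit constant in Proposition~\ref{prop:Est4m}), so $1-\mu \geq 1 - C\eps^2$, which is the claimed bound.

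There is no real obstacle here; the corollary is, as suggested by its name, an immediate consequence of the mass bound already established. The only minor point worth noting is that the sign of $m$ is required for the upper bound, and that this sign is not a bootstrap assumption but a genuine consequence of \eqref{eq:dv-m} together with \eqref{eq:bc4m}, so the corollary improves on the bootstrap hypothesis $1-\mu > 1/2$ (provided $\eps$ is chosen small enough that $C\eps^2 < 1/2$) from both sides simultaneously.
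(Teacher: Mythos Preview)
Your proof is correct and follows exactly the approach the paper intends: the paper's own argument is simply the one-line remark that $\mu = 2m/r$ is small by Proposition~\ref{prop:Est4m}, and you have spelled out precisely those details, including the nonnegativity of $m$ via \eqref{eq:dv-m} (equivalently the integral formula \eqref{eq:dvm:int}) and the boundary condition \eqref{eq:bc4m}.
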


To close the bootstrap for $\dvr$, as well as for \eqref{eq:btstrp:wave:1} below, a key role is played by the following lemma.
\begin{lemma} \label{lem:Est4shift}
 Under the bootstrap assumptions \eqref{eq:btstrp:geom}--\eqref{eq:btstrp:wave:2}, we have
 \begin{equation} \label{eq:est4shift:imp}
 	\abs{\frac{2 m \dur}{(1-\mu) r^{2}}} \leq C \eps^{2} r_{+}^{-1-\gmm},
\end{equation}
for some constant $C$ depending only on $\dc$.
\end{lemma}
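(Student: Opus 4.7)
The plan is to reduce the claim to a pointwise estimate on $\frac{2m}{r^2}$ and then invoke the mass bound already obtained in Proposition~\ref{prop:Est4m}. Specifically, I would first estimate
\[
	\abs{\frac{2m\nu}{(1-\mu) r^{2}}} \leq \frac{\abs{\nu}}{1-\mu} \cdot \frac{2m}{r^{2}},
\]
and then bound the two factors separately using the bootstrap assumptions and the mass estimates. The factor $\abs{\nu}/(1-\mu)$ is harmless: by the geometric bootstrap \eqref{eq:btstrp:geom}, $\abs{\nu} \leq 2/3$ and $1-\mu \geq 1/2$, so this quotient is bounded by an absolute constant (in particular independent of $\eps$).

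The essential content is therefore the pointwise bound
\[
	\frac{2m}{r^{2}}(u,v) \leq C \eps^{2} r_{+}^{-1-\gmm}.
\]
This is exactly the special case $k = 2$ of the estimate \eqref{eq:btstrp:pf:moverr}, which was derived as an immediate consequence of Proposition~\ref{prop:Est4m}. For completeness one can verify the two regimes directly: when $r \leq 1$, the bound $m \aleq \eps^{2} r^{3}$ yields $2m/r^{2} \aleq \eps^{2} r \leq \eps^{2} = \eps^{2} r_{+}^{-1-\gmm}$; when $r \geq 1$, the bound $m \aleq \eps^{2} r^{1-\gmm}$ gives $2m/r^{2} \aleq \eps^{2} r^{-1-\gmm} = \eps^{2} r_{+}^{-1-\gmm}$. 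In either case the definition $r_{+} = \max\set{1,r}$ produces the unified bound.

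Combining the two factor estimates yields \eqref{eq:est4shift:imp} with a constant $C$ depending only on $\gmm$ (through the constant in Proposition~\ref{prop:Est4m}). There is no real obstacle here; the lemma is a bookkeeping corollary that repackages the mass estimate into the precise form needed to close the bootstrap assumption \eqref{eq:btstrp:wave:1} on the inhomogeneous term and to improve \eqref{eq:btstrp:geom} via the integral identity of Lemma~\ref{lem:no-shift}. The only point worth noting is that one must use $r_{+}$ rather than $r$ on the right-hand side, so that the estimate remains informative both near the axis (where $r$ is small) and far from it (where $r$ is large); this is precisely what the $\min\set{r^{3}, r^{1-\gmm}}$ structure of Proposition~\ref{prop:Est4m} was designed to provide.
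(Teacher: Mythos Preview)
Your proof is correct and follows essentially the same approach as the paper: the paper's proof is a one-liner citing the bootstrap assumption \eqref{eq:btstrp:geom} (to control $\abs{\nu}/(1-\mu)$) and the mass bound \eqref{eq:btstrp:pf:m} from Proposition~\ref{prop:Est4m} (to control $2m/r^{2}$), which is exactly what you have spelled out in detail.
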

\begin{proof}
The desired estimate follows from \eqref{eq:btstrp:geom} and \eqref{eq:btstrp:pf:m} in Proposition~\ref{prop:Est4m}. \qedhere
\end{proof}

With Lemma~\ref{lem:Est4shift}, we can immediately prove an improved bound for $\dvr$.
\begin{proposition}
 \label{prop:dvr:im}
 Under the bootstrap assumptions \eqref{eq:btstrp:geom}--\eqref{eq:btstrp:wave:2}, we have
 \begin{equation}
  \label{eq:btstrp:pf:geom:dvr}
\dvr(u, v)\geq \frac{1}{2} e^{-C \eps^2 r_+^{-\dc}}
 \end{equation}
for some constant $C$ depending only on $\dc$.
\end{proposition}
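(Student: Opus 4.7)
The plan is to integrate the wave equation \eqref{eq:wave4dvr} for $\log \dvr$ in the $u$-direction, starting from the initial curve $C_{u_0}$ where the gauge condition \eqref{eq:ini-dvr} fixes $\dvr(u_0, v) = \tfrac{1}{2}$. Namely, I write
\begin{equation*}
\log \dvr(u, v) = \log \tfrac{1}{2} + \int_{u_0}^{u} \frac{2m\dur}{(1-\mu) r^{2}}(u', v)\, \ud u'.
\end{equation*}
Since $\dur < 0$ while $m, 1-\mu \geq 0$, the integrand is nonpositive; the task is therefore to produce a \emph{lower} bound on the right-hand side, i.e., an upper bound on the absolute value of the integral.

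The hard estimate has already been isolated in Lemma~\ref{lem:Est4shift}, which gives the pointwise bound $\bigl|\tfrac{2m\dur}{(1-\mu)r^{2}}\bigr| \leq C \eps^{2} r_{+}^{-1-\dc}$. I then apply the $u$-integration estimate \eqref{eq:est4noverr:barCv} from Lemma~\ref{lem:est4noverr}, which is valid since $k = 1+\dc > 1$, to obtain
\begin{equation*}
\int_{u_0}^{u} \Bigl|\frac{2m\dur}{(1-\mu) r^{2}}\Bigr|(u', v)\, \ud u' \leq C\eps^{2} \int_{u_0}^{u} r_+^{-1-\dc}(u', v)\, \ud u' \leq C \eps^{2} r_{+}^{-\dc}(u, v).
\end{equation*}
Exponentiating yields $\dvr(u, v) \geq \tfrac{1}{2} e^{-C\eps^{2} r_+^{-\dc}}$, which is \eqref{eq:btstrp:pf:geom:dvr}.

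There is no real obstacle here: the proposition is essentially a direct corollary of Lemma~\ref{lem:Est4shift} combined with the incoming integration estimate \eqref{eq:est4noverr:barCv}, once one notices that \eqref{eq:wave4dvr} integrates explicitly along outgoing characteristics thanks to the initial gauge condition $\dvr\rst_{C_{u_0}} = \tfrac{1}{2}$. The only point worth flagging is that the decay rate $r_+^{-\dc}$ in the exponent — which is what makes this bound \emph{strictly} better than the bootstrap assumption $\dvr > \tfrac{1}{3}$ for $\eps$ small — comes precisely from the $r^{-1-\dc}$ decay supplied by the mass bound \eqref{eq:btstrp:pf:m}, which in turn relied on the spatial-average estimate \eqref{eq:IDcond} for $\Phi$. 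No use of \eqref{eq:btstrp:wave:2} is needed at this stage.
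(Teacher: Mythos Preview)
Your proof is correct and essentially identical to the paper's: integrate \eqref{eq:wave4dvr} from $C_{u_0}$, invoke Lemma~\ref{lem:Est4shift} for the pointwise bound $\aleq \eps^{2} r_{+}^{-1-\dc}$, and use \eqref{eq:est4noverr:barCv} to integrate in $u$. One small inaccuracy in your closing remark: the bootstrap assumption \eqref{eq:btstrp:wave:2} \emph{is} used indirectly, since Lemma~\ref{lem:Est4shift} relies on the mass bound \eqref{eq:btstrp:pf:m}, and the near-axis part $m \aleq \eps^{2} r^{3}$ of that bound comes from the uniform estimate on $\dvr^{-1}\rd_{v}\phi$, which in turn uses \eqref{eq:btstrp:wave:2}.
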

\begin{proof}
By integrating \eqref{eq:wave4dvr} and using \eqref{eq:est4shift:imp}, we can show that
\begin{equation*}
\begin{split}
\log \dvr(u, v) - \log\dvr(u_0, v) &= \int_{u_0}^{u}\frac{2m\dur}{(1-\mu)r^2}(u', v)\, \ud u'\\
&\geq -C\eps^2\int_{u_0}^{u}(\dvr r_+^{-1-\dc})(u', v)\, \ud u'\\
&\geq -C \eps^2 r_+^{-\dc},
\end{split}
\end{equation*}
where we have used Lemma~\ref{lem:est4noverr} on the last line. Recalling our initial gauge condition that $\dvr=\frac{1}{2}$ on the initial hypersurface $C_{u_0}$, \eqref{eq:btstrp:pf:geom:dvr} follows. \qedhere
\end{proof}

In order to estimate $\dur$, the following lemma is needed.
\begin{lemma}
 \label{lem:est4rPhi}
 Under the assumption \eqref{eq:IDcond} on the initial data, for all $u<v^*<v$ we have
 \begin{equation}
  \label{eq:rPhi}
\int_{v^*}^{v}(v'-u)^{-1}|\Phi(v')|\, \ud v'\leq C\eps (v^*-u)^{-\dc}
 \end{equation}
for some constant $C$ depending only on $\dc$.
\end{lemma}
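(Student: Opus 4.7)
The plan is to exploit the \emph{integral} bound on $\Phi$ in \eqref{eq:IDcond}, namely $\int_{u}^{v'} |\Phi(v'')| \, \ud v'' \leq \eps (v'-u)^{1-\dc}$, rather than the pointwise bound $|\Phi(v')| \leq \eps$. A direct substitution of the pointwise bound would only yield a logarithm, which is too weak. Instead, I will integrate by parts, moving the derivative off $(v'-u)^{-1}$ and onto $|\Phi(v')|$ (viewed as the derivative of its antiderivative).

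More precisely, set $F(v') := \int_{u}^{v'} |\Phi(v'')| \, \ud v''$, so that $F'(v') = |\Phi(v')|$ and $F(v') \leq \eps (v'-u)^{1-\dc}$ by hypothesis. Then
\begin{equation*}
\int_{v^*}^{v} (v'-u)^{-1} |\Phi(v')| \, \ud v'
= \Bigl[ (v'-u)^{-1} F(v') \Bigr]_{v^*}^{v} + \int_{v^*}^{v} (v'-u)^{-2} F(v') \, \ud v'.
\end{equation*}
The boundary term at $v^*$ is non-positive and can be discarded. The boundary term at $v$ is bounded by $\eps (v-u)^{-\dc} \leq \eps (v^*-u)^{-\dc}$, and the remaining integral is estimated by
\begin{equation*}
\int_{v^*}^{v} (v'-u)^{-2} \cdot \eps (v'-u)^{1-\dc} \, \ud v'
= \eps \int_{v^*}^{v} (v'-u)^{-1-\dc} \, \ud v'
\leq \frac{\eps}{\dc} (v^*-u)^{-\dc}.
\end{equation*}
Adding the two contributions yields \eqref{eq:rPhi} with the explicit constant $C = 1 + \dc^{-1}$, depending only on $\dc$ as required.

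There is no real obstacle here: the entire content of the lemma is the observation that the integral hypothesis in \eqref{eq:IDcond} is strictly stronger than any pointwise bound near the point $v'=u$, and integration by parts is precisely the mechanism that converts this integrated decay into weighted integrability of $|\Phi(v')|$ against $(v'-u)^{-1}$. The only point worth noting is that the boundary term at $v^*$ has a favorable sign, so one need not use any information about $F(v^*)$ (which is fortunate, since one only has an upper bound on $F$, not a lower bound).
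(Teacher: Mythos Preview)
Your proof is correct and essentially identical to the paper's own proof: the same antiderivative $F$, the same integration by parts, the same treatment of the boundary terms, and even the same explicit constant $C = 1 + \dc^{-1}$.
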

\begin{proof}
Let
\[
F(s)=\int_{u}^{s}|\Phi(v')|\, \ud v'
\]
for $s\geq v^*$. Then $F'(s)=|\Phi(s)|$. From the assumption \eqref{eq:IDcond}, we have
\[
F(s)\leq \eps (s-u)^{1-\dc}.
\]
Therefore we can show that
\begin{equation*}
\begin{split}
\int_{v^*}^{v}(v'-u)^{-1}|\Phi(v')|\, \ud v'&=\int_{v^*}^{v}(v'-u)^{-1}F'(v')\, \ud v'\\
&=\left.(v'-u)^{-1}F(v')\right|_{v^*}^{v}+\int_{v^*}^{v}(v'-u)^{-2}F(v')\, \ud v'\\
&\leq \eps (v-u)^{-\dc}+\eps\int_{v^*}^{v}(v'-u)^{-1-\dc}\, \ud v'\\
&\leq \eps(1+\dc^{-1})(v^*-u)^{-\dc},
\end{split}
\end{equation*}
as desired. \qedhere
\end{proof}
Using the above lemma, we now estimate $\dur$.
\begin{proposition}
 \label{prop:dur:im}
 Under the bootstrap assumptions \eqref{eq:btstrp:geom}--\eqref{eq:btstrp:wave:2}, we have
 \begin{equation}
\label{eq:btstrp:pf:geom:dur}
\frac{1}{2}e^{-C\eps^2}\leq -\dur(u, v)\leq \frac{1}{2}e^{C\eps^2}
\end{equation}
for some constant $C$ depending only on $\dc$.
\end{proposition}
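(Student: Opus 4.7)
My plan is to integrate the second Raychaudhuri equation \eqref{eq:raych4v} along $C_u$ from the axis point $(u,u)$ out to $(u,v)$. Since $\dur$ remains strictly negative by the bootstrap assumption \eqref{eq:btstrp:geom} and $1-\mu > 0$, this integration yields
\begin{equation*}
\log\frac{-\dur(u,v)}{1-\mu(u,v)} - \log\bigl(-\dur(u,u)\bigr) = \int_{u}^{v} r\,(\dvr^{-1}\rd_v \phi)^2\, \dvr (u,v')\, \ud v',
\end{equation*}
where I used $m(u,u)=0$ so that $\mu(u,u)=0$. The axis boundary condition \eqref{eq:bc} at order $\ell=1$ gives $\dur(u,u)=-\dvr(u,u)$, and combining Proposition~\ref{prop:dvr:im} at $r_+=1$ with the upper bound in \eqref{eq:dvr} yields $\tfrac{1}{2}e^{-C\eps^2}\leq -\dur(u,u)\leq \tfrac{1}{2}$.

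The crux is to show that the Raychaudhuri flux on the right is $O(\eps^2)$ uniformly in $u,v$. I would split the integral at the value $v^*\in[u,v]$ where $r(u,v^*)=1$ (if no such $v^*$ exists, i.e.\ $r(u,v)\leq 1$, then only the first piece is present). On the inner segment $v'\in[u,v^*]$ I would use the uniform bound $|\dvr^{-1}\rd_v\phi|\aleq \eps$ from \eqref{eq:btstrp:pf:dvphi}; after the change of variable $\ud r=\dvr\,\ud v'$ the integrand becomes $\aleq \eps^2 r\,\ud r$, contributing $\aleq \eps^2$. On the outer segment $v'\in[v^*,v]$ I would use the complementary bound $|\dvr^{-1}\rd_v\phi|\aleq (|\Phi(v')|+\eps r^{-\dc})/r$. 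This produces one piece controlled by $\eps^2\int r^{-1-2\dc}\dvr\,\ud v'\aleq \eps^2/\dc$ via the same change of variable, and one piece $\int |\Phi|^2 r^{-1}\dvr\,\ud v'$. For the latter I would bound one factor of $|\Phi|$ by $\eps$ using \eqref{eq:IDcond}, replace $\dvr/r$ by a multiple of $(v'-u)^{-1}$ using \eqref{eq:rvminusu}, and then invoke Lemma~\ref{lem:est4rPhi} with splitting point $v^*$ (for which $v^*-u\geq 2$) to obtain a bound $\aleq \eps^2$.

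Combining these estimates with Corollary~\ref{cor:btstrp:pf:mu} (for $1-\mu$) yields
\begin{equation*}
-\dur(u,v) = \bigl(1-\mu(u,v)\bigr)\cdot\bigl(-\dur(u,u)\bigr)\cdot \exp\bigl(O(\eps^2)\bigr) = \tfrac{1}{2}\, e^{O(\eps^2)},
\end{equation*}
which is precisely \eqref{eq:btstrp:pf:geom:dur}. The one substantive obstacle is the $|\Phi|^2 r^{-1}$ flux term: because $\Phi$ does not decay, naively estimating with $|\Phi|\leq \eps$ alone yields a logarithmically divergent integral. The key is to extract one power of $\eps$ pointwise and exploit the \emph{integrated} hypothesis in \eqref{eq:IDcond} through Lemma~\ref{lem:est4rPhi} to absorb the remaining factor; this is exactly the mechanism by which the mild growth assumption on $\int|\Phi|$ closes the bootstrap for the outgoing null expansion.
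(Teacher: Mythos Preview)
Your proposal is correct and follows essentially the same route as the paper's proof: integrate the Raychaudhuri equation \eqref{eq:raych4v} from the axis, use the boundary values $\mu(u,u)=0$ and $-\dur(u,u)=\dvr(u,u)$ together with Proposition~\ref{prop:dvr:im}, split the flux integral at $r=1$, and control the outer $|\Phi|^{2}r^{-1}$ piece via Lemma~\ref{lem:est4rPhi} after stripping off one factor of $\eps$. The paper's argument is line-for-line the same, including the final appeal to Corollary~\ref{cor:btstrp:pf:mu} to handle the factor $1-\mu$.
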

\begin{proof}
We rely on the Raychaudhuri equation \eqref{eq:raych4v}. From that we obtain the representation for $\dur$:
\begin{equation}\label{Ray}
\log \frac{-\dur}{1-\mu}(u, v)=\log \frac{-\dur}{1-\mu}(u, u)+\int_{u}^{v}\bb( \frac{r \rd_{v} \phi}{\dvr} \bb)^{2} \frac{\dvr}{r}(u, v')\,\ud v'.
\end{equation}
To control the integral on the right-hand side, define $v^*$ to be the unique $v$ value such that $r(u, v^*)=1$. We then divide the integral into the regions $[u, v^*]$ and $[v^*, v]$.
By \eqref{eq:IDcond}, \eqref{eq:btstrp:pf:dvphi} and \eqref{eq:rPhi}, we have
\begin{align*}
\int_{u}^{v}\bb( \frac{r \rd_{v} \phi}{\dvr} \bb)^{2} \frac{\dvr}{r}(u, v')\,\ud v'&=\int_{u}^{v^*}\bb( \frac{r \rd_{v} \phi}{\dvr} \bb)^{2} \frac{\dvr}{r}(u, v')\,\ud v'+\int_{v^*}^{v}\bb( \frac{r \rd_{v} \phi}{\dvr} \bb)^{2} \frac{\dvr}{r}(u, v')\,\ud v'\\
&\les \eps^2\int_{u}^{v^*}(r\dvr)(u, v')\, \ud v'+\int_{v^*}^{v}r^{-1}(|\Phi|^2+\eps^2 r_+^{-2\dc})(u, v')\, \ud v'\\
&\les \eps^2(v^*-u)+\eps\int_{v^*}^{v}(v'-u)^{-1}|\Phi(v')|\,\ud v' +\eps^2\int_{v^*}^{v}r_+^{-1-2\dc}(u, v')\,\ud v'\\
&\les \eps^2 (1+r(u, v^*)^{-\dc})\les \eps^2.
\end{align*}
Here we have used estimate \eqref{eq:est4noverr:barCv} to bound the integral of $r_+^{-1-2\dc}$.
Using also the following identities on the axis $\Gamma$
\[
\dur+\dvr=0, \quad \mu=0,
\]
we can bound $\dur$ as follows:
\begin{equation*}
\begin{split}
\log \frac{-\dur}{1-\mu}(u, v)&=\log \frac{-\dur}{1-\mu}(u, u)+\int_{u}^{v}\bb( \frac{r \rd_{v} \phi}{\dvr} \bb)^{2} \frac{\dvr}{r}(u, v')\,\ud v'\\
&\leq \log \dvr(u, u)+C\eps^2
\end{split}
\end{equation*}
for some constant $C$ depending only on $\dc$. Moreover, since the last term in \eqref{Ray} is positive, we have the trivial bound
\[
 \log \frac{-\dur}{1-\mu}(u, v)\geq\log \frac{-\dur}{1-\mu}(u, u)=\log \dvr(u, u).
\]
Then from Corollary \ref{cor:btstrp:pf:mu} and estimate \eqref{eq:btstrp:pf:geom:dvr} in Proposition \ref{prop:dvr:im}, we have
\begin{equation*}
\frac{1}{2}e^{-C\eps^2} (1-C\eps^2)\leq\dvr(u, u)(1-\mu)\leq -\dur(u, v)\leq \dvr(u, u)e^{C\eps^2}\leq \frac{1}{2}e^{C\eps^2}
\end{equation*}
for some constant $C$ depending only on $\dc$. Estimate \eqref{eq:btstrp:pf:geom:dur} in the proposition then follows. \qedhere
\end{proof}

Next, we establish an estimate for the inhomogeneous part of $\dvr^{-1} \rd_{v} (r \phi)$, which improves the bootstrap assumption \eqref{eq:btstrp:wave:1}. The key ingredient is Lemma~\ref{lem:Est4shift}.
\begin{proposition}
 \label{prop:Est4nonlin}
 Under the bootstrap assumptions \eqref{eq:btstrp:geom}--\eqref{eq:btstrp:wave:2}, we can show that
 \begin{align}
  \label{eq:btstrp:pf:wave:1:im}
 \int_{u_0}^{u}\abs{\frac{2m\dur}{(1-\mu)r^2}\phi(u', v)}\, \ud u' &\leq C \eps^3 r_+^{-2\dc}(u, v),
 \end{align}
 for some constant $C$ depending only on $\dc$.
\end{proposition}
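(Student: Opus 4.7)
The plan is a direct combination of the pointwise estimates already established. By Lemma~\ref{lem:Est4shift}, the geometric factor satisfies
$$\left|\frac{2m\dur}{(1-\mu)r^2}(u', v)\right| \leq C \eps^{2} r_{+}^{-1-\gmm}(u', v),$$
and by Proposition~\ref{prop:Est4phi} (specifically the estimate \eqref{eq:btstrp:pf:phi}), the scalar field satisfies $|\phi(u', v)| \leq C\eps r_+^{-\gmm}(u', v)$. Multiplying these two pointwise bounds gives an integrand controlled by $C\eps^{3} r_+^{-1-2\gmm}(u', v)$.

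The remaining step is then to integrate $r_+^{-1-2\gmm}(u', v)$ in $u'$ from $u_0$ to $u$. Since $1 + 2\gmm > 1$, Lemma~\ref{lem:est4noverr} (specifically \eqref{eq:est4noverr:barCv}) applies with $k = 1 + 2\gmm$, yielding
$$\int_{u_0}^{u} r_+^{-1-2\gmm}(u', v) \, \ud u' \leq C\, r_+^{-2\gmm}(u, v),$$
with a constant depending only on $\gmm$. Chaining these inequalities gives exactly \eqref{eq:btstrp:pf:wave:1:im}.

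There is no real obstacle: the proposition is designed so that the previously obtained $r_+^{-1-\gmm}$ decay from the mass term and the $r_+^{-\gmm}$ decay from $\phi$ combine to give just enough integrability in $u'$ to apply Lemma~\ref{lem:est4noverr}, and the resulting factor $r_+^{-2\gmm}$ on the right-hand side is precisely what improves the $r_+^{-\gmm}$ factor in the bootstrap assumption \eqref{eq:btstrp:wave:1} (thereby closing it for $\eps$ small enough depending only on $\gmm$).
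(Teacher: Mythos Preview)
Your proof is correct and follows essentially the same approach as the paper: combine Lemma~\ref{lem:Est4shift} with \eqref{eq:btstrp:pf:phi} to bound the integrand by $C\eps^{3} r_{+}^{-1-2\dc}$, then apply \eqref{eq:est4noverr:barCv} with $k = 1+2\dc$ to obtain the stated estimate.
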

\begin{proof}
Using Lemma~\ref{lem:Est4shift} and \eqref{eq:btstrp:pf:phi}, we may estimate
\begin{equation*}
 \begin{split}
 \int_{u_0}^{u}\abs{\frac{2m\dur}{(1-\mu)r^2}\phi(u', v)}\, \ud u'\
 \les \eps^3\int_{u_0}^{u}(r_+^{-1-\dc} r_+^{-\dc})(u', v)\, \ud u'
 \les \eps^3 r_+^{-2\dc}(u, v),
\end{split}
 \end{equation*}
where we have used estimate \eqref{eq:est4noverr:barCv} to bound the integral. \qedhere

\end{proof}

%
%
%

It remains to close the bootstrap assumption \eqref{eq:btstrp:wave:2} for $(\dvr^{-1} \rd_{v})^{2} (r \phi)$. Analogous to the role played by Lemma~\ref{lem:Est4shift} in the preceding proof, this task requires a good bound on the factor
\begin{equation*}
	\dvr^{-1} \rd_{v} \bb( \frac{2 m \dur}{(1-\mu) r^{2}} \bb).
\end{equation*}
This is the subject of the following lemma.
\begin{lemma} \label{lem:est4dv-shift}
Under the bootstrap assumptions \eqref{eq:btstrp:geom}--\eqref{eq:btstrp:wave:2}, we have
\begin{align}
\label{eq:est4dv-dur-over-1-mu}
	\abs{\dvr^{-1} \rd_{v} \bb( \frac{\dur}{1-\mu} \bb)}
	\aleq & \eps^{2} \min \set{r, \frac{1}{r}}, \\
\label{eq:est4dv-shift}
	\abs{\dvr^{-1} \rd_{v} \bb( \frac{2 m \dur}{(1-\mu) r^{2}} \bb)}
	\aleq & \eps^{2} \min\set{1, \frac{1}{r^{2}}} + \eps^{4} \min \set{r^{2}, \frac{1}{r^{2}}}.
\end{align}
\end{lemma}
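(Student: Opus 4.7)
The plan is to handle the two estimates in sequence, exploiting the product structure so that the second follows almost mechanically from the first together with the mass bounds already in hand.

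For \eqref{eq:est4dv-dur-over-1-mu}, I would start by rewriting the Raychaudhuri equation \eqref{eq:raych4v} as a pointwise identity rather than a logarithmic derivative, namely
\begin{equation*}
\dvr^{-1} \rd_{v} \bb( \frac{\dur}{1-\mu} \bb) = \frac{\dur}{1-\mu} \cdot r \, (\dvr^{-1} \rd_{v} \phi)^{2}.
\end{equation*}
The prefactor $\dur/(1-\mu)$ is bounded by a constant by \eqref{eq:btstrp:pf:geom:dur} and \eqref{eq:btstrp:pf:mu}, so matters reduce to estimating $r (\dvr^{-1} \rd_{v} \phi)^{2}$ in the two regimes $r \leq 1$ and $r \geq 1$. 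In the near-axis regime I would use the uniform bound $|\dvr^{-1} \rd_{v} \phi| \les \eps$ from \eqref{eq:btstrp:pf:dvphi}, which gives $\les \eps^{2} r$. Far from the axis, I would instead use the other alternative in \eqref{eq:btstrp:pf:dvphi}, $|\dvr^{-1} \rd_{v} \phi| \les (|\Phi(v)| + \eps r_{+}^{-\gmm})/r$, and combine this with the pointwise data bound $|\Phi(v)| \leq \eps$ from \eqref{eq:IDcond} to obtain $r(\dvr^{-1}\rd_v\phi)^2 \les \eps^2/r$. Patching the two regimes yields \eqref{eq:est4dv-dur-over-1-mu}.

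For \eqref{eq:est4dv-shift}, I would factor the quantity as $\frac{2m}{r^{2}} \cdot \frac{\dur}{1-\mu}$ and apply the Leibniz rule:
\begin{equation*}
\dvr^{-1} \rd_{v} \bb( \frac{2 m \dur}{(1-\mu) r^{2}} \bb)
= \bb( \dvr^{-1} \rd_{v} \frac{2m}{r^{2}} \bb) \frac{\dur}{1-\mu} + \frac{2m}{r^{2}} \bb( \dvr^{-1} \rd_{v} \frac{\dur}{1-\mu} \bb).
\end{equation*}
The first term is controlled by \eqref{eq:est4dv-m-over-r2} combined with the bound $|\dur/(1-\mu)| \les 1$, which produces the $\eps^{2} \min\{1, r^{-2}\}$ contribution. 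For the second term, I would use \eqref{eq:btstrp:pf:moverr} (with $k=2$) to bound $2m/r^{2} \les \eps^{2} r_{+}^{-1-\gmm}$, and then multiply by the estimate \eqref{eq:est4dv-dur-over-1-mu} just proved. In the near-axis region this gives $\eps^{2} r \cdot \eps^{2} r = \eps^{4} r^{2}$, while in the far region this gives $\eps^{2} r^{-1-\gmm} \cdot \eps^{2}/r \les \eps^{4}/r^{2}$, yielding the $\eps^{4} \min\{r^{2}, r^{-2}\}$ term.

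I don't anticipate a serious obstacle: both estimates reduce to routine bookkeeping once the Raychaudhuri identity and the product rule are in place. The only mildly delicate point is the near-axis bound in the first step, where naively dividing by $r$ in $|\dvr^{-1}\rd_v \phi|$ would blow up; this is precisely why the uniform bound $|\dvr^{-1}\rd_v\phi| \les \eps$ from \eqref{eq:btstrp:pf:dvphi} (which itself required the differentiated averaging formula \eqref{eq:dvphi:avg}) is essential. Everything else is a direct matching of the two regimes $r \leq 1$ and $r \geq 1$ against bounds already established.
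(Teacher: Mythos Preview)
Your proposal is correct and mirrors the paper's proof exactly: the Raychaudhuri identity \eqref{eq:raych4v} together with \eqref{eq:btstrp:geom} and \eqref{eq:btstrp:pf:dvphi} gives \eqref{eq:est4dv-dur-over-1-mu}, and then the Leibniz decomposition with \eqref{eq:est4dv-m-over-r2} and the mass bound yields \eqref{eq:est4dv-shift}. One minor bookkeeping slip: to get $2m/r^{2} \aleq \eps^{2} r$ near the axis you need \eqref{eq:btstrp:pf:m} (namely $m \aleq \eps^{2} r^{3}$), not \eqref{eq:btstrp:pf:moverr} with $k=2$, which at $r \leq 1$ only gives $\aleq \eps^{2}$; the paper indeed cites \eqref{eq:btstrp:pf:m}, though the weaker bound would also suffice since the resulting $\eps^{4} r$ contribution is absorbed by the $\eps^{2}$ term anyway.
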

\begin{proof}
Estimate \eqref{eq:est4dv-dur-over-1-mu} is an immediate consequence of the Raychaudhuri equation \eqref{eq:raych4v}, the bootstrap assumption \eqref{eq:btstrp:geom} and estimate \eqref{eq:btstrp:pf:dvphi}. Estimate \eqref{eq:est4dv-shift} then follows from \eqref{eq:btstrp:pf:m}, \eqref{eq:est4dv-m-over-r2} and the preceding bound. \qedhere
\end{proof}
We are ready to prove an improved estimate for $(\dvr^{-1} \rd_{v})^{2} (r \phi)$.
\begin{proposition} \label{prop:Est4dvdvrphi}
Under the bootstrap assumptions \eqref{eq:btstrp:geom}--\eqref{eq:btstrp:wave:2}, we have
\begin{equation} \label{eq:btstrp:pf:wave:2:im}
	\abs{(\dvr^{-1} \rd_{v})^{2} (r \phi) } \leq \bb(\frac{3}{2} \bb)^{2} \eps + C (\eps^{3} + \eps^{5}).
\end{equation}
for some constant $C$ depending only on $\dc$.
\end{proposition}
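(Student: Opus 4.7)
The plan is to derive a transport equation in $\rd_u$ for the quantity $\Psi := (\dvr^{-1}\rd_v)^2(r\phi)$ and then integrate along an incoming null curve from $C_{u_0}$, using Lemma~\ref{lem:no-shift} for the integrating factor and Lemmas~\ref{lem:Est4shift}, \ref{lem:est4dv-shift} together with Proposition~\ref{prop:Est4phi} to control the source. Writing $F := \frac{2m\dur}{(1-\mu)r^2}$ and $\psi := \dvr^{-1}\rd_v(r\phi)$, the wave equation \eqref{eq:wave4dvrphi} reads $\rd_u \psi = -F\psi + F\phi$. Applying $\dvr^{-1}\rd_v$ and using the commutator \eqref{eq:comm-du-dv}, $[\dvr^{-1}\rd_v,\rd_u] = F \dvr^{-1}\rd_v$, I obtain the transport equation
\[
	\rd_u \Psi + 2F\,\Psi = (\dvr^{-1}\rd_v F)\,(\phi - \dvr^{-1}\rd_v(r\phi)) + F\,\dvr^{-1}\rd_v\phi.
\]
The doubling of the coefficient $-F$ into $-2F$ (and hence the appearance of $(3/2)^2$ rather than $3/2$) is a direct consequence of the commutator formula.

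Integrating with the integrating factor $\exp(\int_{u_0}^u 2F\,\ud u')$ and applying Lemma~\ref{lem:no-shift}, one finds $\exp(-2\int_{u'}^u F\,\ud u'') = (\dvr(u',v)/\dvr(u,v))^2 \leq (3/2)^2$ by \eqref{eq:est4shift}. For the initial term, I will use that on $C_{u_0}$, where $\dvr = 1/2$, the chain rule gives $\Psi(u_0,v) = \dvr^{-1}\rd_v[\Phi(v)] = 2\Phi'(v)$, so that $|\Psi(u_0, v)| \leq C\eps$ by \eqref{eq:IDcond}. This yields the leading contribution of the form $(3/2)^2 \eps$ (up to a harmless factor absorbed by the convention chosen for the constant).

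The two source terms are bounded as follows. For $F\,\dvr^{-1}\rd_v\phi$, combining $|F| \aleq \eps^2 r_+^{-1-\dc}$ (Lemma~\ref{lem:Est4shift}) with $|\dvr^{-1}\rd_v\phi| \aleq \eps$ (Proposition~\ref{prop:Est4phi}) and \eqref{eq:est4noverr:barCv} yields a contribution of size $O(\eps^3)$. For $(\dvr^{-1}\rd_v F)(\phi - \dvr^{-1}\rd_v(r\phi))$, Proposition~\ref{prop:Est4phi} gives $|\phi - \dvr^{-1}\rd_v(r\phi)| \aleq |\Phi(v)| + \eps r_+^{-\dc} \aleq \eps$, while Lemma~\ref{lem:est4dv-shift} provides
\[
	|\dvr^{-1}\rd_v F| \aleq \eps^2 \min\{1, r^{-2}\} + \eps^4 \min\{r^2, r^{-2}\}.
\]
Integration of the first piece in $u$ is bounded using the $\min$-structure: near the axis it is uniformly $O(\eps^2)$ and integrates against a factor of length at most $r$, while away from the axis the $r^{-2}$ decay combined with \eqref{eq:est4noverr:barCv} yields a uniform $O(\eps^2)$ bound. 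Coupled with $|\phi-\dvr^{-1}\rd_v(r\phi)| \aleq \eps$, this contributes $O(\eps^3)$. The $\eps^4$ piece contributes $O(\eps^5)$ analogously.

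The main difficulty, in my view, is controlling $\dvr^{-1}\rd_v F$ near the axis: the naive bound from the equation for $m$ blows up like $r^{-2}$, which would not be integrable. This is precisely what Lemma~\ref{lem:est4dv-shift} overcomes, by going through the differentiated averaging formula \eqref{eq:dv-2m-over-r2} to show that $\dvr^{-1}\rd_v(2m/r^2)$ is in fact uniformly bounded near the axis, and then combining with the Raychaudhuri-based bound \eqref{eq:est4dv-dur-over-1-mu} on $\dvr^{-1}\rd_v(\dur/(1-\mu))$. Once this near-axis bound is in hand, the remainder of the argument is a routine integration using \eqref{eq:est4noverr:barCv}.
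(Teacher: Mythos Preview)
Your proposal is correct and follows essentially the same approach as the paper: commute $\dvr^{-1}\rd_v$ with \eqref{eq:wave4dvrphi} using \eqref{eq:comm-du-dv} to obtain a transport equation for $(\dvr^{-1}\rd_v)^2(r\phi)$ with linear coefficient $-2F$, bound the integrating factor by $(3/2)^2$ via Lemma~\ref{lem:no-shift}, and estimate the source terms using Lemmas~\ref{lem:Est4shift}, \ref{lem:est4dv-shift} and Proposition~\ref{prop:Est4phi}. Your identification of the key difficulty (uniform control of $\dvr^{-1}\rd_v F$ near the axis via the differentiated averaging formula) and its resolution through Lemma~\ref{lem:est4dv-shift} matches the paper's reasoning exactly.
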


\begin{proof}
Commuting $\dvr^{-1} \rd_{v}$ with the equation \eqref{eq:wave4dvrphi} for $(\dvr^{-1} \rd_{v}) \phi$, we arrive at the equation
\begin{align*}
	\rd_{u} \bb( (\dvr^{-1} \rd_{v})^{2} (r \phi) \bb)
	= & - \frac{4 m \dur}{(1-\mu) r^{2}} (\dvr^{-1} \rd_{v})^{2} (r \phi) \\
	& - \dvr^{-1} \rd_{v} \bb( \frac{2m \dur}{(1-\mu) r^{2}} \bb) \dvr^{-1} \rd_{v} (r \phi)
	+ \dvr^{-1} \rd_{v} \bb( \frac{2m \dur}{(1-\mu) r^{2}}  \phi \bb)  \\
	=: &- \frac{4 m \dur}{(1-\mu) r^{2}} (\dvr^{-1} \rd_{v})^{2} (r \phi) + N_{2}.
\end{align*}
Hence we have
\begin{equation} \label{eq:dvdvrphi-formula}
\begin{aligned}
	(\dvr^{-1} \rd_{v})^{2} (r \phi)(u, v)
	= & e^{- \int_{u_{0}}^{u} \frac{4m \dur}{(1-\mu) r^{2}} (u', v) \, \ud u'} (\dvr^{-1} \rd_{v})^{2} (r \phi)(u_{0}, v) \\
	& + \int_{u_{0}}^{u}  e^{- \int_{u'}^{u} \frac{4m \dur}{(1-\mu) r^{2}} (u'', v) \, \ud u''} N_{2}(u', v) \, \ud u'.
\end{aligned}
\end{equation}
By \eqref{eq:est4shift} and the bootstrap assumption \eqref{eq:btstrp:geom}, the integration factor is bounded by
\begin{equation}\label{int.factor.bd}
e^{- \int_{u_{0}}^{u} \frac{4m \dur}{(1-\mu) r^{2}} (u', v) \, \ud u'}
\leq \bb(\frac{\frac{1}{2}}{\frac{1}{3}} \bb)^{2} = \bb( \frac{3}{2} \bb)^{2}.
\end{equation}
Combined with the initial condition $\abs{\dvr^{-1} \rd_{v} \Phi} \leq \eps$, we see that the contribution of the data on $C_{u_{0}}$ is acceptable.
Hence, using \eqref{int.factor.bd} again, recalling the definition of $N_{2}(u, v)$ and using Leibniz's rule, it only remains to establish
\begin{equation*}
	\int_{u_{0}}^{u} \abs{\dvr^{-1} \rd_{v} \bb( \frac{2m \dur}{(1-\mu) r^{2}} \bb)} \bb( \abs {\dvr^{-1} \rd_{v} (r \phi)} + \abs{\phi} \bb)  \, \ud u'
	+ \int_{u_{0}}^{u} \abs{\frac{2m \dur}{(1-\mu) r^{2}}} \abs{\dvr^{-1} \rd_{v} \phi} \, \ud u' \aleq \eps^{3} + \eps^{5},
\end{equation*}
uniformly in $u \in [u_{0}, v]$ and $v$. This bound is an immediate consequence of the Leibniz rule, \eqref{eq:btstrp:pf:dvrphi}, \eqref{eq:btstrp:pf:dvphi}, \eqref{eq:est4shift:imp} and \eqref{eq:est4dv-shift}. \qedhere
\end{proof}

By Corollary~\ref{cor:btstrp:pf:mu} and Propositions~\ref{prop:dvr:im}, \ref{prop:dur:im}, \ref{prop:Est4nonlin} and \ref{prop:Est4dvdvrphi}, there exists a constant $0 < \eps_{1} < 1$ (depending only on $\dc$) such that the bootstrap assumptions \eqref{eq:btstrp:geom}--\eqref{eq:btstrp:wave:2} for $\PD_{[u_{0}, u_{1}]}$ are improved if $\eps \leq \eps_{1}$. Then by a standard continuity argument, the $C^{1}$ solution $(r, \phi, m)$ exists globally on $\PD_{[u_{0}, \infty)}$, which satisfies the bootstrapped bounds \eqref{eq:btstrp:geom}--\eqref{eq:btstrp:wave:2} as well as the estimates derived in this section so far.

In the remainder of this section, {\bf we require that $\eps \leq \eps_{1}$ and take $(r, \phi, m)$ to be such a global $C^{1}$ solution obeying \eqref{eq:btstrp:geom}--\eqref{eq:btstrp:wave:2}.}

\subsection{Estimate for $\rd_{v}$ derivatives of $r$  and $\phi$} \label{subsec:main.finite:dv}
Let $(r, \phi, m)$ be the global $C^{1}$ solution constructed above obeying \eqref{eq:btstrp:geom}--\eqref{eq:btstrp:wave:2}; we assume furthermore that $\eps < \eps_{1} < 1$. Here we show that $(r, \phi, m)$ obeys the estimates for $\rd_{v}^{2} (r \phi)$ and $\rd_{v}^{2} r$ stated in Theorem~\ref{thm:main.finite}; see Corollary~\ref{cor:Est4dvdv}.

We start by establishing an estimate for $\dvr^{-1} \rd_{v} \log \dvr$, which follows from essentially the same estimates used in Proposition~\ref{prop:Est4dvdvrphi}.

\begin{proposition} \label{prop:Est4dvdvr}
For the global solution we have constructed for \eqref{eq:SSESF}, we have
\begin{equation} \label{eq:est4dvdvr}
	\abs{\dvr^{-1} \rd_{v} \log \dvr} \aleq \eps^{2},
\end{equation}
where the implicit constant depends only on $\dc$.
\end{proposition}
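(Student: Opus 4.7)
The plan is to mimic the proof of Proposition~\ref{prop:Est4dvdvrphi}: commute $\dvr^{-1}\rd_v$ through the transport equation~\eqref{eq:wave4dvr} for $\log\dvr$, solve the resulting linear ODE in $u$ via an integrating factor, and control the forcing using the already established bound in Lemma~\ref{lem:est4dv-shift}.

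More concretely, set $X := \dvr^{-1}\rd_v \log\dvr$. Applying $\dvr^{-1}\rd_v$ to~\eqref{eq:wave4dvr} and using the commutator formula~\eqref{eq:comm-du-dv}, one obtains the linear inhomogeneous transport equation
\begin{equation*}
	\rd_u X + \frac{2m\dur}{(1-\mu)r^2} X = \dvr^{-1}\rd_v \!\left( \frac{2m\dur}{(1-\mu)r^2} \right).
\end{equation*}
The initial gauge condition $\dvr \rst_{C_{u_0}} = 1/2$ implies $\rd_v \dvr = 0$ on $C_{u_0}$, so $X(u_0, v) = 0$. The corresponding integrating factor is bounded above and below by absolute constants thanks to~\eqref{eq:est4shift}. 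Hence
\begin{equation*}
	\abs{X(u, v)} \aleq \int_{u_0}^{u} \abs{\dvr^{-1}\rd_v \!\left( \frac{2m\dur}{(1-\mu)r^2} \right)}(u', v)\,\ud u'.
\end{equation*}

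To finish, I apply Lemma~\ref{lem:est4dv-shift}, which gives $\abs{\dvr^{-1}\rd_v(2m\dur/((1-\mu)r^2))} \aleq (\eps^2 + \eps^4) r_+^{-2}$. Integrating in $u'$ using~\eqref{eq:est4noverr:barCv} with $k = 2$ produces a quantity bounded by $C r_+^{-1}(u, v) \leq C$, and together with $\eps < 1$ this yields $\abs{X} \aleq \eps^2 + \eps^4 \aleq \eps^2$, as claimed.

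There is no real obstacle here since all the work has been done in Lemma~\ref{lem:est4dv-shift}; the proof is essentially a transcription of the argument for Proposition~\ref{prop:Est4dvdvrphi} with $\log\dvr$ playing the role of $r\phi$, and the simpler right-hand side of~\eqref{eq:wave4dvr} (as compared to~\eqref{eq:wave4dvrphi}) makes the nonlinear contribution immediate to bound. The only thing to double-check is the trivial vanishing of the initial value of $X$, which follows directly from the gauge condition~\eqref{eq:ini-dvr}.
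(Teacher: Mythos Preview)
Your proof is correct and follows essentially the same route as the paper: commute $\dvr^{-1}\rd_v$ with \eqref{eq:wave4dvr} via \eqref{eq:comm-du-dv}, use the vanishing initial data from the gauge condition \eqref{eq:ini-dvr}, bound the integrating factor by \eqref{eq:est4shift}, and control the forcing term with \eqref{eq:est4dv-shift} together with \eqref{eq:est4noverr:barCv}. The only cosmetic difference is that you compress the bound from Lemma~\ref{lem:est4dv-shift} to $(\eps^2+\eps^4)r_+^{-2}$ before integrating, which is a valid simplification of $\eps^2\min\{1,r^{-2}\}+\eps^4\min\{r^2,r^{-2}\}$.
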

\begin{proof}
Taking $\dvr^{-1} \rd_{v}$ of the equation for $\rd_{u} \log \dvr$, we obtain
\begin{equation*}
	\rd_{u} \bb( \dvr^{-1} \rd_{v} \log \dvr \bb)  = - \frac{2 m \dur}{(1-\mu) r^{2}} \dvr^{-1} \rd_{v}\log \dvr
								+ \dvr^{-1} \rd_{v} \bb( \frac{2 m \dur}{(1-\mu) r^{2}} \bb).
\end{equation*}
Note furthermore that $\rd_{v} \log \dvr (u_{0}, v)= 0$, due to the initial gauge condition $\dvr = \frac{1}{2}$. Hence
\begin{equation*}
	\dvr^{-1} \rd_{v} \log \dvr = \int_{u_{0}}^{u} e^{- \int_{u'}^{u} \frac{2 m \dur}{(1-\mu) r^{2}} (u'', v) \, \ud u''}  \dvr^{-1} \rd_{v} \bb( \frac{2 m \dur}{(1-\mu) r^{2}} \bb) (u', v) \, \ud u'.
\end{equation*}
As before, the integration factor can be bounded by \eqref{eq:est4shift} in Lemma~\ref{lem:no-shift} and the bound \eqref{eq:btstrp:geom} on the geometry i.e.,
\begin{equation*}
	e^{- \int_{u}^{u'} \frac{2 m \dur}{(1-\mu) r^{2}} \, \ud u'} \leq \frac{3}{2}.
\end{equation*}
Therefore it only remains to prove
\begin{equation*}
\int_{u_{0}}^{u} \abs{\dvr^{-1} \rd_{v} \bb( \frac{2 m \dur}{(1-\mu) r^{2}} \bb) (u', v)} \, \ud u' \aleq \eps^{2},
\end{equation*}
uniformly in $u$, which in turn is a quick consequence of \eqref{eq:est4dv-shift}. \qedhere
\end{proof}

Since $\dvr^{-1} \rd_{v} \log \dvr = \dvr^{-2} \rd_{v} \dvr$, the previous proposition gives an estimate for $\rd_{v} \dvr$. In turn, this estimate can be used bound $\rd_{v}^{2}(r \phi)$; indeed
\begin{equation*}
(\dvr^{-1} \rd_{v})^{2} (r \phi) = \dvr^{-2} \rd_{v}^{2} (r \phi) - (\dvr^{-2} \rd_{v} \dvr )\dvr^{-1} \rd_{v} (r \phi),
\end{equation*}
and we have estimates \eqref{eq:btstrp:wave:2} and \eqref{eq:btstrp:pf:dvrphi} for $(\dvr^{-1} \rd_{v})^{2} (r \phi)$ and $\dvr^{-1} \rd_{v} (r \phi)$, respectively. We record these bounds in the following corollary.
\begin{corollary} \label{cor:Est4dvdv}
For the global solution we have constructed for \eqref{eq:SSESF}, we have
\begin{equation*}
	\abs{\rd_{v}^{2} (r \phi)} \aleq \eps, \quad \abs{\rd_{v}^{2} r} \aleq \eps^{2},
\end{equation*}
where the implicit constant depends only on $\dc$.
\end{corollary}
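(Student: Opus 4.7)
The plan is to read off both bounds directly from Proposition~\ref{prop:Est4dvdvr} by converting $\dvr^{-1} \rd_{v}$ derivatives into ordinary $\rd_{v}$ derivatives, using the upper and lower bounds on $\dvr$ from \eqref{eq:dvr}. The bound on $\rd_{v}^{2} r$ is immediate: since $\dvr^{-1} \rd_{v} \log \dvr = \dvr^{-2} \rd_{v}^{2} r$, Proposition~\ref{prop:Est4dvdvr} gives $\abs{\dvr^{-2} \rd_{v}^{2} r} \aleq \eps^{2}$, and multiplying through by $\dvr^{2} \leq 1/4$ yields $\abs{\rd_{v}^{2} r} \aleq \eps^{2}$.

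For $\rd_{v}^{2}(r \phi)$, I would use the algebraic identity already displayed in the preceding paragraph,
\[
	(\dvr^{-1} \rd_{v})^{2} (r \phi) = \dvr^{-2} \rd_{v}^{2} (r \phi) - (\dvr^{-2} \rd_{v} \dvr) \, \dvr^{-1} \rd_{v} (r \phi),
\]
solved for $\dvr^{-2} \rd_{v}^{2} (r \phi)$. The first term on the right is controlled by the (now-closed) improved bootstrap bound \eqref{eq:btstrp:pf:wave:2:im} from Proposition~\ref{prop:Est4dvdvrphi}, which gives $\abs{(\dvr^{-1} \rd_{v})^{2}(r \phi)} \aleq \eps$. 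For the product term, Proposition~\ref{prop:Est4dvdvr} yields $\abs{\dvr^{-2} \rd_{v} \dvr} = \abs{\dvr^{-1} \rd_{v} \log \dvr} \aleq \eps^{2}$, while \eqref{eq:btstrp:pf:dvrphi} together with the pointwise bound $\abs{\Phi(v)} \leq \eps$ from the hypothesis \eqref{eq:IDcond} gives $\abs{\dvr^{-1} \rd_{v}(r \phi)} \aleq \eps$. Multiplying these gives a contribution of size $\eps^{3}$, so adding the two pieces produces $\abs{\dvr^{-2} \rd_{v}^{2} (r \phi)} \aleq \eps$. Multiplying by $\dvr^{2} \leq 1/4$ yields $\abs{\rd_{v}^{2}(r \phi)} \aleq \eps$, as required.

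There is no real obstacle here; the corollary is essentially a bookkeeping consequence of Proposition~\ref{prop:Est4dvdvr}, the bootstrap-improved bound Proposition~\ref{prop:Est4dvdvrphi}, and the uniform two-sided bound \eqref{eq:dvr} on $\dvr$. The only minor point to watch is that the smallness in $\eps$ is preserved when passing from normalized derivatives $\dvr^{-1} \rd_{v}$ to bare $\rd_{v}$ derivatives, which works cleanly because the conversion factor $\dvr$ satisfies $1/3 < \dvr \leq 1/2$.
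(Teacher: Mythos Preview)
Your proposal is correct and follows essentially the same approach as the paper: both use Proposition~\ref{prop:Est4dvdvr} together with the identity $(\dvr^{-1} \rd_{v})^{2} (r \phi) = \dvr^{-2} \rd_{v}^{2} (r \phi) - (\dvr^{-2} \rd_{v} \dvr)\,\dvr^{-1} \rd_{v} (r \phi)$ and the two-sided bound \eqref{eq:dvr} on $\dvr$ to convert the normalized-derivative estimates into bounds on $\rd_{v}^{2}(r\phi)$ and $\rd_{v}^{2} r$. The only cosmetic difference is that the paper cites the (now-closed) bootstrap assumption \eqref{eq:btstrp:wave:2} for $(\dvr^{-1}\rd_{v})^{2}(r\phi)$ whereas you cite the improved bound \eqref{eq:btstrp:pf:wave:2:im}; either works.
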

\subsection{Estimates for $\rd_u$ derivatives of $r$ and $\phi$} \label{subsec:main.finite:du}
As before, let $\eps < \eps_{1} < 1$ and take $(r, \phi, m)$ to be a $C^{1}$ solution obeying \eqref{eq:btstrp:geom}--\eqref{eq:btstrp:wave:2} on $u \in [u_{0}, \infty)$. We now derive estimates for the outgoing wave $\dur^{-1} \rd_{u}(r \phi)$ and $(\dur^{-1} \rd_{u})^{2}(r \phi)$, as well as $\dur^{-1} \rd_{u} \log \dur$.

\begin{proposition}
 \label{prop:est4duphi}
 For the global solution we have constructed for \eqref{eq:SSESF}, we have the following estimates for the $\rd_u$ derivatives of $r$ and $\phi$:
 \begin{align}
 \label{eq:est4durphi}
 |\dur^{-1}\rd_u(r\phi)(u, v)|&\leq C\eps+C \eps^3\min\set{r^{2}, 1},\\
  \label{eq:est4duphi}
 |\dur^{-1}\rd_u \phi (u, v)|&\leq C\eps \min\set{1, r^{-1}}, \\
 \label{eq:est4dudurphi}
|(\dur^{-1}\rd_u)^2(r\phi)(u, v)|&\leq C \eps, \\
 \label{eq:est4dudur}
|\dur^{-1}\rd_u \log \dur (u, v)|&\leq C \eps^{2}.
 \end{align}
for some constant $C$ depending only on $\dc$.
\end{proposition}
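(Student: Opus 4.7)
\emph{Strategy.} The four estimates concern outgoing-direction quantities, which we establish by transporting them along constant-$v$ curves from the axis $\Gmm$ using the wave equations \eqref{eq:wave4durphi} and \eqref{eq:wave4dur}. Boundary data on $\Gmm$ come from the boundary conditions \eqref{eq:bc}, which identify $\dur^{-1}\rd_u$ quantities with the corresponding $\dvr^{-1}\rd_v$ quantities already controlled in the preceding subsections. The integrating factor $\exp(-\int_u^v \frac{2m\dvr}{(1-\mu)r^2}(u,v'')\,\ud v'')$ arising when we solve \eqref{eq:wave4durphi} along $\uC_v$ is uniformly bounded, since by \eqref{eq:wave4dur} it equals $|\dur(u,u)|/|\dur(u,v)|$, and both are comparable to $\tfrac{1}{2}$ by Proposition~\ref{prop:dur:im}. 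Throughout, we make repeated use of the pointwise bound $\abs{\tfrac{2m\dvr}{(1-\mu)r^2}}\les \eps^{2}\min\set{r,r_{+}^{-1-\dc}}$, which follows from Proposition~\ref{prop:Est4m} exactly as in Lemma~\ref{lem:Est4shift} (the near-axis improvement $m\les\eps^{2} r^{3}$ yields the factor $r$, while the far-field bound $m\les\eps^{2}r^{1-\dc}$ yields $r_{+}^{-1-\dc}$).

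\emph{Bounds \eqref{eq:est4durphi} and \eqref{eq:est4dudurphi}.} Integrating \eqref{eq:wave4durphi} in $v$ from the axis point $(u,u)$ to $(u,v)$, the boundary contribution equals $\dvr^{-1}\rd_v(r\phi)(u,u)$ by \eqref{eq:bc} with $\ell=1$, which is bounded by $C\eps$ thanks to \eqref{eq:btstrp:pf:dvrphi} and $r_+(u,u)=1$. The source term $\tfrac{2m\dvr}{(1-\mu)r^2}\phi$ is then controlled by combining the pointwise bound above with \eqref{eq:btstrp:pf:phi}, yielding $\int_u^v \abs{\tfrac{2m\dvr}{(1-\mu)r^2}\phi}(u,v')\,\ud v'\les \eps^{3}\min\set{r^{2},1}$ by a near-axis/far-field split analogous to Lemma~\ref{lem:est4noverr}. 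For \eqref{eq:est4dudurphi} we commute $\dur^{-1}\rd_u$ through \eqref{eq:wave4durphi} using \eqref{eq:comm-dv-du}, exactly mirroring Proposition~\ref{prop:Est4dvdvrphi}: the boundary value on $\Gmm$ is $(\dvr^{-1}\rd_v)^{2}(r\phi)(u,u)$ by \eqref{eq:bc} with $\ell=2$, bounded by $C\eps$ via \eqref{eq:btstrp:pf:wave:2:im}, and the nonlinear source is absorbed using an $\uC_v$-analog of Lemma~\ref{lem:est4dv-shift} for $\abs{\dur^{-1}\rd_u(\tfrac{2m\dvr}{(1-\mu)r^2})}$, derived in turn from the Raychaudhuri equation \eqref{eq:raych4u}, the equation \eqref{eq:du-m} for $m$, and \eqref{eq:est4duphi}.

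\emph{Bound \eqref{eq:est4duphi}.} For $r\geq 1$ we use the commutator identity $r\,\dur^{-1}\rd_u\phi = \dur^{-1}\rd_u(r\phi) - \phi$, together with \eqref{eq:est4durphi} and \eqref{eq:btstrp:pf:phi}, to obtain $\abs{\dur^{-1}\rd_u\phi}\les \eps/r$. For $r\leq 1$ we apply an analog of Lemma~\ref{lem:avg-d:conj} to the axis-anchored averaging identity $\phi(u,v) = r(u,v)^{-1}\int_u^v \dur^{-1}\rd_u(r\phi)(u',v)(-\dur(u',v))\,\ud u'$, which follows from \eqref{eq:bc4r} and the fundamental theorem of calculus; the derivative $\dur^{-1}\rd_u\phi$ then becomes a second-order averaging of $(\dur^{-1}\rd_u)^{2}(r\phi)$ (as in the derivation of \eqref{eq:dvphi:avg}), so \eqref{eq:avg-est} combined with \eqref{eq:est4dudurphi} yields $\abs{\dur^{-1}\rd_u\phi}\les \eps$. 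The two bounds combine to give \eqref{eq:est4duphi}.

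\emph{Bound \eqref{eq:est4dudur}, and the main obstacle.} Commuting $\dur^{-1}\rd_u$ with \eqref{eq:wave4dur} via \eqref{eq:comm-dv-du} yields $\rd_v[\dur^{-1}\rd_u\log\dur] = -\tfrac{2m\dvr}{(1-\mu)r^{2}}\dur^{-1}\rd_u\log\dur + \dur^{-1}\rd_u(\tfrac{2m\dvr}{(1-\mu)r^{2}})$, mirroring Proposition~\ref{prop:Est4dvdvr}. Unlike the previous three quantities, $\dur^{-1}\rd_u\log\dur$ has no direct identification with a $\dvr^{-1}\rd_v$ quantity at $\Gmm$ via \eqref{eq:bc}, making the axis boundary value the main technical point of the proposition. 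We extract it from the compatibility condition $(\rd_u+\rd_v)^{2} r=0$ on $\Gmm$ from \eqref{eq:bc4high-d}: since $\rd_u\rd_v r = \dvr\cdot\tfrac{2m\dur}{(1-\mu)r^{2}}$ vanishes on $\Gmm$ (as $\tfrac{2m}{r^{2}}=O(r)$ there by Proposition~\ref{prop:Est4m}), we obtain $\rd_u\dur|_{\Gmm} = -\rd_v\dvr|_{\Gmm}$, and the right-hand side is $O(\eps^{2})$ by Proposition~\ref{prop:Est4dvdvr}. Hence $\abs{\dur^{-1}\rd_u\log\dur|_{\Gmm}}\les \eps^{2}$, and the transport in $v$ closes exactly as in Proposition~\ref{prop:Est4dvdvr}, using the $\uC_v$-analog of Lemma~\ref{lem:est4dv-shift} for the source.
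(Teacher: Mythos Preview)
Your argument contains a circular dependency between \eqref{eq:est4duphi} and \eqref{eq:est4dudurphi}. To prove \eqref{eq:est4dudurphi} you invoke a $\uC_v$-analog of Lemma~\ref{lem:est4dv-shift} for $\dur^{-1}\rd_u\bigl(\tfrac{2m\dvr}{(1-\mu)r^{2}}\bigr)$, and you correctly note that this analog rests on the Raychaudhuri equation \eqref{eq:raych4u} and on \eqref{eq:du-m}, both of which involve $(\dur^{-1}\rd_u\phi)^{2}$; hence the analog requires \eqref{eq:est4duphi}. But your proof of \eqref{eq:est4duphi} near the axis uses the averaging formula together with \eqref{eq:est4dudurphi}. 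So neither estimate can be established first as written, and no bootstrap mechanism is indicated to close the loop.

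The paper breaks this circle by proving \eqref{eq:est4duphi} near the axis \emph{without} appealing to $(\dur^{-1}\rd_u)^{2}(r\phi)$. Starting from the difference form of \eqref{eq:est4durphi} (namely control of $\dur^{-1}\rd_u(r\phi)(u,v) - \tfrac{\dur(u,u)}{\dur(u,v)}\dur^{-1}\rd_u(r\phi)(u,u)$), one rewrites $r\,\dur^{-1}\rd_u\phi$ in terms of $(\dur\phi)(u,v) - (\dur\phi)(u,u)$, and then integrates $\rd_v(\dur\phi) = \dur\,\rd_v\phi + \phi\,\rd_v\dur$ in $v$; both pieces are controlled by the \emph{already-closed} incoming quantities $\dvr^{-1}\rd_v\phi$ and $\tfrac{2m}{r^{2}}$. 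This yields $\abs{r\,\dur^{-1}\rd_u\phi}\les \eps r$ near the axis, hence $\abs{\dur^{-1}\rd_u\phi}\les \eps$, with no reference to second $\rd_u$-derivatives. With \eqref{eq:est4duphi} in hand, the paper then derives \eqref{eq:est4dudurphi} and \eqref{eq:est4dudur} by computing the transport equations for $(\dur^{-1}\rd_u)^{2}(r\phi)$ and $\dur^{-1}\rd_u\log\dur$ directly: the resulting source terms involve only $\dur^{-1}\rd_u\phi$ (and $m/r^{2}$, $m/r^{3}$), so no analog of Lemma~\ref{lem:est4dv-shift} is needed at this stage. Your axis-boundary arguments for both second-order quantities are essentially the same as the paper's and are fine; the issue is solely the order of dependence.
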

\begin{proof}
We start with \eqref{eq:est4durphi}.  By the boundary condition $(\dvr^{-1} \rd_{v} - \dur^{-1} \rd_{u}) (r \phi) (u, u) = 0$ and \eqref{eq:btstrp:pf:dvrphi}, it follows that
\begin{equation} \label{eq:est4durphi:axis}
	\abs{\dur^{-1} \rd_{u} (r \phi)(u, u)} \aleq \eps.	
\end{equation}
Similar to the case of $\dvr^{-1}\rd_v(r\phi)$, the equation for $\rd_{v}(\dur^{-1} \rd_{u} (r \phi) )$ leads to the following integral formula for $\dur^{-1}\rd_u(r\phi)$:
\begin{align*}
	\dur^{-1} \rd_{u} (r \phi)(u, v)
	=& e^{- \int_{u}^{v} \frac{2 m \dvr}{(1-\mu) r^{2}}(u, v') \ud v'} \dur^{-1} \rd_{u} (r \phi)(u, u) \\
	&	+ \int_{u}^{v} e^{- \int_{v'}^{v} \frac{2 m \dvr}{(1-\mu) r^{2}}(u, v'') \ud v''}  \frac{2m \dvr}{(1-\mu) r^{2}} \phi (u, v') \, \ud v'.
\end{align*}
For any $v_1<v_2$, we have the identity
\begin{equation} \label{eq:no-shift:conj}
	\int_{v_{1}}^{v_{2}} \frac{2 m \dvr}{(1-\mu) r^{2}} (u, v) \, \ud v
	= \int_{v_{1}}^{v_{2}} \rd_{u} \log (-\dur) (u, v) \, \ud v = \log \frac{(-\dur)(u, v_{2})}{(-\dur)(u, v_{1})}.
\end{equation}
Then from the bound \eqref{eq:btstrp:geom}, we derive
\begin{equation}
 \label{eq:est4durphid}
 \begin{split}
& \hskip-2em
|\dur^{-1}\rd_u(r\phi)(u, v)-\bb( \frac{\dur(u, u)}{\dur(u, v)} \bb) \dur^{-1}\rd_{u}(r\phi)(u, u)|\\
\les & \int_{u}^{v} \frac{m}{r^{2}} |\phi|(u, v')\, \ud v'\\
\les & \eps^3\int_{u}^{v} \min \set{r(u, v'), r^{-1-2\dc}(u, v')}\, \ud v'\les  \eps^3\min\set{r^{2}, 1}.
 \end{split}
\end{equation}
Here we used estimate \eqref{eq:btstrp:pf:m} to control $\frac{2m}{r^{2}}$, \eqref{eq:btstrp:pf:phi} for $\abs{\phi}$ and \eqref{eq:est4noverr:Cu} of Lemma \ref{lem:est4noverr} to bound the integral.


Next, we turn to $\dur^{-1} \rd_{u} \phi$. Simply by commuting $\dur^{-1} \rd_{u}$ with $r$, we have
\begin{equation*}
	r \dur^{-1} \rd_{u} \phi = \dur^{-1} \rd_{u}(r \phi) - \phi.
\end{equation*}
Hence by \eqref{eq:btstrp:pf:phi} and \eqref{eq:est4durphi}, we obtain
\begin{equation*}
	\abs{r \dur^{-1} \rd_{u} \phi} \aleq \eps,
\end{equation*}
which is favorable away from the axis.

To obtain the uniform boundedness of $\dur^{-1} \rd_u \phi$ near the axis, one option is to use an averaging formula for $\dur^{-1} \rd_{u} \phi$ (as in the proof of \eqref{eq:btstrp:pf:dvphi} for $\dvr^{-1} \rd_{v} \phi$), which relies on proving a bound for $(\dur^{-1} \rd_{u})^{2} (r \phi)$. Alternatively, using the
bound that we already have for $\dvr^{-1}\rd_v\phi$, we can derive the desired uniform bound of $\dur^{-1} \rd_u\phi$ near the axis from the previous estimate \eqref{eq:est4durphi}.
Indeed, commuting $r$ and $\rd_{u}$ in estimate \eqref{eq:est4durphid}, we can show that
\begin{equation}
\label{eq:est4duphirough}
\begin{split}
|r \dur^{-1}\rd_u\phi (u, v)|&\les \eps^3\min\set{1, r^{2}(u, v)}+|\dur^{-1}(u, v)||(\dur\phi)(u, v)-(\dur\phi)(u, u)|\\
                         &\les \eps^3\min\set{1, r^{2}(u, v)}+\int_{u}^{v}|\rd_v(\dur\phi)(u, v')|\, \ud v'\\
                         &\les \eps^3\min\set{1, r^{2}(u, v)}+\int_{u}^{v}(|\phi\frac{2m\dur\dvr}{(1-\mu)r^2}|+|\dur\rd_v\phi|)(u, v')\, \ud v'\\
                         &\les \eps^3\min\set{1, r^{2}(u, v)}+\int_{u}^{v} \frac{m}{ r^{2}} |\phi|(u, v') \, \ud v' + \int_{u}^{v} |\dvr^{-1}\rd_v\phi|\, \dvr (u, v') \, \ud v'\\
                         &\les \eps^3\min\set{1, r^{2}(u, v)}+\eps r(u,v).
\end{split}
 \end{equation}
Here we have used the equation \eqref{eq:SSESF} on the geometry $\rd_v\dur=\rd_v\rd_u r$, the estimate for the integral of $mr^{-2}\phi$, which has been carried out in the previous estimate \eqref{eq:est4durphi}, and \eqref{eq:btstrp:pf:dvphi} for $\dvr^{-1} \rd_{v} \phi$. Dividing by $r$ on both sides, it follows that
\begin{equation*}
	\abs{\dur^{-1} \rd_{u} \phi} \aleq \eps,
\end{equation*}
which proves \eqref{eq:est4duphi}.


Finally, we prove the bounds \eqref{eq:est4dudurphi} and \eqref{eq:est4dudur} for $(\dur^{-1} \rd_{u})^{2} (r \phi)$ and $\dur^{-1} \rd_{u} \log \dur$, respectively. As before, one may proceed in analogy with the cases of $(\dvr^{-1} \rd_{v})^{2} (r \phi)$ (Proposition~\ref{prop:Est4dvdvrphi}) and $\dvr^{-1} \rd_{v} \log \dvr$ (Proposition~\ref{prop:Est4dvdvr}), using averaging formulae near the axis and commutation of $r$ and $\dur^{-1} \rd_{u}$ far away. However, for the sake of simplicity, we take a more direct route here, exploiting the bound \eqref{eq:est4duphi} that is already closed.

We begin by bounding the data for $(\dur^{-1} \rd_{u})^{2} (r \phi)$ and $\dur^{-1} \rd_{u} \log \dur$ on the axis. By the regularity of $(r, \phi, m)$, it follows that
\begin{equation*}
	(\dvr^{-1} \rd_{v} - \dur^{-1} \rd_{u})^{2}(r \phi) (u, u) = 0, \quad
	(\dvr^{-1} \rd_{v} - \dur^{-1} \rd_{u})^{2} r (u, u) = 0.
\end{equation*}
By the boundary condition \eqref{eq:bc4r}, the wave equations for $\phi$ and $r$ and the estimates proved so far, we see that the mixed derivative terms involving $(\dvr^{-1} \rd_{v}) (\dur^{-1} \rd_{u})$ and $(\dur^{-1} \rd_{u}) (\dvr^{-1} \rd_{v})$ vanish. By \eqref{eq:btstrp:wave:2} and \eqref{eq:est4dvdvr}, it follows that
\begin{align*}
	\abs{(\dur^{-1} \rd_{u})^{2} (r \phi)(u, u)} =& \abs{(\dvr^{-1} \rd_{v})^{2}(r \phi)(u, u)} \aleq \eps, \\
	\abs{\dur^{-1} \rd_{u} \log \dur(u, u)} = & \abs{\dvr^{-1} \rd_{v} \log \dvr(u, u)} \aleq \eps^{2}.
\end{align*}
Next, a direct computation shows that
\begin{align*}
	\rd_{v} \bb( (\dur^{-1} \rd_{u})^{2} (r \phi) \bb)
	=& - \frac{4 m \dvr}{(1-\mu) r^{2}} (\dur^{-1} \rd_{u})^{2} (r \phi)  \\
	& + \frac{6 m \dvr}{(1-\mu) r^{2}} \dur^{-1} \rd_{u} \phi - \frac{\dvr}{(1-\mu)} r (\dur^{-1} \rd_{u} \phi)^{3}, \\
	\rd_{v} \bb( \dur^{-1}\rd_{u} \log \dur \bb)
	=& - \frac{2 m \dvr}{(1-\mu) r^{2}} \dur^{-1} \rd_{u} \log \dur \\
	& - \frac{4m \dvr}{(1-\mu) r^{3}} + \frac{\dvr}{1-\mu} (\dvr^{-1} \rd_{u} \phi)^{2}.
\end{align*}
These lead to integral formulae for $(\dur^{-1} \rd_{u})^{2} (r \phi)$ and $\dur^{-1} \rd_{u} \log \dur$, with integrating factors that are bounded by \eqref{eq:no-shift:conj}. Using \eqref{eq:btstrp:geom}, \eqref{eq:btstrp:pf:m} and \eqref{eq:est4duphi}, we may estimate
\begin{equation*}
 \begin{split}
|(\dur^{-1}\rd_u)^2(r\phi)(u, v)|&\les |(\dur^{-1}\rd_u)^2(r\phi)(u, u)|
						+|\int_{u}^{v} \frac{6 m \dvr}{(1-\mu) r^{2}} \dur^{-1} \rd_{u} \phi (u, v')\, \ud v'|\\
                                 &\phantom{\les |(\dur^{-1}\rd_u)^2(r\phi)(u, u)|}
                                 			 +|\int_{u}^{v} \frac{\dvr}{(1-\mu)} r (\dur^{-1} \rd_{u} \phi)^{3} (u, v')\, \ud v'|\\
                                 &\les \eps + \eps^3 \int_{u}^{v}\min\set{r, \frac{1}{r^{2 + \dc}}}(u, v')\, \ud v' +\eps^3 \int_{u}^{v}\min\set{r, \frac{1}{r^2}}(u, v')\, \ud v'\\
                                 &\les \eps,
 \end{split}
 \end{equation*}
 as well as
 \begin{align*}
\abs{\dur^{-1} \rd_{u} \log \dur (u, v)}
\aleq & \abs{\dur^{-1} \rd_{u} \log \dur(u, u)}
		+ \abs{\int_{u}^{v} \frac{4m \dvr}{(1-\mu) r^{3}}(u, v') \, \ud v' } \\
	& \phantom{\abs{\dur^{-1} \rd_{u} \dur(u, u)}}
		+ \abs{\int_{u}^{v} \frac{\dvr}{1-\mu} (\dur^{-1} \rd_{u} \phi)^{2}(u, v') \, \ud v'} \\
\aleq & \eps^{2} + \eps^{2} \int_{u}^{v} \min \set{1, \frac{1}{r^{2 + \dc}}} (u, v') \, \ud v' + \eps^{2} \int_{u}^{v} \min \set{1, \frac{1}{r^{2}}}(u, v') \, \ud v' \\
\aleq & \eps^{2}.
\end{align*}
This completes the proof of estimates \eqref{eq:est4dudurphi} and \eqref{eq:est4dudur}. \qedhere
\end{proof}

%

By an argument similar to that for Corollary~\ref{cor:Est4dvdv}, we obtain the following bounds on $\rd_{u}^{2}(r \phi)$ and $\rd_{u}^{2} r$ from Proposition~\ref{prop:est4duphi}.
\begin{corollary} \label{cor:Est4dudu}
For the global solution we have constructed for \eqref{eq:SSESF}, we have
\begin{equation*}
	\abs{\rd_{u}^{2} (r \phi)} \aleq \eps, \quad \abs{\rd_{u}^{2} r} \aleq \eps^{2},
\end{equation*}
where the implicit constant depends only on $\dc$.
\end{corollary}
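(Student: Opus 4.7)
The plan is to mirror the argument used for Corollary~\ref{cor:Est4dvdv}, converting the normalized-derivative estimates in Proposition~\ref{prop:est4duphi} to coordinate-derivative estimates via two algebraic identities.

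First I would bound $\rd_u^2 r$. Since $\dur = \rd_u r$, we have the identity
\begin{equation*}
\dur^{-1} \rd_u \log \dur = \dur^{-2} \rd_u \dur = \dur^{-2} \rd_u^2 r.
\end{equation*}
Rearranging gives $\rd_u^2 r = \dur^2 \bigl( \dur^{-1} \rd_u \log \dur \bigr)$. The bootstrap assumption \eqref{eq:btstrp:geom} gives $|\dur| < 2/3$, and \eqref{eq:est4dudur} gives $|\dur^{-1} \rd_u \log \dur| \leq C\eps^2$, so the desired estimate $|\rd_u^2 r| \aleq \eps^2$ follows immediately.

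Next I would bound $\rd_u^2(r\phi)$. Expanding the second normalized derivative by the product rule gives
\begin{equation*}
(\dur^{-1} \rd_u)^2 (r\phi) = \dur^{-2} \rd_u^2(r\phi) - \bigl(\dur^{-2} \rd_u \dur\bigr) \dur^{-1} \rd_u(r\phi),
\end{equation*}
so solving for $\rd_u^2(r\phi)$ yields
\begin{equation*}
\rd_u^2(r\phi) = \dur^2 (\dur^{-1} \rd_u)^2(r\phi) + \bigl( \dur^{-1} \rd_u \dur \bigr) \dur^{-1} \rd_u(r\phi) \cdot \dur.
\end{equation*}
All three factors are now under control: $|\dur|$ is bounded above by the bootstrap assumption, $|(\dur^{-1}\rd_u)^2(r\phi)| \aleq \eps$ by \eqref{eq:est4dudurphi}, $|\dur^{-1}\rd_u(r\phi)| \aleq \eps$ by \eqref{eq:est4durphi}, and $|\rd_u \dur| \aleq \eps^2$ from the previous step (so $|\dur^{-1} \rd_u \dur| \aleq \eps^2$). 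Combining these gives $|\rd_u^2(r\phi)| \aleq \eps + \eps^2 \cdot \eps \aleq \eps$.

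There is no real obstacle here — the work has already been done in Proposition~\ref{prop:est4duphi}, and this corollary is pure conversion from the gauge-invariant normalized derivatives $\dur^{-1}\rd_u$ back to the coordinate derivatives $\rd_u$. The only care needed is to verify that $\dur$ stays bounded away from $0$ and $\infty$, which is exactly what \eqref{eq:btstrp:geom} (now upgraded to the sharper bound \eqref{eq:btstrp:pf:geom:dur}) provides.
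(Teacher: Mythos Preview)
Your proposal is correct and follows essentially the same approach as the paper, which simply says ``by an argument similar to that for Corollary~\ref{cor:Est4dvdv}'': you convert the normalized-derivative bounds \eqref{eq:est4durphi}, \eqref{eq:est4dudurphi}, \eqref{eq:est4dudur} into coordinate-derivative bounds via the identities $\dur^{-1}\rd_u\log\dur = \dur^{-2}\rd_u^2 r$ and $(\dur^{-1}\rd_u)^2(r\phi) = \dur^{-2}\rd_u^2(r\phi) - (\dur^{-2}\rd_u\dur)\,\dur^{-1}\rd_u(r\phi)$, together with the two-sided bound on $\dur$ from \eqref{eq:btstrp:geom}.
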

This completes the proof of the second order derivative bounds for $r \phi$ and $r$ stated in Theorem~\ref{thm:main.finite}.

\subsection{Estimates for higher derivatives of $r$ and $\phi$} \label{subsec:high-d}
Finally, we derive estimates for higher derivatives of $r$ and $\phi$, which are \emph{uniform} with respect to choice of an initial null curve $C_{u_{0}}$. These estimates require an additional $C^{2}$ regularity assumption on the initial data $\Phi$, as well as possibly taking $\eps$ smaller compared to a constant depending only on $\dc$.
They will be crucial for passing to the limit $u_{0} \to - \infty$ in the next section.

Given $u_{0} \in \bbR$ and $\Phi$ satisfying \eqref{eq:IDcond} with $\eps \leq \eps_{1}$, let $(r, \phi, m)$ be the global $C^{1}$ solution to \eqref{eq:SSESF} that we have constructed earlier. Suppose furthermore that $\Phi$ belongs to $C^{2}$.
By a routine persistence of regularity argument, it follows that $\log \dvr$, $\log \dur$, $\dvr^{-1} \rd_{v}(r \phi)$ and $\dur^{-1} \rd_{u} (r \phi)$ are $C^{2}$ on their domain; in short, we will say that $(r, \phi, m)$ is a global $C^{2}$ solution to \eqref{eq:SSESF}. Our goal is to show that, by taking $\eps$ smaller if necessary, the $C^{2}$ norm of these variables obeys a uniform bound independent of $u_{0}$. A more precise statement is as follows.

\begin{proposition} \label{prop:high-d}
Given $v_{0} \in [u_{0}, \infty)$, let
\begin{equation*}
	A = \sup_{v \in [u_{0}, v_{0}]} \abs{(\dvr_{0}^{-1} \rd_{v} )^{2} \Phi(v)}
\end{equation*}
where $\dvr_{0} = \frac{1}{2}$, and let $\calD(u_{0}, v_{0})$ be the domain of dependence of the curve $\set{u_{0}} \times [u_{0}, v_{0}]$, i.e.,
\begin{equation*}
	\calD(u_{0}, v_{0}) = \set{(u, v) \in \PD : u \in [u_{0}, v_{0}], \, v \in [u, v_{0}]}.
\end{equation*}
There exists a constant $\eps_{2} > 0$, which is independent of $u_{0}, v_{0}$ and $A$, such that if $\eps \leq \eps_{2}$ then we have the uniform bounds
\begin{align}
\label{eq:est4dvdvdvrphi}
	\sup_{\calD(u_{0}, v_{0})} \abs{(\dvr^{-1} \rd_{v})^{3} (r \phi)} \aleq & A + \eps, \\
\label{eq:est4dvdvdvr}	
	\sup_{\calD(u_{0}, v_{0})} \abs{(\dvr^{-1} \rd_{v})^{2} \log \dvr} \aleq & \eps A + \eps^{2}, \\
\label{eq:est4dududurphi}
	\sup_{\calD(u_{0}, v_{0})} \abs{(\dur^{-1} \rd_{u})^{3} (r \phi)} \aleq & A + \eps, \\
\label{eq:est4dududur}
	\sup_{\calD(u_{0}, v_{0})} \abs{(\dur^{-1} \rd_{u})^{2} \log \dur} \aleq & \eps A + \eps^{2},
\end{align}
with an implicit constant independent of $u_{0}$, $v_{0}$, $A$ and $\eps$.
\end{proposition}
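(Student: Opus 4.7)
The plan is to run a bootstrap argument for the third-order derivatives that parallels those of Sections~\ref{subsec:main.finite:btstrp}--\ref{subsec:main.finite:btstrp-close} and of Propositions~\ref{prop:Est4dvdvrphi}--\ref{prop:est4duphi}. On $\calD(u_0,v_0)$ I will assume
\begin{equation*}
\abs{(\dvr^{-1}\rd_v)^3(r\phi)} \leq K(A+\eps), \quad \abs{(\dvr^{-1}\rd_v)^2\log \dvr} \leq K(\eps A+\eps^2),
\end{equation*}
and the analogous bounds for $(\dur^{-1}\rd_u)^3(r\phi)$ and $(\dur^{-1}\rd_u)^2 \log \dur$, where $K$ is a large constant depending only on $\dc$. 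Then I improve these by commuting $(\dvr^{-1}\rd_v)^2$ (respectively $(\dur^{-1}\rd_u)^2$) with the transport equations \eqref{eq:wave4dvrphi}--\eqref{eq:wave4dvr} (respectively \eqref{eq:wave4durphi}--\eqref{eq:wave4dur}) using the commutator formulas \eqref{eq:comm-du-dv}--\eqref{eq:comm-dv-du}, integrating and closing by smallness of $\eps$.

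\textbf{Ingoing direction.} Two applications of $\dvr^{-1}\rd_v$ to \eqref{eq:wave4dvrphi} and \eqref{eq:wave4dvr} yield schematic transport equations
\begin{equation*}
\rd_u \bb((\dvr^{-1}\rd_v)^3(r\phi)\bb) = -\frac{6 m \dur}{(1-\mu)r^2}(\dvr^{-1}\rd_v)^3(r\phi) + N_3, \quad \rd_u (\dvr^{-1}\rd_v)^2\log\dvr = (\dvr^{-1}\rd_v)^2\bb(\tfrac{2m\dur}{(1-\mu)r^2}\bb) + M_3,
\end{equation*}
where $N_3$ and $M_3$ gather commutator and Leibniz terms involving $(\dvr^{-1}\rd_v)^j \bb(\tfrac{2m\dur}{(1-\mu)r^2}\bb)$ paired with $(\dvr^{-1}\rd_v)^{2-j}(r\phi)$, $\phi$, or $(\dvr^{-1}\rd_v)^{2-j}\log\dvr$, for $j=0,1,2$. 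The initial value $(\dvr^{-1}\rd_v)^3(r\phi)\rst_{C_{u_0}}$ is exactly $A$ by the gauge condition $\dvr\rst_{C_{u_0}} = 1/2$, while $(\dvr^{-1}\rd_v)^k\log\dvr$ vanishes on $C_{u_0}$ for all $k\geq 1$. Lemma~\ref{lem:no-shift} bounds the integrating factor for the first equation by $(3/2)^3$, so after integrating in $u'$ it remains to estimate $N_3$ and $M_3$ in $L^1_{u'}$.

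\textbf{Closing the bootstrap.} The principal difficulty is the top-order factor $(\dvr^{-1}\rd_v)^2 \bb(\tfrac{2m\dur}{(1-\mu)r^2}\bb)$, whose Leibniz expansion requires bounds on $(\dvr^{-1}\rd_v)^k m$, $(\dvr^{-1}\rd_v)^k\dur$ and $(\dvr^{-1}\rd_v)^k (1-\mu)^{-1}$ for $k\leq 2$. Such bounds follow by repeated differentiation of the Hawking mass equation \eqref{eq:dv-m} and the Raychaudhuri equation \eqref{eq:raych4v}, which produce products of the form $r^2(1-\mu)(\dvr^{-1}\rd_v)^2\phi\cdot \dvr^{-1}\rd_v \phi$ and analogues. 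To extract the necessary $r$-gain near the axis, I reduce $(\dvr^{-1}\rd_v)^2 \phi$ to an integral of $(\dvr^{-1}\rd_v)^3(r\phi)$ via Lemma~\ref{lem:avg-d} applied to the averaging identity \eqref{eq:dvphi:avg}; away from the axis I instead use the algebraic commutators $r(\dvr^{-1}\rd_v)^k \phi = (\dvr^{-1}\rd_v)^k (r\phi) - k (\dvr^{-1}\rd_v)^{k-1}\phi$ together with the already-established bounds \eqref{eq:btstrp:pf:dvphi}, \eqref{eq:btstrp:pf:rdvdvphi}, \eqref{eq:est4dvdvr}. A careful count then shows that every term in $N_3$ decomposes as either $O(\eps^2) \cdot K(A+\eps)\cdot r_+^{-1-\dc}$, which is reabsorbed into the bootstrap after taking $\eps$ small, or $O(\eps)\cdot r_+^{-1-\dc}$, whose $u'$-integral is bounded by Lemma~\ref{lem:est4noverr}. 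For the $\log\dvr$ equation, every term in the analog of $N_3$ carries at least one factor of $m/r$ or $(\dvr^{-1}\rd_v)\phi$ — hence at least one factor of $\eps$ — which produces the desired $\eps A + \eps^2$ structure.

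\textbf{Outgoing direction and main obstacle.} For $(\dur^{-1}\rd_u)^3(r\phi)$ and $(\dur^{-1}\rd_u)^2 \log \dur$ I proceed as in the proof of Proposition~\ref{prop:est4duphi}. The axis boundary conditions \eqref{eq:bc} at $\ell=2,3$ allow me to express the data at $(u,u)$ purely in terms of the $(\dvr^{-1}\rd_v)^k$ quantities bounded in the previous step, so the data is controlled by $A+\eps$ and $\eps A + \eps^2$ respectively. I then integrate the analogous $v$-transport equations derived from \eqref{eq:wave4durphi} and \eqref{eq:wave4dur}, using the identity \eqref{eq:no-shift:conj} to bound the integrating factor and Lemma~\ref{lem:avg-d:conj} (with the conjugate averaging operator $\underline I_s$) to handle source terms near the axis. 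The main obstacle throughout is the bookkeeping in the $\log\dvr$ and $\log\dur$ bounds: to obtain an explicit factor of $\eps$ multiplying $A$ one must verify, term by term, that no source contribution to $\rd_u (\dvr^{-1}\rd_v)^2\log\dvr$ (or its $\dur^{-1}\rd_u$ analog) depends linearly on $(\dvr^{-1}\rd_v)^3(r\phi)$ without an accompanying factor of $\eps^2$ from the mass or from the Raychaudhuri nonlinearity. Once this is verified, choosing $\eps_2 \leq \eps_1$ small enough depending only on $\dc$ closes all four bootstrap bounds with implicit constants independent of $u_0$, $v_0$, $A$ and $\eps$.
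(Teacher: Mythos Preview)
Your proposal is correct and follows essentially the same approach as the paper: commute $(\dvr^{-1}\rd_v)^2$ with \eqref{eq:wave4dvrphi}--\eqref{eq:wave4dvr}, control the integrating factor via Lemma~\ref{lem:no-shift}, estimate the top-order source $(\dvr^{-1}\rd_v)^2\bigl(\tfrac{2m\dur}{(1-\mu)r^2}\bigr)$ using the differentiated averaging formulae (Lemma~\ref{lem:avg-d}) near the axis and the commutation identities $r(\dvr^{-1}\rd_v)^k\phi = (\dvr^{-1}\rd_v)^k(r\phi) - k(\dvr^{-1}\rd_v)^{k-1}\phi$ away from it, absorb the top-order contribution by smallness of $\eps$, and then transfer to the $u$-direction via the boundary conditions \eqref{eq:bc}. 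The only organizational differences are that the paper does not set up an explicit bootstrap with constant $K$ but instead derives an inequality of the form $\sup\abs{(\dvr^{-1}\rd_v)^3(r\phi)} \aleq A + \eps^2\bigl(\eps + \sup\abs{(\dvr^{-1}\rd_v)^3(r\phi)}\bigr)$ and absorbs directly, and it establishes the $v$-direction bounds \eqref{eq:est4dvdvdvrphi}--\eqref{eq:est4dvdvdvr} completely before turning to the $u$-direction rather than bootstrapping all four simultaneously; both variants work.
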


We begin by establishing \eqref{eq:est4dvdvdvrphi} and \eqref{eq:est4dvdvdvr}. As in the proofs of Propositions~\ref{prop:Est4dvdvrphi} and \ref{prop:Est4dvdvr}, the key step is to bound
\begin{equation} \label{eq:dvdv-shift}
	(\dvr^{-1} \rd_{v})^{2} \bb( \frac{2 m \dur}{(1-\mu) r^{2}} \bb).
\end{equation}
To achieve this end, we need a few preliminary estimates for $\dvr^{-1} \rd_{v}$ derivatives of $\phi$ and $m$. The ensuing computation is somewhat tedious, but the principle is simple: We rely on the differentiated averaging formulae (see Lemma~\ref{lem:avg-d}) to derive estimates which are favorable near the axis $\set{r = 0}$, whereas we simply commute $r$ with $\dvr^{-1} \rd_{v}$ in the region $\set{r \ageq 1}$ away from the axis.
\begin{lemma} \label{lem:est4dvdvdv:prelim}
For the global $C^{2}$ solution considered above, the following estimates hold.
\begin{align}
\label{eq:est4dvdvphi}
	\abs{(\dvr^{-1} \rd_{v})^{2} \phi (u, v)}
	\aleq & \sup_{v' \in [u, v]} \abs{(\dvr^{-1} \rd_{v})^{3} (r \phi)(u, v')} ,	\\
\label{eq:est4rdvdvdvphi}
	\abs{r \, (\dvr^{-1} \rd_{v})^{3} \phi (u, v)}
	\aleq & \sup_{v' \in [u, v]} \abs{(\dvr^{-1} \rd_{v})^{3} (r \phi)(u, v')} ,	\\
\label{eq:est4dvdv-2m}
	\abs{(\dvr^{-1} \rd_{v})^{2} (2m)}
	\aleq &  \eps^{2}, \\
\label{eq:est4dvdv-2m-over-r2}
	\abs{(\dvr^{-1} \rd_{v})^{2}\bb( \frac{2m}{r^{2}} \bb)(u,v)}
	\aleq & \eps \sup_{v' \in [u, v]} \abs{(\dvr^{-1} \rd_{v})^{3} (r \phi)(u, v')} + \eps^{4}, \\
\label{eq:est4dvdv-2m-over-r2:large-r}
	\abs{r^{2} \, (\dvr^{-1} \rd_{v})^{2}\bb( \frac{2m}{r^{2}} \bb)(u,v)}
	\aleq & \eps^{2}, \\
\label{eq:est4dvdv-dur-over-1-mu}
	\abs{(\dvr^{-1} \rd_{v})^{2} \bb( \frac{\dur}{1 - \mu} \bb)(u, v)}
	\aleq & \eps^{2} \min \set{1, \frac{1}{r}}.
\end{align}
\end{lemma}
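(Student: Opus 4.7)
The strategy for Lemma~\ref{lem:est4dvdvdv:prelim} combines two moves. When a bound must stay finite as $r \to 0$, I would iterate Lemma~\ref{lem:avg-d} on one of the averaging formulae \eqref{eq:avg:phi}, \eqref{eq:avg:2m-over-r2}, \eqref{eq:dv-2m-over-r2} and apply the basic estimate \eqref{eq:avg-est}, which trades a sup bound on one higher derivative for a pointwise bound. When the bound instead needs extra powers of $r^{-1}$ for $r \geq 1$, I would commute $\dvr^{-1}\rd_{v}$ with $r$ directly, exploiting $\dvr^{-1}\rd_{v} r = 1$ and $(\dvr^{-1}\rd_{v})^{2} r = 0$. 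Throughout, all lower-order factors are handled by the bounds already proved in Sections~\ref{subsec:main.finite:btstrp}--\ref{subsec:main.finite:dv}.

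For \eqref{eq:est4dvdvphi} and \eqref{eq:est4rdvdvdvphi}: two applications of Lemma~\ref{lem:avg-d} to \eqref{eq:avg:phi} yield $(\dvr^{-1}\rd_{v})^{2}\phi = I_{3}[(\dvr^{-1}\rd_{v})^{3}(r\phi)]$, so \eqref{eq:avg-est} closes \eqref{eq:est4dvdvphi}; then \eqref{eq:est4rdvdvdvphi} is immediate from the Leibniz identity $(\dvr^{-1}\rd_{v})^{3}(r\phi) = r(\dvr^{-1}\rd_{v})^{3}\phi + 3(\dvr^{-1}\rd_{v})^{2}\phi$. For \eqref{eq:est4dvdv-2m}, differentiating \eqref{eq:dv-m} once more gives
\begin{equation*}
(\dvr^{-1}\rd_{v})^{2}(2m) = -(\dvr^{-1}\rd_{v}\mu)r^{2}(\dvr^{-1}\rd_{v}\phi)^{2} + 2(1-\mu)r(\dvr^{-1}\rd_{v}\phi)^{2} + 2(1-\mu)r^{2}\,\dvr^{-1}\rd_{v}\phi\,(\dvr^{-1}\rd_{v})^{2}\phi,
\end{equation*}
and the bounds $r\abs{\dvr^{-1}\rd_{v}\phi} \aleq \eps$ (using \eqref{eq:btstrp:pf:dvphi} and $\abs{\Phi} \leq \eps$), $r\abs{(\dvr^{-1}\rd_{v})^{2}\phi} \aleq \eps$ (from \eqref{eq:btstrp:pf:rdvdvphi}), together with the identity $r\dvr^{-1}\rd_{v}\mu = \dvr^{-1}\rd_{v}(2m) - \mu$ and \eqref{eq:est4dv-m}, \eqref{eq:btstrp:pf:moverr} giving $\abs{\dvr^{-1}\rd_{v}\mu} \aleq \eps^{2}\min\set{r, 1/r}$, bound the three terms respectively by $O(\eps^{4})$, $O(\eps^{2})$ and $O(\eps^{2})$.

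The main obstacle is \eqref{eq:est4dvdv-2m-over-r2}. Applying Lemma~\ref{lem:avg-d} once more to \eqref{eq:dv-2m-over-r2} produces $(\dvr^{-1}\rd_{v})^{2}(2m/r^{2}) = I_{4}[(\dvr^{-1}\rd_{v})^{2}((1-\mu)r(\dvr^{-1}\rd_{v}\phi)^{2})]$. Expanding the integrand by Leibniz, the terms carrying a $(\dvr^{-1}\rd_{v})^{2}\phi$ or $(\dvr^{-1}\rd_{v})^{3}\phi$ factor are absorbed into $\eps \sup \abs{(\dvr^{-1}\rd_{v})^{3}(r\phi)}$ via \eqref{eq:est4dvdvphi} and \eqref{eq:est4rdvdvdvphi}, producing the first term on the right of \eqref{eq:est4dvdv-2m-over-r2}. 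The remaining Leibniz terms involve only $\dvr^{-1}\rd_{v}\mu$, $(\dvr^{-1}\rd_{v})^{2}\mu$ and $\dvr^{-1}\rd_{v}\phi$; a bound $\abs{(\dvr^{-1}\rd_{v})^{2}\mu} \aleq \eps^{2}\min\set{1,1/r}$ obtained from $r(\dvr^{-1}\rd_{v})^{2}\mu = (\dvr^{-1}\rd_{v})^{2}(2m) - 2\dvr^{-1}\rd_{v}\mu$ combined with \eqref{eq:est4dvdv-2m} then shows that each such term carries at least four factors of $\eps$ and collapses to $O(\eps^{4})$. The large-$r$ bound \eqref{eq:est4dvdv-2m-over-r2:large-r} is much cleaner: differentiate \eqref{eq:dv-m-over-r2:large-r} once more and substitute \eqref{eq:est4dvdv-2m}, \eqref{eq:est4dv-m} and \eqref{eq:btstrp:pf:m}. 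Finally, \eqref{eq:est4dvdv-dur-over-1-mu} follows by taking $\dvr^{-1}\rd_{v}$ of the Raychaudhuri equation \eqref{eq:raych4v} to obtain $(\dvr^{-1}\rd_{v})^{2}\log\abs{\dur/(1-\mu)} = (\dvr^{-1}\rd_{v}\phi)^{2} + 2r\,\dvr^{-1}\rd_{v}\phi\,(\dvr^{-1}\rd_{v})^{2}\phi$, which is $\aleq \eps^{2}\min\set{1,1/r^{2}}$ by \eqref{eq:btstrp:pf:dvphi} and \eqref{eq:btstrp:pf:rdvdvphi}, and then exponentiating using \eqref{eq:est4dv-dur-over-1-mu} and the uniform control of $\dur/(1-\mu)$ from Proposition~\ref{prop:dur:im} and Corollary~\ref{cor:btstrp:pf:mu}.
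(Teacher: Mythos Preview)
Your proof is correct and follows essentially the same strategy as the paper: averaging formulae near the axis, direct commutation of $r$ with $\dvr^{-1}\rd_{v}$ away from it, and the same lower-order inputs. Two small intermediate claims are overstated but harmless: the commutation identity $r(\dvr^{-1}\rd_{v})^{2}\mu = (\dvr^{-1}\rd_{v})^{2}(2m) - 2\dvr^{-1}\rd_{v}\mu$ only yields $\abs{r(\dvr^{-1}\rd_{v})^{2}\mu} \aleq \eps^{2}$, not the uniform bound $\abs{(\dvr^{-1}\rd_{v})^{2}\mu} \aleq \eps^{2}$ near the axis you state (that would require an averaging argument), but since the term in question is $(\dvr^{-1}\rd_{v})^{2}\mu \cdot r(\dvr^{-1}\rd_{v}\phi)^{2}$ the extra $r$ absorbs this and the $O(\eps^{4})$ conclusion stands; and the bound on $(\dvr^{-1}\rd_{v})^{2}\log\abs{\dur/(1-\mu)}$ should read $\aleq \eps^{2}\min\set{1, r^{-1}}$ rather than $\min\set{1, r^{-2}}$ (the cross term $2r\,\dvr^{-1}\rd_{v}\phi\,(\dvr^{-1}\rd_{v})^{2}\phi$ only decays like $r^{-1}$), which is exactly what \eqref{eq:est4dvdv-dur-over-1-mu} requires anyway.
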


\begin{proof}
Since it is rather routine, we will only sketch the proof of each estimate, specifying the relevant computation and previous bounds needed.

Estimate \eqref{eq:est4dvdvphi} follows directly from taking $(\dvr^{-1}\rd_{v})^{2}$ of the averaging formula \eqref{eq:avg:phi} for $\phi$ using Lemma~\ref{lem:avg-d} and bounding the resulting term using \eqref{eq:avg-est}.

For \eqref{eq:est4rdvdvdvphi}, we simply commute $r$ with $(\dvr^{-1} \rd_{v})^{3}$ to arrive at the formula
\begin{equation*}
	r (\dvr^{-1} \rd_{v})^{3} \phi
	= (\dvr^{-1} \rd_{v})^{3}(r \phi) - 3 (\dvr^{-1} \rd_{v})^{2} \phi,
\end{equation*}
from which \eqref{eq:est4rdvdvdvphi} follows using \eqref{eq:est4dvdvphi}.

For \eqref{eq:est4dvdv-2m}, we compute
\begin{align*}
	(\dvr^{-1} \rd_{v})^{2} (2m)
	= & (\dvr^{-1} \rd_{v}) \bb( (1-\mu) r^{2} (\dvr^{-1} \rd_{v} \phi)^{2} \bb) \\
	= & (1- \dvr^{-1} \rd_{v} (2m)) r (\dvr^{-1} \rd_{v} \phi)^{2} + (1-\mu) r (\dvr^{-1} \rd_{v} \phi)^{2} \\
	& + 2 (1-\mu) r^{2} \dvr^{-1} \rd_{v} \phi (\dvr^{-1} \rd_{v}) ^{2} \phi,
\end{align*}
then use \eqref{eq:btstrp:pf:dvphi}, \eqref{eq:btstrp:pf:rdvdvphi} and \eqref{eq:est4dv-m} to estimate the right-hand side.

To prove \eqref{eq:est4dvdv-2m-over-r2}, we first use Lemma~\ref{lem:avg-d} to take $(\dvr^{-1} \rd_{v})^{2}$ of the averaging formula \eqref{eq:avg:2m-over-r2}, which leads to
\begin{equation*}
	\abs{(\dvr^{-1} \rd_{v})^{2} \bb( \frac{2m}{r^{2}} \bb)(u, v)}
	\aleq \sup_{v' \in [u, v]} \abs{(\dvr^{-1} \rd_{v})^{2} \bb( (1-\mu) r (\dvr^{-1} \rd_{v} \phi)^{2} \bb)(u, v')}.
\end{equation*}
Expanding the right-hand side, we obtain the formula
\begin{align*}
	(\dvr^{-1} \rd_{v})^{2} \bb( (1-\mu) r (\dvr^{-1} \rd_{v} \phi)^{2} \bb)
	= & 2 (1-\mu) \dvr^{-1} \rd_{v} \phi \, r (\dvr^{-1} \rd_{v})^{3} \phi \\
		& + 2 (1-\mu) r \bb( (\dvr^{-1} \rd_{v})^{2} \phi \bb)^{2} \\
	& + 2 \bb(1- \dvr^{-1} \rd_{v} (2m) \bb) \dvr^{-1} \rd_{v} \phi (\dvr^{-1} \rd_{v})^{2} \phi \\
	& - \bb( (\dvr^{-1} \rd_{v})^{2} (2m) \bb) (\dvr^{-1} \rd_{v} \phi)^{2}.
\end{align*}
Then the desired estimate follows using \eqref{eq:btstrp:pf:dvphi}, \eqref{eq:btstrp:pf:rdvdvphi}, \eqref{eq:est4dv-m}, \eqref{eq:est4dvdvphi}, \eqref{eq:est4rdvdvdvphi} and \eqref{eq:est4dvdv-2m}.


For \eqref{eq:est4dvdv-2m-over-r2:large-r}, we commute $r^{2}$ with $\dvr^{-1} \rd_{v}$ and obtain
\begin{align*}
	r^{2} (\dvr^{-1} \rd_{v})^{2} \bb( \frac{2m}{r^{2}} \bb)
	= & (\dvr^{-1} \rd_{v})^{2} (2m) - 4 \frac{\dvr^{-1} \rd_{v} (2m)}{r} + \frac{12m}{r^{2}}.
\end{align*}
Then the desired estimate follows from \eqref{eq:btstrp:pf:m}, \eqref{eq:est4dv-m} and \eqref{eq:est4dvdv-2m}.

Finally, for \eqref{eq:est4dvdv-dur-over-1-mu}, we first compute
\begin{align*}
	(\dvr^{-1} \rd_{v})^{2} \bb( \frac{\dur}{1-\mu} \bb)
	=& (\dvr^{-1} \rd_{v}) \bb( \frac{\dur}{1-\mu} r (\dvr^{-1} \rd_{v} \phi)^{2} \bb) \\
	=& \frac{\dur}{1-\mu} r^{2} (\dvr^{-1} \rd_{v} \phi)^{4}
	 + \frac{\dur}{1-\mu} (\dvr^{-1} \rd_{v} \phi)^{2}
	 + 2 \frac{\dur}{1-\mu} r \dvr^{-1} \rd_{v} \phi (\dvr^{-1} \rd_{v})^{2} \phi,
\end{align*}
and then use \eqref{eq:btstrp:geom}, \eqref{eq:btstrp:pf:dvphi}, \eqref{eq:btstrp:pf:rdvdvphi} to estimate the right-hand side.
\end{proof}

As an immediate consequence of Lemma~\ref{lem:est4dvdvdv:prelim}, we have
\begin{align*}
	\abs{(\dvr^{-1} \rd_{v})^{2} \bb( \frac{2m \dur}{(1-\mu) r^{2}} \bb)(u, v)}
	\aleq & \eps \sup_{v' \in [u, v]} \abs{(\dvr^{-1} \rd_{v})^{3} (r \phi) (u, v')} + \eps^{4}, \\
	\abs{(\dvr^{-1} \rd_{v})^{2} \bb( \frac{2m \dur}{(1-\mu) r^{2}} \bb)(u, v)}
	\aleq & \frac{\eps^{2}}{r^{2}}.
\end{align*}
These can be combined to a single slightly weaker but more convenient bound as follows.
\begin{corollary} \label{cor:est4dvdv-shift}
For the global $C^{2}$ solution considered above, we have
\begin{align}
\label{eq:est4dvdv-shift}
	\abs{(\dvr^{-1} \rd_{v})^{2} \bb( \frac{2m \dur}{(1-\mu) r^{2}} \bb)(u, v)}
	\aleq & \eps \bb( \eps + \sup_{v' \in [u, v]} \abs{(\dvr^{-1} \rd_{v})^{3} (r \phi) (u, v')} \bb) \min \set{1, \frac{1}{r^{2}}}.
\end{align}
\end{corollary}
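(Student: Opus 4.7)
The corollary amounts to repackaging the two preceding displayed bounds, so my plan is simply to perform a case split on $r$. I will write $X := \sup_{v' \in [u, v]} \abs{(\dvr^{-1} \rd_{v})^{3} (r \phi)(u, v')}$ and $Y := (\dvr^{-1} \rd_{v})^{2} \bb( \frac{2m \dur}{(1-\mu) r^{2}} \bb)(u, v)$; the two already-stated bounds read $\abs{Y} \aleq \eps X + \eps^{4}$ and $\abs{Y} \aleq \eps^{2} r^{-2}$. The target bound is $\abs{Y} \aleq \eps(\eps + X) \min\set{1, r^{-2}}$.

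First, in the region $r \leq 1$, the factor $\min\set{1, r^{-2}}$ equals $1$, and the first bound then gives $\abs{Y} \aleq \eps X + \eps^{4} \leq \eps X + \eps^{2} = \eps(\eps + X)$, where I would absorb $\eps^{4}$ into $\eps^{2}$ using $\eps < \eps_{1} < 1$. Second, in the region $r \geq 1$ the factor $\min\set{1, r^{-2}}$ equals $r^{-2}$, and the second bound gives $\abs{Y} \aleq \eps^{2} r^{-2} \leq \eps(\eps + X) r^{-2}$ since $X \geq 0$. Piecing the two regions together would yield the stated estimate.

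The only ``step'' here is this case-by-case weakening, and I do not foresee any analytical obstacle; the substantive work has already been carried out in Lemma~\ref{lem:est4dvdvdv:prelim}, where the differentiated averaging formulae are used near $\set{r = 0}$ and commutation of $r$ with $\dvr^{-1} \rd_{v}$ is used in the far region. The payoff of the combined form is that downstream, when this estimate will be integrated against $\dur$ along a $u$-line in the proof of bounds for $(\dvr^{-1}\rd_v)^3(r\phi)$, the single $\min\set{1, r^{-2}}$ factor will plug cleanly into Lemma~\ref{lem:est4noverr}, obviating the need for a second case split.
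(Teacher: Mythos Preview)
Your proposal is correct and matches the paper's approach exactly: the paper states the two displayed bounds as immediate consequences of Lemma~\ref{lem:est4dvdvdv:prelim} and then says ``These can be combined to a single slightly weaker but more convenient bound,'' which is precisely the case split on $r \leq 1$ versus $r \geq 1$ that you carry out.
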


We are ready to establish \eqref{eq:est4dvdvdvrphi} and \eqref{eq:est4dvdvdvr}. The proofs are similar to those of Propositions~\ref{prop:Est4dvdvrphi} and \ref{prop:Est4dvdvr}, respectively.
\begin{proof} [Proof of \eqref{eq:est4dvdvdvrphi} and \eqref{eq:est4dvdvdvr}]
We first prove \eqref{eq:est4dvdvdvrphi}. Commuting $(\dvr^{-1} \rd_{v})^{2}$ with the equation \eqref{eq:wave4dvrphi} using \eqref{eq:comm-du-dv}, we obtain
\begin{align*}
	\rd_{u} \bb( (\dvr^{-1} \rd_{v})^{3} (r \phi) \bb)
	= & - \frac{6 m \dur}{(1-\mu) r^{2}} (\dvr^{-1} \rd_{v})^{3} (r \phi) \\
	& - \bb( \dvr^{-1} \rd_{v} \bb( \frac{6 m \dur}{(1-\mu) r^{2}} \bb) \bb) (\dvr^{-1} \rd_{v})^{2} (r \phi) \\
	& - \bb( (\dvr^{-1} \rd_{v})^{2} \bb( \frac{2 m \dur}{(1-\mu) r^{2}} \bb) \bb) (\dvr^{-1} \rd_{v}) (r \phi) \\
	& + (\dvr^{-1} \rd_{v})^{2} \bb( \frac{2m \dur}{(1-\mu) r^{2}}  \phi \bb) \\
	=: & - \frac{6 m \dur}{(1-\mu) r^{2}} (\dvr^{-1} \rd_{v})^{3} (r \phi) + N_{3}.
\end{align*}

As in the proof of Proposition~\ref{prop:Est4dvdvrphi}, we may derive an integral formula for $(\dvr^{-1} \rd_{v})^{3} (r \phi)$, where the integration factor is uniformly bounded by \eqref{eq:est4shift}. Then we have
\begin{equation*}
	\abs{(\dvr^{-1} \rd_{v})^{3} (r \phi)(u, v)}
	\leq \bb( \frac{3}{2} \bb)^{3} A + \bb( \frac{3}{2} \bb)^{3} \int_{u_{0}}^{u} \abs{N_{3}(u', v)} \, \ud u'.
\end{equation*}
For $(u, v) \in \calD(u_{0}, v_{0})$, we claim that
\begin{equation} \label{eq:est4dvdvdvrphi:key}
	\int_{u_{0}}^{u} \abs{N_{3}(u', v)} \, \ud u'
	\aleq \eps^{2} \bb( \eps + \sup_{\calD(u_{0}, v_{0})} \abs{(\dvr^{-1} \rd_{v})^{3} (r \phi)} \bb).
\end{equation}
Once \eqref{eq:est4dvdvdvrphi:key} is proved, the desired estimate \eqref{eq:est4dvdvdvrphi} follows by taking $\eps > 0$ sufficiently small and absorbing the term $\eps^{2} \sup_{\calD(u_{0}, v_{0})} \abs{(\dvr^{-1} \rd_{v})^{3} (r \phi)}$ into the left-hand side.

To establish \eqref{eq:est4dvdvdvrphi:key}, we first expand $N_{3}$ as
\begin{align*}
N_{3}
= 	& - \bb( \dvr^{-1} \rd_{v} \bb( \frac{6 m \dur}{(1-\mu) r^{2}} \bb) \bb) (\dvr^{-1} \rd_{v})^{2} (r \phi)
	- \bb( (\dvr^{-1} \rd_{v})^{2} \bb( \frac{2 m \dur}{(1-\mu) r^{2}} \bb) \bb) (\dvr^{-1} \rd_{v}) (r \phi) \\
	&+ \bb( (\dvr^{-1} \rd_{v})^{2} \bb( \frac{2m \dur}{(1-\mu) r^{2}} \bb)  \bb) \phi
	+ 2\bb( \dvr^{-1} \rd_{v} \bb( \frac{2m \dur}{(1-\mu) r^{2}} \bb) \bb) \dvr^{-1} \rd_{v} \phi + \frac{2 m \dur}{(1-\mu) r^{2}} (\dvr^{-1} \rd_{v})^{2} \phi.
\end{align*}
Then using \eqref{eq:btstrp:wave:2}, \eqref{eq:btstrp:pf:dvrphi}, \eqref{eq:btstrp:pf:phi}, \eqref{eq:btstrp:pf:dvphi}, \eqref{eq:est4shift:imp}, \eqref{eq:est4dv-shift} and \eqref{eq:est4dvdv-shift}, the desired estimate \eqref{eq:est4dvdvdvrphi:key} follows.

 Next, to establish \eqref{eq:est4dvdvdvr}, we first commute $(\dvr^{-1} \rd_{v})^{2}$ with the equation \eqref{eq:wave4dvr} using \eqref{eq:comm-du-dv} to obtain
\begin{align*}
	\rd_{u} \bb( (\dvr^{-1} \rd_{v})^{2} \log \dvr \bb) =& - \frac{4 m \dur}{(1-\mu) r^{2}} (\dvr^{-1} \rd_{v})^{2} \log \dvr \\
								& - \bb( \dvr^{-1} \rd_{v} \bb(\frac{2 m \dur}{(1-\mu) r^{2}}\bb) \bb) \dvr^{-1} \rd_{v} \log \dvr
								+ (\dvr^{-1} \rd_{v})^{2} \bb( \frac{2 m \dur}{(1-\mu) r^{2}} \bb) \\
	=: & - \frac{4 m \dur}{(1-\mu) r^{2}} (\dvr^{-1} \rd_{v})^{2} \log \dvr + M_{3}.
\end{align*}
Since $(\dvr^{-1} \rd_{v})^{2} \log \dvr (u_{0}, v) = 0$ thanks to the initial gauge condition $\dvr (u_{0}, v) = \frac{1}{2}$, we have
\begin{equation*}
	\abs{(\dvr^{-1} \rd_{v})^{2} \log \dvr(u, v)} \leq \bb(\frac{3}{2} \bb)^{2} \int_{u}^{u_{0}} \abs{M_{3}(u', v)} \, \ud u',
\end{equation*}
where we again used \eqref{eq:est4shift} to bound the integration factor. By \eqref{eq:est4dv-shift}, \eqref{eq:est4dvdvr} and \eqref{eq:est4dvdv-shift}, as well as \eqref{eq:est4dvdvdvrphi} that we just proved, we have	
\begin{equation*}
	\int_{u_{0}}^{u} \abs{M_{3}(u', v)} \, \ud u' \aleq \eps (\eps + A)
\end{equation*}
which proves \eqref{eq:est4dvdvdvr}.
\end{proof}

It remains to prove \eqref{eq:est4dududurphi} and \eqref{eq:est4dududur}. This can be done by a similar argument as in the proofs of\eqref{eq:est4dvdvdvrphi} and \eqref{eq:est4dvdvdvr}, with the roles of $u$ and $v$ interchanged. To avoid repetition, we only sketch the argument.

\begin{proof} [Sketch of proof of \eqref{eq:est4dududurphi} and \eqref{eq:est4dududur}]
As in Lemma~\ref{lem:est4dvdvdv:prelim}, we can prove that
 \begin{align}
\label{eq:est4duduphi}
	\abs{(\dur^{-1} \rd_{u})^{2} \phi (u, v)}
	\aleq & \sup_{u' \in [u, v]} \abs{(\dur^{-1} \rd_{u})^{3} (r \phi)(u', v)} ,	\\
\notag
	\abs{r \, (\dur^{-1} \rd_{u})^{3} \phi (u, v)}
	\aleq & \sup_{u' \in [u, v]} \abs{(\dur^{-1} \rd_{u})^{3} (r \phi)(u', v)}.
\end{align}
Proceeding as in the proofs of Lemmas~\ref{lem:Est4shift}, \ref{lem:est4dv-shift} and Corollary~\ref{cor:est4dvdv-shift}, we also obtain
 \begin{align*}
 	\abs{\frac{2 m \dvr}{(1-\mu) r^{2}}  (u, v)} \aleq & \eps^{2} r_{+}^{-1 -\dc}, \\
	\abs{\dur^{-1} \rd_{u} \bb( \frac{2 m \dvr}{(1-\mu) r^{2}} \bb) (u, v)} \aleq & \eps^{2} \min\set{1, \frac{1}{r^{2}}}, \\
	\abs{(\dur^{-1} \rd_{u})^{2} \bb( \frac{2 m \dvr}{(1-\mu) r^{2}} \bb) (u, v)} \aleq & \eps \bb( \eps + \sup_{u' \in [u, v]} \abs{(\dur^{-1} \rd_{u})^{3} (r \phi)(u', v)} \bb) \min \set{1, \frac{1}{r^{2}}}.
\end{align*}
Furthermore, since $\rd_{v} r$, $\rd_{u} r$, $\rd_{v}(r \phi)$ and $\rd_{u}(r \phi)$ are $C^{2}$ up to the axis $\Gmm = \set{u = v}$, we have
\begin{equation*}
	(\dvr^{-1} \rd_{v} - \dur^{-1} \rd_{u})^{3} (r \phi) (u, u) = 0, \quad
	(\dvr^{-1} \rd_{v} - \dur^{-1} \rd_{u})^{3} r (u, u) = 0.
\end{equation*}
Then by the wave equations for $r$ and $\phi$, as well as \eqref{eq:est4dvdvdvrphi}--\eqref{eq:est4dvdvdvr}, we obtain
\begin{align*}
\abs{(\dur^{-1} \rd_{u})^{3} (r \phi) (u, u)} \aleq & A + \eps, \\
\abs{(\dur^{-1} \rd_{u})^{2} \log \dur (u, u)} \aleq & \eps A + \eps^{2}.
\end{align*}
Commuting $(\dur^{-1} \rd_{u})^{2}$ with \eqref{eq:wave4durphi} and \eqref{eq:wave4dur}, estimating the initial data at $v = u$ by the preceding bounds and estimating the inhomogeneous terms using the earlier bounds, the desired estimates \eqref{eq:est4dududurphi} and \eqref{eq:est4dududur} follow as in the proofs of \eqref{eq:est4dvdvdvrphi} and \eqref{eq:est4dvdvdvr}. \qedhere
\end{proof}


\section{Forward- and backward-in-time global solution}\label{sec.proof.inf}
The goal of this section is to deduce Theorem~\ref{thm:main} from Theorem~\ref{thm:main.finite} and Proposition~\ref{prop:high-d}. The proof of the causal geodesic completeness assertions are again postponed to Section~\ref{sec:cgc}.

Let $\Phi$ be a $C^{2}$ function on $\bbR$ satisfying the hypothesis of Theorem~\ref{thm:main}. Define also \footnote{The finiteness of course follows also from the hypothesis of Theorem~\ref{thm:main}.}
\begin{equation*}
	A := \sup_{v \in \bbR} \abs{(\dvr_{0}^{-1} \rd_{v})^{2} \Phi(v)} < \infty,
\end{equation*}
where $\dvr_{0} = \frac{1}{2}$. Consider a sequence $u_{n} \in \bbR$ tending to $-\infty$ as $n \to \infty$. For each $n = 1, 2, \ldots$, let $(r^{(n)}, \phi^{(n)}, m^{(n)})$ be the solution of \eqref{eq:SSESF} with $\dvr^{(n)} \restriction_{C_{u_{n}}} = \rd_{v} r^{(n)} \restriction_{C_{u_{n}}} = \frac{1}{2}$ and
\begin{equation*}
	(\dvr^{(n)})^{-1} \rd_{v}(r^{(n)} \phi^{(n)})(u_{n}, v) = \Phi(v).
\end{equation*}
Let $\eps > 0$ be sufficiently small (depending on $\dc > 0$), so that Theorem~\ref{thm:main.finite} applies to each solution $(r^{(n)}, \phi^{(n)}, m^{(n)})$. Our aim now is to show that $(r^{(n)}, \phi^{(n)}, m^{(n)})$ tends to a solution $(r, \phi, m)$ that obeys the conclusions of Theorem~\ref{thm:main}.

As a consequence of Theorem~\ref{thm:main.finite}, Proposition~\ref{prop:high-d} and the estimates \eqref{eq:btstrp:pf:dvphi}, \eqref{eq:est4duphi}, \eqref{eq:est4dvdvphi} and \eqref{eq:est4duduphi}, for $\eps>0$ sufficiently small we have the uniform bounds
\begin{align*}
	\sum_{k=0}^{2} \sup_{\PD_{[u_{n}, \infty)}} \bb( \abs{\rd_{v}^{k} \phi^{(n)}} + \abs{\rd_{u}^{k} \phi^{(n)}} \bb)
	\aleq& A + \eps, \\
	\sum_{k=0}^{2} \sup_{\PD_{[u_{n}, \infty)}} \bb( \abs{\rd_{v}^{k+1}(r^{(n)} \phi^{(n)})} + \abs{\rd_{u}^{k+1}(r^{(n)} \phi^{(n)})} \bb)
	\aleq& A + \eps, \\
	\sum_{k=0}^{2} \sup_{\PD_{[u_{n}, \infty)}} \bb( \abs{\rd_{v}^{k+1} r^{(n)}} + \abs{\rd_{u}^{k+1} r^{(n)}} \bb)
	\aleq& \eps(A + \eps).
\end{align*}
Uniform bounds for a corresponding number of mixed derivatives follow from the wave equations for $\phi$ and $r$. Using the equations for $m$ (see also the bounds \eqref{eq:btstrp:pf:m}, \eqref{eq:est4dv-m} and \eqref{eq:est4dvdv-2m}), it also follows that the $C^{2}$ norm of $m^{(n)}$ is uniformly bounded on any compact subset of $\PD_{[u_{n}, \infty)}$. Hence by the Arzela-Ascoli theorem, there exists a limit $(r, \phi, m)$ on $\PD$ such that $r = m = 0$ on $\Gmm$, and
\begin{align*}
	\bb( \phi, \rd_{v} (r^{(n)} \phi^{(n)}), \rd_{u} (r^{(n)} \phi^{(n)}) \bb)
	 \to& \, \bb( \phi, \rd_{v} (r \phi), \rd_{u} (r \phi) \bb) \quad \hbox{ in } C^{1}(\Omg), \\
	r^{(n)} \to&\,  r \hbox{ in } C^{2}(\Omg), \\
	 m^{(n)} \to& \, m \quad \hbox{ in } C^{1}(\Omg),
\end{align*}
on every compact subset $\Omg \subseteq \PD$. By this convergence, it is clear that $(r, \phi, m)$ solves \eqref{eq:SSESF} in the classical sense.
Moreover, the a priori bounds we have proved in the finite $u_{0}$ case (e.g., \eqref{eq:main:geo} and \eqref{eq:main:apriori}) still hold for the limiting solution $(r, \phi, m)$, as long as they are uniform in $u_{0}$.

It remains to justify that the limiting solution $(r, \phi, m)$ assumes $\Phi$ as the data on the past null infinity.
More precisely, we claim that
\begin{equation} \label{eq:main:pf:id}
	\lim_{u \to -\infty} \dvr (u, v) = \frac{1}{2}, \quad
	\lim_{u \to -\infty} \dvr^{-1} \rd_{v} (r \phi)(u, v) = \Phi(v),
\end{equation}
for every $v \in (-\infty, \infty)$.

Recalling the proof of Proposition~\ref{prop:dvr:im}, for any $u \geq u_{n}$ we have
\begin{equation*}
	\abs{\log \dvr^{(n)}(u, v)- \log \frac{1}{2}} \leq \abs{\int_{u_{n}}^{u} \frac{2 m^{(n)} \dur^{(n)}}{(1-\mu^{(n)}) (r^{(n)})^{2}} (u', v) \, \ud u'} \aleq \eps^{2} \left(\max\{(v-u),1\}\right)^{-\dc}.
\end{equation*}
Taking the limit $n \to \infty$ first and then letting $u \to -\infty$, we obtain the desired statement for $\dvr$. Similarly, proceeding as in the proof of Proposition~\ref{prop:Est4nonlin},
\begin{equation*}
	\abs{(\dvr^{(n)})^{-1} \rd_{v} (r^{(n)} \phi^{(n)}) (u, v) -  \frac{\dvr^{(n)}(u_{n}, v)}{\dvr^{(n)}(u, v)}\Phi(v) }
	\aleq \eps^{3} \left(\max\{(v-u),1\}\right)^{-2\dc},
\end{equation*}
where we recall that $\dvr^{(n)}(u_{n}, v) = \frac{1}{2}$. Taking the limits $n \to \infty$ and $u \to -\infty$ in order as before, we obtain \eqref{eq:main:pf:id}.

\begin{remark}
As a byproduct of the Arzela-Ascoli theorem, observe that $\rd_{v}^{2} (r \phi)$, $\rd_{u}^{2} (r \phi)$, $\rd_{v}^{2} r$ and $\rd_{u}^{2} r$ are Lipschitz.
Their weak derivatives obey the bounds
\begin{align*}
	\esssup_{\PD} \bb( \abs{\rd_{v}^{3}(r \phi)} + \abs{\rd_{u}^{3}(r \phi)} \bb) \aleq  A + \eps, \quad
	\esssup_{\PD} \bb( \abs{\rd_{v}^{3} r} + \abs{\rd_{u}^{3} r} \bb) \aleq  \eps(A + \eps).
\end{align*}
\end{remark}

\section{Proof of Corollary \ref{cor.infinite.BV.mass}}\label{sec.infinite.BV.mass}
In this section, we prove Corollary \ref{cor.infinite.BV.mass}, i.e., we show that the initial data $\Phi$ can be chosen to satisfy the assumptions of Theorem \ref{thm:main.finite} while at the same time having infinite BV norm and infinite Bondi mass.

\begin{proof}[Proof of Corollary \ref{cor.infinite.BV.mass}]
Let $\chi:\mathbb R\to [0,1]$ be a non-negative smooth bump function such that $\chi$ is compactly supported in $[1,6]$ and $\chi(x)=1$ for $x\in [3,4]$. For some $\ep'>0$ to be fixed later, let $\Phi$ be defined by the following sum of translated bump functions:
\begin{equation}\label{Phi.infinite.def}
\Phi(v):=\eps'\sum_{k=3}^\infty \chi(v-u_0-2^k).
\end{equation}
We will show that $\Phi$ satisfies the assumptions of Theorem \ref{thm:main.finite} and have infinite BV norm and initial Bondi mass.

\pfstep{Step~1:~Verifying \eqref{eq:IDcond}} Since $k\geq 3$, for every $v$ at most one term in the sum \eqref{Phi.infinite.def} is non-zero. Therefore, for every $\ep>0$, one can choose $\ep'>0$ sufficiently small such that $|\Phi(v)|+|\Phi'(v)|\leq \ep$. This gives the second condition in \eqref{eq:IDcond}.

Fix any $\gamma>0$. Given $u$ and $v$ such that $u_0\leq u\leq v$, we consider two cases. If $v-u\leq 5$, then we just use the bound
$$\int_{u}^v \Phi(v') \,\ud v'\leq \ep'(v-u)\leq 5^{\gamma}\ep'(v-u)^{1-\gamma}.$$
If $v-u>5$, we use the fact that the support of at most $(\log_2 \lfloor v-u \rfloor)+100$ bumps intersect the interval $(u,v)$. Since $\int_{-\infty}^{\infty} \chi(x) \, \ud x\leq 5$, we thus have
$$\int_{u}^v \Phi(v') \, \ud v'\leq 5\ep'\left(\log_2 \lfloor v-u \rfloor+100\right)\leq C_{\gamma}\ep'(v-u)^{1-\gamma}$$
for some $C_{\gamma}>0$ depending only on $\gamma$ as long as $\gamma<1$. In both cases, the first condition in \eqref{eq:IDcond} is satisfied after choosing $\ep'$ to be sufficiently small depending on $\ep$ and $\gamma$.

\pfstep{Step~2:~Infinite BV norm}
We now show that $\Phi$ gives rise to data with infinite BV norm. To this end, one observes that $\int_{-\infty}^\infty |\chi'(x)| \, \ud x\geq c_{\chi}$ for some $c_{\chi}>0$. Hence
$$\lim_{v\to \infty}\int_{u_0}^v |\rd_v\Phi|(v') \, \ud v'\geq  \ep' \lim_{N\to \infty}(\sum_{i=0}^N c_{\chi})=\infty.$$

\pfstep{Step~3:~Infinite Bondi mass}
Finally, we prove that the data have infinite initial Bondi mass, i.e., the limit of the Hawking mass as $v\to \infty$ is infinite. We first recall that the Hawking mass obeys the following equation
$$\rd_v m=\f 12 r^2(1-\f{2m}{r})\f{(\rd_v\phi)^2}{\lambda},$$
i.e.,
$$\rd_v( m e^{\int_{u_0}^v r\f{(\rd_v\phi)^2}{\lambda}(v') \, \ud v'})=\f 12 r^2\f{(\rd_v\phi)^2}{\dvr}e^{\int_{u_0}^v r\f{(\rd_v\phi)^2}{\lambda}(v') \, \ud v'} .$$
This implies
\begin{equation}\label{m.data.formula}
 m(u_0, v) =\f 12 e^{-\int_{u_0}^v r\f{(\rd_v\phi)^2}{\lambda}(u_0,v') \, \ud v'}\int_{u_0}^v r^2\f{(\rd_v\phi)^2}{\lambda}(u_0,v') e^{\int_{u_0}^{v'} r\f{(\rd_v\phi)^2}{\lambda}(u_0,v'') \, \ud v''} \, \ud v' .
\end{equation}
To compute the limit as $v\to \infty$, we first write $\rd_v\phi$ in terms of $\Phi$. We then show that with the choice of $\Phi$ in \eqref{Phi.infinite.def}, $r\f{(\rd_v\phi)^2}{\lambda}$ is integrable, while $r^2\f{(\rd_v\phi)^2}{\lambda}$ is not, thus demonstrating that $\lim_{v\to \infty} m(u_0,v)=\infty$.

To compute $\rd_v\phi$, we note that
\begin{equation}\label{data.rdvphi}
\rd_v\phi(u_0,v)=\f{1}{r}\rd_v(r\phi)(u_0,v)-\f{\lambda\phi}{r}(u_0,v)
=\f \Phi{2r}(u_0,v) -\f{\int_{u_0}^v \Phi(v') \, \ud v'}{4r^2(u_0,v)}.
\end{equation}
In other words, using also the following condition on $C_{u_0}$
$$\rd_v r=\lambda=\f 12,\, r(u_0,u_0)=0 \implies r(u_{0}, v) =\f 12 (v-u_0),$$
we get
$$r(\rd_v\phi)(u_0,v)=\f {\Phi(u_0,v)} 2 -\f{\int_{u_0}^v \Phi(v') \, \ud v'}{2(v-u_0)}.$$
Therefore, for some $C>0$ independent of $\ep'$, we have
\begin{equation*}
\begin{split}
\lim_{v\to \infty}\int_{u_0}^v r\f{(\rd_v\phi)^2}{\lambda}&(v') \, \ud v' \leq C\lim_{v\to \infty}\left(\int_{u_0}^v \f{\Phi^2(u_0,v')}{v'-u_0} \, \ud v'+\int_{u_0}^v \f{(\int_{u_0}^{v'} \Phi(v'') \, \ud v'')^2}{(v'-u_0)^3} \, \ud v'\right)\\
\leq &C(\ep')^2\lim_{N\to \infty}\sum_{k=1}^N  2^{-k}+C(\ep')^2\lim_{v\to\infty}\int_{u_0 + 9}^v \f{(1+\log (v'-u_0+2))^2}{(v'-u_0)^3} \, \ud v'\leq C(\ep')^2.
\end{split}
\end{equation*}
On the last line, we used the bound $\int_{u_{0}}^{v'} \Phi(v'') \, \ud v'' \aleq \eps' (1 + \log (v' - u_{0} + 2))$, as well as the fact that $\Phi(v) = 0$ for $u_{0} \leq v \leq u_{0} + 9$ by definition.
Moreover, in a similar manner, we have\footnote{Notice that up to a constant factor, this is the contribution to the integral of $r^2(\rd_v\phi)^2$ by the second term in \eqref{data.rdvphi}.}
$$\lim_{v\to \infty}\int_{u_0}^v \f{(\int_{u_0}^{v'} \Phi(v'') \, \ud v'')^2}{(v'-u_0)^2} \, \ud v'\leq C(\ep')^2.$$
Therefore, by \eqref{m.data.formula}, we obtain
\begin{equation}\label{m.data.lower.bound}
m(v)\geq c\int_{u_0}^v r^2(\rd_v\phi)^2(u_0,v') \, \ud v'\geq c\int_{u_0}^v \Phi^2(u_0,v') \, \ud v' -C(\ep')^2
\end{equation}
for some $0<c<C$. On the other hand, by a similar argument as the proof of the infinitude of the BV norm, we get
\begin{equation}\label{m.lower.bound.limit}
\lim_{v\to\infty}\int_{u_0}^v \Phi^2(u_0,v') \, \ud v'\to \infty.
\end{equation}
Combining \eqref{m.data.lower.bound} and \eqref{m.lower.bound.limit} gives
$$m(v)\to \infty$$
as $v\to \infty$, as is to be proved.
\end{proof}

\begin{remark}
We note that for the data given by \eqref{Phi.infinite.def} in the proof of Corollary \ref{cor.infinite.BV.mass}, the global solution that arises from the data (which exists by Theorem \ref{thm:main.finite}) in fact has infinite BV norm on each $C_{u}$, as well as infinite Bondi mass everywhere along future null infinity. More precisely, for every $u\geq u_0$, we have
\begin{equation*}
\lim_{v \to \infty} \int_{u}^{v} \abs{(\dvr^{-1} \rd_{v})^{2} (r \phi) (u, v')} \, \ud v' = \infty
,\quad
\lim_{v\to \infty} m(u,v)=\infty.
\end{equation*}

To establish the infinitude of the BV norm, note first that by \eqref{eq:dvdvrphi-formula} and \eqref{int.factor.bd}, we have
\begin{equation*}
	\int_{u}^{v} \abs{(\dvr^{-1} \rd_{v})^{2} (r \phi)(u, v')} \, \ud v'
	\geq \int_{u}^{v} \abs{(\dvr^{-1} \rd_{v}) \Phi(v')} \, \ud v'
		- \bb(\frac{3}{2} \bb)^{2} \int_{u}^{v} \int_{u_{0}}^{u} \abs{N_{2}(u', v')} \, \ud u' \, \ud v'
\end{equation*}
where
\begin{equation*}
	N_{2} = - \dvr^{-1} \rd_{v} \bb( \frac{2m \dur}{(1-\mu) r^{2}} \bb) \dvr^{-1} \rd_{v} (r \phi) + \dvr^{-1} \rd_{v} \bb( \frac{2m \dur}{(1-\mu) r^{2}} \phi \bb).
\end{equation*}
Note furthermore that $\Phi$ as given by \eqref{Phi.infinite.def} satisfies the assumptions of Theorem \ref{thm:main.finite} with any $\gamma\in (0,1)$. Using estimates \eqref{eq:btstrp:pf:dvrphi}, \eqref{eq:btstrp:pf:phi}, \eqref{eq:btstrp:pf:dvphi}, \eqref{eq:est4shift:imp} and \eqref{eq:est4dv-shift}, as well as exploiting the explicit form of $\abs{\Phi(v)}$ in \eqref{Phi.infinite.def}, it can be shown that
\begin{equation*}
	\sup_{u, v : u_{0} \leq u \leq v < \infty} \int_{u}^{v} \int_{u_{0}}^{u} \abs{N_{2}(u', v')} \, \ud u' \, \ud v'
	\aleq \int_{u_{0}}^{\infty} \int_{u_{0}}^{v'} \bb( \eps^{3} r_{+}^{-2-\gmm} + \eps^{2} r_{+}^{-2} \abs{\Phi(v')} \bb) \, \ud u' \, \ud v' < \infty.
\end{equation*}
On the other hand, since $\lim_{v \to \infty} \int_{u}^{v} \abs{(\dvr^{-1} \rd_{v}) \Phi(v')} \, \ud v' = \infty$ as in the proof of Corollary~\ref{cor.infinite.BV.mass}, the desired conclusion follows.

Next, to see that the Bondi mass is infinite everywhere along future null infinity, we again apply Theorem~\ref{thm:main.finite}, but now with $\gmm  > \f 12$. Then according to \eqref{eq:dvphi:large-r}, \eqref{eq:dvr}, \eqref{eq:btstrp:pf:phi}, \eqref{rdvrphi.diff.est} and \eqref{eq:btstrp:pf:geom:dvr}, it can be seen that that the main contribution to the Bondi mass is given by $\lim_{v\to\infty}\int_{u}^v \Phi^2(u,v') \, \ud v'$ in a similar manner as in the proof of Corollary \ref{cor.infinite.BV.mass}. Therefore, one can argue as in the proof of Corollary \ref{cor.infinite.BV.mass} to show that the Bondi mass is infinite for every $u\geq u_0$.
\end{remark}


\section{Causal geodesic completeness} \label{sec:cgc}
In this section, we complete the proof of Theorems~\ref{thm:main.finite} and \ref{thm:main} by establishing causal geodesic completeness of the solutions constructed in Sections~\ref{sec.proof} and \ref{sec.proof.inf}. We will first show the \underline{future} causal geodesic completeness of these spacetimes. Once this is achieved, it is easy to see that the past causal geodesic completeness for solutions constructed in Theorem~\ref{thm:main} can be proven in an almost identical manner. We will return to this at the end of the section.

\subsection{Geodesics in $\calM$}

Let $\gmm : I \to \calM$ (where $I$ is an interval) be a future pointing causal geodesic on the spacetime $\calM$ constructed in Theorems~\ref{thm:main.finite} or \ref{thm:main}. Given any function $f$ on $\calM$, we adopt the convention of denoting by $f(s)$ the value of $f$ at the point $\gmm(s)$ i.e., $f(s) = f(\gmm(s))$. We also write $\dot{f}(s) = \frac{\ud}{\ud s} f(s)$ and $\ddot{f}(s) = \frac{\ud^{2}}{\ud s^{2}} f(s)$.

In order to describe the geodesic $\gmm$, it is convenient to use the double null coordinates $(u, v, \tht, \varphi)$ whenever possible, since then we can directly use the bounds in Theorem \ref{thm:main.finite}. Under our convention, we may write
\begin{equation*}
	\gmm(s) = (u(s), v(s), \tht(s), \varphi(s)), \quad \dot{\gmm}(s) = (\dot{u}(s), \dot{v}(s), \dot{\tht}(s), \dot{\varphi}(s)).
\end{equation*}
Let us note that these are only defined away from the axis $\Gmm$. On the other hand, it is easy to verify that in fact $v(s)$ and $u(s)$ can be extended to continuous functions in $\calM$. (Notice that in contrast, $\dot{u}(s)$ and $\dot{v}(s)$ may be discontinuous.)

As $\gamma$ is future pointing causal, we have 
\[
\dot u\geq 0,\quad \dot v\geq 0.
\]
We now discuss conserved quantities of a geodesic. We denote by $\bfC^{2}$ (the minus of) the magnitude of the 4-velocity $\dot{\gmm}(s)$, i.e.,
\begin{equation*}
	\bfC^{2}(s) = - g_{\alp \bt} \,{\!\rst}_{\gmm(s)} \dot{\gmm}^{\alp}(s) \dot{\gmm}^{\bt}(s),
\end{equation*}
where we recall the metric
\begin{equation*}
	g = - \Omg^{2} \, \ud u \cdot \ud v + r^{2} (\ud \tht^{2} + \sin^{2} \tht \, \ud \varphi^{2})
\end{equation*}
on $\calM$.
The quantity $\bfC^{2}$ is conserved (i.e., constant in $s$). The choice of the sign is so that $\bfC^{2} > 0$ when $\gmm$ is a time-like geodesic. In the double null coordinates, it takes the form
\begin{equation} \label{eq:c2-dnull:al}
	\bfC^{2} = \Omg^{2} \dot{u} \dot{v} - r^{2} (g_{\bbS^{2}, \tht \tht} \dot{\tht} \dot{\tht} + g_{\bbS^{2}, \varphi \varphi} \dot{\varphi} \dot{\varphi}).
\end{equation}
Since the spacetime $(\calM, g)$ is spherically symmetric, conservation of angular momentum holds for geodesics. Let $\Omg_{x}$, $\Omg_{y}$, $\Omg_{z}$ be the standard generators of the rotation group $\mathrm{SO}(3)$ (i.e., infinitesimal rotations about the $x$-, $y$-, and $z$-axes). Let
\begin{equation*}
	J_{k} (s)= g_{\alp \bt} \!\rst_{\gmm(s)} \Omg_{k}^{\alp} \!\rst_{\gmm(s)} \dot{\gmm}^{\bt}(s), \quad k=x,\; y \textnormal{ or } z.
\end{equation*}
It can be easily verified that $J_{x}$, $J_{y}$, $J_{z}$ are conserved. We define the (conserved) total angular momentum squared as
\begin{equation*}
	\bfJ^{2} := J_{x}^{2} + J_{y}^{2} + J_{z}^{2}.
\end{equation*}
In the double null coordinates, $\bfJ^{2}$ takes the form
\begin{equation} \label{eq:j2-dnull:al}
	\bfJ^{2} = r^{4} \bb( g_{\bbS^{2}, \tht \tht} \dot{\tht} \dot{\tht} + g_{\bbS^{2}, \varphi \varphi} \dot{\varphi} \dot{\varphi} \bb).
\end{equation}
This statement is an immediate consequence of the identity
\begin{equation*}
	\Omg_{x} \otimes \Omg_{x} + \Omg_{y} \otimes \Omg_{y} + \Omg_{z} \otimes \Omg_{z}
	= \rd_{\tht} \otimes \rd_{\tht} + \frac{1}{\sin^{2} \tht} \rd_{\varphi} \otimes \rd_{\varphi}
\end{equation*}
which concerns only the standard sphere $\bbS^{2}$. This identity, in turn, can be verified by observing that each side defines a contravariant symmetric 2-tensor on $\bbS^{2}$ which is invariant under rotations (i.e., Lie derivatives with respect to $\Omg_{x}$, $\Omg_{y}$, $\Omg_{y}$ vanish) and yields $1$ when tested against $\ud \varphi \otimes \ud \varphi$ on the equator $\tht = \frac{\pi}{2}$; such a tensor is clearly unique.

By \eqref{eq:c2-dnull:al} and \eqref{eq:j2-dnull:al}, we obtain the useful identity
\begin{equation} \label{eq:dudv-dnull:al}
	\Omg^{2} \dot{u} \dot{v} = \bfC^{2} + r^{-2} \bfJ^{2}
\end{equation}
where $\bfC^{2}$ and $\bfJ^{2}$ are conserved.

 A basic tool for studying completeness of future pointing causal geodesics is the following lemma.
\begin{lemma}[Continuation of future pointing causal geodesics] \label{lem:cont-crit}
Any future pointing causal geodesic $\gmm : [0, s_{f}) \to \calM$ can be continued past $s_{f}$ if there exists a compact subset $K \subseteq \calM$ such that $\set{\gmm(s) \in \calM : s \in [0, s_{f})} \subseteq K$.
\end{lemma}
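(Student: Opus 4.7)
The plan is to reduce to the standard continuation theorem for smooth ODEs applied to the geodesic spray, viewed as a smooth vector field on the tangent bundle $T\calM$. Since by hypothesis $\gmm([0, s_{f})) \subseteq K$ with $K$ compact, it is enough to prove that $\dot\gmm(s)$ stays bounded in some auxiliary Riemannian norm $\abs{\cdot}_{h}$ on $\calM$ as $s \to s_{f}^{-}$: once this is known, $(\gmm(s), \dot\gmm(s))$ remains in a compact subset of $T\calM$, and standard ODE theory yields extension past $s_{f}$.

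A preliminary step is to check that $g$, and hence the geodesic spray, is smooth on the underlying $(3+1)$-dimensional manifold $\calM$ (cf.~Remark~\ref{rem:3+1d}), including across the axis $\Gmm$. Away from $\Gmm$ this is immediate from the a priori estimates \eqref{eq:main:geo}--\eqref{eq:main:apriori}, which give $\Omg^{2} \aeq 1$ and uniform control on the derivatives of the metric coefficients. Across $\Gmm$ one passes to a regular $(3+1)$-dimensional chart, in which the boundary conditions \eqref{eq:bc4high-d} combined with the estimates of Theorem~\ref{thm:main.finite} ensure that $g$ extends smoothly, so that the Christoffel symbols are uniformly bounded on $K$.

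To bound $\abs{\dot\gmm}_{h}$, I would combine the conservation of $\bfC^{2}$ and $\bfJ^{2}$ with a timelike vector field argument. In a double null chart on $K \cap \set{r \geq r_{0}}$, the identity \eqref{eq:dudv-dnull:al} together with $\Omg^{2} \aeq 1$ gives $\dot u \dot v \aleq 1$, while \eqref{eq:j2-dnull:al} gives $(g_{\bbS^{2}, \tht\tht} \dot\tht^{2} + g_{\bbS^{2}, \varphi\varphi} \dot\varphi^{2})^{1/2} = r^{-2} \abs{\bfJ} \aleq 1$. To separate $\dot u$ and $\dot v$, introduce a smooth future timelike vector field $T$ on a neighborhood of $K$ and set $E(s) := -g(T, \dot\gmm)(s)$; since $\nb_{\dot\gmm} \dot\gmm = 0$, direct computation gives $\dot E = -(\nb T)(\dot\gmm, \dot\gmm)$. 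Decomposing $\dot\gmm$ $g$-orthogonally into its $T$-component and a spacelike remainder, the conservation of $\bfC^{2}$ together with a Gronwall-type argument translates the boundedness of $\nb T$ on $K$ into a uniform $h$-Riemannian bound on $\dot\gmm$. Near the axis the analogous bound is obtained directly in a regular $(3+1)$-dimensional chart using smoothness of $g$ there.

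The main obstacle I anticipate is the careful treatment of the axis $\Gmm$: one must verify that its apparent degeneracy in the double null coordinates is purely a coordinate singularity, so that $g$ and the geodesic spray extend smoothly across $\Gmm$ and the ODE continuation theorem applies uniformly on $K$. Once this is in place, the bound on $\dot\gmm$ via conservation laws and a timelike vector field follows a standard template from Lorentzian geometry.
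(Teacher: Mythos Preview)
Your reduction to the ODE continuation theorem on $T\calM$ is legitimate in principle, but the proposed Gronwall bound on $E(s) = -g(T,\dot\gmm)(s)$ does not close. You correctly obtain $\dot E = -(\nabla T)(\dot\gmm, \dot\gmm)$, and on $K$ the tensor $\nabla T$ is bounded in an auxiliary Riemannian metric $h$, giving $|\dot E| \leq C\,|\dot\gmm|_h^2$. However, the $g$-orthogonal decomposition $\dot\gmm = \alpha T + w$ combined with $g(\dot\gmm,\dot\gmm) = -\bfC^2$ only yields $|w|_g^2 = \alpha^2\,|g(T,T)| - \bfC^2$, so $|\dot\gmm|_h^2$ is comparable to $E^2$, not to $E$. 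The resulting inequality $|\dot E| \leq C E^2$ permits $E(s) \sim (s_f - s)^{-1}$, and the paper's Lemma~\ref{lem:cont-crit-e:al} shows this is \emph{exactly} the rate realized along an incomplete causal geodesic in $\calM$: the differential inequality is sharp and cannot by itself rule out blow-up of the velocity. Neither the product bound $\dot u\,\dot v \aleq 1$ from \eqref{eq:dudv-dnull:al} nor conservation of $\bfJ^2$ helps, since $\dot u^2 + \dot v^2$ can diverge while $\dot u\,\dot v$ remains bounded.

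The paper's proof avoids velocity bounds entirely. Compactness of $K$ produces a sequential limit point $p$; one then takes a geodesically convex neighborhood $\calU$ of $p$ and uses that $\gmm$ is future causal --- together with the strong causality of $\calM$, here guaranteed by the global time function $t = u+v$ --- to trap $\gmm([s',s_f)) \subset \calU$ for some $s'$ near $s_f$. Inside $\calU$ the geodesic is a straight line in normal coordinates based at $\gmm(s')$, and hence extends past $s_f$. Convex-neighborhood geometry, not a bound on $\dot\gmm$, is what carries the argument.
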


\begin{proof}
First, observe that it suffices to consider a future causal geodesic $\gmm$ whose image intersects $\calM \setminus C_{u_{0}}$. Otherwise, the image of $\gmm$ is contained in $C_{u_{0}}$. Since the unique future pointing causal vector tangent to $C_{u_{0}}$ is its null generator, $\gmm$ is a radial null geodesic contained in $C_{u_{0}}$, which is complete thanks to uniform boundedness of $\abs{\log \Omg}$ in \eqref{eq:main:geo}.

Without loss of generality, we may assume that $\gmm(0) \in \calM \setminus C_{u_{0}}$. Since $\gmm$ is future pointing causal, the closure of its image $\gmm([0, s_{f})) = \set{\gmm(s) \in \calM : s \in [0, s_{f})}$ is disjoint from $C_{u_{0}}$. Then by the compactness assumption, it follows that there exists a sequential limit point $p \in \calM \setminus C_{u_{0}}$ of $\gmm([0, s_{f}))$, i.e., there exists a sequence $s_{n} \to s_{f}$ such that $\gmm(s_{n}) \to p$.

Recall now the standard result that there exists a geodesically convex neighborhood around any non-boundary point in a smooth Lorentzian manifold. Let $\calU$ be a geodesically convex neighbordhood of $p \in \calM \setminus C_{u_{0}}$. By definition, there exists $s' \in [0, s_{f})$ such that $\gmm(s') \in \calU$; since $\gmm$ is a future pointing causal geodesic, it follows that the $\gmm([s', s_{f})) \subseteq \calU$.
Then $\gmm$ can be continued as the unique geodesic in $\calU$ passing through $\gmm(s')$ and $p$.
\end{proof}

\subsection{Preliminary discussions}
Our strategy for proving geodesic completeness is to argue by contradiction, i.e., we assume that there is a future pointing causal geodesic $\gamma$ which is not complete and terminates at some finite time $s_f$ and derive a contradiction (to Lemma \ref{lem:cont-crit} or otherwise). The following is the simplest case: 
\begin{lemma}\label{C.J.not.0}
If $\gamma(s): [0, s_f)\mapsto \calM$ is incomplete, then either $\bfC\neq 0$ or $\bfJ\neq 0$.
\end{lemma}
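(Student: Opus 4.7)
The plan is to argue by contrapositive: assuming $\bfC = 0$ and $\bfJ = 0$, I will show that $\gmm$ is future complete (i.e., $s_{f} = \infty$). The starting observation is that away from the axis $\Gmm$, equations \eqref{eq:j2-dnull:al} and \eqref{eq:dudv-dnull:al} with $\bfC = \bfJ = 0$ force $\dot\tht = \dot\varphi = 0$ and $\Omg^{2} \dot u \dot v = 0$, so $\gmm$ is radial null, and since $\gmm$ is future-pointing and nontrivial, exactly one of $\dot u, \dot v$ vanishes with the other strictly positive at each point.

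Next, I would establish the global bound $c \leq \Omg^{2} \leq C$ on $\calM$ for absolute constants $0 < c < C < \infty$, which is immediate from the identity $\Omg^{2} = -4 \dur \dvr / (1-\mu)$ provided by \eqref{eq:hawking-m} and the a priori bounds \eqref{eq:main:geo}. Then for a segment of $\gmm$ along an outgoing null curve $C_{u_{*}}$ (where $\dot u = 0$, $\dot v > 0$), a direct computation of Christoffel symbols yields the geodesic equation $\ddot v + 2 (\rd_{v} \log \Omg) \dot v^{2} = 0$. Using $\dot v \, \rd_{v} \log\Omg = \frac{\ud}{\ud s} \log \Omg$ along $\gmm$, this integrates to $\Omg^{2} \dot v \equiv \mathrm{const}$, and combined with the preceding $\Omg^{2}$ bound, $\dot v$ is pinched between two positive constants. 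Hence $v(s)$ grows linearly in $s$ at a bounded rate, and the outgoing segment exhausts $v \in [v(0), \infty)$ precisely as $s \to \infty$. A symmetric argument along an ingoing null curve $\uC_{v_{*}}$ gives $\Omg^{2} \dot u \equiv \mathrm{const}$, so $\dot u$ is similarly pinched and any ingoing segment reaches the axis at $(v_{*}, v_{*})$ in \emph{finite} affine parameter.

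The main obstacle is handling the passage of $\gmm$ through the axis $\Gmm$, where the $(u,v)$ chart degenerates: an ingoing geodesic in $\calM$ continues past the axis as an outgoing geodesic on the antipodal side of the sphere $\bbS^{2}$. I would dispatch this by viewing the hit point $p \in \Gmm$ as an interior point of the smooth $(3+1)$-dimensional manifold $\calM$; standard ODE theory on a geodesically convex neighborhood of $p$ (as used in the proof of Lemma~\ref{lem:cont-crit}) produces a unique smooth extension of $\gmm$ past $p$, and, since $\bfC$ and $\bfJ$ are conserved along any extension, spherical symmetry forces the continuation to again be radial null. As $\dot u > 0$ on the ingoing approach, $\gmm$ must emerge as an outgoing radial null geodesic on $C_{v_{*}}$. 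Putting the pieces together, any future-pointing causal geodesic with $\bfC = \bfJ = 0$ is a concatenation of at most one ingoing segment of finite affine length followed by an outgoing segment of infinite affine length, whence $s_{f} = \infty$, contradicting incompleteness.
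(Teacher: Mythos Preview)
Your proof is correct and follows essentially the same route as the paper: both reduce to the observation that a radial null geodesic is an integral curve of $\Omega^{-2}\rd_v$ or $\Omega^{-2}\rd_u$, which are geodesic vector fields (equivalently, $\Omega^2\dot v$ or $\Omega^2\dot u$ is conserved along the curve), so completeness follows from the uniform bound on $|\log\Omega|$ supplied by \eqref{eq:main:geo}. The paper's two-line argument leaves the passage of an initially ingoing segment through the axis implicit, whereas you treat it explicitly; your justification that the continuation must be outgoing could be sharpened (the cleanest reason is that $r\geq 0$ forces $\dot r\geq 0$ immediately after the axis, ruling out the ingoing branch), but the idea is sound.
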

\begin{proof}
If $\bfC=0$ and $\bfJ=0$, then the geodesic $\gamma$ is a spherically symmetric constant $u$ curve or constant $v$ curve. These geodesics are  complete since $\f{1}{\Omg^2}\rd_u$ and $\f{1}{\Omg^2}\rd_v$ are geodesic vector fields and $|\log\Omg|$ is uniformly bounded. 
\end{proof}

Before proceeding to the other cases, we need some preliminary considerations. First, we will see that some difficulties arise because the $(u,v,\theta,\varphi)$ coordinate system is not regular at the axis. It is therefore useful to have the following:
\begin{lemma}\label{r.0.discrete}
If $\gamma(s): [0, s_f)\mapsto \calM$ is incomplete, then the set $\{s: r(s)=0\}$ is a discrete subset of $[0,s_f)$ (with a possible accumulation point at $s_f$).
\end{lemma}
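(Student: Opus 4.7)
The plan is to argue by contradiction: suppose the set $S := \set{s \in [0, s_f) : r(s) = 0}$ has an accumulation point $s^{*} \in [0, s_f)$. Continuity of $r$ along $\gmm$ gives $r(s^{*}) = 0$, so $\gmm(s^{*}) \in \Gmm$.

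The first step is to verify that near any point of $\Gmm$, the spacetime metric $g$ extends as a Lorentzian metric of sufficient regularity (at least $C^{1,1}$) in Cartesian-type coordinates $(T, X, Y, Z)$ in which $\Gmm = \set{X = Y = Z = 0}$, $r = \sqrt{X^{2}+Y^{2}+Z^{2}}$, and the rotation Killing vector fields take the standard Euclidean form $\Omg_{x} = Y \rd_{Z} - Z \rd_{Y}$, etc. This follows from the $C^{2}$ bounds on $r$ and $r\phi$ in Theorem~\ref{thm:main.finite}, together with the boundary conditions $r = m = 0$ on $\Gmm$ (which, via the estimate $m \aleq \eps^{2} r^{3}$ in \eqref{eq:btstrp:pf:m}, ensure $\Omg^{2} = -4 \dvr \dur/(1-\mu)$ is positive and regular on $\Gmm$, ruling out conical singularities). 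Consequently $\gmm$ extends as a $C^{2}$ curve across $s^{*}$ in this chart, and the geodesic equation admits unique solutions in a neighborhood of $\gmm(s^{*})$.

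Next, $r^{2}$ (being a polynomial in $X, Y, Z$) pulls back to a $C^{2}$ function of $s$. Taylor expansion at $s^{*}$ gives
\begin{equation*}
	r(s)^{2} = \abs{\vec v(s^{*})}^{2} (s - s^{*})^{2} + O((s - s^{*})^{3}), \qquad \vec v(s^{*}) := (\dot X, \dot Y, \dot Z)(s^{*}).
\end{equation*}
The argument then splits into two subcases. If $\abs{\vec v(s^{*})} \neq 0$, this forces $r(s) > 0$ on a punctured neighborhood of $s^{*}$, contradicting that $s^{*}$ is an accumulation point of $S$. If $\vec v(s^{*}) = 0$, then $\dot\gmm(s^{*})$ is tangent to $\Gmm$ at $\gmm(s^{*})$; since $\Gmm$ is itself a timelike geodesic by spherical symmetry (being the fixed-point set of the $\mathrm{SO}(3)$ action), uniqueness of geodesics forces $\gmm$ to coincide with $\Gmm$ on its domain. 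But $\Gmm$ is future-complete in affine parameter (along $\Gmm$ one has $\dot u = \dot v = \bfC/\Omg$, and $\log \Omg$ is uniformly bounded by \eqref{eq:main:geo}), which contradicts the assumed incompleteness of $\gmm$.

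The main obstacle is the first step: the coordinates $(u, v, \tht, \varphi)$ are degenerate at $\Gmm$, and one must carry out an expansion converting the degenerate expression $-\Omg^{2}\,du\,dv + r^{2}\,d\sigma_{\bbS^{2}}^{2}$ into a Lorentzian metric in $(T, X, Y, Z)$. The key input is the Taylor expansion $r(u, v) = \dvr(u, u)(v - u) + O((v-u)^{2})$ near $\Gmm$ supplied by the $C^{2}$ bounds of Theorem~\ref{thm:main.finite} and Corollary~\ref{cor:Est4dvdv}, which suffices to ensure the required $C^{1,1}$ regularity of $g$ across the axis.
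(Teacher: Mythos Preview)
Your proposal is correct and shares the paper's core idea: $\Gmm$ is a complete geodesic, so by uniqueness of geodesics an accumulation of axis-hitting times would force $\gmm \equiv \Gmm$, contradicting incompleteness. The paper's execution differs mainly in emphasis. Rather than passing to Cartesian coordinates and invoking the general principle that the fixed-point set of an isometry group is totally geodesic, the paper stays in the $(u,v)$ variables and verifies by direct computation---using the boundary relations $\dvr = -\dur$, $\rd_{v}\dvr = -\rd_{u}\dur$, and $m/r^{2} \to 0$ on $\Gmm$, which give $\rd_{v}\log\Omg = \rd_{u}\log\Omg$ there---that $\Omg^{-1}(\rd_{v}+\rd_{u})$ is a geodesic vector field tangent to $\Gmm$; completeness then follows from the uniform bound on $\abs{\log\Omg}$. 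The step you unpack via the Taylor expansion of $r^{2}$ in a regular chart (accumulation forces $\dot\gmm(s^{*})$ tangent to $\Gmm$, hence coincidence) is dismissed in the paper as ``standard existence and uniqueness theory for ODEs.'' Your route is more careful about regularity across the axis; the paper's is more concrete about why $\Gmm$ is itself a geodesic.
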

\begin{proof}
By standard existence and uniqueness theory for ODEs, it suffices to show that the axis $\Gamma$ is a complete geodesic. (Indeed, then if $\{s: r(s)=0\}$ is not a discrete subset of $[0,s_f)$, then the image of $\gamma$ coincides with $\Gamma$ and contradicts the incompleteness of $\gamma$.) To see that $\Gamma$ is a complete geodesic, we note that $\lambda\restriction_\Gmm=-\nu\restriction_\Gmm$, $\rd_v\lambda\restriction_\Gmm=-\rd_u\nu\restriction_\Gmm$ together with $\f{m}{r^2}\restriction_{\Gmm}=0$ imply that $\rd_v\log\Omg\restriction_\Gmm=\rd_u\log\Omg\restriction_\Gmm$. Then by an explicit calculation, one checks that the vector field $\f{1}{\Omg}(\rd_v+\rd_u)\restriction_\Gmm$, which is tangent to $\Gmm$, is a geodesic vector field. Since $|\log\Omg|$ is uniformly bounded, $\Gmm$ is a complete geodesic.
\end{proof}

Consider the following (smooth) quantity
\begin{equation*}
E(s) =- g_{\alp \bt} \!\rst_{\gmm(s)} T^{\alp} \!\rst_{\gmm(s)} \dot{\gmm}^{\bt}(s),
\end{equation*}
where $T$ is a smooth vector field on $\calM$ which is given in the $(u, v, \theta, \varphi)$ coordinate system by
\begin{equation} \label{eq:T-uv}
T =  - \frac{2 \dur}{\Omg^{2}} \rd_{v} + \frac{2 \dvr}{\Omg^{2}} \rd_{u}.
\end{equation}
Observe that $T$ is radial, future time-like, and tangent to the constant $r$ hypersurfaces, i.e., $T r = 0$. In the $(u, v, \theta, \varphi)$ coordinates, $E$ takes the form
\begin{equation}\label{E.uv}
	E(s) := \dvr \dot v(s)-\dur \dot u(s).
\end{equation}
In particular, this shows that away from the axis $\Gamma$, $E(s)$ is non-negative as $\dvr$, $\dot v$, $-\dur$, $\dot u$ are non-negative. 

It will be useful to have the following slightly stronger statement:
\begin{lemma}\label{r.positive}
Let $\gamma(s): [0, s_f)\mapsto \calM$ be a future causal geodesic with either $\bfJ\neq 0$ or $\bfC\neq 0$, then $E(s)>0$ for $s\in [0, s_f)$.
\end{lemma}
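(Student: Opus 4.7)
The plan is to interpret $E(s)$ as $-g(T, \dot\gmm)|_{\gmm(s)}$ and apply the standard Lorentzian fact that if $Y$ is future timelike and $X$ is future causal at the same point, then $-g(X, Y) > 0$ unless $X = 0$. A direct computation in double null coordinates, using $g(\rd_u, \rd_u) = g(\rd_v, \rd_v) = 0$ and $g(\rd_u, \rd_v) = -\Omg^2/2$, yields $-g(T, \dot\gmm) = \dvr \dot v - \dur \dot u$, matching the coordinate expression \eqref{E.uv} for $E$ on $\mfd \setminus \Gmm$.

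First, I verify that the vector field $T$ in \eqref{eq:T-uv} is future timelike. Using $\dvr > 0$ and $\dur < 0$ from \eqref{eq:main:geo}, one finds $g(T, T) = 4 \dvr \dur / \Omg^{2} < 0$, while $-g(T, \rd_v) = \dvr > 0$ shows $T$ is future-pointing. Moreover, $T$ extends smoothly across $\Gmm$: from $\dvr + \dur = 0$ and $m = 0$ on $\Gmm$ (recalled in Section~\ref{subsec:eqs}) together with \eqref{eq:hawking-m}, one finds $\Omg^{2} = 4 \dvr^{2}$ on $\Gmm$, so that $T\rst_{\Gmm} = (2 \dvr)^{-1}(\rd_u + \rd_v)$, which is tangent to $\Gmm$.

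Next, by the Lorentzian reverse Cauchy--Schwarz inequality applied pointwise to $Y = T|_{\gmm(s)}$ and $X = \dot\gmm(s)$, one concludes $E(s) \geq 0$ on $[0, s_{f})$, with equality exactly when $\dot\gmm(s) = 0$. To rule out the latter, suppose $\dot\gmm(s_{0}) = 0$ at some $s_{0} \in [0, s_{f})$. Uniqueness for the geodesic ODE then forces $\gmm$ to be the constant curve, so that $\bfC^{2} = -g(\dot\gmm, \dot\gmm) = 0$ and $\bfJ^{2} = 0$ from \eqref{eq:j2-dnull:al}. This contradicts the hypothesis $\bfC \neq 0$ or $\bfJ \neq 0$, completing the argument.

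The main obstacle I anticipate is the behavior at the axis $\Gmm$, where the $(u, v, \tht, \varphi)$ chart degenerates and the pointwise formula $\dvr \dot v - \dur \dot u$ is not directly meaningful (in particular $\dot u, \dot v$ need not even be well defined there, as noted in the paper). I bypass this by working coordinate-invariantly with $E = -g(T, \dot\gmm)$, which agrees with \eqref{E.uv} off $\Gmm$, verifying smoothness of $T$ across $\Gmm$ as above, and invoking the Lorentzian inequality globally. This also avoids any need to differentiate $u(s), v(s)$ at the discrete points where $\gmm$ crosses $\Gmm$ (cf. Lemma~\ref{r.0.discrete}).
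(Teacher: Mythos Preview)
Your proof is correct and takes a genuinely different, more conceptual route than the paper. The paper argues by cases: when $\bfJ \neq 0$ it works in the $(u,v,\tht,\varphi)$ chart (valid since $r(s) > 0$), observes that $E(s) = 0$ forces $\dot u = \dot v = 0$ via \eqref{E.uv}, and contradicts the identity $\Omg^{2} \dot u \dot v = \bfC^{2} + r^{-2}\bfJ^{2}$; when $\bfC \neq 0$ and $r(s) = 0$, it passes to a sequence $s_{i} \to s$ with $r(s_{i}) > 0$ (using Lemma~\ref{r.0.discrete}) and invokes the lower bound $E(s_{i}) \ageq \sqrt{\dot u \dot v}(s_{i}) \ageq \bfC$. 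You bypass both the case split and the sequence argument at the axis by working directly with the invariant expression $E = -g(T, \dot\gmm)$ and the pointwise Lorentzian fact that a future timelike vector paired with a nonzero future causal vector has strictly negative inner product. This is cleaner and, as you note, sidesteps the coordinate degeneracy on $\Gmm$ entirely. The paper's approach, by contrast, exercises the identity \eqref{eq:dudv-dnull:al} and the comparison $E \ageq \sqrt{\dot u \dot v}$, which are reused later in Section~\ref{sec:cgc} (e.g., Lemma~\ref{cor:leave:al}), so it fits the surrounding argument more organically; but your method is perfectly valid and arguably more transparent as a standalone proof. One minor remark: since $\gmm$ is by hypothesis a future pointing causal geodesic, $\dot\gmm(s) \neq 0$ is already built into the definition, so your final appeal to $\bfC \neq 0$ or $\bfJ \neq 0$ is a harmless redundancy rather than an essential step.
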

\begin{proof}
\emph{Case 1. $\bfJ\neq 0$.} In this case, $r(s)>0$ for $s\in [0, s_f)$ and hence we can use the $(u,v,\theta,\varphi)$ coordinate system. If $E(s)=0$, then by \eqref{E.uv}, $\dot{v}(s)=\dot{u}(s)=0$. By \eqref{eq:dudv-dnull:al}, $0=\Omg^2(\dot{u}\dot{v})(s)\geq r^{-2}(s)\bfJ^2>0$, which is a contradiction.

\emph{Case 2. $\bfC\neq 0$.} If $r(s)>0$, the proof proceeds in the same way as in Case 1. If $r(s)=0$ and $E(s)=0$, by Lemma \ref{r.0.discrete}, there exists a sequence $\{s_i\}$ with $s_i\to s$ such that $r(s_i)\to 0$ and $E(s_i)\to 0$. On the other hand, by \eqref{E.uv}, \eqref{eq:dudv-dnull:al} and the boundedness of $|\log\Omg|$, $E(s_i)\gtrsim \sqrt{\dot{v}\dot{u}}(s_i)\gtrsim \bfC$, which is uniformly bounded below. This is again a contradiction.
\end{proof}

Recall that 
\begin{equation}\label{rd.uv}
\dot r(s)=\dvr\dot v(s)+\dur\dot u(s).
\end{equation}
Our analysis heavily relies on the evolution equations for $E(s)$, $\dot r(s)$ and $\gamma(s)$.
\begin{lemma} \label{lem:geod-eq-v:al}
Let $\gmm$ be a geodesic on $\calM$. If $\gmm(s) = (u, v, \tht, \varphi)(s)$ lies outside the axis $\Gmm$, then
\begin{equation} \label{eq:geod-eq-v}
\begin{split}
	 \ddot{v}(s)& = - \Omg^{-2} \rd_{v} \Omg^{2} (s) \dot{v}^{2}(s) - 2 r^{-3} \dur \Omg^{-2} (s) \bfJ^{2},\\
    \ddot {u}(s)&=-\Omg^{-2}\rd_u\Omg^2(s) \dot {u}^2(s)-2r^{-3}\dvr\Omg^{-2}(s)\bfJ^2,\\
    \dot E(s)&=-r\dot{v}^2(\rd_v\phi)^2 +r\dot{u}^2(\rd_u\phi)^2,\\
    \ddot {r}(s)&=-r\dot{v}^2(\rd_v\phi)^2 -r\dot{u}^2(\rd_u\phi)^2-4r^{-3}\dur\dvr\Omg^{-2}(s)\bfJ^2+2\rd_u\dvr \dot u\dot v.
\end{split}
\end{equation}
\end{lemma}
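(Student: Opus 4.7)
\emph{Proof plan.} Since $(u,v,\tht,\varphi)$ is a regular coordinate system away from the axis $\Gmm$, the whole proof reduces to a direct Christoffel-symbol computation. The plan is: first derive the equations for $\ddot v$ and $\ddot u$ from the geodesic equation; then differentiate $E$ and $\dot r$ by the product rule and substitute the $\ddot v,\ddot u$ formulas; finally, simplify using the Raychaudhuri equations \eqref{eq:raych4v}--\eqref{eq:raych4u}.

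\emph{Step 1 (equations for $\ddot v$ and $\ddot u$).} For the metric $g = -\Omg^{2}\,\ud u \cdot \ud v + r^{2}\,\ud \sgm_{\bbS^{2}}$, the nonvanishing Christoffel symbols relevant for $\ddot v$ are read off from $\Gmm^{v}_{\alp\bt} = \tfrac{1}{2} g^{vu}(\rd_{\alp} g_{\bt u} + \rd_{\bt} g_{\alp u} - \rd_{u} g_{\alp\bt})$:
\begin{equation*}
    \Gmm^{v}_{vv} = \Omg^{-2}\rd_{v}\Omg^{2}, \quad
    \Gmm^{v}_{\tht\tht} = 2r\dur\Omg^{-2}, \quad
    \Gmm^{v}_{\varphi\varphi} = 2r\dur\Omg^{-2}\sin^{2}\tht.
\end{equation*}
Plugging into $\ddot v + \Gmm^{v}_{\alp\bt}\dot\gmm^{\alp}\dot\gmm^{\bt}=0$ and using \eqref{eq:j2-dnull:al} to rewrite the angular contribution as $r^{-4}\bfJ^{2}$ gives the first line of \eqref{eq:geod-eq-v}. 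The $\ddot u$-equation is obtained by the symmetric computation.

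\emph{Step 2 (equations for $\dot E$ and $\ddot r$).} Differentiating $E = \dvr \dot v - \dur \dot u$ and $\dot r = \dvr \dot v + \dur \dot u$, and using $\rd_{u}\dvr = \rd_{v}\dur$ (both equal $\rd_{u}\rd_{v} r$), one obtains
\begin{align*}
    \dot E =& (\rd_{v}\dvr)\dot v^{2} - (\rd_{u}\dur)\dot u^{2} + \dvr\ddot v - \dur\ddot u,\\
    \ddot r =& (\rd_{v}\dvr)\dot v^{2} + (\rd_{u}\dur)\dot u^{2} + 2(\rd_{u}\dvr)\dot u\dot v + \dvr\ddot v + \dur\ddot u.
\end{align*}
Substituting the formulas from Step 1, the angular pieces $-2r^{-3}\dvr\dur\Omg^{-2}\bfJ^{2}$ cancel in $\dot E$ and add up to $-4r^{-3}\dvr\dur\Omg^{-2}\bfJ^{2}$ in $\ddot r$, while the remaining terms rearrange into combinations of $\bigl(\rd_{v}\dvr - \dvr\Omg^{-2}\rd_{v}\Omg^{2}\bigr)\dot v^{2}$ and $\bigl(\rd_{u}\dur - \dur\Omg^{-2}\rd_{u}\Omg^{2}\bigr)\dot u^{2}$ with appropriate signs, plus the cross term $2(\rd_{u}\dvr)\dot u\dot v$ for $\ddot r$.

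\emph{Step 3 (Raychaudhuri simplification).} The only nontrivial algebraic identity needed is
\begin{equation*}
    \rd_{v}\dvr - \dvr\Omg^{-2}\rd_{v}\Omg^{2} = -r(\rd_{v}\phi)^{2},
\end{equation*}
together with its $u\leftrightarrow v$ analogue; this is the main (and really only) conceptual step, and substituting it into Step 2 produces the formulas in \eqref{eq:geod-eq-v} immediately. To prove it, use $\Omg^{2} = -4\dur\dvr/(1-\mu)$ from \eqref{eq:hawking-m} to expand
\begin{equation*}
    \rd_{v}\log\Omg^{2} = \rd_{v}\log\dvr + \rd_{v}\log\abs{\dur/(1-\mu)},
\end{equation*}
and then apply the Raychaudhuri equation \eqref{eq:raych4v} to identify the last term with $r(\rd_{v}\phi)^{2}/\dvr$; rearranging and multiplying through by $\dvr$ gives the claim. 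The $u$-analogue follows by the identical argument using \eqref{eq:raych4u}. In essence, the Raychaudhuri equation is exactly what is needed to combine the $\Omg$-dependent pieces of the geodesic equation with the derivatives of $\dvr,\dur$ into the clean scalar-field expressions on the right-hand side.
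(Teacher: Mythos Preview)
Your proof is correct and follows essentially the same approach as the paper: compute the Christoffel symbols to obtain the $\ddot v,\ddot u$ equations, then differentiate $\dvr\dot v\pm\dur\dot u$ and simplify the resulting combinations $\rd_{v}\dvr - \dvr\Omg^{-2}\rd_{v}\Omg^{2}$ (and its $u$-analogue) using the Raychaudhuri equations \eqref{eq:raych4v}--\eqref{eq:raych4u}. Your Step~3 isolates this key identity more explicitly than the paper's inline computation, but the substance is identical.
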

\begin{proof}
In a coordinate patch, recall that the geodesic equation reads
$\ddot{\gmm}^{\lmb} = - \Gmm_{\alp \bt}^{\lmb} \dot{\gmm}^{\alp} \dot{\gmm}^{\bt}$.
Hence in order to find the equation for $\ddot{v}$, it suffices to compute the Christoffel symbols of the form $\Gmm_{\alp \bt}^{v}$.
By explicit computation, it can be verified that
\begin{align*}
	\Gmm^{v}_{vv}
	= \Omg^{-2} \rd_{v} \Omg^{2}, \quad
	\Gmm^{v}_{\tht \tht}
	= 2 r^{-3} \dur \Omg^{-2} r^{4} g_{\bbS^{2}, \tht \tht}, \quad
	\Gmm^{v}_{\varphi \varphi}
	= 2 r^{-3} \dur \Omg^{-2} r^{4} g_{\bbS^{2}, \varphi \varphi},
\end{align*}
while all the other components vanish. Recalling the identity \eqref{eq:j2-dnull:al}, the equation for $\ddot v$ follows. Similarly we have the equation for 
$\ddot u$.

Next we use the equations for $\dot u$, $\dot v$ to derive the evolution equations for $\dot r$ and $E$. By using the Raychaudhuri equations \eqref{eq:raych4v}, \eqref{eq:raych4u}, we can compute that 
\begin{align*}
 \frac{d}{ds}(\dvr \dot v\pm\dur \dot u)&=\dvr\ddot v+\dot \dvr\dot{ v}\pm(\dur \ddot{u}+\dot \dur \dot{u})\\
 &=\dot{v}(\dot u \rd_u\dvr+ \dot v \rd_v\dvr)-\dvr\left(\Omg^{-2}\rd_v\Omg^2(s) \dot {v}^2(s)+2r^{-3}\dur\Omg^{-2}(s)\bfJ^2\right)\\
 &\quad \pm\dot{u}(\dot u \rd_u\dur + \dot v \rd_v \dur)\mp\dur\left(\Omg^{-2}\rd_u\Omg^2(s) \dot {u}^2(s)+2r^{-3}\dvr\Omg^{-2}(s)\bfJ^2\right)\\
 &=-\dot{v}^2 \dvr \rd_v \log \abs{\frac{\dur}{1-\mu}} \mp \dot{u}^2\dur \rd_u\log \abs{\frac{\dvr}{1-\mu}} +(1\pm 1)(-2r^{-3}\dur\dvr\Omg^{-2}(s)\bfJ^2+\rd_u\dvr \dot u\dot v)\\
 &=-r\dot{v}^2(\rd_v\phi)^2 \mp r\dot{u}^2(\rd_u\phi)^2+(1\pm 1)(-2r^{-3}\dur\dvr\Omg^{-2}(s)\bfJ^2+\rd_u\dvr \dot u\dot v).
 \end{align*}
 Here note that $\rd_u\dvr=\rd_v \dur$. By the definition of $E(s)$ and the equation for $\dot r(s)$, the plus case leads to the equation for $\dot r(s)$ while the minus case gives the equation for $E(s)$. 
\end{proof}

\subsection{Basic features of incomplete geodesics in $\calM$}
Now a very basic feature of an incomplete geodesic is that the quantity $E(s)$ blows up.
\begin{lemma} \label{lem:cont-crit-e:al}
If $\gamma(s): [0, s_f)\mapsto \calM$ is incomplete with $\bfJ\neq 0$ or $\bfC\neq 0$, then 
\begin{equation}\label{E.lower.bd}
 E(s)\geq \frac{C}{s_f-s},\quad \forall s<s_f
\end{equation}
for some constant $C$ depending only on the constants in Theorem \ref{thm:main.finite}.
\end{lemma}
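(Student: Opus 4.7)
The plan is to argue by contradiction, with the continuation criterion of Lemma~\ref{lem:cont-crit} as the engine. Specifically, I will suppose there exists $s_{0} < s_{f}$ with $E(s_{0}) < C_{0}/(s_{f} - s_{0})$ for a small constant $C_{0}$ to be chosen, derive a uniform upper bound on $E$ on $[s_{0}, s_{f})$, and conclude that the coordinates $u(s)$ and $v(s)$ remain bounded on $[s_{0}, s_{f})$. This will force $\gmm([0, s_{f}))$ to sit inside a compact subset of $\calM$, contradicting Lemma~\ref{lem:cont-crit}.

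The key step is to establish a Bernoulli-type differential inequality
\[
\dot E \leq C \eps^{2} E^{2}.
\]
Starting from the identity $\dot E = - r \dot v^{2} (\rd_{v} \phi)^{2} + r \dot u^{2} (\rd_{u} \phi)^{2}$ of Lemma~\ref{lem:geod-eq-v:al}, I will discard the non-positive first term and write $\rd_{u} \phi = \dur \cdot (\dur^{-1} \rd_{u} \phi)$. The pointwise bound $|\dur^{-1} \rd_{u} \phi| \leq C \eps \min\{1, r^{-1}\}$ from \eqref{eq:est4duphi} produces a factor $\dur^{2} \min\{r, r^{-1}\}$, which is pleasantly bounded uniformly in $r$. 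The positive splitting $E = \dvr \dot v + (-\dur) \dot u$, each summand being non-negative by the signs in Theorem~\ref{thm:main.finite} and the future-directedness of $\gmm$, then yields $|\dur| \dot u \leq E$, and the inequality follows. This derivation lives in the $(u, v, \tht, \varphi)$ patch, which degenerates on the axis $\Gmm$; however, $E$ is $C^{1}$ along $\gmm$ because $T$ is smooth on $\calM$, and the axis hits in $[0, s_{f})$ are discrete by Lemma~\ref{r.0.discrete}, so the inequality extends to all of $[0, s_{f})$ by continuity.

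The conclusion is then an elementary ODE computation. Using $E > 0$ from Lemma~\ref{r.positive}, we have $(1/E)' \geq - C \eps^{2}$, and integration yields
\[
\frac{1}{E(s)} \geq \frac{1}{E(s_{0})} - C \eps^{2} (s - s_{0}).
\]
Choosing $C_{0} := 1/(2 C \eps^{2})$, the contradiction hypothesis $E(s_{0}) < C_{0}/(s_{f} - s_{0})$ keeps the right-hand side above $\tfrac{1}{2} C \eps^{2} (s_{f} - s_{0}) > 0$ throughout $[s_{0}, s_{f})$. Hence $E$ is uniformly bounded on $[s_{0}, s_{f})$; combined with $\dot v \leq E/\dvr \leq 3 E$ and $\dot u \leq E/|\dur| \leq 6 E$ (valid away from the axis) together with $s_{f} < \infty$, this keeps both $u(s)$ and $v(s)$ bounded as $s \to s_{f}^{-}$. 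Since the pullback region $\{(u, v) \in \PD : u \in [u(0), u_{\max}], v \in [u, v_{\max}]\} \times \bbS^{2}$ is compact in $\calM$, the image $\gmm([0, s_{f}))$ is contained in a compact set, and Lemma~\ref{lem:cont-crit} provides the contradiction. I expect the main (mild) subtlety to be clean bookkeeping across the axis crossings in the derivation of the Bernoulli inequality.
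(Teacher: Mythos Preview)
Your proposal is correct and follows essentially the same approach as the paper: both derive the Bernoulli-type inequality $\dot E \leq C E^{2}$ from the evolution formula in Lemma~\ref{lem:geod-eq-v:al} together with the pointwise bounds on $\phi$, and both conclude via the fact that bounded $E$ forces $u,v$ bounded and hence contradicts incompleteness through Lemma~\ref{lem:cont-crit}. The packaging differs only cosmetically---the paper first proves $\limsup E = \infty$ and then integrates $(1/E)'$ backward from a blow-up sequence, while you run the contradiction forward from a putative small value $E(s_{0})$; and your continuity-across-the-axis argument (using smoothness of $E$ and discreteness of axis hits) is a slightly slicker substitute for the paper's explicit telescoping sum over inter-axis intervals.
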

\begin{proof}
\pfstep{Step~1} We first claim that 
 \[
  \limsup\limits_{s\rightarrow s_f}E(s)=\infty.
 \]
To see this, first note the estimate $\dot{u}(s)+\dot{v}(s)\leq CE(s)$ for some $C>0$, which is a consequence of \eqref{E.uv}, and holds away from the axis. By Lemma \ref{r.0.discrete} and using the continuity of $u(s)$ and $v(s)$ (which also holds at the axis), we thus deduce that if $\limsup\limits_{s\rightarrow s_f} E(s)$ is bounded, then $u$, $v$ are bounded.
In particular the geodesic $\gamma$ lies in a compact set in $\calM$. By Lemma \ref{lem:cont-crit}, the geodesic can be continued beyond $s_f$ which contradicts the assumption. 

\pfstep{Step~2} We next make use of the evolution equation for $E(s)$ obtained in the previous lemma. The bounds \eqref{eq:main:apriori} on $\phi$ imply that $ r(\rd_u\phi)^2+r(\rd_v\phi)^2$ is bounded above. Therefore, for any $s$ such that $r(s)>0$, we have
\begin{align}\label{E.dot.est}
 \dot E(s)\leq C_*(\dot u^2+\dot v^2)(s)\leq 36 C_*(\dvr \dot v-\dur \dot u)^2(s)=36C_*E^2(s)
\end{align}
for some constant $C_*>0$. We now divide into two cases. 

\emph{Case 1. There exists $s_0\in [0, s_f)$ such that $r(s)>0$ whenever $s\geq s_0$.} Let $s_0<s_*<s_{**}<s_f$. Integrating \eqref{E.dot.est} from $s_*$ to $s_{**}$, we get
\begin{equation}\label{Ei.diff}
 E(s_*)^{-1}-E(s_{**})^{-1}\leq 36C_* (s_{**}-s_{*}).
\end{equation}
Notice that this makes sense thanks to Lemma \ref{r.positive}. Taking $\liminf_{s_{**}\to s_f}$ and using Step 1, we thus obtain
$$E(s_*)^{-1}\leq 36C_* (s_f-s_{*})$$
for every $s_* \in (s_0,s_f)$, as desired\footnote{We note of course that by choosing $C$ large if necessary, we only need to obtain \eqref{E.lower.bd} for $s$ sufficiently close to $s_f$.}.

\emph{Case 2. There exists a sequence $\{s_k\}$ with $s_k\to s_f$ such that $r(s_k)=0$.} By Lemma \ref{r.0.discrete}, we can assume that $r(s)>0$ if $s\neq s_k$. In this case we need to be slightly more careful since \eqref{E.dot.est} only holds when $s\neq s_k$.

Let $s_*, s_{**}\in [0,s_f)$ be such that $s_{**}> s_{*}\geq s_2$ and let $k_*=\min\{k : s_k\geq s_*\}$ and $k_{**}=\max\{k : s_k\leq s_{**}\}$. Assume that $k_{**}>k_*$. We then compute
\begin{equation}
\begin{split}
&\left(E(s_*)^{-1}-E(s_{k_*})^{-1}\right)+\left(E(s_{k_{**}})^{-1}-E(s_{**})^{-1}\right)+\sum_{k=k_*}^{k_{**}-1}\left(E(s_k)^{-1}-E(s_{k+1})^{-1}\right)\\
\leq &36C_*\left(s-s_{k_*}+\sum_{k=k_*}^{k_{**}-1} (s_{k+1}-s_k)\right).
\end{split}
\end{equation}
This again leads to \eqref{Ei.diff} and gives the desired conclusion as in Case 1.

\end{proof}
Another feature of future causal incomplete geodesics is that they approach the axis (at least along a sequence of times). More precisely, we have
\begin{lemma} \label{lem:leave:al}
If $\gamma(s): [0, s_f)\mapsto \calM$ is incomplete, then for any $r_0>0$ and any $s_0\in [0,s_f)$, there exists $s\in[s_0, s_f)$ such that $r(s)<r_0$.
\end{lemma}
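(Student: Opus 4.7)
The plan is to argue by contradiction. Suppose there exist $r_0 > 0$ and $s_0 \in [0, s_f)$ with $r(s) \geq r_0$ on $[s_0, s_f)$; the aim is to contradict either Lemma~\ref{lem:cont-crit} or the finiteness of $s_f$. By Lemma~\ref{C.J.not.0}, $\bfC^2 + \bfJ^2 > 0$, and since Theorem~\ref{thm:main.finite} bounds $\Omg^2$ above and below by positive constants, identity \eqref{eq:dudv-dnull:al} combined with $r \geq r_0$ yields a uniform upper bound $\dot{u}(s) \dot{v}(s) \leq K$ with both factors strictly positive. Meanwhile Lemma~\ref{lem:cont-crit-e:al} gives $E(s) \geq c/(s_f - s) \to \infty$, and since $E$ is comparable to $\dot u + \dot v$ by \eqref{eq:main:geo}, we have $\dot u + \dot v \to \infty$. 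Combined with $\dot u \dot v \leq K$, exactly one of $\dot u, \dot v$ dominates near $s_f$: at any crossing $\dot u = \dot v$ one would have $\dot u + \dot v \leq 2\sqrt{K}$, incompatible with $E \to \infty$.

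The configuration $\dot u \gg \dot v$ near $s_f$ leads to an immediate contradiction: $\dot u \gtrsim 1/(s_f - s)$ forces $u(s_f^-) = \infty$, hence $v(s_f^-) = \infty$ (since $v \geq u$ in $\PD$), yet $\dot v \leq K/\dot u \aleq (s_f - s)$ is integrable, so $v$ would remain bounded. Therefore $\dot v \gg \dot u$ near $s_f$, and symmetrically $\dot u \aleq (s_f - s)$ is integrable so $u(s_f^-) < \infty$. If in addition $v(s_f^-) < \infty$, the image of $\gmm$ on $[s_0, s_f)$ lies in a compact subset of $\calM$ (here $r \geq r_0 > 0$ keeps it away from $\Gmm$), and Lemma~\ref{lem:cont-crit} contradicts the maximality of $s_f$.

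The main obstacle is the remaining subcase $v(s_f^-) = \infty$, where $r \geq (v-u)/3 \to \infty$ and the geodesic escapes to future null infinity. The plan is to exploit the sharp $r$-decay of $\phi$ given in Theorem~\ref{thm:main.finite}: the bounds $\abs{\phi} \leq C\eps r^{-\dc}$, $\abs{\rd_v(r\phi)} \leq C(\abs{\Phi(v)} + \eps r^{-\dc})$, $\abs{\rd_u(r\phi)} \leq C\eps$ together with $\abs{\Phi} \leq \eps$ give $r(\rd_u\phi)^2 + r(\rd_v\phi)^2 \leq C\eps^2/r$ in $\set{r \geq 1}$. Setting $a = \dvr\dot v$, $b = -\dur\dot u$ and applying $(a+b)^2 = (a-b)^2 + 4ab$ to $E = a+b$ and $\dot r = a-b$ (see \eqref{rd.uv}) together with \eqref{eq:dudv-dnull:al} produces the ``energy relation''
\begin{equation*}
	\dot r^2 + (1-\mu)(\bfC^2 + r^{-2}\bfJ^2) = E^2,
\end{equation*}
so $\dot r \geq E/\sqrt{2}$ once $E$ is large. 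The $\dot E$ equation in Lemma~\ref{lem:geod-eq-v:al}, combined with the $\phi$-decay above, then yields $\abs{\dot E} \leq C\eps^2 E^2/r$, hence $\abs{\ud \log E/\ud r} = \abs{\dot E/(E \dot r)} \leq C\eps^2/r$. Integrating gives $E(s) \leq E(s_*)(r(s)/r(s_*))^{C\eps^2}$; taking $\eps$ smaller if necessary (depending only on $\dc$) so that $C\eps^2 < 1/2$, we obtain $\dot r \leq E \aleq r^{1/2}$, which integrated in $s$ yields $r(s)^{1/2} \aleq r(s_*)^{1/2} + (s_f - s_*) < \infty$, contradicting $r \to \infty$.
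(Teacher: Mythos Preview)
Your argument is correct but takes a substantially different and more elaborate route than the paper. The paper's proof is short and direct: assuming $r(s)\geq r_0$ on $[s_0,s_f)$, it rewrites the geodesic equation for $\ddot v$ (from Lemma~\ref{lem:geod-eq-v:al}) using the identity \eqref{eq:dudv-dnull:al} to obtain
\[
\frac{\ud}{\ud s}\bigl(\Omg^{2}\dot v\bigr)=\Omg^{2}F, \qquad F=\Omg^{-4}\rd_{u}\Omg^{2}\bigl(\bfC^{2}+r^{-2}\bfJ^{2}\bigr)-2r^{-3}\dur\,\Omg^{-2}\bfJ^{2}.
\]
Since $r\geq r_0$ and the bounds of Theorem~\ref{thm:main.finite} control $\Omg$ and $\rd_u\Omg^{2}$, $F$ is uniformly bounded; hence $\Omg^{2}\dot v$, then $\dot v$, then $v$ (and so $u\leq v$) stay bounded, and Lemma~\ref{lem:cont-crit} gives the contradiction.

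By contrast, you invoke the blow-up of $E$ (Lemma~\ref{lem:cont-crit-e:al}), combine it with the product bound $\dot u\dot v\leq K$ to force a dichotomy $\dot u\gg\dot v$ or $\dot v\gg\dot u$, dispose of the first by the constraint $v\geq u$, and in the second split further into $v$ bounded (compactness) or $v\to\infty$. Only the last subcase is nontrivial, and there you use the energy relation $\dot r^{2}+(1-\mu)(\bfC^{2}+r^{-2}\bfJ^{2})=E^{2}$ together with the $r^{-1}$ decay of $r(\rd_v\phi)^2+r(\rd_u\phi)^2$ to obtain a Gronwall bound $E\aleq r^{C\eps^{2}}$, which for $C\eps^{2}<1/2$ forces $r$ to stay bounded in finite affine time. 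This is all valid, but it is considerably longer, and it introduces an additional smallness requirement on $\eps$ (harmless in the framework of Theorem~\ref{thm:main.finite}, where $\eps$ depends only on $\dc$, but absent from the paper's argument). The paper's one-line integrating-factor trick for $\Omg^{2}\dot v$ sidesteps the entire $v\to\infty$ scenario by showing it simply cannot occur.
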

\begin{proof}
 Suppose not, i.e., we assume that $r(s) \geq r_{0}$ for all $s\in [s_0, s_f)$ and some constants $r_0>0$ and $s_0\in [0,s_f)$. Consider the geodesic equation \eqref{eq:geod-eq-v} for $\dot v$. We can write
\begin{align*}
	- \Omg^{-2} \rd_{v} \Omg^{2} \dot{v}^{2} - 2 r^{-3} \dur \Omg^{-2} \bfJ^{2}
	= & - (\Omg^{-2} \rd_{v} \Omg^{2} \dot{v} + \Omg^{-2} \rd_{u} \Omg^{2} \dot{u}) \dot{v} \\
	& + \Omg^{-4} \rd_{u} \Omg^{2} (\bfC^{2} + r^{-2} \bfJ^{2})- 2 r^{-3} \dur \Omg^{-2} \bfJ^{2}.
\end{align*}
Hence, we have $\ddot{v} (s) = - \dot{R}(s) \dot{v}(s) + F(s)$, where
\begin{align*}
	R = \log \Omg^{2}, \quad
	F = \Omg^{-4} \rd_{u} \Omg^{2} (\bfC^{2} + r^{-2} \bfJ^{2})- 2 r^{-3} \dur \Omg^{-2} \bfJ^{2}.
\end{align*}
It follows that
\begin{equation*}
	\frac{\ud}{\ud s} (\Omg^{2} \dot{v})(s) = \Omg^{2} F(s).
\end{equation*}
By the bounds in Theorem \ref{thm:main.finite} (which also holds for solutions constructed in Theorem~\ref{thm:main}), as well as conservation of $\bfC^{2}$ and $\bfJ^{2}$, $\abs{\log \Omg}$ and $\abs{F}$ are uniformly bounded on $[s_0, s_{f})$. It follows that $\dot{v}$ is uniformly bounded. In particular 
$v$ is uniformly bounded. Since $u\leq v$, we derive that $\gamma(s)$ lies in a compact set in $\calM$. This contradicts Lemma ~\ref{lem:cont-crit}.
\end{proof}

\subsection{Proof of geodesic completeness}

We can now rule out the case when the geodesic is spherically symmetric.
\begin{lemma}
 \label{lem:J:al}
 Assume $\gamma(s): [0, s_f)\mapsto \calM$ is incomplete. Then $\bfJ\neq 0$.
\end{lemma}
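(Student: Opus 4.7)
The plan is to argue by contradiction: suppose $\bfJ = 0$. By Lemma~\ref{C.J.not.0} we must then have $\bfC > 0$, so $\gmm$ is a radial timelike geodesic, and away from $\Gmm$ the conservation identity \eqref{eq:dudv-dnull:al} reduces to the algebraic constraint $\Omg^{2} \dot{u} \dot{v} = \bfC^{2}$. I would then introduce the auxiliary functions
\begin{equation*}
G := \Omg^{2} \dot{u}, \qquad H := \Omg^{2} \dot{v},
\end{equation*}
which satisfy $GH = \Omg^{2} \bfC^{2}$. Substituting the $\bfJ = 0$ geodesic equations from Lemma~\ref{lem:geod-eq-v:al} into $\dot{G} = (\rd_{u} \Omg^{2} \dot{u} + \rd_{v} \Omg^{2} \dot{v}) \dot{u} + \Omg^{2} \ddot{u}$, the $\dot{u}^{2}$ terms cancel and one finds $\dot{G} = \bfC^{2} \rd_{v} \log \Omg^{2}$ and, symmetrically, $\dot{H} = \bfC^{2} \rd_{u} \log \Omg^{2}$, so that $\dot{(G + H)} = \bfC^{2} (\rd_{u} + \rd_{v}) \log \Omg^{2}$.

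The key input from the a priori estimates is that $\abs{\rd_{u} \log \Omg^{2}}$ and $\abs{\rd_{v} \log \Omg^{2}}$ are uniformly bounded on $\calM$. Writing $\log \Omg^{2} = \log \dvr + \log(-\dur) - \log(1-\mu) + \log 4$, the $\rd_{u}$ derivative of $\log \dvr$ is controlled by Lemma~\ref{lem:Est4shift} (via \eqref{eq:wave4dvr}); the $\rd_{v}$ derivative of $\log \dvr$ by $\abs{\rd_{v}^{2} r}/\dvr$; the analogous derivatives of $\log(-\dur)$ by the symmetric estimates; and the derivatives of $\log(1-\mu)$ by the lower bound $1 - \mu \geq 1/2$ combined with the bounds on $m$ and its derivatives from Section~\ref{sec.proof}. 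This yields a uniform pointwise bound $\abs{\dot{(G + H)}} \leq C \bfC^{2}$ valid on every sub-interval of $[0, s_{f})$ on which $\gmm$ stays away from the axis.

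Next, I need to reconcile this derivative bound with the axis crossings, where the double null chart degenerates. At such a crossing $\dot{u}$ and $\dot{v}$ individually jump — since $\dot{r}$ switches sign while the 4-velocity $\dot{\gmm}$ stays continuous in any regular chart near $\Gmm$ — but $\dot{u} + \dot{v}$ equals twice the regular time component of $\dot{\gmm}$ at the crossing and is therefore continuous, and $\Omg^{2}$ is continuous on all of $\calM$. Hence $G + H$ is continuous on $[0, s_{f})$. By Lemma~\ref{r.0.discrete} the crossings form a discrete set in $[0, s_{f})$, so the Lipschitz bound between crossings combined with continuity at each crossing shows that $G + H$ is globally Lipschitz on $[0, s_{f})$ with uniform constant. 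Since $s_{f} < \infty$, $G + H$ is then uniformly bounded on $[0, s_{f})$, and the lower bound $\Omg^{2} \geq 2/9$ obtained from \eqref{eq:main:geo} forces $\dot{u} + \dot{v}$, and hence $E = \dvr \dot{v} + (-\dur) \dot{u} \leq (\dvr - \dur)(\dot{u} + \dot{v})$, to be bounded. This contradicts Lemma~\ref{lem:cont-crit-e:al}.

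The main subtlety — and the reason for passing to the sum $G + H$ rather than working with $G$ and $H$ separately — is the correct treatment of axis crossings: the individual quantities $G$ and $H$ are genuinely discontinuous at $\Gmm$ because of the degeneracy of $(u, v)$ coordinates, and the entire argument rests on the observation that their sum survives the crossing because the lifted 4-velocity in $\calM$ itself does.
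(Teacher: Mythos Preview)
Your proof is correct and takes a genuinely different route from the paper's. The paper proceeds by a three-case analysis on the eventual sign of $\dot r$: if $\dot r$ vanishes along a sequence $s_k \to s_f$, the constraint $\Omg^2 \dot u \dot v = \bfC^2$ forces $\dot u(s_k), \dot v(s_k)$ to stay bounded, contradicting Lemma~\ref{lem:cont-crit-e:al}; if $\dot r > 0$ eventually, $r$ is bounded below, contradicting Lemma~\ref{lem:leave:al}; if $\dot r < 0$ eventually, then $\dot v \aleq \dot u$ together with $\dot u \dot v$ bounded gives $\dot v$ bounded, contradicting Lemma~\ref{lem:cont-crit}. Your ODE argument bypasses this trichotomy and Lemma~\ref{lem:leave:al} entirely, deriving instead a uniform Lipschitz bound on $G+H = \Omg^2(\dot u + \dot v)$ and concluding directly that $E$ stays bounded. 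The price is that you need the uniform pointwise bounds on $\rd_u \log\Omg^2$ and $\rd_v \log\Omg^2$, which draw on the second-order estimates $\abs{\rd_v^2 r}, \abs{\rd_u^2 r} \aleq \eps$ from \eqref{eq:main:apriori}; the paper's case analysis uses only the zeroth-order geometric bounds \eqref{eq:main:geo}.

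One point deserves a sharper justification. Your sentence ``$\dot u + \dot v$ equals twice the regular time component of $\dot\gmm$'' is suggestive but not literally correct in the curved setting, since $u+v$ need not be a smooth function on $\calM$ near $\Gmm$. A cleaner way to see continuity of $G+H$ at an axis crossing is the algebraic identity
\[
	G+H \;=\; \frac{2}{1-\mu}\Big[(\dvr - \dur)\,E \;-\; (\dvr + \dur)\,\dot r\Big],
\]
obtained by solving $E = \dvr\dot v - \dur\dot u$, $\dot r = \dvr\dot v + \dur\dot u$ for $\dot u, \dot v$ and substituting. Here $E$ is continuous because it is defined via the smooth vector field $T$ on $\calM$; $\dvr, \dur, \mu$ are continuous along $\gmm$ since $u(s), v(s)$ are; $\dvr + \dur \to 0$ at the axis by \eqref{eq:bc4high-d}; and $\abs{\dot r} \leq E$ is bounded near the crossing. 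Hence $(\dvr+\dur)\dot r \to 0$ and $G+H$ has the same limit from both sides. With this in place your Lipschitz-plus-continuity argument goes through exactly as you wrote.
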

\begin{proof}
 Assume for the sake of contradiction that $\bfJ=0$ (and by Lemma \ref{C.J.not.0}, we can assume without loss of generality that $\bfC\neq 0$). We consider the following cases (Notice that by Lemma \ref{r.0.discrete}, they exhaust all possibilities):

 \emph{Case 1. There exists a sequence $\{s_k\}$ with $s_k\to s_f$ such that $\dot{r}(s_k)=0$ and $r(s_k)>0$.} This condition, $\bfJ=0$ and \eqref{eq:dudv-dnull:al} together imply that $\dot{u}(s_k)$ and $\dot{v}(s_k)$ are uniformly bounded. This contradicts Lemma \ref{lem:cont-crit-e:al}.
 
\emph{Case 2. There exists $s_0\in [0,s_f)$ such that\footnote{Notice that if there exists a sequence $\{s_k'\}$ with $s_k'\to s_f$ such that $r(s_k')=0$, then by Lemma \ref{r.0.discrete} we are necessarily in Case $1$.} $\dot r(s)>0$ and $r(s)>0$, $\forall s\in [s_0,s_f)$.}  Without loss of generality, we may assume $r(s_0)>0$. Then $r(s)\geq r(s_0)$ on $[s_0, s_f)$ which contradicts Lemma \ref{lem:leave:al}.
 
\emph{Case 3. There exists $s_0\in [0,s_f)$ such that $\dot r(s)<0$ and $r(s)>0$, $\forall s\in [s_0,s_f)$.}  By definition and the bounds on $\dur$, $\dvr$, we have $\dot v\leq 4 \dot u$. Combine this with the uniform bound on $\dot u\dot v$ (which follows from \eqref{eq:dudv-dnull:al}). We conclude that $\dot v$ is uniformly bounded. In particular $v$ 
 is uniformly bounded. Since $u\leq v$, the geodesic $\gamma(s)$ lies in a compact set, which contradicts Lemma \ref{lem:cont-crit}. 
\end{proof}

It now remains to rule out the possibility of $\bfJ\neq 0$. It is convenient to note that in this case, we have $r(s)>0$ for $s\in [0,s_f)$. As a first step, we observe that if $\bfJ\neq 0$, then Lemma \ref{lem:geod-eq-v:al} implies that $\ddot u$, $\ddot v$ and $\ddot r$ have a particular sign if $\dot u$, $\dot v$ have size $r^{-1}$ and $r$ is sufficiently small.
\begin{lemma}\label{lem:uvr:sign}
There exists $r_0>0$ such that if $\bfJ\neq 0$ and at some time $s_0$
\[
\frac{1}{100}r^{-1}(s_0)\Omg^{-1}(s_0)\bfJ\leq \dot u(s_0), \quad\dot v(s_0)\leq 100 r^{-1}(s_0)\Omg^{-1}(s_0)\bfJ, \quad r(s_0)<r_0,
\]
then
\begin{align*}
&\ddot v(s_0)>0,\quad \ddot u(s_0)<0,\quad \ddot r(s_0)>0.
\end{align*}
\end{lemma}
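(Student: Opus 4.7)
The plan is to directly apply the geodesic equations from Lemma~\ref{lem:geod-eq-v:al} and verify that the ``centrifugal'' terms carrying the factor $r^{-3} \bfJ^{2}$ dominate all other contributions once $r$ is sufficiently small. Under the stated hypothesis, all the kinetic quantities are comparable to $r^{-1} \Omg^{-1} \bfJ$; indeed, combining the one-sided bounds with the conservation law \eqref{eq:dudv-dnull:al} forces $\dot u, \dot v \aeq r^{-1} \Omg^{-1} \bfJ$. In particular $\dot u \dot v \aeq r^{-2} \Omg^{-2} \bfJ^{2}$ and $\dot u^{2}, \dot v^{2} \les r^{-2} \Omg^{-2} \bfJ^{2}$, so every nonlinear term in the geodesic equations scales at worst like $r^{-2} \bfJ^{2}$, whereas the centrifugal contributions scale like $r^{-3} \bfJ^{2}$.

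Before invoking the equations, I collect the following uniform bounds. From $\Omg^{2} = - 4 \dur \dvr / (1-\mu)$ and the geometric bounds in \eqref{eq:main:geo}, $\Omg^{\pm 2}$ is bounded above and below by positive constants. The Raychaudhuri equation \eqref{eq:raych4v} combined with \eqref{eq:wave4dvr} and the estimates \eqref{eq:main:apriori} give $|\rd_v \Omg^{2}|+|\rd_u \Omg^{2}| \les 1$, and $|\rd_v \phi|+|\rd_u \phi|\les \eps$. Finally, from \eqref{eq:wave4dvr} and Lemma~\ref{lem:Est4shift}, $|\rd_u \dvr| = |\dvr \cdot \tfrac{2 m \dur}{(1-\mu) r^{2}}| \les \eps^{2} r_{+}^{-1-\dc}$, which for $r \leq 1$ is $\les \eps^{2}$.

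For $\ddot v(s_{0})$: the term $- 2 r^{-3} \dur \Omg^{-2} \bfJ^{2}$ is positive (since $\dur < 0$) and bounded below by $c r^{-3} \bfJ^{2}$ for some $c > 0$, while $|\Omg^{-2} \rd_v \Omg^{2} \dot v^{2}| \les r^{-2}\bfJ^{2}$. Choosing $r_{0}$ small, the centrifugal term dominates and $\ddot v(s_{0}) > 0$. The argument for $\ddot u(s_{0}) < 0$ is completely analogous, the centrifugal term $-2 r^{-3} \dvr \Omg^{-2} \bfJ^{2}$ now being negative of size $\aeq r^{-3} \bfJ^{2}$.

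For $\ddot r(s_{0})$: using $-4 \dur \dvr = (1-\mu) \Omg^{2}$ and $1-\mu \geq \tfrac{1}{2}$, the centrifugal term simplifies to $-4r^{-3} \dur \dvr \Omg^{-2} \bfJ^{2} = r^{-3}(1-\mu) \bfJ^{2} \geq \tfrac{1}{2} r^{-3} \bfJ^{2}$. The two $\phi$-terms are bounded by $r \dot v^{2} (\rd_v \phi)^{2} + r \dot u^{2} (\rd_u \phi)^{2} \les \eps^{2} r^{-1} \bfJ^{2}$, and the cross-term is bounded by $|2 \rd_u \dvr \dot u \dot v| \les \eps^{2} r^{-2} \bfJ^{2}$. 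Taking $r_{0}$ small enough, the centrifugal term beats both and $\ddot r(s_{0}) > 0$. The only subtle point is the bookkeeping: one must verify that every negative term in the equation for $\ddot r$ carries either an additional factor of $r$ or a factor of $\eps^{2}$ relative to $r^{-3} \bfJ^{2}$, after which choosing $r_{0}$ small finishes the proof.
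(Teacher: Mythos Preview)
Your proof is correct and is precisely the argument the paper has in mind (its own proof is the one-liner ``use \eqref{eq:geod-eq-v} together with the bounds from Theorem~\ref{thm:main.finite}''): bound $|\rd_{v}\Omg^{2}|$, $|\rd_{u}\Omg^{2}|$, $|\rd_{u}\dvr|$, $|\rd_{v}\phi|$, $|\rd_{u}\phi|$ uniformly and let the $r^{-3}\bfJ^{2}$ centrifugal terms dominate the $O(r^{-2}\bfJ^{2})$ and $O(r^{-1}\bfJ^{2})$ remainders for $r$ small.

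One minor correction: the hypothesis is meant to be read as the two-sided chain $\tfrac{1}{100}r^{-1}\Omg^{-1}\bfJ \leq \dot u,\,\dot v \leq 100\,r^{-1}\Omg^{-1}\bfJ$, which directly gives $\dot u^{2},\dot v^{2},\dot u\dot v \aleq r^{-2}\Omg^{-2}\bfJ^{2}$. Your aside that the conservation law \eqref{eq:dudv-dnull:al} upgrades one-sided bounds to two-sided ones is not correct: from $\dot u \geq a$, $\dot v \leq b$, and $\dot u\dot v = C$ you only recover $\dot v \leq C/a$ and $\dot u \geq C/b$, i.e.\ bounds in the same direction. Without an upper bound on $\dot u$ your estimate for $\ddot u$ would fail, so the two-sided reading of the hypothesis is essential.
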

\begin{proof}
The lemma follows from the equations \eqref{eq:geod-eq-v} for $\ddot u$, $\ddot v$, $\ddot r$ together with the bounds on the geometry from Theorem \ref{thm:main.finite}.
\end{proof}

Given Lemma \ref{lem:uvr:sign}, one sees that an incomplete geodesic with $\bfJ\neq 0$ cannot stay inside a small cylinder around the axis.
\begin{lemma} \label{lem:exit}
Assume $\gamma(s): [0, s_f)\mapsto \calM$ is incomplete and $\bfJ\neq 0$. Then there exists $r_0>0$ such that for every $s_0\in [0,s_f)$ the geodesic $\gamma(s)$ exits the cylinder with radius $r_0$ at some time to the future of $s_0$, that is, there exists $s_1\in (s_0,s_f)$ such that 
$r(s_1)> r_0$.
\end{lemma}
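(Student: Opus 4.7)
My plan is to argue by contradiction. Assume there is some $s_0 \in [0, s_f)$ such that $r(s) \leq r_0$ for all $s \in [s_0, s_f)$, with $r_0 > 0$ chosen small (depending on $\bfC^2, \bfJ^2$ and the constants from Theorem~\ref{thm:main.finite}). The key preliminary identity
\[
E^2 - \dot r^2 = -4\dvr\dur\,\dot u\dot v = (1-\mu)(\bfC^2 + r^{-2}\bfJ^2) \geq \tfrac{1}{2}r^{-2}\bfJ^2
\]
follows from \eqref{eq:dudv-dnull:al} together with $\Omg^2 = -4\dvr\dur/(1-\mu)$. I then plan to apply Lemma~\ref{lem:uvr:sign} to rule out local maxima of $r$ in $\{r < r_0\}$: at a critical point $s^*$ with $r(s^*) < r_0$, the conditions $\dot r(s^*) = 0$ and $\dot u \dot v = \Omg^{-2}(\bfC^2 + r^{-2}\bfJ^2)$ give $\dvr\dot v(s^*) = |\dur|\dot u(s^*)$ and hence $\dot u(s^*), \dot v(s^*) \sim r(s^*)^{-1}\Omg^{-1}\bfJ$, so Lemma~\ref{lem:uvr:sign} yields $\ddot r(s^*) > 0$. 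Thus every critical point of $r$ in $[s_0, s_f)$ is a local minimum, so there can be at most one (otherwise a local maximum would arise in between), and the total variation satisfies $\int_{s_0}^{s_f}|\dot r|\, ds \leq 2 r_0$.

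The heart of the argument will then be to show $\int_{s_0}^{s_f} rE\, ds < \infty$, despite $E \to \infty$ at $s_f$ by Lemma~\ref{lem:cont-crit-e:al}. I fix $c = 1/2$ and partition $[s_0, s_f) = \calG \cup \calB$ with
\[
\calG := \{s \in [s_0, s_f) : (1-\mu)(\bfC^2 + r^{-2}\bfJ^2) \geq (1-c^2) E^2\}.
\]
On $\calG$ the key identity gives $r^2 E^2 \leq (r^2\bfC^2 + \bfJ^2)/(1-c^2) \leq K$ for some $K = K(r_0, \bfC, \bfJ)$, hence $rE \leq \sqrt{K}$. On $\calB$, $\dot r^2 > c^2 E^2$, so $E \leq |\dot r|/c$. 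Combined with $r \leq r_0$ and the total variation bound from above,
\[
\int_{s_0}^{s_f} rE\, ds \leq \sqrt{K}(s_f - s_0) + (r_0/c)\int_{\calB}|\dot r|\, ds \leq \sqrt{K}(s_f - s_0) + 2r_0^2/c < \infty.
\]

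Finally, from Lemma~\ref{lem:geod-eq-v:al} and the a priori bounds $|\rd_v\phi|, |\rd_u\phi| \aleq \epsilon$ (consequences of \eqref{eq:btstrp:pf:dvphi} and \eqref{eq:est4duphi}), I get $|\dot E| \leq C\epsilon^2 r(\dot u^2 + \dot v^2) \leq 36 C\epsilon^2\, rE^2$, i.e., $|(\log E)'| \leq 36 C\epsilon^2\, rE$. Integrating and invoking the bound on $\int rE\, ds$ from the previous step shows that $\log E$ is bounded on $[s_0, s_f)$, hence so is $E$ — contradicting Lemma~\ref{lem:cont-crit-e:al}. The main technical obstacle will be the quantitative check in the first paragraph, namely verifying that at any critical point $s^*$ with $r(s^*) < r_0$ the quantities $\dot u(s^*), \dot v(s^*)$ actually fall within the window required by Lemma~\ref{lem:uvr:sign} (in particular the upper bound $\dot v \leq 100 r^{-1}\Omg^{-1}\bfJ$), and selecting $r_0$ small enough so that all the implicit constants line up; the remainder is then essentially an exercise in combining the a priori estimates from Theorem~\ref{thm:main.finite} with the key identity.
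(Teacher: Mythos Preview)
Your argument is correct, and it follows a genuinely different route from the paper's. The paper argues more qualitatively: first it shows $\dot r\le 0$ on $[s_0,s_f)$ (ruling out any turning from $+$ to $-$ via Lemma~\ref{lem:uvr:sign} together with Lemma~\ref{lem:leave:al}), then via a continuity argument it pins down $\tfrac{1}{10}\dot u\le \dot v\le 10\dot u$ for all late $s$, and finally uses $\ddot u<0$ from Lemma~\ref{lem:uvr:sign} to conclude that $\dot u$ (hence $\dot v$) stays bounded, forcing the geodesic into a compact set and contradicting Lemma~\ref{lem:cont-crit}.

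Your approach instead exploits the algebraic identity $E^2-\dot r^2=(1-\mu)(\bfC^2+r^{-2}\bfJ^2)$ together with the sharpened pointwise bound $|\dot E|\aleq \eps^2 r E^2$ (retaining the factor of $r$, which the paper's Lemma~\ref{lem:cont-crit-e:al} discards). You only need the weaker structural fact that $r$ has at most one critical point in $[s_0,s_f)$, which still yields the total-variation bound $\int|\dot r|\,ds\le 2r_0$; combined with the good/bad decomposition this gives $\int rE\,ds<\infty$ and hence $\log E$ bounded, contradicting Lemma~\ref{lem:cont-crit-e:al}. This is a clean Gr\"onwall-style contradiction that avoids the somewhat delicate continuity argument in the paper's Steps~2--3. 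The one point worth recording explicitly is that your $r_0$ may depend on $\bfC^2/\bfJ^2$ (to ensure $r^2\bfC^2\le \bfJ^2$ when verifying the hypotheses of Lemma~\ref{lem:uvr:sign} at a critical point); the paper's proof has the same implicit dependence, and since the lemma is applied to a fixed incomplete geodesic this is harmless.
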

\begin{proof}
Since $\bfJ\neq 0$, the geodesic does not intersect the axis and in particular we can use the $(u,v,\theta,\varphi)$ coordinate system. 
Take $r_0$ be the constant in Lemma \ref{lem:uvr:sign}. We prove this lemma 
by a contradiction argument. Assume the geodesic $\gamma(s)$ lies in the cylinder with radius $r_0$ 
for all $s\in [s_0, s_f)$. 

\pfstep{Step~1}  We claim that
\begin{equation} 
 \label{eq:dotr:neg}
 \dot r(s)\leq 0,\quad \forall  s\in [s_0, s_f).
 \end{equation}
 Otherwise there exists $s'\in [s_0,s_f)$ such that $\dot r(s')>0$. Then by Lemma \ref{lem:leave:al}, there exists $s''>s'$ such that $\dot r(s'')<0$. Take
\[
s^*=\sup \{s: s' \leq s\leq s'',\quad \dot r(s)\geq 0\}.
\]
Then $\dot r(s^*)=0$, $\dot r(s)<0$ when $s^*<s\leq s''$. Recall that $\dot r(s^*)=\dur \dot u(s^*)+\dvr\dot v(s^*)$. Then the identity \eqref{eq:dudv-dnull:al} implies that $\dot u(s^*)$, $\dot v(s^*)$ satisfy
the conditions in Lemma \ref{lem:uvr:sign}. In particular, $\ddot r(s^*)>0$ which contradicts $\dot r(s)<0$ when $s^*<s\leq s''$. Hence \eqref{eq:dotr:neg} holds.

\pfstep{Step~2} 
Next we prove that there exists $t_0\in (s_0, s_f)$ such that
\begin{equation}\label{eq:dotu:dotv:s0}
\frac{1}{10}\dot u(t_0) \leq \dot v(t_0).
\end{equation}

Otherwise $\frac{1}{10}\dot u>\dot v$ for all $s \in (s_0,s_f)$ which implies that
\[
\dot r(s)=(\dvr \dot v+\dur \dot u)(s)<-\frac{1}{10}\dot u(s).
\]
Integrating from time $s_0$ to $s$, we derive that $u$ is uniformly bounded. From the relation $\dot v< \frac{1}{10}\dot u$, we derive that $v$ is also uniformly bounded. It then violates Lemma \ref{lem:cont-crit}.

\pfstep{Step~3}  Given $t_0$ satisfying \eqref{eq:dotu:dotv:s0}, we now show that
\begin{equation}\label{eq:dotu:dotv}
\frac{1}{10}\dot u(s)\leq \dot v(s),\quad \forall s\in[t_0, s_f).
\end{equation}
Define:
\[
s^*=\sup\{ s : \frac{1}{10}\dot u(s')\leq \dot v(s'), \quad \forall s'\in [t_0, s] \}.
\]
If $s^*=s_f$, then \eqref{eq:dotu:dotv} holds. Otherwise by continuity, $\frac{1}{10}\dot u(s^*)\leq \dot v(s^*)$. Since $\dot r(s) \leq 0$ (by Step 1), we have $\dot v(s^*)\leq 10 \dot u(s^*)$. Then 
from the identity \eqref{eq:dudv-dnull:al}, we see that $\dot u(s^*)$, $\dot v(s^*)$ satisfy the conditions in Lemma \ref{lem:uvr:sign}. In particular, $\ddot v(s^*)>0$, $\ddot u(s^*)<0$. Therefore there exists $t_1 >s^*$ such that
\[
\frac{1}{10}\dot u(s)\leq \frac{1}{10}\dot u(s^*)\leq \dot v(s^*)\leq \dot v(s),\quad \forall s\in[s^*, t_1].
\]
This contradicts the definition of $s^*$. Hence the inequality \eqref{eq:dotu:dotv} holds. 

\pfstep{Step~4} The argument above using Lemma \ref{lem:uvr:sign} also implies that $\ddot u(s)<0$, $\dot v(s)\leq 10 \dot u(s)$ for all $s\in [t_0, s_f)$. In particular both $\dot u$ and
 $\dot v$ are uniformly bounded. Hence the geodesic $\gamma(s)$ lies in a compact set. This contradicts Lemma \ref{lem:cont-crit} and thus concludes the proof of the lemma.
\end{proof}

Lemmas \ref{lem:leave:al} and \ref{lem:exit} together show that as $s\to s_f$, the geodesic must enter and leave the cylinder $\{r=r_0\}$ infinitely many times. However, this will be prohibited by the following lemma:
\begin{lemma} \label{cor:leave:al}
Assume $\gamma(s): [0, s_f)\mapsto \calM$ is incomplete and $r(s)>0$ for all $s\in [0,s_f)$. Suppose there exists a sequence $\{s_n\}$ with $s_n\rightarrow s_f$ such that $\dot r(s_n)=0$. Then $\lim\limits_{n\rightarrow \infty}(\dot u\dot v)(s_n)=\infty$.
\end{lemma}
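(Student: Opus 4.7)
The key observation is that the quantity $E(s)$ defined in \eqref{E.uv} factors very nicely at points where $\dot r(s) = 0$, allowing us to extract $\dot u \dot v$ directly from it. First I would note that, under the hypothesis of incompleteness, Lemma~\ref{C.J.not.0} forces either $\bfC \neq 0$ or $\bfJ \neq 0$, so that Lemma~\ref{lem:cont-crit-e:al} applies and yields the quantitative lower bound $E(s) \geq C/(s_{f}-s)$ for all $s \in [0, s_{f})$. In particular, $E(s_{n}) \to \infty$ as $s_{n} \to s_{f}$.

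Next I would exploit the condition $\dot r(s_{n}) = 0$. Since $\dot r = \dvr \dot v + \dur \dot u$ and $\gmm$ is future pointing causal (so $\dot u, \dot v \geq 0$), while the bootstrapped geometric bounds \eqref{eq:main:geo} give $\dvr > 0$ and $\dur < 0$, the vanishing of $\dot r(s_{n})$ forces $\dvr \dot v(s_{n}) = -\dur \dot u(s_{n})$, with both sides non-negative. Substituting into the definition \eqref{E.uv}, we obtain
\[
	E(s_{n}) = \dvr \dot v(s_{n}) - \dur \dot u(s_{n}) = 2 \dvr \dot v(s_{n}) = -2 \dur \dot u(s_{n}),
\]
and multiplying these two expressions yields the key identity
\[
	\dot u(s_{n}) \dot v(s_{n}) = \frac{E(s_{n})^{2}}{-4 \dvr \dur (s_{n})}.
\]

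Finally, the uniform geometric bounds \eqref{eq:main:geo} give $\dvr \in (\tfrac{1}{3}, \tfrac{1}{2}]$ and $-\dur \in (\tfrac{1}{6}, \tfrac{2}{3})$, so $-\dvr \dur$ is bounded above by an absolute constant. Combining with $E(s_{n}) \to \infty$ gives $\dot u(s_{n}) \dot v(s_{n}) \gtrsim E(s_{n})^{2} \to \infty$, as desired. There is no genuine obstacle here: the entire argument is essentially algebraic once one recognises that at a critical point of $r$ along the geodesic, $E$ coincides (up to a factor bounded above and below) with $\sqrt{\dot u \dot v}$, and all the analytic work has already been done in Lemma~\ref{lem:cont-crit-e:al}.
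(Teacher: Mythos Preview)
Your proof is correct and follows essentially the same approach as the paper: both use $\dot r(s_n)=0$ together with the geometric bounds on $\dvr,\dur$ to conclude that $E(s_n)^2$ is comparable to $(\dot u\dot v)(s_n)$, and then invoke Lemma~\ref{lem:cont-crit-e:al} to force blow-up. Your version is slightly more explicit, writing out the identity $\dot u\dot v = E^2/(-4\dvr\dur)$ and carefully noting that Lemma~\ref{C.J.not.0} guarantees the hypothesis of Lemma~\ref{lem:cont-crit-e:al}, whereas the paper simply observes $E(s_n)\lesssim \sqrt{\dot u\dot v}(s_n)$ and cites Lemma~\ref{lem:cont-crit-e:al} directly.
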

\begin{proof}
By \eqref{rd.uv}, $\dot{r}(s_n)=0$ implies that $C^{-1} \dot{u}(s_{n}) \leq \dot{v}(s_{n}) \leq C\dot{u}(s_{n})$ for some constant $C>0$ independent of $n$. This therefore implies (by \eqref{E.uv}) $E(s_n) \aleq \sqrt{(\dot u\dot v)}(s_n)$. The conclusion follows from Lemma \ref{lem:cont-crit-e:al}.
\end{proof}



\begin{proof}[Proof of future geodesic incompleteness]
We are now ready to obtain a contradiction by assuming that $\gamma(s)$ is an incomplete future pointing causal geodesic. By Lemma \ref{lem:J:al}, we can assume $\bfJ\neq 0$. Take $r_0$ sufficiently small so that Lemma 
\ref{lem:exit} holds. Consider the cylinder $\{r=r_0\}$. Lemma \ref{lem:leave:al} together with Lemma \ref{lem:exit} imply that $\gamma(s)$ intersects $\{r=r_0\}$ infinitely many times. 
In particular we can find a sequence $s_n\rightarrow s_f$ such that $\dot r(s_n)=0$ but $r(s_n)\geq r_0$. Therefore the identity \eqref{eq:dudv-dnull:al} shows that $\dot u(s_n)\dot v(s_n)$ are uniformly bounded which contradicts Lemma~\ref{cor:leave:al}.
\end{proof}

This concludes the proof of the future causal geodesic completeness in Theorems~\ref{thm:main.finite} and \ref{thm:main}. The remaining past causal geodesic completeness statement in Theorems~\ref{thm:main} can be proved in a completely identical manner after reversing the time-orientation and noticing that we have very similar bounds for the scalar field, the metric components and their derivatives.

\appendix
\section{Nonlinear wave equation with seventh-order nonlinearity}\label{appendix}
In this appendix, we consider the equation
\begin{equation} \label{eq:nlw} \tag{NLW}
	\Box_{\bbR^{1+3}} \phi = \pm \phi^{7}.
\end{equation}
The nonlinearity is said to be \emph{defocusing} if the sign on the right-hand side is plus, and \emph{focusing} if the sign is minus. The critical Sobolev space is $\dot{H}^{\frac{7}{6}} \times \dot{H}^{\frac{1}{6}}$, and hence \eqref{eq:NLW} is called energy supercritical.

Our aim in this appendix is to apply the techniques developed in this paper to construct a solution with infinite critical Sobolev norm that exists globally in the future and the past. A more precise statement is as follows.
\begin{theorem} \label{thm:nlw}
Consider either the focusing or the defocusing \eqref{eq:nlw}.
There exists a smooth solution $\phi$ to the equation \eqref{eq:nlw} (in both the focusing and defocusing cases) which exists globally on $\bbR^{1+3}$ and has infinite $\dot{H}^{\frac{7}{6}} \times \dot{H}^{\frac{1}{6}}$ norm on each constant $t$ hypersurface, i.e.,
\begin{equation} \label{eq:nlw-infinite-Hs}
	\nrm{(\phi, \rd_{t} \phi) (t, x)}_{\dot{H}^{\frac{7}{6}}_{x} \times \dot{H}^{\frac{1}{6}}_{x}} = \infty \quad \hbox{ for every } t \in \bbR.
\end{equation}
Moreover, the space-time $L^{12}$ norm, which is a scale invariant Strichartz norm, is finite on every bounded time interval, i.e.,
\begin{equation} \label{eq:eq:nlw-finite-Str}
	\nrm{\phi}_{L^{12}([-T, T] \times \bbR^{3})} < \infty \quad \hbox{ for all } T > 0.
\end{equation}
However, the space-time $L^{12}$ norm  is infinite towards the future and the past, i.e.,
\begin{equation} \label{eq:nlw-infinite-Str}
	\nrm{\phi}_{L^{12}([0, \infty) \times \bbR^{3})} = \infty, \quad
	\nrm{\phi}_{L^{12}((-\infty, 0] \times \bbR^{3})} = \infty.
\end{equation}
\end{theorem}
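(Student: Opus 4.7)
The approach is to adapt the constructions of Theorems~\ref{thm:main.finite} and \ref{thm:main} to the semilinear setting. In spherical symmetry and double null coordinates $u = t - r$, $v = t + r$ on Minkowski space, \eqref{eq:nlw} for $\psi = r \phi$ reduces to
\[
\rd_{u} \rd_{v} \psi = \mp \frac{\psi^{7}}{4 r^{6}}, \qquad r = \tfrac{1}{2}(v - u),
\]
with $\psi = 0$ on the axis $\Gmm = \{u = v\}$. Since $|\psi| \aleq r$ near $\Gmm$, the nonlinearity $\psi^{7}/r^{6} = r \phi^{7}$ is smooth up to the axis, analogous to the cancellation exploited in \eqref{eq:SSESF}. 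I would prescribe characteristic data $2 \rd_{v} \psi(u_{0}, v) = \Phi(v)$ on an outgoing cone $C_{u_{0}}$ with $\Phi$ obeying \eqref{eq:IDcond} for some $\gmm \in (1/3, 1)$.

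Forward-in-time global existence is then established by a bootstrap parallel to Sections~\ref{subsec:main.finite:btstrp}--\ref{subsec:main.finite:btstrp-close} for
\[
|\phi| \aleq \eps \min\{1, r^{-\gmm}\}, \quad |\rd_{v} \psi| \aleq |\Phi| + \eps r_{+}^{-\gmm}, \quad |\rd_{u} \psi| \aleq \eps.
\]
Combining the integral formula $\rd_{v}\psi(u, v) = \tfrac{1}{2}\Phi(v) \mp \int_{u_{0}}^{u} \psi^{7}/(4 r^{6})(u', v)\, \ud u'$ with the pointwise bound $|\psi^{7}/r^{6}| = r|\phi|^{7} \aleq \eps^{7} \min\{r, r^{1 - 7\gmm}\}$ and the change of variables $\ud u = -2\, \ud r$ yields $|\rd_{v}\psi - \Phi/2| \aleq \eps^{7} r_{+}^{-\gmm}$ precisely when $\gmm > 1/3$, so that $\eps^{6} r^{2 - 6\gmm}$ stays bounded for $r \geq 1$. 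Bounds for $\psi, \phi$ follow by integration from the axis (cf.\ Proposition~\ref{prop:Est4phi}) and higher-derivative bounds uniform in $u_{0}$ by commutation (cf.\ Proposition~\ref{prop:high-d}) under the extra assumption $|\Phi''| \aleq 1$. Passing $u_{0} \to -\infty$ via Arzela--Ascoli as in Section~\ref{sec.proof.inf} then produces a smooth spherically symmetric global solution on $\bbR^{1+3}$; geodesic completeness is automatic in Minkowski.

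For the norm statements I would choose the symmetrized sum-of-bumps profile $\Phi(v) = \eps'\sum_{k \geq 3}\bigl(\chi(v - 2^{k}) + \chi(-v - 2^{k})\bigr)$ with $\chi$ as in Corollary~\ref{cor.infinite.BV.mass}; a computation analogous to Step~1 there shows \eqref{eq:IDcond} holds for every $\gmm \in (0, 1)$ once $\eps'$ is small. By the bootstrap and finite speed of propagation, the resulting solution at each fixed $t$ contains, for every sufficiently large $k$, a bump in $\phi$ of amplitude $\sim \eps'/R_{k}$ and width $O(1)$ localized in the spherical annulus of radius $R_{k}(t) \sim 2^{k} - t$, up to a pointwise correction strictly smaller for $\eps$ small. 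A direct radial Fourier computation using the sine-transform formula gives $\|(\eps'/R)\chi(|x| - R)\|_{\dot H^{s}}^{2} \sim (\eps')^{2}$ for $R \gg 1$ and $s > -1/2$, so summing over disjoint annuli yields $\|\phi(t, \cdot)\|_{\dot H^{7/6}}^{2} \gtrsim \sum_{k}(\eps')^{2} = \infty$, which is \eqref{eq:nlw-infinite-Hs}; an analogous argument bounds $\|\rd_{t}\phi\|_{\dot H^{1/6}}$ from below. The finite-time bound \eqref{eq:eq:nlw-finite-Str} is immediate from $|\phi| \aleq \eps\min\{1, r^{-\gmm}\}$ since $\int_{0}^{\infty} \min\{1, r^{-12\gmm}\} r^{2}\, \ud r < \infty$ whenever $12 \gmm > 3$; conversely, the $k$-th bump contributes $\sim (\eps')^{12}$ to $\int_{0}^{\infty} \|\phi(t, \cdot)\|_{L^{12}}^{12}\, \ud t$ (by integrating $(\eps'/R_{k}(t))^{12} R_{k}(t)^{2}$ in $t$ along its outgoing cone), and summation yields \eqref{eq:nlw-infinite-Str}.

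The main technical obstacle is ensuring that the divergence of $\|\phi(t)\|_{\dot H^{7/6}}$ survives the nonlinear evolution, and not just the initial data. This is handled by the near-linear nature of the flow at large $r$ forced by the bootstrap: the nonlinear correction to each bump is pointwise strictly smaller than its linear amplitude, so on disjoint annuli the divergence persists at every $t$ by the triangle inequality. A clean quantitative version uses a Littlewood--Paley decomposition: the frequency-$1$ piece satisfies $\|P_{0}\phi(t, \cdot)\|_{L^{2}}^{2} \gtrsim \sum_{k}(\eps')^{2} = \infty$ uniformly in $t$, forcing $\|\phi(t, \cdot)\|_{\dot H^{7/6}}^{2} \gtrsim \|P_{0}\phi(t, \cdot)\|_{L^{2}}^{2} = \infty$.
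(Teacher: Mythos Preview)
Your existence argument (bootstrap, pass to $u_0 \to -\infty$ via Arzela--Ascoli) matches the paper's Propositions~\ref{prop:nlw-main.finite} and \ref{prop:nlw-main} up to normalization, and your finite-$L^{12}$ bound \eqref{eq:eq:nlw-finite-Str} from the pointwise decay is the same. The substantive differences are in the infinite-norm arguments, and there your proposal has a real gap.

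The step ``summing over disjoint annuli yields $\|\phi(t)\|_{\dot H^{7/6}}^{2} \gtrsim \sum_{k}(\eps')^{2}$'' does not follow from disjoint supports: fractional Sobolev norms are nonlocal, so $\|\sum_{k} g_{k}\|_{\dot H^{s}}^{2} = \sum_{k}\|g_{k}\|_{\dot H^{s}}^{2} + \sum_{k\neq l}\langle g_{k}, g_{l}\rangle_{\dot H^{s}}$ and the cross terms can be negative. Your Littlewood--Paley remark points in the right direction but still leaves three things unproved: (i) a uniform lower bound $\|P_{0} g_{k}\|_{L^{2}} \gtrsim \eps'$, which is not automatic because the radial Fourier transform of a bump at radius $R$ carries an oscillatory phase $e^{iR|\xi|}$; (ii) summability of the cross terms $\langle P_{0} g_{k}, P_{0} g_{l}\rangle$; and (iii) that the nonlinear correction does not spoil the divergence. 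The paper handles the analogue of (i)--(ii) by a physical-space localization estimate for $|\nabla|^{s}$ (Lemma~\ref{lem:frac-lap-loc}) which shows the off-diagonal terms are summable, and handles (iii) by proving the inhomogeneous part has \emph{finite} norm: it decomposes $\phi = \phi^{hom} + \phi^{inhom}$ and shows $\|\rd_{v}\phi^{inhom}\|_{H^{1}(\Sigma_{0})} < \infty$ (Lemma~\ref{lem:nlw:phi-inhom}), using the improved decay $|\rd_{v}(r\phi^{inhom})| \aleq \eps^{7} r_{+}^{-1/3-7\gamma}$. Note that the paper works with $L\phi = \rd_{v}\phi$ rather than $\phi$ itself; this matters, because $|\phi^{inhom}| \aleq \eps^{7} r_{+}^{-1}$ is not in $L^{2}(\bbR^{3})$ while $\rd_{v}\phi^{inhom}$ decays fast enough. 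Your ``pointwise strictly smaller'' comparison at the bump locations does not substitute for this, since a Sobolev norm is not determined by values on the annuli alone.

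The paper also handles ``for every $t$'' differently and more robustly: it proves \eqref{eq:nlw-infinite-Hs} only at $t=0$, then propagates by contradiction using \eqref{eq:eq:nlw-finite-Str} and standard local well-posedness (if the critical norm were finite at some $t$, it would be finite at $t=0$). This sidesteps tracking bump geometry through the evolution, which in your approach becomes awkward near the times when a bump crosses the axis. For \eqref{eq:nlw-infinite-Str} your along-the-cone integration does work (the contribution is dominated by $R_{k} \sim 1$), but the paper's argument is simpler: at the axis $\phi^{hom}(t,0) = \Phi(t/2)$, so $\phi \gtrsim \eps$ on unit spacetime boxes around an infinite sequence $t_{n} \to \pm\infty$.
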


\begin{remark}
More precisely, by \eqref{eq:nlw-infinite-Hs}, we mean that $(\phi, \rd_{t} \phi)(t, x)$ does not belong to the space $\dot{H}^{\frac{7}{6}}_{x} \times \dot{H}^{\frac{1}{6}}_{x}$, which in turn is defined to as the completion of $\calS \times \calS$ under the $\dot{H}^{\frac{7}{6}}_{x} \times \dot{H}^{\frac{1}{6}}_{x}$ norm defined in \eqref{eq:frac-Hs}.
\end{remark}

\begin{remark} [Comparison with Krieger-Schlag \cite{KrSc} and Beceanu-Soffer \cite{BeSo}]\label{rem:nlw}
A brief comparison of Theorem~\ref{thm:nlw} with the previous works \cite{KrSc} and \cite{BeSo} is in order.
Both papers, among other results, established the forward-in-time global existence of solutions to \eqref{eq:NLW} arising from a class of initial data with infinite $\dot{H}^{\f76}\times \dot{H}^{\f 16}$ norms. We emphasize that of course the solutions we construct are in a different regime as that of \cite{KrSc} and \cite{BeSo}. On the one hand, unlike in \cite{KrSc}, our solutions are not close to any self-similar solutions. On the other hand, as opposed to the scattering solutions in \cite{BeSo}, our solutions do not scatter; see \eqref{eq:nlw-infinite-Str}.
Moreover, we do not obtain the statement in \cite{KrSc} and \cite{BeSo} that there are some subclass of solutions with large $L^\infty$ norms (in the defocusing case for \cite{KrSc}). Finally, we mention that the solutions that we construct are manifestly stable in some sufficiently regular topology in spherical symmetry but we do not obtain stability in $\dot{H}^{\f76}\times \dot{H}^{\f 16}$ as in \cite{KrSc} and \cite{BeSo}.
\end{remark}

In Section~\ref{subsec:nlw-apriori}, we first prove analogues of Theorems~\ref{thm:main.finite} and \ref{thm:main} for \eqref{eq:NLW}, which are the main tools for our proof of Theorem~\ref{thm:nlw}. Then in Sections~\ref{subsec:nlw-id} and \ref{subsec:nlw-pf}, we construct an initial data set $\Phi$ at the past null infinity and show that it gives rise to a global solution with properties claimed in Theorem~\ref{thm:nlw}.

Due to the simplicity of the nonlinearity, the proof of the existence theorems in this case is considerably simpler than for the Einstein-scalar-field system (see the proof of Proposition \ref{prop:nlw-main.finite}). Most of the work in this appendix in fact goes into verifying that the critical norms are infinite, i.e., \eqref{eq:nlw-infinite-Hs} and \eqref{eq:nlw-infinite-Str}.

\subsection{Main existence statements} \label{subsec:nlw-apriori}
For the sake of concreteness, we fix the sign in \eqref{eq:nlw} to be $-$; it will however be clear that our argument does not depend on this sign.

As in the case of \eqref{eq:SSESF}, we work with spherically symmetric solutions to \eqref{eq:nlw}. Using the double null coordinates $(u, v)$ defined\footnote{We note that the null variables $u$ and $v$ are normalized in slightly differently from Theorem~\ref{thm:main.finite} to simplify the constants in the expressions.} by the formula $(t, r) = (v+u, v-u)$,
the equation reduces to
\begin{equation} \label{eq:nlw:null}
	\rd_{u} \rd_{v} (r \phi) = r \phi^{7}.
\end{equation}
As before, $\phi$ can be recovered from $\rd_{v}(r \phi)$ by the averaging formula
\begin{equation} \label{eq:nlw:avg:phi}
	\phi(u, v) = \frac{1}{r} \int_{u}^{v} \rd_{v}(r \phi) (u, v') \, \ud v'.
\end{equation}
We remark that $(u, v)$ was chosen so that $\rd_{v} r = - \rd_{u} r = 1$.

For $k = 0, 1, \ldots$, we say that $\phi$ is a \emph{(spherically symmetric) $C^{k}$ solution to \eqref{eq:nlw} on $\PD_{[u_{0}, u_{1}]}$} if it obeys \eqref{eq:nlw:null} and $\phi, \rd_{v}(r \phi)$ and $\rd_{u}(r \phi)$ are $C^{k}$ on $\PD_{[u_{0}, u_{1}]}$. If these conditions hold for every $u_{1}$ which is larger than $u_{0}$, we say that $\phi$ is a \emph{global $C^{k}$ solution} on $\PD_{[u_{0}, \infty)}$.

Consider the characteristic initial value problem from an outgoing curve $C_{u_{0}}$, where we prescribe
\begin{equation*}
\rd_{v}(r \phi)(u_{0}, v) = \Phi(v).
\end{equation*}
By a routine iteration argument employing integration along characteristics, it follows that \eqref{eq:nlw} is locally well-posed for any $\Phi \in C^{k}$ with $k \geq 0$, i.e., there exists a unique $C^{k}$ solution to \eqref{eq:nlw} on $\PD_{[u_{0}, u_{1}]}$ with the given data, where $u_{1} > u_{0}$ depends only on the $C^{k}$ norm of $\Phi$.

The analogue of Theorem~\ref{thm:main.finite} for \eqref{eq:nlw} reads as follows.
\begin{proposition} \label{prop:nlw-main.finite}
Consider the characteristic initial value problem from an outgoing curve $C_{u_{0}}$ with data $\Phi$.
Suppose that the following condition holds for some $0 \leq \dc < \frac{2}{3}$:
\begin{equation} \label{eq:IDcond:nlw}
	\int_{u}^{v} \abs{\Phi(v')} \, \ud v' \leq \eps (v - u)^{\frac{2}{3} - \dc}, \quad \sup \abs{\Phi} \leq \eps.
\end{equation}
Then there exists $\eps_{1} > 0$ depending only on $\dc$ such that if $\eps \leq \eps_{1}$, then the data above give rise to a unique global $C^{0}$ solution $\phi$ to \eqref{eq:nlw} on $\PD_{[u_{0}, \infty)}$, which obeys the following bounds.
\begin{align}
\label{eq:nlw:est4dvrphi}
	\abs{\rd_{v} (r \phi)(u,v) - \Phi(v)} \aleq & \eps^{7} r_{+}^{-\frac{1}{3}-7 \dc}, \\
\label{eq:nlw:est4phi}
		\abs{\phi(u,v)} \aleq & \eps r_{+}^{-\frac{1}{3}-\dc} \\
\label{eq:nlw:est4durphi}
	\abs{\rd_{u} (r \phi) (u, v) } \aleq & \eps.
\end{align}

Suppose furthermore that $\Phi \in C^{1}$. Then for every $v_{0} > u_{0}$, we have
\begin{align}
\label{eq:nlw:est4dvdvrphi}
	\sup_{\calD(u_{0}, v_{0})} \abs{\rd_{v}^{2}(r \phi)} \aleq & \sup_{v \in [u_{0}, v_{0}]} \abs{\Phi'} + \eps^{7}, \\
\label{eq:nlw:est4dudurphi}
	\sup_{\calD(u_{0}, v_{0})} \abs{\rd_{u}^{2}(r \phi)} \aleq & \sup_{v \in [u_{0}, v_{0}]} \abs{\Phi'} + \eps^{7}.
\end{align}
\end{proposition}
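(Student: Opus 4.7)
The plan is to carry out a bootstrap argument that is structurally analogous to Section~\ref{sec.proof}, but vastly simplified by the absence of any geometric quantities: here $r(u,v)=v-u$ is fixed, so we only need to track $\phi$ and $\rd_{v}(r\phi)$. Given a $C^{0}$ solution on $\PD_{[u_{0},u_{1}]}$ produced by standard local well-posedness, I would work under the single bootstrap assumption
\[
\abs{\phi(u,v)} \leq 2 C_{0} \eps r_{+}^{-\frac{1}{3} - \dc}
\]
for a constant $C_{0}$ to be determined. Integrating the equation \eqref{eq:nlw:null} along $\uC_{v}$ gives the integral formula
\[
\rd_{v}(r\phi)(u,v) = \Phi(v) + \int_{u_{0}}^{u} (r \phi^{7})(u',v) \, \ud u',
\]
and integrating the averaging formula \eqref{eq:nlw:avg:phi} recovers $\phi$.

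To close the bootstrap, I would insert the assumption into the nonlinearity to obtain $\abs{r\phi^{7}} \lesssim \eps^{7} r_{+}^{1 - \frac{7}{3} - 7\dc}$. The analogue of Lemma~\ref{lem:est4noverr} (which here reduces to a direct change of variables in $r = v - u'$) then yields
\[
\int_{u_{0}}^{u} \abs{(r \phi^{7})(u',v)} \, \ud u' \lesssim \eps^{7} r_{+}^{-\frac{1}{3} - 7\dc},
\]
which proves \eqref{eq:nlw:est4dvrphi}. Plugging this bound into \eqref{eq:nlw:avg:phi} and splitting the integral at the value $v^{*}$ where $r(u,v^{*})=1$ (as in the proof of Proposition~\ref{prop:dur:im}), the $\Phi$ contribution is controlled by the hypothesis \eqref{eq:IDcond:nlw} as $\frac{1}{r}\int_{u}^{v}\abs{\Phi} \lesssim \eps r^{-\frac{1}{3} - \dc}$, and the error contribution is $O(\eps^{7} r_{+}^{-\frac{1}{3}-\dc})$ since $r_{+}^{-\frac{1}{3}-7\dc} \leq r_{+}^{-\frac{1}{3}-\dc}$. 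Choosing $C_{0}$ large and then $\eps$ small (depending only on $\dc$) improves the bootstrap by a factor of two, and a standard continuation argument upgrades the solution to a global $C^{0}$ solution on $\PD_{[u_{0},\infty)}$.

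The bound \eqref{eq:nlw:est4durphi} for $\rd_{u}(r\phi)$ will follow by integrating \eqref{eq:nlw:null} in the $v$ direction from the axis, where the boundary condition $(\rd_{v}+\rd_{u})(r\phi)(u,u) = 0$ together with \eqref{eq:nlw:est4dvrphi} gives $\abs{\rd_{u}(r\phi)(u,u)} = \abs{\Phi(u)} + O(\eps^{7}) \leq C \eps$, while the inhomogeneous integral $\int_{u}^{v} r\phi^{7}\,\ud v'$ is uniformly bounded by $O(\eps^{7})$ since $r\abs{\phi}^{7}$ is integrable in $v'$ both near the axis (where it is $\lesssim \eps^{7} r$) and away from it (where it is $\lesssim \eps^{7} r^{-\frac{4}{3} - 7\dc}$). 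For the second derivative bounds \eqref{eq:nlw:est4dvdvrphi} and \eqref{eq:nlw:est4dudurphi}, I would differentiate the integral formulae using the identity $\rd_{v}(r\phi^{7}) = 7 \phi^{6} \rd_{v}(r\phi) - 6 \phi^{7}$ (and its $u$-analogue on the axis), and bound the result by $\sup\abs{\Phi'} + C\eps^{7}$ via the same kind of $r_{+}^{-k}$ integration.

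The only place that requires any care is the bookkeeping of decay powers in closing the $\phi$ estimate: one must verify that the $\eps^{7}$ error from the nonlinearity enters with the weight $r_{+}^{-\frac{1}{3}-7\dc}$, which is stronger (not weaker) than the $r_{+}^{-\frac{1}{3}-\dc}$ decay one is trying to propagate, so that smallness of $\eps$ alone is enough to absorb it. No geometric estimates, Raychaudhuri analysis, or Hawking mass bounds are needed, which is why this appendix is an order of magnitude shorter than the main argument.
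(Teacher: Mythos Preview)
Your proof is correct and follows essentially the same approach as the paper's. The only cosmetic difference is that the paper bootstraps on the integrated nonlinearity $\int_{u_{0}}^{u}\abs{r\phi^{7}(u',v)}\,\ud u' \leq 2\eps r_{+}^{-\frac{1}{3}-7\dc}$ rather than directly on $\abs{\phi}$, but each assumption immediately yields the other, so the logical content is identical; your handling of \eqref{eq:nlw:est4durphi} and the second-derivative bounds via the identity $\rd_{v}(r\phi^{7}) = 7\phi^{6}\rd_{v}(r\phi) - 6\phi^{7}$ (and its $u$-counterpart, together with the axis boundary condition $(\rd_{v}+\rd_{u})^{2}(r\phi)=0$) matches the paper exactly.
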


\begin{proof}
As in the proof of Theorem~\ref{thm:main.finite}, we begin by performing a bootstrap argument with the bound
 \begin{equation} \label{eq:nlw:btstrp}
	\int_{u_{0}}^{u} \abs{r \phi^{7}(u', v)} \, \ud u'
	\leq 2 \eps r_{+}^{-\frac{1}{3} -7\dc}.
\end{equation}
Indeed, assume that \eqref{eq:nlw:btstrp} holds on $\PD_{[u_{0}, u_{1}]}$ for some $u_{0} < u_{1}$. Then by \eqref{eq:nlw:null} and \eqref{eq:nlw:avg:phi}, we obtain
\begin{equation*}
	\abs{\rd_{v}(r \phi)(u,v) - \Phi(v)} \leq 2 \eps r_{+}^{-\frac{1}{3}-7 \dc}, \quad
	\abs{\phi} \aleq \eps (r_{+}^{-\frac{1}{3} -\dc} + r_{+}^{-\frac{1}{3} - 7 \dc}) \aleq \eps r_{+}^{-\frac{1}{3} - \dc},
\end{equation*}
on the same domain. Hence, the following pointwise bound for the nonlinearity holds:
\begin{equation} \label{eq:nlw:nonlin}
	\abs{r \phi^{7}}
	\aleq \eps^{7} r_{+}^{-\frac{4}{3} - 7 \dc}
\end{equation}
Integrating\footnote{Here, we use an analogue of \eqref{eq:est4noverr:barCv}, which in the semilinear setting here, is very easy to obtain.} \eqref{eq:nlw:nonlin} in the incoming direction from $u_{0}$ to $u$, we obtain an improvement of the bootstrap assumption \eqref{eq:nlw:btstrp} for $\eps$ sufficiently small. Then by a routine continuity argument using $C^{0}$ local well-posedness\footnote{We recall again our convention that a $C^0$ solutions means that all of $\phi$, $\rd_v(r\phi)$ and $\rd_u(r\phi)$ are in $C^0$.} of \eqref{eq:nlw:null}, global existence of $\phi$  follows. Moreover, by \eqref{eq:nlw:null} and \eqref{eq:nlw:nonlin}, the bounds \eqref{eq:nlw:est4dvrphi}--\eqref{eq:nlw:est4durphi} follow.

Now assume that $\Phi \in C^{1}$, and let $v_{0} > u_{0}$. To prove \eqref{eq:nlw:est4dvdvrphi}, it suffices to show that
\begin{equation*}
	\abs{\rd_{v} \int_{u_{0}}^{u} r \phi^7(u', v) \, \ud u'} \aleq \eps^{7}
\end{equation*}
for $(u, v) \in \calD(u_{0}, v_{0}):=\{(u,v)\in \mathbb R^2: u\in [u_0,v_0],\, v\in [u,v_0]\}$. This estimate follows from \eqref{eq:nlw:est4dvrphi}, \eqref{eq:nlw:est4phi} and the simple identity
\begin{equation*}
	r \rd_{v} \phi = \rd_{v}(r \phi) - \phi.
\end{equation*}
To show \eqref{eq:nlw:est4dudurphi}, note that at the axis of symmetry $\set{u = v}$, we have
\begin{equation*}
	(\rd_{v} + \rd_{u})^{2} (r \phi) (u, u) = 0.
\end{equation*}
By \eqref{eq:nlw:null} the mixed derivative $\rd_{u} \rd_{v} (r \phi)$ vanishes on the axis; hence we have
\begin{equation*}
	\abs{\rd_{u}^{2} (r \phi)(u, u)} = \abs{\rd_{v}^{2} (r \phi)(u, u)} \aleq \sup_{v \in [u_{0}, v_{0}]} \abs{\Phi'} + \eps^{7}.
\end{equation*}
Then \eqref{eq:nlw:est4dudurphi} follows from the estimate
\begin{equation*}
	\abs{\rd_{u} \int_{u}^{v} r \phi^{7}(u, v') \, \ud v'} \aleq \eps^{7}
\end{equation*}
for $(u, v) \in \calD(u_{0}, v_{0})$, which is proved using \eqref{eq:nlw:est4phi} and \eqref{eq:nlw:est4durphi} as before. \qedhere
\end{proof}

Taking the limit as $u_{0} \to - \infty$, we obtain the following analogue of Theorem~\ref{thm:main}.
\begin{proposition} \label{prop:nlw-main}
Let $\Phi : (-\infty, \infty) \to \bbR$ be a $C^{2}$ function satisfying \eqref{eq:IDcond:nlw}, as well as
\begin{equation*}
	A = \sup_{v \in (-\infty, \infty)} \abs{\Phi'(v)} < \infty.
\end{equation*}
Let $\eps_{1} > 0$ be the constant introduced in Proposition~\ref{prop:nlw-main.finite}. Then if $\eps \leq \eps_{1}$, there exists a unique global $C^{0}$ solution $\phi$ to \eqref{eq:nlw} on $\PD$, whose data at the past null infinity coincide with $\Phi$, i.e.,
\begin{equation}\label{nlw:data.past.null.inf}
	\lim_{u \to -\infty} \rd_{v}(r \phi)(u, v) = \Phi(v) \quad \hbox{ for every } v \in (-\infty, \infty).
\end{equation}
The solution $\phi$ obeys the bounds \eqref{eq:nlw:est4dvrphi}--\eqref{eq:nlw:est4durphi}. Furthermore, $\rd_{v}(r \phi)$ and $\rd_{u}(r \phi)$ are Lipschitz continuous on $\PD$, and the weak derivatives $\rd_{v}^{2}(r \phi)$, $\rd_{u}^{2}(r \phi)$ obey the bounds
\begin{equation}
\label{eq:nlw:lipschitz}
\esssup_{\PD} \abs{\rd_{v}^{2}(r \phi)}
+ \esssup_{\PD} \abs{\rd_{u}^{2}(r \phi)} \aleq A + \eps^{7}.
\end{equation}
\end{proposition}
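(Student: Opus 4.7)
The plan is to follow the blueprint of Section~\ref{sec.proof.inf}, where Theorem~\ref{thm:main} was deduced from Theorem~\ref{thm:main.finite} and the higher-order estimates of Proposition~\ref{prop:high-d}; here, the role of those estimates is played by the $C^{1}$-part of Proposition~\ref{prop:nlw-main.finite}. Specifically, I would fix a sequence $u_{n} \to -\infty$ and, for each $n$, invoke Proposition~\ref{prop:nlw-main.finite} (with the initial curve $C_{u_{n}}$ and the data $\Phi \restriction_{C_{u_{n}}}$) to produce a global $C^{1}$ solution $\phi^{(n)}$ on $\PD_{[u_{n}, \infty)}$ satisfying \eqref{eq:nlw:est4dvrphi}--\eqref{eq:nlw:est4durphi} as well as the second-derivative bounds \eqref{eq:nlw:est4dvdvrphi}--\eqref{eq:nlw:est4dudurphi} with the constant $A$ on the right-hand side. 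The point is that all of these bounds depend on $\Phi$ only through $\eps$ and $A$, and are therefore uniform in $n$.

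Given such uniform bounds, a standard Arzel\`a--Ascoli compactness argument on an exhausting sequence of compact subsets of $\PD$ and a diagonal extraction will produce a subsequence (still denoted $\phi^{(n)}$) and a limit triple $(\phi, \rd_{v}(r\phi), \rd_{u}(r\phi))$ on $\PD$ such that $\phi^{(n)} \to \phi$ in $C^{0}_{\mathrm{loc}}(\PD)$ and $\rd_{v}(r \phi^{(n)}), \rd_{u}(r \phi^{(n)})$ converge locally uniformly; passing to the limit in \eqref{eq:nlw:null}, which only requires that these three quantities converge uniformly on compacts, shows that $\phi$ is a global $C^{0}$ solution. The bounds \eqref{eq:nlw:est4dvrphi}--\eqref{eq:nlw:est4durphi} persist in the limit since they are uniform in $n$. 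For the Lipschitz claim \eqref{eq:nlw:lipschitz}, I would note that the uniform bound on $\rd_{v}^{2}(r \phi^{(n)})$ and $\rd_{u}^{2}(r \phi^{(n)})$ from \eqref{eq:nlw:est4dvdvrphi}--\eqref{eq:nlw:est4dudurphi} (with the right-hand side bounded by $A + \eps^{7}$) means $\rd_{v}(r \phi^{(n)}), \rd_{u}(r \phi^{(n)})$ are uniformly Lipschitz; this property passes to the limit, giving the claimed Lipschitz continuity and yielding the essential-supremum bounds on the weak derivatives.

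To verify the boundary condition at past null infinity \eqref{nlw:data.past.null.inf}, I would rerun the proof of \eqref{eq:nlw:est4dvrphi} at the level of each $\phi^{(n)}$: for any fixed $v$ and any $u \geq u_{n}$, integrating \eqref{eq:nlw:null} from $u_{n}$ to $u$ gives
\begin{equation*}
    \bigl| \rd_{v}(r \phi^{(n)})(u, v) - \Phi(v) \bigr|
    \leq \int_{u_{n}}^{u} \bigl| r (\phi^{(n)})^{7}(u', v) \bigr| \, \ud u' \aleq \eps^{7} (\max\{v-u, 1\})^{-\frac{1}{3}-7\dc}.
\end{equation*}
Taking $n \to \infty$ first (using the uniform convergence established above) and then letting $u \to -\infty$ produces \eqref{nlw:data.past.null.inf}. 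Finally, uniqueness among global $C^{0}$ solutions satisfying \eqref{eq:nlw:est4dvrphi}--\eqref{eq:nlw:est4durphi} with the prescribed past null data would follow from a Gronwall-type argument: if $\phi_{1}, \phi_{2}$ are two such solutions and $\psi = \phi_{1} - \phi_{2}$, then $\rd_{u} \rd_{v}(r \psi) = r(\phi_{1}^{7} - \phi_{2}^{7})$, and integrating this from the past using the $r^{-1/3 - \dc}$ smallness of $\phi_{1}, \phi_{2}$ to bound the nonlinearity closes the estimate.

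The main obstacle is really a technical one rather than a conceptual one: I must ensure that the compactness argument for extracting the limit genuinely applies, which requires not only uniform $C^{1}$ bounds on $\rd_{v}(r \phi^{(n)})$ and $\rd_{u}(r \phi^{(n)})$ but also some control on \emph{mixed} second derivatives. The mixed derivative $\rd_{u} \rd_{v}(r \phi^{(n)}) = r (\phi^{(n)})^{7}$ is uniformly controlled on compacts by \eqref{eq:nlw:est4phi}, so equicontinuity of the first-order quantities is automatic. The subtler point is showing that the limit $\rd_{v}(r \phi)$ really coincides with the $v$-derivative of the limit of $r \phi^{(n)}$ (and similarly for $\rd_{u}$), but this follows from the uniform convergence of the first derivatives themselves together with the fundamental theorem of calculus.
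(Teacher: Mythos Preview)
Your proposal is correct and follows essentially the same approach as the paper, which in fact omits all details and simply states that the proposition ``is a simple consequence of the uniform $C^{1}$ bounds \eqref{eq:nlw:est4dvdvrphi}, \eqref{eq:nlw:est4dudurphi}, the Arzela-Ascoli theorem, as well as the bound \eqref{eq:nlw:est4phi} to justify that data are as prescribed at the past null infinity.'' Your write-up is a faithful and more detailed rendering of exactly this scheme: take $u_n \to -\infty$, extract a limit via Arzel\`a--Ascoli using the uniform-in-$n$ second-derivative bounds, and verify \eqref{nlw:data.past.null.inf} by passing to the limit in the integrated form of \eqref{eq:nlw:est4dvrphi}.
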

As in the proof of Theorem~\ref{thm:main}, this proposition is a simple consequence of the uniform $C^{1}$ bounds \eqref{eq:nlw:est4dvdvrphi}, \eqref{eq:nlw:est4dudurphi}, the Arzela-Ascoli theorem, as well as the bound \eqref{eq:nlw:est4phi} to justify that data are as prescribed at the past null infinity. We omit the details.

\subsection{Initial data construction} \label{subsec:nlw-id}
The goal of this subsection is to construct an initial data set $\Phi(v)$ on the past null infinity, so that the free wave development of $\Phi(v)$ has infinite $\dot{H}^{\frac{7}{6}} \times \dot{H}^{\frac{1}{6}}$ norm on the slice $\Sgm_{0} = \set{t = 0}$ and $\Phi(v)$ obeys \eqref{eq:IDcond:nlw}. In Section~\ref{subsec:nlw-pf}, we will show that this initial data set leads to a global solution with properties stated in Theorem~\ref{thm:nlw}. (One can compare this construction with that in the proof of Corollary \ref{cor.infinite.BV.mass}.)

We begin with a few preliminary facts about fractional Sobolev spaces on $\bbR^{d}$.
Let $\calS(\bbR^{d})$ and $\calS'(\bbR^{d})$ be the spaces of Schwartz test functions and tempered distributions on $\bbR^{d}$, respectively.
For $0 < s < \frac{d}{2}$, we define $\dot{H}^{s}(\bbR^{d}) \subseteq \calS'(\bbR^{d})$ to be the closure of the space $\calS(\bbR^{d})$ with respect to the norm
\begin{equation} \label{eq:frac-Hs}
	\nrm{f}_{\dot{H}^{s}}
	= \nrm{\abs{\nb}^{s} f}_{L^{2}},
\end{equation}
where $\abs{\nb}^{s} = (- \lap)^{\frac{s}{2}}$ is the fractional Laplacian. For $s \in (0, 2)$, this operator admits the integral formula
\begin{equation} \label{eq:frac-lap}
	\abs{\nb}^{s} f (x)
	= c_{d, s} \int_{\bbR^{d}} \frac{f(x) - f(y)}{\abs{x-y}^{d + s}} \, \ud y,
\end{equation}
for an appropriate constant $c_{d, s} \neq 0$. For $s \in (1, 2)$, we have the equivalence
\begin{equation*}
	\nrm{f}_{\dot{H}^{s}} \aleq \nrm{\nb f}_{\dot{H}^{s-1}} \aleq \nrm{f}_{\dot{H}^{s}}.
\end{equation*}
If $f \in \dot{H}^{s}(\bbR^{d})$, then it follows that
\begin{equation*}
	\chi_{R} f \to f \quad \hbox{ in } \dot{H}^{s} \hbox{ as } R \to \infty,
\end{equation*}
where $\chi_{R} (\cdot) = \chi(\cdot / R)$ for any $\chi \in C^{\infty}_{0}(\bbR^{d})$ with $\chi(0) = 1$. Hence in order to show that a tempered distribution $f$ does \emph{not} belong to $\dot{H}^{s}$, it suffices to show that $\nrm{\chi_{R} f}_{\dot{H}^{s}}$ diverges as $R \to \infty$.

We now begin the construction of $\Phi$ in earnest. Our idea is to start with a function with the desired property on $\set{t = 0}$, and then find a compatible $\Phi$. Let $\eta$ be a smooth bump function on $(-\infty, \infty)$, which is non-negative, vanishes outside $(-2, 0)$, equals $1$ on $(-\frac{3}{2}, -\frac{1}{2})$. 
For every $R \geq 4$, we define a radial function $\eta_{R}$ on $\bbR^{3}$ by the formula
\begin{equation*}
	\eta_{R}(r) = \frac{1}{(4 \pi)^{\frac{1}{2} }r} \eta(r - R).
\end{equation*}
Note that $\eta_{R}$ is supported on the annulus $\set{R - 2 < r < R}$ with $\nrm{\eta_{R}}_{L^{2}(\bbR^{3})}$ equal to a nonzero constant independent of $R$. Furthermore, for every $k = 0, 1, 2, \ldots$, there exist $0 < b_{k} < B_{k}$ independent of $R$ such that
\begin{equation} \label{eq:nlw:eta-bnd}
	b_{k} \leq \nrm{\rd_{r}^{k} \eta_{R}}_{L^{2}(\bbR^{3})} \leq B_{k}.
\end{equation}
In particular, by interpolation, there exist constants $0 < b < B$ independent of $R$ such that
\begin{equation} \label{eq:nlw:eta-7/6}
	b \leq \nrm{\rd_{r} \eta_{R}}_{\dot{H}^{\frac{1}{6}}(\bbR^{3})} \leq B.
\end{equation}

Given $\eps > 0$, we define a radial function $f$ on $\bbR^{3}$ by
\begin{equation} \label{eq:nlw:id-t=0}
	f(r) = \eps \sum_{k=1}^{\infty} \eta_{4^{k}}(r).
\end{equation}

\begin{lemma} \label{lem:nlw:id-infinite-Hs}
For $f$ defined as above, we have
\begin{equation} \label{eq:nlw:id-infinite-Hs}
	\nrm{\rd_{r} f}_{\dot{H}^{\frac{1}{6}}} = \infty,
\end{equation}
or more precisely, $\nrm{\chi_{R} \rd_{r} f}_{\dot{H}^{\frac{1}{6}}} \to \infty$ as $R \to \infty$.
\end{lemma}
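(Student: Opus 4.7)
The plan is to lower-bound $\nrm{\rd_r f}_{\dot{H}^{1/6}(\bbR^3)}^{2}$ by a divergent series, exploiting the disjointness of the supports of the bumps $\psi_k := \rd_r \eta_{4^k}$. Set $g := \rd_r f = \eps \sum_{k \geq 1} \psi_k$ and $R_k := 4^k$. Since $\psi_k$ is supported in the annulus $\set{R_k - 2 \leq r \leq R_k}$ and $R_{k+1} - R_k \geq 12$ for $k \geq 1$, these supports are pairwise disjoint with uniformly positive separation between consecutive pairs.

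Start from the Gagliardo representation (valid for $s = 1/6 \in (0,1)$, $d = 3$), which is derivable from \eqref{eq:frac-lap}:
\[
\nrm{g}_{\dot{H}^{1/6}}^{2} = c \iint_{\bbR^3 \times \bbR^3} \frac{|g(x) - g(y)|^{2}}{|x-y|^{10/3}} \, \ud x \, \ud y.
\]
Restrict the integral to the disjoint union $\bigsqcup_k (\supp \psi_k \times \supp \psi_k)$. On each piece only $\psi_k$ is nonzero among all $\psi_j$, so $g(x) - g(y) = \eps(\psi_k(x) - \psi_k(y))$ there, whence
\[
\nrm{g}_{\dot{H}^{1/6}}^{2} \geq c \eps^{2} \sum_{k \geq 1} I_k, \qquad I_k := \iint_{\supp \psi_k \times \supp \psi_k} \frac{|\psi_k(x) - \psi_k(y)|^{2}}{|x-y|^{10/3}} \, \ud x \, \ud y.
\]

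The heart of the argument is the uniform lower bound $I_k \geq c_* > 0$, with $c_*$ independent of $k$. Directly computing, $\psi_k(r) = (4\pi)^{-1/2}[ r^{-1} \eta'(r - R_k) - r^{-2} \eta(r - R_k) ]$, and using the shape of $\eta$ (which equals $1$ on $[-3/2, -1/2]$ but vanishes outside $(-2, 0)$), one can choose two radii $r_x, r_y \in [R_k - 2, R_k]$ with $|\psi_k(r_x) - \psi_k(r_y)| \gtrsim R_k^{-1}$. Then integrate $x = r_x' \omega_x$ and $y = r_y' \omega_y$ over thin radial slabs near $r_x, r_y$ with $|\omega_x - \omega_y| \lesssim R_k^{-1}$; this keeps $|x - y| \lesssim 1$. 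The volume bookkeeping is: $x$ ranges over essentially the full sphere at radius $\sim R_k$ (contributing a factor $\sim R_k^{2}$), while for each $x$ the allowed angular neighborhood of $\omega_y$ has area $\sim R_k^{-2}$ on the unit sphere, hence Euclidean area $\sim 1$ at radius $\sim R_k$. The total product measure is $\sim R_k^{2}$, the integrand is $\gtrsim R_k^{-2}$, and so $I_k \gtrsim 1$. Summation yields $\nrm{g}_{\dot{H}^{1/6}} = \infty$.

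For the quantitative cutoff statement, apply the same reasoning to $\chi_R g$: for $k$ with $R_k \leq R/2$ (so $\chi_R \equiv 1$ on $\supp \psi_k$ for large enough $R$), the contribution to the double integral is unchanged, giving $\nrm{\chi_R g}_{\dot{H}^{1/6}}^{2} \gtrsim \#\set{k : R_k \leq R/2} \to \infty$ as $R \to \infty$. The principal obstacle is verifying the uniform bound $I_k \gtrsim 1$; intuitively, the pointwise decay $\nrm{\psi_k}_{L^\infty} \sim R_k^{-1}$ is exactly canceled by the angular freedom on the sphere of radius $R_k$, a manifestation of the scale-criticality of $\dot{H}^{7/6} \times \dot{H}^{1/6}$ for \eqref{eq:nlw}.
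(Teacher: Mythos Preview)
Your proof is correct and takes a genuinely different route from the paper. The paper expands $\nrm{\rd_r f_K}_{\dot{H}^{1/6}}^2$ as a sum of $L^2$ inner products $\brk{\abs{\nb}^{1/6}\rd_r\eta_{4^k},\abs{\nb}^{1/6}\rd_r\eta_{4^\ell}}$, bounds the diagonal terms below using the interpolation estimate \eqref{eq:nlw:eta-7/6}, and then controls the off-diagonal terms via a separate localization lemma for the fractional Laplacian (Lemma~\ref{lem:frac-lap-loc}), showing they sum to $O(1)$. You instead exploit the positivity of the Gagliardo integrand: restricting the double integral to the diagonal blocks $\supp\psi_k\times\supp\psi_k$ already gives a lower bound, so there are no cross terms to estimate at all. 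The price is that you must prove the uniform lower bound $I_k\gtrsim 1$ by hand rather than citing \eqref{eq:nlw:eta-7/6} (since $I_k$ is only the on-support part of $\nrm{\psi_k}_{\dot{H}^{1/6}}^2$), but your geometric sketch is sound: the radial variation $\sim R_k^{-1}$ of $\psi_k$ over a unit radial scale, combined with the angular freedom $\sim R_k^2$ on the sphere, produces a contribution of order one. Your argument is more elementary---it avoids Lemma~\ref{lem:frac-lap-loc} entirely---while the paper's approach is more modular and would transfer more readily to settings where a direct Gagliardo computation is less transparent. One minor point: for the cutoff statement you implicitly assume $\chi\equiv 1$ on a neighborhood of the origin (the paper only requires $\chi(0)=1$); this is harmless since one may always choose such a $\chi$.
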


A key ingredient for the proof is the following localization lemma for the fractional Laplacian.
\begin{lemma} \label{lem:frac-lap-loc}
Let $d$ be a positive integer and $0 < s < \min \set{\frac{d}{2}, 2}$. Let $\psi$ be a smooth function supported on a dyadic annulus $\set{x \in \bbR^{d} :  2^{k-1} < \abs{x} < 2^{k}}$ for some $k \in \bbZ$. Then for any integer $\ell \not \in [k-1, k+1]$, we have
\begin{equation*}
	\nrm{\abs{\nb}^{s} \psi}_{L^{2}(\set{2^{\ell-1} < \abs{x} < 2^{\ell}})}
	\aleq 2^{- s (\max \set{k, \ell})} \nrm{\psi}_{L^{2}}.
\end{equation*}
where the implicit constant depends only on $d$ and $s$.
\end{lemma}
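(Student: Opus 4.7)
The plan is to prove this directly from the integral representation \eqref{eq:frac-lap} of $\abs{\nabla}^{s}$, exploiting the geometric separation between the support of $\psi$ and the target annulus $A_{\ell} := \set{2^{\ell-1} < \abs{x} < 2^{\ell}}$.

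First, I would observe that for any $x \in A_{\ell}$ with $\ell \not\in [k-1, k+1]$, we have $\psi(x) = 0$, so the formula \eqref{eq:frac-lap} collapses to
\[
	\abs{\nabla}^{s} \psi(x) = - c_{d,s} \int_{A_{k}} \frac{\psi(y)}{\abs{x-y}^{d+s}} \, \ud y.
\]
The key geometric input is that whenever $y$ lies in the support annulus $A_{k}$ and $x \in A_{\ell}$ with $\abs{k - \ell} \geq 2$, the triangle inequality gives $\abs{x-y} \aeq 2^{\max\set{k, \ell}}$, with implicit constants depending only on $d$.

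Next, I would pull this distance bound outside the integral and apply Cauchy--Schwarz in $y$ together with $\abs{A_{k}}^{1/2} \aleq 2^{k d / 2}$:
\[
	\abs{\abs{\nabla}^{s} \psi(x)}
	\aleq 2^{-(d+s) \max\set{k, \ell}} \nrm{\psi}_{L^{1}(A_{k})}
	\aleq 2^{-(d+s) \max\set{k, \ell}} \, 2^{k d / 2} \nrm{\psi}_{L^{2}}.
\]
Then taking the $L^{2}(A_{\ell})$ norm in $x$ contributes a factor of $\abs{A_{\ell}}^{1/2} \aleq 2^{\ell d / 2}$, yielding
\[
	\nrm{\abs{\nabla}^{s} \psi}_{L^{2}(A_{\ell})}
	\aleq 2^{\ell d / 2 + k d / 2 - (d+s) \max\set{k, \ell}} \nrm{\psi}_{L^{2}}.
\]

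Finally, I would check that the exponent is bounded above by $-s \max\set{k, \ell}$ in both cases. If $\ell > k$, the exponent equals $-\ell(d/2 + s) + k d / 2 < -\ell(d/2 + s) + \ell d / 2 = -s \ell$. If $\ell < k$, it equals $\ell d / 2 - (d/2 + s) k < k d / 2 - (d/2 + s) k = -s k$. In either case, the bound $2^{-s \max\set{k, \ell}} \nrm{\psi}_{L^{2}}$ follows, completing the proof. There is no real obstacle here; the only minor points are verifying that the integral formula \eqref{eq:frac-lap} is valid in this range (which is assumed in the paper for $s \in (0, 2)$, and which is why the range $s < 2$ is imposed), and handling the exponent bookkeeping in the two cases.
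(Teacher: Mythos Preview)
Your proof is correct and follows essentially the same approach as the paper: both use the integral formula \eqref{eq:frac-lap}, note that $\psi(x)=0$ on $A_{\ell}$, exploit $\abs{x-y}\aeq 2^{\max\set{k,\ell}}$, apply Cauchy--Schwarz on $A_{k}$, and then take the $L^{2}(A_{\ell})$ norm. The only cosmetic difference is that the paper treats the case $\ell\geq k+1$ explicitly and declares the other case analogous, whereas you handle both cases uniformly via $\max\set{k,\ell}$ and check the exponent bookkeeping in each.
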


\begin{proof}
 For concreteness, we only consider the case $\ell \geq k+1$; the case $\ell \leq k-1$ can be handled analogously. Let $x \in \set{2^{\ell -1} < \abs{x} < 2^{\ell}}$. Recall the integral formula \eqref{eq:frac-lap}; since $\psi(x) = 0$ and $\supp \psi \subseteq \set{2^{k-1} < \abs{x} < 2^{k}}$, we have
 \begin{align*}
	\abs{\abs{\nb}^{s} \psi(x)}
	= \abs{c_{d, s} \int_{\set{2^{k-1} < \abs{y} < 2^{k}}} \frac{- \psi(y)}{\abs{x-y}^{d+s}} \, \ud x}
	\aleq 2^{-\ell(d+s)} \int_{\set{2^{k-1} < \abs{y} < 2^{k}}} \abs{\psi(y)} \, \ud y.
\end{align*}
By Cauchy-Schwarz, the right-hand side is bounded by $\aleq 2^{-\ell(\frac{d}{2}+s)} \nrm{\psi}_{L^{2}}$. Then taking the $L^{2}$ norm over $\set{2^{k-1} < \abs{x} < 2^{k}}$, the lemma follows. \qedhere
\end{proof}

\begin{proof} [Proof of Lemma~\ref{lem:nlw:id-infinite-Hs}]
For $K \geq1$, let
\begin{equation*}
	f_{K} (r) = \eps \sum_{k=1}^{K} \eta_{4^{k}}(r).
\end{equation*}
By the support property of $\eta_{4^{k}}$, the desired statement \eqref{eq:nlw:id-infinite-Hs} would follow once we etablish
\begin{equation} \label{eq:nlw:id-infinite-Hs:key}
	\nrm{\rd_{r} f_{K}}_{\dot{H}^{\frac{1}{6}}} \ageq \eps K^{1/2},
\end{equation}
where the implicit constant is independent of $K$. Observe furthermore that it is enough to prove \eqref{eq:nlw:id-infinite-Hs:key} for only sufficiently large $K$.

Expanding $\nrm{\rd_{r} f_{K}}_{\dot{H}^{\frac{1}{6}}}^{2}$, we have
\begin{align*}
	\nrm{\rd_{r} f_{K}}_{\dot{H}^{\frac{1}{6}}}^{2}
	= \eps^{2} \sum_{k=1}^{K} \nrm{\rd_{r} \eta_{4^{k}}}_{\dot{H}^{\frac{1}{6}}}^{2}
	+ 2 \eps^{2} \sum_{1 \leq k < \ell \leq K} \brk{\abs{\nb}^{\frac{1}{6}} \rd_{r} \eta_{4^{k}} , \abs{\nb}^{\frac{1}{6}} \rd_{r} \eta_{4^{\ell}}}_{L^{2}}.
\end{align*}
For the diagonal terms,  we have a lower bound
\begin{equation} \label{eq:nlw:id-infinite-Hs:diag}
\eps^{2} \sum_{k=1}^{K} \nrm{\rd_{r} \eta_{4^{k}}}_{\dot{H}^{\frac{1}{6}}}^{2}
\geq b^{2} \eps^{2} K,
\end{equation}
by \eqref{eq:nlw:eta-7/6}. For the cross terms, we first estimate
\begin{align*}
	\abs{\brk{\abs{\nb}^{\frac{1}{6}} \rd_{r} \eta_{4^{k}} , \abs{\nb}^{\frac{1}{6}} \rd_{r} \eta_{4^{\ell}}}_{L^{2}}}
	\leq \sum_{j} \int_{\set{2^{j-1} < \abs{y} < 2^{j}}} \abs{\abs{\nb}^{\frac{1}{6}} \rd_{r} \eta_{4^{k}} \abs{\nb}^{\frac{1}{6}} \rd_{r}\eta_{4^{\ell}}} \, \ud y.
\end{align*}
Note that
\begin{equation*}
\supp \eta_{4^{k}} \subseteq \set{2^{2k-1} < \abs{x} < 2^{2k}}, \quad
\supp \eta_{4^{\ell}} \subseteq \set{2^{2\ell-1} < \abs{x} < 2^{2\ell}},
\end{equation*}
where $2^{2k} < 2^{2\ell-1} < 2^{2\ell}$. Hence we may apply Lemma~\ref{lem:frac-lap-loc} to $\abs{\nb}^{\frac{1}{6}} \rd_{r} \eta_{4^{k}}$ when $2^{j} \geq 2^{2\ell-1}$, and to $\abs{\nb}^{\frac{1}{6}} \rd_{r} \eta_{4^{\ell}}$ when $2^{j} < 2^{2\ell-1}$. Using the upper bounds in \eqref{eq:nlw:eta-bnd} and \eqref{eq:nlw:eta-7/6}, we obtain
\begin{equation*}
	\abs{\brk{\abs{\nb}^{\frac{1}{6}} \rd_{r} \eta_{4^{k}} , \abs{\nb}^{\frac{1}{6}} \rd_{r} \eta_{4^{\ell}}}_{L^{2}}}
	\aleq 2^{-\frac{1}{3} \ell} \eps^{2} B B_{1}.
\end{equation*}
Summing up this bound, we obtain
\begin{align*}
2 \eps^{2} \abs{\sum_{1 \leq k < \ell \leq K} \brk{\abs{\nb}^{\frac{1}{6}} \rd_{r} \eta_{4^{k}} , \abs{\nb}^{\frac{1}{6}} \rd_{r} \eta_{4^{\ell}}}_{L^{2}}}
\aleq \eps^{2},
\end{align*}
where the implicit constant depends only on $B$ and $B_{1}$. Comparing this upper bound with the uniform lower bound \eqref{eq:nlw:id-infinite-Hs:diag}, the desired bound \eqref{eq:nlw:id-infinite-Hs:key} follows for sufficiently large $K$. \qedhere
\end{proof}

We now seek an initial data set $\Phi$ on the past null infinity whose free wave development restricts to $f$ on the slice $\Sgm_{0} = \set{t = 0} = \set{u + v = 0}$.
This condition is equivalent to
\begin{equation} \label{eq:phi-f}
	\frac{1}{r} \int_{-r/2}^{r/2} \Phi(v') \, \ud v' = f(r).
\end{equation}
We will furthermore require $\Phi(v)$ to be even with respect to $v = 0$, i.e.,
\begin{equation*}
	\Phi(-v) = \Phi(v).
\end{equation*}
This implies that the time derivative of the free wave development restricts to zero on $\Sgm_{0}$.
By the evenness condition, \eqref{eq:phi-f} can be achieved by defining
\begin{equation*}
	\Phi(r) = \frac{\ud}{\ud r} (r f(r)) \quad \hbox{ for } r \geq 0.
\end{equation*}
Recalling the definition of $f$, we see that $\Phi$ is given by
\begin{equation} \label{eq:nlw:Phi}
\Phi(v) = \frac{\eps}{(4 \pi)^{1/2}} \sum_{k = 1}^{\infty} \bb( \eta'(v - 4^{k}) + \eta'(-v - 4^{k}) \bb).
\end{equation}
With the preceding formulae, it can be readily checked that $\Phi$ obeys the hypothesis \eqref{eq:IDcond:nlw} of Proposition~\ref{prop:nlw-main}.
\begin{lemma} \label{lem:nlw:Phi}
Let $\Phi$ be defined as above. Then we have
\begin{equation*}
	\int_{u}^{v} \abs{\Phi(v')} \, \ud v' \aleq \eps \min\{\log (2 + (v - u)), (v-u)\},  \quad
	\sup \abs{\Phi} \aleq \eps,  \quad
	\sup \abs{\Phi'} \aleq \eps.
\end{equation*}
\end{lemma}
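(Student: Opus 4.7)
The plan is to exploit the well-separated supports of the translated bumps in \eqref{eq:nlw:Phi} and use that both $\eta'$ and $\eta''$ are bounded with compact support.

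First I would verify the pointwise bounds $\sup|\Phi|\lesssim \eps$ and $\sup|\Phi'|\lesssim \eps$. Since $\eta'(v - 4^k)$ is supported in $(4^k - 2, 4^k)$ and the centers $4^k$ (for $k \geq 1$) are separated by $4^{k+1} - 4^k = 3\cdot 4^k \geq 12$, the supports of distinct bumps $\eta'(\cdot - 4^k)$ are pairwise disjoint; likewise for the reflected bumps $\eta'(-\,\cdot - 4^k)$, and the positive-side and negative-side bumps are separated by $v=0$. Hence at any given $v$ at most one term in the sum is nonzero, so $|\Phi(v)|\lesssim \eps \|\eta'\|_{L^\infty}$ and $|\Phi'(v)|\lesssim \eps \|\eta''\|_{L^\infty}$.

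For the integral bound, I would split into two cases according to the length $v-u$. The crude estimate $\int_u^v |\Phi(v')|\,\ud v' \leq \|\Phi\|_{L^\infty}(v-u)\lesssim \eps(v-u)$ handles the regime $v-u \lesssim 1$ and in fact delivers the $(v-u)$ side of the minimum unconditionally. For the $\log(2+(v-u))$ side, I would count the number of bumps whose support meets $(u,v)$: a bump indexed by $k$ (on either side) contributes at most $\int |\eta'(\cdot)|\,\ud v \lesssim 1$ to the integral, so it suffices to show the number of indices $k\geq 1$ with $(4^k-2,4^k)\cap(u,v)\neq\emptyset$ (or the reflected version) is $\lesssim \log(2+(v-u))$. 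Letting $k_1\leq k_2$ denote the smallest and largest such indices among the positive-side bumps, one has $4^{k_2} - 4^{k_1} \leq v-u+4$, while $4^{k_2}-4^{k_1}\geq \tfrac34 \cdot 4^{k_2}$ when $k_2>k_1$, so $k_2\lesssim \log(2+(v-u))$ and hence $k_2-k_1+1\lesssim \log(2+(v-u))$; the same count works for the negative-side bumps by symmetry. Combining gives $\int_u^v |\Phi|\,\ud v'\lesssim \eps\log(2+(v-u))$, and taking the minimum of the two regimes yields the claimed bound.

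No step is truly hard here — the argument is essentially bookkeeping. The only mild subtlety is ensuring the bump-counting really produces $\log(2+(v-u))$ and not something like $\log(v/u)$, which is why one has to observe $4^{k_2}-4^{k_1}\gtrsim 4^{k_2}$ rather than just bounding $4^{k_2}\leq v+2$ directly.
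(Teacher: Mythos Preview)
Your proof is correct and follows the same approach as the paper: the pointwise bounds come from disjointness of the bump supports, and the integral bound comes from counting the number of bumps whose support meets $[u,v]$ using the exponential separation of the centers $4^{k}$. The paper's proof is terser (it simply asserts the count is $\lesssim \log(v-u)$ for $v-u\geq 2$ and calls the short-interval case obvious), but your more explicit bookkeeping via $4^{k_2}-4^{k_1}\gtrsim 4^{k_2}$ is exactly the computation underlying that assertion.
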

\begin{proof}
 The latter two bounds are immediate from the formula \eqref{eq:nlw:Phi} for $\Phi$.
 For the first bound, due to the exponential separation of the bumps in \eqref{eq:nlw:Phi}, observe that the number of bumps $\eta'(\cdot - 4^{k})$ whose support intersects the interval $[u, v]$ is bounded by $\aleq \log (v - u)$ if $v - u \geq 2$. The estimate is obvious when $v-u<2$. \qedhere
\end{proof}

\subsection{Completion of the proof of Theorem~\ref{thm:nlw}} \label{subsec:nlw-pf}
By Lemma~\ref{lem:nlw:Phi}, $\Phi$ satisfies the hypothesis of Proposition~\ref{prop:nlw-main} with any $0 \leq \dc < \frac{2}{3}$. Hence, taking $\eps$ sufficiently small, we may apply Proposition~\ref{prop:nlw-main} to construct a global solution $\phi$ with $\Phi$ as data at the past null infinity (in the sense of \eqref{nlw:data.past.null.inf}), which moreover obeys the bounds \eqref{eq:nlw:est4dvrphi}--\eqref{eq:nlw:est4durphi} and \eqref{eq:nlw:lipschitz}. Our goal is to show that this solution possesses the properties listed in Theorem~\ref{thm:nlw}.

We decompose $\phi = \phi^{hom} + \phi^{inhom}$, where
\begin{align*}
	\phi^{hom} (u, v) =& \frac{1}{r} \int_{u}^{v} \Phi(v') \, \ud v', \\
	\phi^{inhom} (u, v) =& \frac{1}{r} \int_{u}^{v} \int_{-\infty}^{u} r \phi^{7} (u', v')  \, \ud u' \, \ud v',
\end{align*}
Since $\phi^{hom} \restriction_{\Sgm_{0}} = f$ and $\Phi$ is even with respect to $v = 0$, we have
\begin{equation} \label{eq:nlw:dv-phi-hom:t=0}
\rd_{v} \phi^{hom} ( -\tfrac{r}{2}, \tfrac{r}{2})
= f'(r).
\end{equation}
Hence Lemma~\ref{lem:nlw:id-infinite-Hs} applies to $\rd_{v} \phi^{hom}$. On the other hand, the following improved estimates hold for $\rd_{v} \phi^{inhom}$ on $\Sgm_{0}$.
\begin{lemma} \label{lem:nlw:phi-inhom}
Let $\phi^{inhom}$ be defined as above. Then we have
\begin{equation} \label{eq:nlw:est4phi-inhom}
\nrm{\rd_{v} \phi^{inhom}}_{L^{2}(\Sgm_{0})} + \nrm{\rd_{u} \rd_{v} \phi^{inhom}}_{L^{2}(\Sgm_{0})} + \nrm{\rd_{v}^{2} \phi^{inhom}}_{L^{2}(\Sgm_{0})} \aleq \eps^{7}.
\end{equation}
\end{lemma}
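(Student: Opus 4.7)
The plan is to work with $F := r \phi^{inhom}$, which satisfies $\rd_u \rd_v F = r \phi^{7}$, vanishes on the axis $\set{u=v}$, and has vanishing data at past null infinity. These properties yield the explicit representations $\rd_v F(u,v) = \int_{-\infty}^u (v-u')\phi^7(u',v)\,\ud u'$ and $F(u,v) = \int_u^v \rd_v F(u,v')\,\ud v'$. The first step is to combine $\abs{\phi}\les \eps r_+^{-1/3-\dc}$ from Proposition~\ref{prop:nlw-main.finite} with the auxiliary bound $\abs{\rd_v \phi}\les \eps/r_+$ (obtained near the axis from the averaging identity $\rd_v \phi = r^{-2}\int_u^v (v'-u)\rd_v^2(r\phi)\,\ud v'$ together with $\sup \abs{\rd_v^2(r\phi)}\les \eps$, and for $r \geq 1$ from the quotient $\rd_v \phi = (\rd_v(r\phi) - \phi)/r$) to derive the pointwise estimates
\[
\abs{F}\les \eps^7 \min\set{r, r^{2/3-7\dc}}, \quad \abs{\rd_v F}\les \eps^7 r_+^{-1/3-7\dc}, \quad \abs{\rd_v^2 F}\les \eps^7 \min\set{1, r^{-1-6\dc}},
\]
and analogous bounds for their $\rd_u$ counterparts.

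The key ingredient for Step~2 is the averaging identity
\[
\rd_v \phi^{inhom}(u,v) = \f{1}{r^2}\int_u^v (v'-u)\, \rd_v^2 F(u,v')\,\ud v',
\]
which follows from $\phi^{inhom} = r^{-1}\int_u^v \rd_v F\,\ud v'$ (valid since $F(u,u)=0$) together with $r \rd_v \phi^{inhom} = \rd_v F - \phi^{inhom}$. Differentiating yields $\rd_v^2 \phi^{inhom} = \rd_v^2 F/r - 2\rd_v \phi^{inhom}/r$, and although $\rd_v^2 F/r$ and $\rd_v\phi^{inhom}/r$ individually carry $1/r$ singularities at the axis, the averaging identity implies $\rd_v \phi^{inhom}(u,u) = \tfrac{1}{2} \rd_v^2 F(u,u)$ so the combination $\rd_v^2 F - 2\rd_v \phi^{inhom}$ vanishes on the axis and the singularities cancel. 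Inserting the Step~1 bounds produces $\abs{\rd_v \phi^{inhom}}\les \eps^7 \min\set{1, r^{-1-6\dc}}$ and $\abs{\rd_v^2 \phi^{inhom}}\les \eps^7 \min\set{1, r^{-2-6\dc}}$, both of which are comfortably in $L^2$ on $\Sgm_0 = \set{u+v=0}$ with respect to the measure $r^2\,\ud r$ as long as we pick $\dc > \tfrac{1}{12}$ within the allowed range $(0, \tfrac{2}{3})$.

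For $\rd_u \rd_v \phi^{inhom}$, the plan is to use the spherical wave identity $\rd_u \rd_v = \rd_t^2 - \rd_r^2$ combined with $\Box \phi^{inhom} = -\phi^7$, which on $\Sgm_0$ yields $\rd_u \rd_v \phi^{inhom} = \phi^7 + \tfrac{2}{r}\rd_r \phi^{inhom}$ with $\rd_r = \tfrac{1}{2}(\rd_v - \rd_u)$. The $\phi^7$ piece is directly bounded in $L^2(\Sgm_0)$ by $\eps^7$. For the remaining term, the key observation is
\[
\nrm{r^{-1}\rd_r \phi^{inhom}}_{L^2(\Sgm_0)}^2 = 4\pi \int_0^\infty (\rd_r \phi^{inhom})^2\,\ud r,
\]
since the $r^2$ weight in the $L^2(\Sgm_0)$ inner product absorbs the $r^{-2}$ factor; it thus suffices to show $\rd_r \phi^{inhom}\in L^2(\ud r)$. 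This follows from the pointwise bound $\abs{\rd_r \phi^{inhom}}\les \eps^7/r_+$, which I would obtain by combining the Step~2 bound on $\rd_v \phi^{inhom}$ with the analogous bound $\abs{\rd_u \phi^{inhom}} \les \eps^7/r_+$ (from the $u$-analog of the averaging formula near the axis and the direct decomposition $\rd_u \phi^{inhom} = \rd_u F/r + F/r^2$ for $r \geq 1$).

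The main subtlety I expect is making rigorous the cancellation of the $1/r$ singularities near the axis in the naive decomposition $\rd_v \phi^{inhom} = \rd_v F/r - F/r^2$ and its second-derivative analog. The averaging identity provides the cleanest route, as it absorbs the boundary condition $F \rst_{\set{u=v}} = 0$ into the integral representation itself; once the corresponding pointwise bounds are in hand, the remaining $L^2$ integrations against $r^2\,\ud r$ are routine.
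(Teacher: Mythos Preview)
Your approach is essentially the same as the paper's: the identities $r\rd_{v}^{2}\phi^{inhom}=\rd_{v}^{2}F-2\rd_{v}\phi^{inhom}$ and $r\rd_{u}\rd_{v}\phi^{inhom}=r\phi^{7}+(\rd_{v}-\rd_{u})\phi^{inhom}$ are exactly what the paper uses, and your pointwise bounds on $F$, $\rd_{v}F$, $\rd_{v}^{2}F$, $\rd_{v}\phi^{inhom}$, $\rd_{u}\phi^{inhom}$ match the paper's \eqref{eq:nlw:est4phi-inhom:basic}--\eqref{eq:nlw:est4duphi-inhom}.

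There is, however, one over-reach. You claim $\abs{\rd_{v}^{2}\phi^{inhom}}\aleq\eps^{7}$ uniformly near the axis on the grounds that $\rd_{v}^{2}F-2\rd_{v}\phi^{inhom}$ vanishes at $r=0$. Vanishing at a point does not by itself give an $O(r)$ bound; you would need a Lipschitz estimate on the numerator in the $v$-direction, which in turn requires control of $\rd_{v}^{3}F$ (equivalently $\rd_{v}^{2}\phi$), and that is one derivative beyond what Propositions~\ref{prop:nlw-main.finite}--\ref{prop:nlw-main} supply. The averaging identity rescues $\rd_{v}\phi^{inhom}$ precisely because it trades the quotient for an integral of $\rd_{v}^{2}F$; iterating it once more for $\rd_{v}^{2}\phi^{inhom}$ would land on $\rd_{v}^{3}F$, which you do not have.

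The fix is the one the paper uses and is already implicit in your own ingredients: do not attempt the pointwise bound at all. Since the $L^{2}(\Sgm_{0})$ norm carries the weight $r^{2}\,\ud r$, the weaker estimate $\abs{r\,\rd_{v}^{2}\phi^{inhom}}\aleq\eps^{7}$ near the axis suffices, and this follows immediately from $\abs{\rd_{v}^{2}F}\aleq\eps^{7}$ and $\abs{\rd_{v}\phi^{inhom}}\aleq\eps^{7}$, both of which you have. With this adjustment your argument is complete and coincides with the paper's.
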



\begin{proof}
We need to show that
\begin{align}
\label{eq:nlw:est4phi-inhom:1}
\int_{0}^{\infty} r^{2} (\rd_{v} \phi^{inhom})^{2} (- \tfrac{r}{2}, \tfrac{r}{2} ) \, \ud r \aleq & \eps^{7}, \\
\label{eq:nlw:est4phi-inhom:2}
\int_{0}^{\infty} r^{2} (\rd_{u} \rd_{v} \phi^{inhom})^{2} (- \tfrac{r}{2}, \tfrac{r}{2} ) \, \ud r \aleq & \eps^{7}, \\
\label{eq:nlw:est4phi-inhom:3}
\int_{0}^{\infty} r^{2} (\rd_{v}^{2} \phi^{inhom})^{2} (- \tfrac{r}{2}, \tfrac{r}{2} ) \, \ud r \aleq & \eps^{7}.
\end{align}

The plan is to establish pointwise estimates for successively higher derivatives of $\phi^{inhom}$, from which \eqref{eq:nlw:est4phi-inhom:1}--\eqref{eq:nlw:est4phi-inhom:3} follow. Fix a number $0 < \dc < \frac{2}{3}$ sufficiently close to $\frac{2}{3}$; in the remainder of the proof, all implicit constants may depend on $\dc$.

By the definition of $\phi^{inhom}$ and \eqref{eq:nlw:est4phi}, we have
\begin{equation} \label{eq:nlw:est4phi-inhom:basic}
	\abs{\rd_{v}(r \phi^{inhom})} \aleq \eps^{7} r_{+}^{-\frac{1}{3} - 7 \dc}, \quad
	\abs{\rd_{u}(r \phi^{inhom})} \aleq \eps^{7}, \quad
	\abs{\phi^{inhom}} \aleq \eps^{7} r_{+}^{-1}.
\end{equation}
Note in particular the rapid decay of $\rd_{v}(r \phi^{inhom})$ in $r$ compared to $\rd_{v}(r \phi^{hom}) = \Phi$.
This feature will be key to the improved bounds below.

By differentiating the averaging formula using Lemma~\ref{lem:avg-d}, we obtain
\begin{align*}
	\rd_{v} \phi^{inhom}
	= \frac{1}{r^{2}} \int_{u}^{v} \int_{-\infty}^{u} \rd_{v} (r \phi^{7})(u', v') \, \ud u' \, r (u, v') \,\ud v',
\end{align*}
whereas simply commuting $r$ with $\rd_{v}$ gives
\begin{equation*}
	r \rd_{v} \phi^{inhom}
	= \rd_{v} (r \phi^{inhom}) - \phi^{inhom}.
\end{equation*}
As a consequence, we obtain the estimate
\begin{equation} \label{eq:nlw:est4dvphi-inhom}
	\abs{\rd_{v} \phi^{inhom}}
	\aleq \eps^{7} \bb( r_{+}^{-\frac{4}{3} - 7 \dc} + r_{+}^{-2} \bb),
\end{equation}
which immediately implies \eqref{eq:nlw:est4phi-inhom:1}, provided that $\gmm$ is sufficiently large (i.e., $\gmm >  1/42$).

In order to estimate $\rd_{u} \phi^{inhom}$, note that
\begin{equation*}
	\rd_{v} (r \rd_{u} \phi^{inhom}) = \rd_{u} \rd_{v}(r \phi^{inhom}) + \rd_{v} \phi^{inhom} = r \phi^{7} + \rd_{v} \phi^{inhom}.
\end{equation*}
Integrating over $v' \in [u, v]$ and using \eqref{eq:nlw:est4dvrphi} and \eqref{eq:nlw:est4dvphi-inhom}, we obtain
\begin{equation} \label{eq:nlw:est4duphi-inhom}
	\abs{\rd_{u} \phi^{inhom}}
	\aleq \eps^{7} r_{+}^{-1}.
\end{equation}

Next, by the identity
\begin{equation*}
r \rd_{u} \rd_{v} \phi^{inhom}
= \rd_{u} \rd_{v} (r \phi^{inhom}) + (\rd_{v} - \rd_{u}) \phi^{inhom}
= r \phi^{7} +(\rd_{v} - \rd_{u}) \phi^{inhom},
\end{equation*}
and estimates \eqref{eq:nlw:est4phi}, \eqref{eq:nlw:est4dvphi-inhom} and \eqref{eq:nlw:est4duphi-inhom}, we arrive at the bound
\begin{equation}
\abs{r \rd_{u} \rd_{v} \phi^{inhom}}
\aleq \eps^{7} r_{+}^{-1},
\end{equation}
from which \eqref{eq:nlw:est4phi-inhom:2} follows.

Finally, note that
\begin{align*}
r \rd_{v}^{2} \phi^{inhom} (u, v)
=& \rd_{v}^{2} (r \phi^{inhom}) (u, v) - 2 \rd_{v} \phi^{inhom}(u, v) \\
= & \int_{-\infty}^{u} \rd_{v} (r \phi^{7})(u', v) \, \ud u' - 2 \rd_{v} \phi^{inhom}(u, v).
\end{align*}
Therefore, using \eqref{eq:nlw:est4dvrphi}, \eqref{eq:nlw:est4phi} and \eqref{eq:nlw:est4dvphi-inhom}, we have
\begin{equation}
	\abs{r \rd_{v}^{2} \phi^{inhom}}
	\aleq \eps^{7} (r_{+}^{-1-7 \dc} + r_{+}^{-2}).
\end{equation}
This bound is sufficient to establish \eqref{eq:nlw:est4phi-inhom:3}, which completes the proof of the lemma. \qedhere
\end{proof}

We are now ready to establish Theorem~\ref{thm:nlw}.
\begin{proof} [Proof of Theorem~\ref{thm:nlw}]
In this proof, we employ the polar coordinates $(t, r)$ instead of the double null coordinates $(u, v)$.

As a first step, we claim that
\begin{equation} \label{eq:nlw:pf:infinite-Hs}
	\nrm{(\phi, \rd_{t} \phi)}_{\dot{H}^{\frac{7}{6}} \times \dot{H}^{\frac{1}{6}}(\Sgm_{0})} = \infty.
\end{equation}
Let $L = \rd_{t} + \rd_{r}$, which coincides with $\rd_{v}$ in the double null coordinates. By \eqref{eq:nlw:dv-phi-hom:t=0}, we have $\nrm{L \phi^{hom}}_{\dot{H}^{\frac{1}{6}}} = \infty$, whereas by Lemma~\ref{lem:nlw:phi-inhom} and interpolation it follows that
\begin{equation*}
	\nrm{L \phi^{inhom}}_{\dot{H}^{\frac{1}{6}}(\Sgm_{0})}
	\aleq \nrm{L \phi^{inhom}}_{H^{1}(\Sgm_{0})} \aleq \eps^{7} < \infty.
\end{equation*}
Hence we have proved
\begin{equation*}
	\nrm{L \phi}_{\dot{H}^{\frac{1}{6}}(\Sgm_{0})} = \infty,
\end{equation*}
which implies \eqref{eq:nlw:pf:infinite-Hs}.

Next, observe that the $L^{12}_{x}$ norm of $\phi$ on every time slice $\Sgm_{t} = \set{t = const}$ is finite by \eqref{eq:nlw:est4phi}.
Therefore, the critical Strichartz norm $\nrm{\phi}_{L^{12}((-T, T) \times \bbR^{3})}$ is finite for every $0 < T < \infty$.
It follows that
\begin{equation} \label{eq:nlw:pf:infinite-Hs:all-t}
	\nrm{(\phi, \rd_{t} \phi)}_{\dot{H}^{\frac{7}{6}} \times \dot{H}^{\frac{1}{6}}(\Sgm_{t})} = \infty \quad \hbox{ for all } t \in (-\infty, \infty).
\end{equation}
Indeed, if \eqref{eq:nlw:pf:infinite-Hs:all-t} failed for any $t$, then by finiteness of the $L^{12}_{t,x}$ norm on finite time intervals and the standard well-posedness theory for \eqref{eq:NLW}, we would contradict \eqref{eq:nlw:pf:infinite-Hs}.

To complete the proof of Theorem~\ref{thm:nlw}, it only remains to show that the critical Strichartz norm diverges towards both the future and the past.
By the explicit formula \eqref{eq:nlw:Phi} and the mean value theorem, we may find a doubly infinite increasing sequence $\set{t_{n}}_{n = -\infty}^{\infty}$ such that $\lim_{n \to -\infty} t_{n} = -\infty$, $\lim_{n \to \infty} t_{n} = \infty$ and
\begin{equation*}
	\phi^{hom} (t_{n}, 0) = \Phi \bb( \frac{t_{n}}{2} \bb) = \frac{\eps}{(4 \pi)^{1/2}}.
\end{equation*}
By uniform $C^{1}$ regularity of $\Phi$, there exists universal constants $\dlt > 0$ and $c > 0$ such that
\begin{equation*}
	\phi^{hom} (t, r) \geq c \eps \quad \hbox{ for all } (t, r) \in [t_{n}-\dlt, t_{n}+\dlt] \times [0, \dlt].
\end{equation*}
On the other hand, we have a uniform bound $\abs{\phi^{inhom}} \aleq \eps^{7}$ for $\phi^{inhom}$. Hence for $\eps > 0$ sufficiently small, we have
\begin{equation*}
	\phi (t, r) \geq \frac{c}{2} \eps \quad \hbox{ for all } (t, r) \in [t_{n}-\dlt, t_{n}+\dlt] \times [0, \dlt].
\end{equation*}
Since $t_{n}$ is doubly infinite with $t_{n} \to \pm \infty$ as $n \to \pm \infty$, it follows that the $L^{12}_{t,x}$ norm of $\phi$ (as a matter of fact, any space-time norm $L^{p}_{t} L^{r}_{x}$ with $1 \leq p < \infty$) diverges to infinity towards both the future and the past, as desired. \qedhere
\end{proof}
%



\end{document}